\definecolor{webgreen}{rgb}{0,.5,0}
\definecolor{webblue}{rgb}{0,0,.5}
\theoremstyle{plain}
\newtheorem{construction}{Construction}
\newcommand{\C}{\mathbb{C}}
\newcommand{\Z}{\mathbb{Z}}
\newcommand{\de}[1]{\left( #1 \right)}
\renewcommand{\Re}[1]{{\mathrm{Re}}\de{#1}}
\renewcommand{\Im}[1]{{\mathrm{Im}}\de{#1}}
\newcommand{\abs}[1]{\left| #1 \right|}
\renewcommand{\rho}{\varrho}
\newcommand{\POVM}{\ensuremath{\mathsf{POVM}}\xspace}
\newcommand{\one}{\mathds 1}
\newcommand{\SKES}{\ensuremath{\textsf{SKES}}\xspace}
\newcommand{\expref}[2]{\texorpdfstring{\hyperref[#2]{#1~\ref{#2}}}{#1~\ref{#2}}}
\newcommand{\A}{\ensuremath{\mathcal{A}}\xspace}
\renewcommand{\C}{\ensuremath{\mathcal{C}}\xspace}
\newcommand{\D}{\ensuremath{\mathcal{D}}\xspace}
\renewcommand{\H}{\ensuremath{\mathcal H}\xspace}
\newcommand{\K}{\ensuremath{\mathcal{K}}\xspace}
\newcommand{\Primes}{\ensuremath{\mathsf{Primes}}\xspace}
\newcommand{\M}{\ensuremath{\mathcal{M}}\xspace}
\newcommand{\negl}{\ensuremath{\operatorname{negl}}\xspace}
\newcommand{\supp}{\textbf{supp\,}}
\newcommand{\from}{\ensuremath{\leftarrow}}
\newcommand{\bit}{\{0,1\}}
\newcommand{\rand}{\raisebox{-1pt}{\ensuremath{\,\xleftarrow{\raisebox{-1pt}{$\scriptscriptstyle\$$}}\,}}}
\DeclareMathOperator{\Span}{span}
\newcommand{\KeyGen}{\ensuremath{\mathsf{KeyGen}}\xspace}
\newcommand{\Enc}{\ensuremath{\mathsf{Enc}}\xspace}
\newcommand{\Dec}{\ensuremath{\mathsf{Dec}}\xspace}
\newcommand{\poly}{\operatorname{poly}}
\newcommand{\algo}{\mathcal}
\newcommand{\inrand}{\raisebox{-1pt}{\ensuremath{\,\xleftarrow{\raisebox{-1pt}{$\scriptscriptstyle\$$}}\,}}}
\newcommand{\PRF}{\ensuremath{\mathsf{PRF}}\xspace}
\newcommand{\QPRF}{\ensuremath{\mathsf{QPRF}}\xspace}
\newcommand{\QPT}{\ensuremath{\mathsf{QPT}}\xspace}
\newcommand{\RelabelingGame}{\ensuremath{\mathsf{RelabelingGame}}\xspace}
\newcommand{\ClassicalRelabeling}{\ensuremath{\mathsf{ClassicalRelabeling}}\xspace}
\newcommand{\LWE}{\ensuremath{\mathsf{LWE}}\xspace}
\newcommand{\DLWE}{\ensuremath{\mathsf{DecLWE}}\xspace}
\newcommand{\Sim}{\ensuremath{\mathsf{S}}\xspace}
\newcommand{\LPN}{\ensuremath{\mathsf{LPN}}\xspace}
\newcommand{\PAC}{\ensuremath{\mathsf{PAC}}\xspace}
\newcommand{\Poly}{\ensuremath{\mathsf{P}}\xspace}
\newcommand{\NP}{\ensuremath{\mathsf{NP}}\xspace}
\newcommand{\CPTP}{\ensuremath{\mathsf{CPTP}}\xspace}
\newcommand{\PPT}{\ensuremath{\mathsf{PPT}}\xspace}
\newcommand{\QFT}{\ensuremath{\mathsf{QFT}}\xspace}
\newcommand{\RSA}{\ensuremath{\mathsf{RSA}}\xspace}
\newcommand{\ECC}{\ensuremath{\mathsf{ECC}}\xspace}
\newcommand{\DHKE}{\ensuremath{\mathsf{D\mbox{-}H}}\xspace}
\newcommand{\CPA}{\ensuremath{\mathsf{CPA}}\xspace}
\newcommand{\CCA}{\ensuremath{\mathsf{CCA1}}\xspace}
\newcommand{\CCAA}{\ensuremath{\mathsf{CCA2}}\xspace}
\newcommand{\IND}{\ensuremath{\mathsf{IND}}\xspace}
\newcommand{\INDCPA}{\ensuremath{\mathsf{IND\mbox{-}CPA}}\xspace}
\newcommand{\INDCCA}{\ensuremath{\mathsf{IND\mbox{-}CCA1}}\xspace}
\newcommand{\INDCCAA}{\ensuremath{\mathsf{IND\mbox{-}CCA2}}\xspace}
\newcommand{\INDQCPA}{\ensuremath{\mathsf{IND\mbox{-}QCPA}}\xspace}
\newcommand{\INDQCCA}{\ensuremath{\mathsf{IND\mbox{-}QCCA1}}\xspace}
\newcommand{\INDQCCAA}{\ensuremath{\mathsf{IND\mbox{-}QCCA2}}\xspace}
\newcommand{\QIND}{\ensuremath{\mathsf{QIND}}\xspace}
\newcommand{\QCPA}{\ensuremath{\mathsf{QCPA}}\xspace}
\newcommand{\QCCA}{\ensuremath{\mathsf{QCCA1}}\xspace}
\newcommand{\QCCAA}{\ensuremath{\mathsf{QCCA2}}\xspace}
\newcommand{\QINDQCCA}{\ensuremath{\mathsf{QIND\mbox{-}QCCA1}}\xspace}
\newcommand{\SEM}{\ensuremath{\mathsf{SEM}}\xspace}
\newcommand{\Samp}{\ensuremath{\mathsf{Samp}}\xspace}
\newcommand{\SEMCCA}{\ensuremath{\mathsf{SEM\mbox{-}CCA1}}\xspace}
\newcommand{\SEMCCAA}{\ensuremath{\mathsf{SEM\mbox{-}CCA2}}\xspace}
\newcommand{\SEMQCCA}{\ensuremath{\mathsf{SEM\mbox{-}QCCA1}}\xspace}
\newcommand{\INDGame}{\ensuremath{\mathsf{IndGame}}\xspace}
\newcommand{\IndGame}{\ensuremath{\mathsf{IndGame}}\xspace}
\newcommand{\SEMGame}{\ensuremath{\mathsf{SemGame}}\xspace}
\renewcommand*\l@author[2]{}
\renewcommand*\l@title[2]{}
\title{Quantum Learning Algorithms\\ and Post-Quantum Cryptography$^*$}
\author{Alexander M. Poremba}
\institute{QMATH, Department of Mathematical Sciences, University of Copenhagen, \\  1165 Copenhagen, Denmark.
\and Department of Physics and Astronomy, University of Heidelberg, \\ 69047 Heidelberg, Germany.}
\begin{document}

\maketitle

\begin{abstract}
Quantum algorithms have demonstrated promising speed-ups over classical algorithms
in the context of computational learning theory - despite the presence of noise. 
In this work,
we give an overview of recent quantum speed-ups, revisit the
Bernstein-Vazirani algorithm in a new learning problem extension over an arbitrary cyclic group and
discuss applications in cryptography, such as the Learning with Errors problem.

We turn to post-quantum cryptography and investigate attacks in which an adversary is given
quantum access to a classical encryption scheme. 
In particular, we consider new notions of security under non-adaptive quantum chosen-ciphertext attacks
and propose symmetric-key encryption schemes based on quantum-secure pseudorandom functions that fulfil our definitions.
In order to prove security, we introduce novel relabeling techniques and show
that, in an oracle model with an arbitrary advice state, no quantum algorithm making superposition queries can reliably
distinguish between the class of functions that are randomly relabeled at a small subset of the domain.

Finally, we discuss current progress in quantum computing technology, particularly with a focus on implementations of quantum algorithms on the ion-trap architecture,
and shed light on the relevance and effectiveness of common noise models adopted in computational learning theory.
\end{abstract}
\vfill
\underline{\hspace{20mm}}\\
\footnotesize{$^*$This work was carried out as part of my Master's thesis at the University of Heidelberg.\newline
Contact: \href{mailto:alexander.poremba@gmail.com}{alexander.poremba@gmail.com}}


\newpage

\ \\
\ \\
\ \\
\ \\
\ \\
\ \\
\ \\
\ \\
\ \\
\ \\
\ \\
\ \\
\ \\
\ \\
\ \\
\ \\
\ \\
\ \\
\ \\
\ \\

\begin{flushright}
\textit{
\textsc{Principal advisor:}\\
Gorjan Alagic,\\
Joint Center for Quantum Information\\
and Computer Science, \\
University of Maryland, College Park, MD\\
\ \\
\ \\
\textsc{Co-advisor:}\\
Thomas Gasenzer,\\
Kirchhoff-Institute for Physics,\\
University of Heidelberg, Germany.
}
\end{flushright}
\newpage

\tableofcontents

\newpage
\section{List of Abbreviations}

\textit{
\begin{footnotesize}
\begin{description}
\item $\bot$ reject symbol
\item \CPA chosen-plaintext attack
\item \CPTP completely positive and trace preserving
\item \CCA non-adaptive chosen-ciphertext attack
\item \CCAA adaptive chosen-ciphertext attack
\item \DLWE Decision Learning with Errors
\item \IND indistinguishable encryptions
\item \INDCPA indistinguishable encryptions under chosen-plaintext attack
\item \INDCCA indistinguishable encryptions under non-adaptive chosen-ciphertext attack
\item \INDCCAA indistinguishable encryptions under adaptive chosen-ciphertext attack
\item \INDQCPA indistinguishable encryptions under quantum chosen-plaintext attack
\item \INDQCCA indistinguishable encryptions under non-adaptive quantum chosen-ciphertext
\item \INDQCCAA indistinguishable encryptions under adaptive quantum chosen-ciphertext attack
\item \LWE Learning with Errors
\item \LPN Learning Parity with Noise
\item \NP nondeterministic polynomial time
\item \Poly polynomial time
\item \PAC probably approximately correct
\item \PPT probabilistic polynomial time
\item \POVM positive operator valued measure
\item \PRF pseudorandom function
\item \QCPA quantum chosen-plaintext attack
\item \QCCA non-adaptive quantum chosen-ciphertext attack
\item \QCCAA adaptive quantum chosen-ciphertext attack
\item \QFT quantum Fourier transform
\item \QPRF quantum-secure pseudorandom function
\item \QPT quantum polynomial time
\item \SKES symmetric-key encryption scheme
\item \SEM semantic security
\item \SEMCCA semantic security under non-adaptive chosen-ciphertext attack
\item \SEMCCAA semantic security under adaptive chosen-ciphertext attack
\item \SEMQCCA semantic security under non-adaptive quantum chosen-ciphertext attack
\end{description}
\end{footnotesize}
}
\newpage

\section{Introduction}

Most of our present day communication takes place on the internet and produces enormous amounts of personal data.
Whereas traditional notions of security are concerned with electronic mail or bank transfers,
today's security needs have since expanded to many unexpected areas such as smartcards, medical devices or modern cars.
Cryptography, understood as the science of secure communication, is becoming increasingly relevant for our safety in the modern world. 
For many years, popular cryptographic protocols such as \RSA, the Diffie-Hellman key-exchange $(\DHKE)$ or ellyptic curve cryptography $(\ECC)$, have served 
greatly as building blocks towards establishing secure communication, despite lower costs and ever increasing computational power on the markets. 
In 1994, Peter Shor proposed an efficient quantum algorithm for the factoring of integers and the computation 
of discrete logarithms~\cite{Sho94}, a profound discovery that drew the attention towards the field of quantum computation 
and its potential impact on cryptography. Many of the protocols still in use today,
such as \RSA, \DHKE or \ECC, are completely broken by attackers in possession of quantum computers running Shor's algorithm.
This discovery is oftentimes regarded as the beginning of a new race towards \textit{post-quantum cryptography}, a security standard for
secure classical communication, even in the presence of quantum computers~\cite{BL17}. 
At the same time, modern quantum technology also enables entirely new forms of communication, such as quantum key distribution \cite{BB84}.
Due to both practical and economical reasons, it is nevertheless reasonable to suspect that some form of classical communication 
will continue to exist for years to come, particularly for implementations on light-weight devices.
Even as reliably fault-tolerant quantum computers have yet to be built, the cryptographic community has already
started shifting towards a new direction in which the feasibility of classical cryptography in a quantum world presents us with a paramount challenge.

A fundamental approach in cryptography is the use of hard computational problems towards
the implementation of secure communication. Consider, for example, the \RSA protocol whose security is based on the fact that
factoring large integers appears to be computationally intractable on a classical computer. Ever since the discovery of Shor's algorithm, the search towards
computational hardness in a quantum world has dominated
the cryptographic community. Since 2005, the \textit{Learning with Errors} (\LWE) problem \cite{Reg05} has gained the status of a promising cryptographic 
basis of hardness, in particular in a post-quantum setting.
The central promise of the \LWE problem lies in a reduction in which it is shown to be as hard as worst-case lattice problems \cite{Reg09},
a class of computational problems believed to be hard for more than two decades.
Consequently, it is tempting to build cryptographic constructions on the basis of the \LWE problem and 
achieve security under the assumption that worst-case lattice problems are likely to remain hard for quantum computers. 
Apart from being a candidate for security against
quantum computers, many private companies have also shown interest in variants of \LWE due to its promise for
light-weight implementation, as compared to many other promising schemes in post-quantum cryptography.
As of today, the security of lattice-based cryptography against quantum computers remains
one of the key areas of modern research in cryptography.\footnote{For an excellent review on modern cryptography in the age 
of quantum computers, we refer to a popular science article in
a 2015 issue of Quanta Magazine: \url{www.quantamagazine.org/quantum-secure-cryptography-crosses-red-line-20150908/}}\\
\noindent In a nutshell, the \LWE problem in \cite{Reg09} is as follows:

 \begin{mdframed}[backgroundcolor=green!5] 
 \textbf{Learning with Errors Problem:}\\
\textit{
Given an integer $n$ and modulus $q$,
learn a secret string $\vec s \in \mathbb{Z}_q^n$ given a set of random noisy linear equations over $\mathbb{Z}_q$ on $\vec s$.\\
For example, for $n=4$ and modulus $q=23$, each equation (with probability less than $1/2$) is of small additive error $\pm 1$, and:
\begin{align*}
11s_1 + 2s_2 + 13s_3 + 19s_4 &\approx 8  \mod 23 \\
14s_1 + 6s_2 + 19s_3 + s_4 &\approx 5  \mod 23 \\
3s_1 + 15s_2 + 4s_3 + 2s_4 &\approx 0  \mod 23 \\
4s_1 + 6s_2 + 20s_3 + 15s_4 &\approx 11  \mod 23 \\
7s_1 + 18s_2 + 8s_3 + 9s_4 &\approx 21  \mod 23 \\
8s_1 + 5s_2 + 17s_3 + 12s_4 &\approx 10  \mod 23 \\
& \vdots \\
16s_1 + s_2 + 11s_3 + 22s_4 &\approx 14  \mod 23
\end{align*}}
\end{mdframed}
If $q$ is prime, the integers modulo $q$ form a finite field under addition and multiplication, hence, given enough samples on $\vec s$, 
there exists a unique solution to the problem.
In our case, the hidden string to be determined is $s = ( 12, 0, 7, 2 )$. If not for the error, the secret string can be recovered
in polynomial time $O(n^3)$ using Gaussian elimination after observing $n$ linear independent equations, where $n$ denotes the length of the string.
Let us also note that the probability of acquiring $n$ linear independent equations on $\vec s$ after only observing $n$ sample queries is 
easily shown to be greater than a constant independent on $n$.

\noindent The difficulty in decoding noisy linear equations lies in the fact that the errors propagate during the computation, hence
amplify the uncertainty and ultimately lead to no information on the actual secret string. As the best known algorithm for the
\LWE problem runs in time $O(2^n)$ \cite{BKW03}, the problem is believed to be asymptotically intractible for classical computers.
Moreover, due to the reduction in \cite{Reg05}, any breakthrough in \LWE would also most likely imply an algorithm for lattice-based problems.

In an earlier problem, Bernstein and Vazirani \cite{BV93} considered the task of determining a hidden string from inner product of bit strings in
a setting where an algorithm is granted input access to evaluations of the function (here $\oplus$ denotes addition modulo $2$):\\

 \begin{mdframed}[backgroundcolor=green!5] 
\textbf{Bernstein-Vazirani Problem:}\\
\textit{
Learn a string $s \in \{0,1\}^n$ by making queries to a Boolean function,
$f_s: \{0,1\}^n \rightarrow \{0,1\}$, where
$$f_s(\vec x) = s_1 \cdot x_1 \oplus ... \oplus s_n \cdot x_n \, =\, \braket{\vec s,\vec x} \pmod 2.$$}
\end{mdframed}
Note that this problem features a curious resemblance to a variant of the \LWE problem in which the modulus is given by $q=2$,
the algorithm is free to choose all inputs (instead of receiving samples uniformly at random) and where the
noise is absent from all evaluations of the function.
In the classical query setting, we observe that a single query to the function can only reveal as much as a single bit of information 
about the secret string $\vec s$. 
In fact, this can easily be done by considering queries on strings $\vec e_i = (0,...\,, 1, ...\,,0)$,
where the $i$-th index is $1$ and $\vec e_i$ is $0$ everywhere else. Hence, any algorithm performing the above queries achieves an overall query complexity 
of at least $\Omega(n)$ when determining the secret, as each query reveals an outcome
\begin{equation}
f_s(\vec e_i) = \, \braket{\vec s,\vec e_i} \,  \pmod  2 \, = \, s_i,
\end{equation}
such that $\vec s$ is fully determined after a total of $n$ queries to the function. Therefore, it is tempting to approach
the \LWE problem by first closely examining this simplified model.

In this thesis, we consider the Bernstein-Vazirani problem in a setting in which an algorithm is given quantum access to the function,
hence is able to exploit quantum parallelism and to evaluate the inner product simultaneously on a superposition of inputs.
More formally, the algorithm can evaluate $f_s$ through a quantum operation, a black box whose inner workings
towards the computation of the function are unknown to the algorithm. 
We introduce the notion of an \textit{oracle}, a quantum operation $\mathcal{O}_{f_s}$ that allows for
the reversible evaluation of a function $f$ upon a set of inputs as follows:
\begin{equation}\label{oracle_query}
\mathcal{O}_{f_s}: \, \sum_{x,y \in \{0,1\}^n} \alpha_{x,y}\ket{x} \ket{y} \longrightarrow \sum_{x,y \in \{0,1\}^n} \alpha_{x,y} \ket{x}\ket{y \oplus f_s(x)}.
\end{equation}
Remarkably, as Bernstein and Vazirani \cite{BV93} showed, only a single oracle query to the the function as in Eq.\eqref{oracle_query} is sufficient
to determine the secret string.
We generalize this model to a group $\mathbb{Z}/q\mathbb{Z}$ of arbitrary positive integers $q$ under cyclic addition in a new learning problem extension of the Bernstein-Vazirani algorithm and discuss
its speed-up over classical algorithms. Cross et al. \cite{CSS14} have recently demonstrated a robustness of quantum learning
for certain classes of noise in which samples are also likely to be corrupted. While this setting is known to cause most learning problems
intractable for classical algorithms, the analogue using quantum samples remains easy. 
Recently, Grilo et al. \cite{GK17} independently considered a similar algorithm for \LWE, a special variant of
our proposed extended Bernstein-Vazirani algorithm in which $q$ is prime.
While this algorithm does not solve the \LWE problem in its original formulation using classical samples, it does
however suggest further caution when allowing access to quantum samples in any cryptographic application.
Nevertheless, not even a quantum computer receiving classical \LWE samples, i.e. classical strings of noisy linear equations, seems to be able to challenge
the hardness of \LWE \cite{Reg09}. For this reason, \LWE is still believed to be an excellent basis of hardness in post-quantum cryptography.

While quantum superposition access is regularly shown to be a powerful model, it also possesses limitations. Our goal in this work is also to
find such limitations in order to provide quantum-secure encryption schemes, even in a setting in which an attacker has quantum access 
to the encryption procedure.
An essential building block for the construction of secure cryptographic schemes is found in so-called pseudorandom functions, 
a family of keyed functions that seem indistuinguishable from perfectly random functions to any adversary with limited computational recources.
In fact, recent breakthroughs in quantum cryptography allow for quantum-secure pseudorandom functions that
are secure, even if an adversary is given the ability to evaluate the function using quantum superpositions.
Remarkably, as shown by Zhandry in 2012, such constructions can be built using the classical sample hardness of \LWE
in the quantum world \cite{Zha12}:\\

\textit{If \LWE with classical samples is hard for quantum computers, then there exist quantum-secure pseudorandom functions.}\\
\hspace{2mm}\\
As parallelism remains one of the key features of quantum algorithms, modern research is concerned with
exploitation of the nature of complex-valued amplitudes of quantum states
in order to cause them to interfere around the desired outputs through the use of quantum operations.
Only then, a final measurement of the state collapses the superposition into the desired outcome 
with high probability. The following fact guarantees that quantum parallelism can be achieved for
all efficiently computable functions \cite{NC10}:\\
\hspace{2mm}\\
\textit{Any classical efficiently computable function has an efficient circuit description, hence can also be implemented efficiently 
using a quantum computer.
Moreover, the quantum circuit for the function consists entirely of unitary gates and can thus be evaluated on a superposition of inputs
due to the linearity of quantum mechanics.}\\
\ \\
A fundamental question arises immediately. Just how powerful is knowledge represented in a quantum superposition evaluating a function on all of its inputs?
This thesis is concerned with both the limitation and exploitation of quantum parallelism in the context of modern cryptography.

An important attack model in cryptanalysis is that of \textit{chosen-ciphertext attacks},
a setting in which an adversary exercises control over the encryption scheme, for example by manipulating an honest party
into generating both encryptions
and decryptions of plaintexts or ciphertexts. The security under chosen-ciphertext attacks
is commonly formalized in an indistinguishability game that takes place in two phases.
In the pre-challenge phase, the adversary is allowed to perform encryption and decryption queries. Then, upon a pair of two messages,
the adversary receives a challenge ciphertext, an encryption of one of the two messages at random, and proceeds with another query phase.
Typically, we grant encryption access during both phases, while for decryption access, we differentiate between two important variants:
\begin{itemize}
\item \emph{(non-adaptive access)} the adversary exercises partial control and can only generate decryptions
prior to seeing a challenge ciphertext.
\item \emph{(adaptive access)} the adversary exercises full control and can perform informed decryption queries both
before and after the challenge phase begins, with the exception of the challenge ciphertext itself.
\end{itemize}
Oftentimes in cryptography jargon, the term \textit{lunchtime attack} is adopted in order to highlight a possible realistic
setting for a non-adaptive attack model, whereas an adaptive attack corresponds to full control over an honest party.

At STOC 2000, Katz and Yung \cite{KY00} offered a complete characterization of classical security notions for private-key encryption. In the 
case of classical communication in a quantum world, 
many of these security notions are still widely unexplored, and only few separation results have been successfully proven in recent years.
At CRYPTO 2013, Boneh and Zhandry first introduced the notion of
adaptive quantum chosen-ciphertext security and proposed classical encryption schemes
for which such security can be achieved \cite{BZ13}. An interesting open problem concerns the class of 
non-adaptive quantum chosen-ciphertext attacks, a security notion in which we 
allow adversaries to issue quantum superposition queries to encryption and non-adaptively to decryption.
In particular, it is unknown whether many of the standard encryption schemes satisfy such a weaker notion of security.

\newpage
\section{Technical Summary of Results}

Let us now give an overview of the main contents provided in this thesis.

In \textsc{Chapter 3}, we review selected topics in modern cryptography required for the proposed constructions in this thesis.
In \expref{Definition}{def:skes}, we introduce the concept of \textit{symmetric-key encryption schemes} $(\SKES)$, a setting in which two agents, say Alice and Bob, share a matching 
secret key prior to their communication.
In \expref{Definition}{def:comp_security}, we quantify limited computational power by introducing the notion of \textit{efficient} adversaries who
run algorithms with at most polynomial running time with regard to some security parameter relevant to the underlying cryptographic scheme.
A convenient security definition is one that formalizes the notion of indistinguishable encryptions.
The indistinguishability game introduces a game-based defintion of indistinguishable encryptions that takes place between an adversary and a challenger.
Here, the adversary prepares two plaintexts $m_0$ and $m_1$ and sends them to the challenger who chooses a bit $b$ uniformly at random and then
responds with an encryption of $m_b$. Thus, upon receiving a challenge ciphertext, the goal of the adversary is to output $b$. 
We say that an encryption scheme has indistinguishable encryptions
if no adversary wins the indistinguishability game with nonnegligible probability better than the trivial adversary who guesses $b$ at random.
We introduce the notion of indistinguishable encryptions under chosen-plaintext attacks (\expref{Definition}{def:ind-cpa}), as
well as under chosen-ciphertext attacks (\expref{Definition}{def:ind-cca}).
Another intuitive definition of security we consider is \textit{semantic security} (\expref{Definition}{def:sem_cca}), a notion of security that emphasizes
the possibility of an adversary attempting to compute something meaningful upon the encryption of a plaintext, such as a function of the plaintext.
In the semantic security game, the adversary is given an encryption of a plaintext $m$ and some side information $h(m)$, and the goal is to compute
a function $f(m)$ evaluated at the plaintext.
We say that an encryption scheme has semantic security
if every adversary is approximately identical to a \textit{simulator} who is given the side information $h(m)$ only. Therefore, semantic security
formalizes the intuition that even if the adversary has access to the ciphertext, essentially no advantage in computing anything
meaningful from it exists.
Furthermore, in \expref{Definition}{def:prf}, we define the concept of \textit{pseudorandom functions} (\PRF), a crucial building block in 
symmetric-key cryptography that 
allows for constructions of symmetric-key encryption schemes of precisely such security. The standard \PRF scheme is defined as follows:\\
\ \\
\textbf{\PRF scheme} (informal)\textit{ Given a family of pseudorandom functions $\mathcal F=\{f_k\}_k$, we define the scheme 
$\Pi[\mathcal F]=(\KeyGen,\Enc,\Dec,)$ which encrypts a plaintext $m$ using randomness $r$ via
$$\Enc_k(m; r) = (r, f_k(r) \oplus m).$$
To decrypt a ciphertext $(r,c)$, the procedure $\Dec_k(r,c)$ outputs $c \oplus f_k(r)=m$.
}\\
\ \\
Finally, we define the \LWE problem rigorously and discuss its applications in cryptography. 
We consider the standard \INDCPA-secure \LWE-based symmetric-key encryption scheme:\\
\ \\
\textbf{\LWE scheme} (informal)\textit{ The symmetric-key encryption scheme \LWE-\SKES$(n, q, \chi)$ is defined by an integer $n$, a modulus $q$ and 
a discrete error distribution $\chi$ over $\Z_q$ of certain bounded noise magnitude. 
The key for this scheme is a random vector $\vec k \rand \Z_q^n$. We encrypt a bit $b$ as follows:
\begin{enumerate}
  \item Sample a uniformly random vector $\vec a \rand \Z_q^n$ and an error $e \from \chi$;
  \item Output $(\vec a,\langle \vec a, \vec k \rangle + b \left \lfloor q/2\right \rfloor + e)$.
\end{enumerate}
To decrypt a ciphertext $(\vec a,c) \in \Z_q^{n+1}$, we output $0$ if and only if $|c - \langle \vec a, \vec k \rangle| \leq \left \lfloor q/4\right \rfloor$ (here we rely on the assumption that the error magnitude is bounded: $\abs{e} \leq \left \lfloor q/4 \right \rfloor$). 
}\\
\ \\
This scheme satisfies (classical) \INDCPA security under the \LWE assumption~\cite{Reg09}.
We then consider the \LWE-\SKES scheme to establish a separation between the previous notions of indistinguishable encryptions, both under chosen-plaintext attacks,
as well as under non-adaptive chosen-ciphertext attacks.

In \textsc{Chapter 4}, we present the most important developments in the theory of quantum computation to date.
To this end, we introduce the concept of qubits, unitary quantum operations and the quantum circuit model. We present a universal set of quantum gates
that enables a quantum computer to approximately perform any quantum operation (\expref{Theorem}{thm:universality}). 
Moreover, we give examples of quantum parallelism and show how to 
prepare a quantum state that evaluates a given function simultaneously over the range of its inputs. In this context,
we introduce the concept of quantum oracles, essentially a quantum gate that acts as a black box and grants an algorithm input access to a given function.
Finally, we turn to noise and decoherence in quantum computing architectures and give examples of elementary error correcting codes.

In \textsc{Chapter 5}, we review several of the well known quantum algorithms that solve certain computational tasks faster than any known classical algorithm
and provide the foundation for the algorithms of the later chapters. 
In particular, we introduce the \textit{Deutsch-Josza} algorithm, the earliest quantum speed-up
ever to be found in a black box model, as well as the \textit{Bernstein-Vazirani algorithm} as the original predecessor of the \textit{Extended Bernstein-Vazirani algorithm}.

In \textsc{Chapter 6}, we introduce the quantum Fourier transform (\QFT) over arbitrary finite abelian groups as a 
fundamental operation adopted in the majority of all the algorithms discussed in this thesis.
The Fourier transform (\expref{Definition}{def:qft}) is particularly useful in exploiting the symmetries of a 
given problem and allows us to generalize the Bernstein-Vazirani algorithm
over arbitrary cyclic groups. In \expref{Lemma}{lem:ortho}, we prove a widely used property on the orthogonality
of Fourier coefficients.
Finally, we discuss efficient quantum circuit implementations that compute the quantum Fourier transform.

In \textsc{Chapter 7}, we introduce useful language from \textit{computational learning theory} in which we frame the main algorithms in this thesis.
We consider a setting in which a learner (an algorithm) is requesting samples from a black box oracle whose inner workings are unknown.
The goal of the learner is to determine a hidden \textit{concept}, such as a Boolean function, based on the information that is being presented by the samples.
As each sample may be subjected to noise, potential errors are likely to get amplified and oftentimes lead to highly non-trivial tasks that are
computationally intractable for classical computers.
We consider the \textit{Learning Parity with Noise} (\LPN) problem, an early predecessor of the \LWE problem, 
as an instance of a computational learning problem. 
Once we define the analogous learning problem in a setting in which the oracle is providing quantum samples, we investigate how these
computational tasks become easy for quantum computers. We approach a quantum \LWE analogue by first proposing a new generalization 
of the Bernstein-Vazirani algorithm over
an arbitrary group under cyclic addition. We then prove \expref{Theorem}{thm:extended_BV} and show the following:\\
\ \\
\textbf{Theorem} (informal)\textit{ There exists a quantum algorithm for the Extended Bernstein-Vazirani problem
that can be amplified towards a success probability of $1-\delta$ by requesting
$O(\log1/\delta)$ many samples independently of $n$, whereas any classical algorithm requires $\Omega(n)$ many queries.}\\
\ \\
In addition, we compare our results to an independent 2017 proposal by Grilo and Kerenidis that proves that, in the quantum oracle setting, 
the extended Bernstein-Vazirani algorithm (in the special case where $q$ is prime) solves the \LWE problem given enough quantum samples.

In \textsc{Chapter 8}, we take a turn towards studying the limitations of quantum algorithms in order to
find secure constructions for post-quantum cryptography. While the previous chapter focused
on quantum speed-ups at solving learning problems by means of superposition samples, this chapter investigates
the limitations of quantum algorithms instead. We discuss the effects of relabeling in quantum algorithms,
a setting in which we relabel the function to which the algorithm is given oracle access at a subset of the domain
and study its subsequent output states, similar to the blinding of quantum algorithms proposed by Alagic et al.\cite{AMRS18}.
We introduce two variants of a new indistuinguishability game called \RelabelingGame,
a setting in which a quantum distinguisher receives quantum oracle access to a function and the goal is to detect
its modification as part of a game-based experiment. We distinguish between two variants,
a non-adaptive experiment in which the query phase takes place prior to the challenge, as well as an adaptive experiment in which
the query phase takes place during the challenge phase.
Thus, we define a non-adaptive relabeling game as an experiment in which a quantum algorithm first receives quantum oracle access
to a function and then, upon receiving a random input/output pair, the goal is to decide whether it is genuine (or modified)
based on the previous query phase.\\
\ \\
\textbf{Definition} (informal)\textit{ Given an arbitrary function $f:\{0,1\}^{n} \longrightarrow \{0,1\}^m$, 
we define the non-adaptive experiment $\RelabelingGame^{(1)}$ with a \QPT algorithm $\algo D$ as follows:
\begin{enumerate}
\item a bit $b \rand \bit$ and strings $r^*\rand \bit^n$, $s\rand \bit^m$ are generated;
\item \D receives quantum oracle access to $\mathcal{O}_f$;
\item depending on the random bit $b$, \D receives the following:
\begin{itemize}
\item $(b=0):$ \D receives a pair $(r^*,f(r^*))$;
\item $(b=1):$ \D receives a pair $(r^*, f(r^*)\oplus s)$.
\end{itemize}
Then, \D receives an example oracle that outputs classical random pairs $(r,f(r))$.
\item $\mathcal{D}$ outputs a bit $b'$ and wins the game if $b'=b$.
\end{enumerate}}
\ \\
We then prove \expref{Theorem}{def:non_relabeling_game} by controlling the success probability of \D in terms of the number of queries it makes. 
The proof uses a hybrid argument, adapting a variation of the standard quantum query lower bound technique, as well as the bound on the effects of blinding in \cite{AMRS18}, to give precise control over the success probability.\\
\ \\
\textbf{Theorem} (informal)\textit{ Given an arbitrary function $f: \{0,1\}^{n} \longrightarrow \{0,1\}^m$, any efficient quantum algorithm
making $T(n) = \poly(n)$ many oracle queries succeeds at the non-adaptive experiment $\RelabelingGame^{(1)}$ with
advantage at most $O(T(n)/\sqrt{2^{n}})$, except with negligible probability.}\\
\ \\
Next, we consider a stronger variant of the relabeling game (\expref{Definition}{def:adaptive_relabeling_game}), an adaptive setting in which a quantum algorithm first receives an arbitrary 
advice state (possibly even exponential-sized) for a function
and the goal is to detect whether it was relabeled at a random location.\\
\ \\
\textbf{Definition} (informal)\textit{ Given a function $f:\{0,1\}^{n} \longrightarrow \{0,1\}^m$, an arbitrary quantum advice state $\ket{\psi^f}$ (possibly depending on $f$) and
integer $0 \leq \mu \leq n$, we define the adaptive experiment $\RelabelingGame^{(2)}$ with a \QPT algorithm $\algo D$ as follows:
\begin{enumerate}
\item \D receives an advice state $\ket{\psi^f}$; \\
      a bit $b \rand \bit$ and strings $s\rand \bit^m$, $r^*\rand \bit^\mu$ are generated;
\item depending on the random bit $b$, \D receives the following:
\begin{itemize}
\item $(b=0):$ \D receives quantum oracle access to $\mathcal O_f$;
\item $(b=1):$ \D receives quantum oracle access to $\mathcal O_{f^*}$,
where $f^*$ is the relabeled function,
\begin{equation*}
f^*(x) :=
    \begin{cases}
      f(x) \oplus s &  \text{if the last $\mu$ bits of $x$ are equal to $r^*$}, \\
      f(x) & \text{otherwise}.
    \end{cases}
\end{equation*}
\end{itemize}
\item $\mathcal{D}$ outputs a bit $b'$ and wins the game if $b'=b$.
\end{enumerate}}
\ \\
Unlike in the previous non-adaptive variant, any distinguisher is able to adaptively make queries based on prior information on the target function from the pre-challenge phase.
Finally, we prove \expref{Theorem}{thm:adaptive_relabeling} on the success probability of the adaptive experiment.\\
\ \\
\textbf{Theorem} (informal)\textit{ Given an arbitrary function $f: \{0,1\}^{n} \longrightarrow \{0,1\}^m$ with arbitrary advice state $\ket{\psi^f}$ (possibly depending on $f$)
and integer $0 \leq \mu \leq n$, any efficient quantum algorithm
making $T(n) = \poly(n)$ oracle queries succeeds at the adaptive experiment $\RelabelingGame^{(2)}$ with
advantage $O(T(n)/\sqrt{2^{\mu}})$, except with negligible probability.}\\
\ \\
In choosing $\mu$ to be super-logarithmic in $n$, we can achieve a negligible advantage in the game.

In \textsc{Chapter 9}, we extend the notions of classical
indistinguishability from the earlier chapters to a quantum world. We make use of the rebeling result and propose secure
constructions under a quantum chosen-ciphertext attack. In this scenario, a quantum adversary
exercises control over the functionality of the scheme and is able to influence an honest party
into quantumly generating ciphertexts, as well as decrypting ciphertexts of the adversaries choice for some period in time.
We introduce several new quantum notions of security, such as indistinguishable encryptions under
non-adaptive quantum chosen-ciphertext attacks (\expref{Definition}{ind-qcca}), as follows:\\
\ \\
\textbf{Definition} (informal)\textit{ $\Pi$ is \INDQCCA if no quantum polynomial time algorithm $(\QPT)$ $\A$ can succeed at the following experiment with probability better than $1/2 + \negl(n)$.
\begin{enumerate}
\item A key $k \from \KeyGen(1^n)$ and a uniformly random bit $b \inrand \bit$ are generated;
\item $\algo A$ gets access to oracles $\Enc_k$ and $\Dec_k$, and outputs $(m_0, m_1)$;
\item $\algo A$ receives a challenge $\Enc_k(m_b)$ and access to $\Enc_k$ only; then \A outputs a bit $b'$;
\item $\algo A$ wins if $b = b'$.
\end{enumerate}}

We then introduce a quantum variant of semantic security under non-adaptive quantum chosen-ciphertext attacks. 
Finally, we prove that our proposed constructions based on quantum-secure pseudorandom functions
satisfy our definitions.\\
\ \\
\textbf{Theorem} (informal)\textit{ If $\mathcal{F}$ is a family of quantum-secure pseudorandom functions, then the \PRF scheme
$\Pi[\mathcal{F}]=(\KeyGen,\Enc,\Dec)$ is $\INDQCCA$-secure.}\\
\ \\
Moreover, we prove that quantum-secure pseudorandom functions are not strictly necessary to achieve \INDQCCA security of the \PRF scheme.
We consider a choice of post-quantum secure pseudorandom functions $\mathcal F'$, i.e. families of functions that are secure against quantum distinguishers
with classical access to the function, by equipping a \QPRF with a random large period. Note that due to quantum period finding, as observed in \cite{BZ13},
it follows that, if $\mathcal F$ is a family of \QPRF{s}, then $\mathcal F'$ is only post-quantum secure. Finally,
we prove that the \PRF scheme under $\mathcal F'$ achieves \INDQCCA security.\\
\ \\
\textbf{Theorem} (informal)\textit{ There exist families $\mathcal{F'}$ of post-quantum-secure pseudorandom functions for which the \PRF scheme
$\Pi[\mathcal{F'}]=(\KeyGen,\Enc,\Dec)$ is $\INDQCCA$-secure.}\\
\ \\
In \textsc{Chapter 10}, we discuss state-of-the-art quantum computing technology with a particular focus on the ion-trap architecture.
We give a detailed introduction to how qubits are realized in a physical system and
how quantum gates can be performed through the use of lasers. Furthermore, we discuss sources of noise and decoherence
in physical systems in order to investigate the effectiveness of noise models from the previous chapters. To this end, we
discuss the performance of recent implementations of quantum algorithms discussed in this thesis. Finally,
we discuss an experimental comparison between a five-qubit ion-trap implementation and the five-qubit IBM superconductor device.

\newpage
\section{Cryptography}

The history of cryptography dates back to over two millenia. Ever since the birth of civilization and the invention of writing, people
required ways of transmitting secret messages using \textit{ciphers}, intended to be read only by the receiver and yet
difficult to decode for others. Since the 1970s, cryptography amounted to a well-established scientific discipline
by henceforth adopting a rigorous mathematical foundation. This crucial change marks the beginning of \textit{modern cryptography}.
Many of the popular encryption schemes still in use today, such as the \RSA encryption scheme,
were already developed in the early years of modern cryptography.
Typically, it is the hardness of certain computational problems that serves 
as a foundation for security. For example, as
in the case of \RSA, the security of the encryption scheme is related to the hardness 
of factoring large integers. In other words, we believe a scheme is secure, if no efficient adversary with limited computational recources 
is capable of breaking the scheme.
Peter Shor's discovery of an efficient quantum algorithm for the factoring of integers marked the beginning of an entirely new era of cryptography,
a so-called \textit{post-quantum cryptography}. It is from here on, that the search for quantum-secure cryptography began.
In the following sections, we provide an overview of selected topics in modern cryptography required for the main results in this thesis.

\subsection{Preliminaries}

Let us first introduce some necessary notation and formalism from theoretical computer science and cryptography. For additional reading,
we refer to \cite{KL15}.

For bit strings $x \in \bit^n$ of arbitrary length $n=|x|$, we associate a product space $\bit^*$ containing all such strings of finite length. 
A function $\negl: \mathbb{N} \rightarrow \mathbb{R}$ is called \textit{negligible} if, for every polynomial $p$, 
there exists an integer $N$ such that for all $n>N$, it holds that:
$ \negl(n) < \frac{1}{p(n)}.$ Typically, we adopt negligible functions in the context of a success probability that
decreases to an inverse-superpolynomial rate, hence cannot be amplified to a constant by a polynomial amount of repetitions.
An algorithm is a sequence of (possibly nondeterministic) operations that
terminates after a finite amount of steps upon any given input, say $x \in \bit^*$.
We say an algorithm is \textit{efficient} if it has polynomial running time
with respect to a size parameter of a given computational problem, i.e. if there exists a polynomial
$p(x)$ such that, for any input $x \in \bit^*$, the computation of $A(x)$ terminates after at most $p(|x|)$ steps. 
A probabilistic polynomial time $(\PPT)$ algorithm is a procedure with an additional random tape (such as a random number generator) that 
results in efficient, yet possibly nondeterministic, computations. We adopt the popular unary convention of representing the seed of
efficient randomized algorithms by $1^n=11...1$, highlighting a polynomial dependence with respect to the length of the input,
contrary to a polylog dependence in the general case where $\left \lceil{\log_2(n)}\right \rceil$ bits are needed to specify the length of a given input
(here, $\left \lceil{\cdot}\right \rceil$ denotes the ceiling function). With $x \rand X$, we denote a procedure
an outcome $x$ is sampled uniformly at random from a finite set $X$. If $D$ is a probability distribution,
we denote the sampling of an outcome according to $D$ by using the notation $x \leftarrow D$. Upon finite sets $X$ and $Y$,
we define the corresponding (finite) set of all possible functions from $X$ to $Y$ as $\{\mathcal{F}: \mathcal{X} \rightarrow \mathcal{Y}\}$.
An \textit{oracle} is a black box machine $\mathcal{O}$ that assists a given algorithm with a particular computational task
at unit cost, for example in an evaluation of an unknown function upon a given input or
the sampling from an unknown probability distribution. Typically, if \A is an algorithm, we denote oracle access to $\mathcal{O}$ using the
notation $\A^{\mathcal{O}}$.
Finally, throughout this thesis, we employ the usual asymptotic $O$-notation denoting an upper bound, where for a given function $g(n)$, we define
$O(g(n))= \{ f(n) \, : \, \exists c \in \mathbb{R}, \exists n \in \mathbb{N} \text{ such that } 0 \leq f(n) \leq c\,g(n), \, \forall n \geq n_0\}$.
Similarly, we denote an asymptotic lower-bound $\Omega(g(n))$ as the set of functions 
$\Omega(g(n))= \{ f(n) \, : \, \exists c \in \mathbb{R}, \exists n \in \mathbb{N} \text{ such that } 0 \leq c\,g(n) \leq f(n), \, \forall n \geq n_0\}$.\\
\ \\

\subsection{Symmetric-Key Cryptography}

Symmetric-key cryptography concerns the scenario in which two agents, say Alice and Bob, share a mutual secret key $k$ prior to their communication
and want to send messages to each other.
In order to encrypt messages, Alice first chooses a message $m$ and runs an encryption algorithm $\Enc_k(m)$ that 
requires the use of her key and later sends the resulting ciphertext $c$ over to Bob.  
Since Bob knows about the secret key, he can run a decryption algorithm $\Dec_k(c)$ upon Alice's ciphertext and decode the message.
In general, we consider randomized encryption in order to avoid \textit{replay attacks}, while only requiring decryption to be deterministic.
\begin{definition}\label{def:skes}
A symmetric-key encryption scheme $(\SKES)$ $\Pi =(\KeyGen,\Enc,\Dec)$ is a triple of \PPT algorithms on a 
finite key space
\K, message space \M and ciphertext space \C, where $\KeyGen: \mathbb{N} \rightarrow \K$, $\Enc: \K \times \M \rightarrow \C$,
$\Dec: \K \times \C \rightarrow \M$ and, for a security parameter $n$, we require:
\begin{enumerate}
\item \emph{(key generation)} $\KeyGen$: on input $1^n$, generate a key $k \from \KeyGen(1^n)$;
\item \emph{(encryption)} $\Enc_k$: on message $m\in \mathcal{M}$, output a ciphertext $\Enc_k(m)$;
\item \emph{(decryption)} $\Dec_k$: on cipher $c \in \C$, output a message $\Dec_k(c)$;
\item \emph{(correctness)} $(\Dec_k \circ \Enc_k)(m) = m$.
\end{enumerate}
\end{definition}
In order for communication under a given symmetric-key encryption scheme to be secure against eavesdroppers, 
we require that, without knowledge of the secret key, any ciphertext
must look sufficiently random and reveal little to no information about the actual message.\\
\\
In the next section, we provide several widely used notions of security for symmetric-key encryption. For further
reading, we refer to \cite{KL15}.

\newpage

\subsection{Security Notions}\label{ch:security}

\subsubsection{Computational Security}

Due to the well known \Poly-\NP problem, i.e. the seeming impossibility of finding efficient algorithms for certain computational problems whose solutions can be quickly verified,
and the fact that we consider adversaries who operate probabilistically, an important notion of security is provided by \textit{computational security}
based on the following principle:\\
\ \\
\textit{A successful cipher must be practically secure against adversaries with limited computational recources.}\\
\ \\
This brings us to the following standard definition of computational security:

\begin{definition}[Computational Security]\label{def:comp_security}
A scheme $\Pi =(\KeyGen,\Enc,\Dec)$ is computationally (or asymptotically) secure if every \PPT adversary succeeds
at breaking $\Pi$ with at most negligible probability with respect to the security parameter of $\Pi$.
\end{definition}
Since a negligible success probability is smaller than the inverse of any polynomial, no efficient algorithm is capable
of amplifying the success probability, i.e. capable of breaking the encryption scheme by sheer repetition. Therefore,
we regard any algorithm that breaks a particular scheme with at most negligible probability as not significant.

\subsubsection{Computational Indistinguishability.}

Another important notion of security for a given symmetric-key encryption scheme is \textit{indistinguishability of encryptions}, in particular
under a \textit{chosen-plaintext attack}.
In this model, an adversary has partial control over the encryption procedure and can generate encryptions of arbitrary messages.
This attack corresponds to a scenario in which an attacker is able to influence an honest party
into generating ciphertexts of the adversaries choice, thus potentially resulting in an advantage at decoding other ciphers of interest.
In the following, we specify this model in a security game between an adversary and a challenger: 

\begin{definition}[\INDCPA] \label{def:ind-cpa}\ \\
Let $\Pi = (\KeyGen, \Enc, \Dec)$ be a symmetric-key encryption scheme and consider the
$\INDGame$ between a \PPT adversary and challenger, defined as follows:
\begin{enumerate}
\item \emph{(initial phase)} the challenger chooses a key $k \from \KeyGen(1^n)$ and bit $b \inrand \bit$;
\item \emph{(pre-challenge phase)} as part of a learning phase, the adversary is given access to an encryption oracle $\Enc_k$ in order to generate encryptions. 
Upon each choice of message $m$, the adversary
receives a ciphertext $c \leftarrow \Enc_k(m)$. Finally, the adversary chooses two messages $m_0$ and $m_1$, and sends them to the challenger.
\item \emph{(challenge phase)} the challenger replies with $\Enc_k(m_b)$ and the adversary continues to have oracle access to $\Enc_k$;
\item \emph{(resolution)} the adversary outputs a bit $b'$ and wins the game if $b'=b$.
\end{enumerate}
We say that $\Pi$ has indistinguishable encryptions under a chosen-plaintext attack (or is \INDCPA-secure) if, for every \PPT \A, 
there exists a negligible function $\negl(n)$ such that:\\
$\Pr[\A \text{ wins } \INDGame] \leq 1/2 + \negl(n)$.
\end{definition}

An even stronger notion of security for a given symmetric-key encryption scheme is \textit{security under chosen-ciphertext attacks}.
In this variant of the \INDGame, an adversary not only exercises control over the encryption scheme as before,
but can also \textit{non-adaptively} decrypt messages unrelated to a ciphertext of interest (as highlighted in the pre-challenge and challenge phase).
Therefore, such an attack corresponds to a scenario in which an attacker is able to exercise control over an honest party
into generating ciphertexts, as well as decrypting ciphertexts of the adversaries choice for some period in time.
In the following, we specify this model in another security game between an adversary and a challenger: 

\begin{definition}[\INDCCA] \label{def:ind-cca}\ \\
Let $\Pi = (\KeyGen, \Enc, \Dec)$ be a symmetric-key encryption scheme and consider the
$\INDGame$ between a \PPT adversary and challenger, defined as follows:
\begin{enumerate}
\item \emph{(initial phase)} the challenger chooses a key $k \from \KeyGen(1^n)$ and bit $b \inrand \bit$;
\item \emph{(pre-challenge phase)} as part of a learning phase, the adversary is given access to 
both an encryption oracle $\Enc_k$ and decryption oracle $\Dec_k$. 
Upon each choice of message $m$, the adversary
receives a ciphertext $\Enc_k(m)$ and, upon each ciphertext $c$,
the adversary receives a plaintext $\Dec_k(c)$.
Finally, the adversary chooses two messages $m_0$ and $m_1$, and sends them to the challenger.
\item \emph{(challenge phase)} the challenger replies with $\Enc_k(m_b)$ and the adversary continues to have oracle access to $\Enc_k$ only;
\item \emph{(resolution)} the adversary outputs a bit $b'$ and wins the game if $b'=b$.
\end{enumerate}
We say that $\Pi$ has indistinguishable encryptions under a chosen-ciphertext attack (or is \INDCCA-secure) if, for every \PPT \A, 
there exists a negligible function $\negl(n)$ such that:\\
$\Pr[\A \text{ wins } \INDGame] \leq 1/2 + \negl(n)$.
\end{definition}
Finally, we can additionally extend the previous notion of \INDCCA security by also granting the adversary \textit{adaptive} decryption access
after the challenge phase. This model corresponds to \INDCCAA security,
a variant in which the adversary exercises full control over the encryption scheme, both before and after the challenge phase.
Remarkably, there exist classical symmetric-key encryption schemes that satisfy each of the security definitions provided in this chapter.
A major contribution of this thesis is to provide constructions that satisfy these notions, even in a setting in which the adversary
is granted quantum superposition access, again both to the encryption and decryption procedure. In the next section,
we introduce important tools to realize such cryptographic schemes.

\subsubsection{Semantic Security.}

In semantic security, the challenge phase corresponds to choosing a \emph{challenge template} instead of a pair of messages.
Contrary to the \INDGame, the intuition for this security game is that
the adversary seeks to compute something meaningful about the message of interest during the challenge phase.
Thus, we consider challenge templates consisting of a triple of classical circuits $(\Samp, h, f)$, where $\Samp$ outputs plaintexts from some distribution 
$\mathcal D_\Samp$, and $h$ and $f$ are functions over messages $m \from \Samp$.
Upon receiving an encryption of a randomly sampled message $m$ according to \Samp, the goal of the adversary is to
output some new information $f(m)$, given some side information $h(m)$ on the message.
In providing an adversary with a \CCA learning phase, we can consider the following notion of security.

\begin{definition}[\SEMCCA]\label{def:sem_cca} Let $\Pi = (\KeyGen, \Enc, \Dec)$ be an encryption scheme, and consider the experiment 
$\SEMGame$ with a \PPT $\algo A$, defined as follows.
\begin{enumerate}
\item \emph{(initial phase)} A key $k \from \KeyGen(1^n)$ and bit $b \rand \bit$ are generated;
\item \emph{(pre-challenge phase)} $\algo A$ receives access to oracles $\Enc_k$ and $\Dec_k$, then outputs a challenge template consisting of $(\Samp, h, f)$;
\item \emph{(challenge phase)} A plaintext $m \from \Samp$ is generated; $\algo A$ receives $h(m)$ and an oracle for $\Enc_k$ only; if $b = 1$, $\algo A$ also receives $\Enc_k(m)$.
\item \emph{(resolution)} $\algo A$ outputs a string $s$, and wins if $s = f(m)$.
\end{enumerate}
We say $\Pi$ is semantically secure under a non-adaptive chosen-ciphertext attack (or is \SEMCCA) if, for every \PPT $\algo A$, 
there exists a \PPT $\algo S$ such that the challenge templates output by $\algo A$ and $\algo S$ are identically distributed, and there exists a negligible
function $\negl(n)$ such that:
\begin{align*}
\left| \underset{k \rand \K}{\Pr}[\A (1^n,\Enc_k(m),h(m)) = f(m)] \, - \Pr[\Sim (1^n,|m|,h(m)) = f(m)] \right| \, \leq \, \negl(n),
\end{align*}
where, in both cases, the probability is taken over plaintexts $m \leftarrow \Samp$.
\end{definition}

Fortunately, as shown in \cite{KL15}, semantic security and indistinguishability are equivalent notions of security, in particular
under non-adaptive chosen-ciphertext attacks.

\begin{theorem}
Let $\Pi=(\KeyGen,\Enc, \Dec)$ be a symmetric-key encryption scheme. Then, $\Pi$ is $\INDCCA$-secure if and only if $\Pi$ is $\SEMCCA$-secure.
\end{theorem}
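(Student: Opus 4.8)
The statement is the classical equivalence of indistinguishability and semantic security under non-adaptive chosen-ciphertext attacks (cf.\ \cite{KL15}), so the plan is to follow the standard template: prove the two implications separately, each by a reduction turning an attack on one notion into an attack on the other. Throughout I use the customary convention that the challenge plaintexts in $\INDGame$ satisfy $|m_0| = |m_1|$ (without which \INDCCA security is unattainable, so the statement would be vacuous), and I dispose of the degenerate case $m_0 = m_1$ at the outset, since there every adversary wins with probability exactly $1/2$.

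$(\SEMCCA \Rightarrow \INDCCA)$. I argue by contraposition. Suppose $\algo{A}$ is a \PPT adversary winning $\INDGame$ with probability $1/2 + \eps(n)$ for non-negligible $\eps$. I build a \PPT semantic-security adversary $\algo{A}'$: it runs $\algo{A}$'s pre-challenge phase verbatim, relaying $\algo{A}$'s encryption and decryption queries to its own $\Enc_k$, $\Dec_k$ oracles, until $\algo{A}$ outputs $(m_0, m_1)$. Then $\algo{A}'$ outputs the challenge template $(\Samp, h, f)$ where $\Samp$ hard-codes the now-fixed strings $m_0, m_1$, flips a fresh coin $c$, and outputs $m_c$; where $h$ is the constant circuit $(m_0, m_1)$; and where $f$ maps $m_0 \mapsto 0$, $m_1 \mapsto 1$. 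In the challenge phase (the $b=1$ branch of $\SEMGame$) $\algo{A}'$ receives $h(m) = (m_0,m_1)$ and $\Enc_k(m)$, passes $\Enc_k(m)$ to $\algo{A}$ as its challenge ciphertext, relays $\algo{A}$'s remaining encryption queries, and outputs $\algo{A}$'s guess bit $b'$ as its prediction of $f(m)$. Since the coin $c$ chosen by $\Samp$ plays exactly the role of the $\INDGame$ challenge bit and $f(m) = c$, we get $\Pr[\algo{A}'(1^n,\Enc_k(m),h(m)) = f(m)] = \Pr[\algo{A}\text{ wins }\INDGame] = 1/2 + \eps(n)$. But any simulator $\algo{S}$ sees only $1^n$, $|m| = |m_0| = |m_1|$, and the constant $h(m)$, none of which depends on $c$, so $\Pr[\algo{S}(\cdots) = f(m)] = 1/2$; hence no simulator matches $\algo{A}'$ up to negligible error, contradicting \SEMCCA.

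$(\INDCCA \Rightarrow \SEMCCA)$. Given a \PPT semantic-security adversary $\algo{A}$, I construct the required simulator $\algo{S}$: with the same oracle access as $\algo{A}$, it runs $\algo{A}$'s pre-challenge phase verbatim, so the template $(\Samp,h,f)$ it outputs is identically distributed to $\algo{A}$'s; then in the challenge phase it feeds $\algo{A}$ an encryption $\Enc_k(0^{|m|})$ obtained from its own oracle (it knows $|m|$) together with the true $h(m)$, runs $\algo{A}$'s challenge phase, and echoes $\algo{A}$'s output. Writing $p := \Pr[\algo{A}(1^n,\Enc_k(m),h(m)) = f(m)]$ and $q := \Pr[\algo{S}(\cdots) = f(m)]$, it remains to show $\abs{p-q}$ is negligible. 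If not, I build an \INDCCA adversary $\algo{B}$ that runs $\algo{A}$'s pre-challenge using its own $\Enc_k,\Dec_k$ oracles, samples $m \from \Samp$ and computes $h(m), f(m)$ itself, submits $(m_0,m_1) := (m, 0^{|m|})$, receives $c = \Enc_k(m_b)$, runs $\algo{A}$'s challenge phase on $(c, h(m))$ while relaying post-challenge encryption queries, and outputs $b' := 0$ if $\algo{A}$'s output equals $f(m)$ and $b' := 1$ otherwise. A direct computation gives $\Pr[\algo{B}\text{ wins }\INDGame] = \tfrac12 + \tfrac12(p - q)$, so $\abs{\Pr[\algo{B}\text{ wins}] - 1/2}$ is non-negligible, contradicting \INDCCA (after possibly flipping $\algo{B}$'s output bit).

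Both directions are essentially bookkeeping once the reductions are set up; the point that needs care — and the closest thing to an obstacle — is respecting the \emph{non-adaptive} decryption restriction: I must verify that in each reduction every decryption query is issued during the simulated pre-challenge phase, so that a \CCA attack (and not a \CCAA one) is produced, which is what makes the construction sensitive to the exact wording of \expref{Definition}{def:ind-cca} and \expref{Definition}{def:sem_cca}. The remaining subtleties are the length convention $|m_0| = |m_1|$ and the degenerate case $m_0 = m_1$, both handled up front.
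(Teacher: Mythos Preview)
Your proposal is correct and follows the standard reduction template: the paper itself does not prove this classical statement but simply cites \cite{KL15}, and the reductions you give (uniform sampler over $\{m_0,m_1\}$ with $f$ the indicator for the $\SEMCCA\Rightarrow\INDCCA$ direction; simulator that substitutes $\Enc_k(0^{|m|})$ for the real ciphertext and an $\INDGame$ distinguisher comparing the two for the converse) are exactly the ones the paper later spells out for the quantum analogue in \expref{Section}{ch:equiv}. The only cosmetic differences are your use of $0^{|m|}$ versus the paper's $1^{|m|}$ and your slightly richer side information $h(m)=(m_0,m_1)$ versus just the length.
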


In Chapter \ref{ch:pq_crypto}, we introduce variants under quantum chosen-ciphertext attacks and prove the equivalence of both definitions.
While semantic security is a much more intuitive notion of security, it is oftentimes much harder to prove security in practice.
Therefore, it is convenient to provide security proofs under the notion of indistinguishable encryptions and to then refer to the equivalence
result for a more natural notion of security.

\subsection{Pseudorandom Functions}\label{ch:prf}

In this section, we turn to pseudorandom functions, a popular building block in symmetric-key cryptography.
Historically, the first instance of provably-secure pseudorandom functions was proposed in the Goldreich, Goldwasser and Micali
construction \cite{GGM86} using pseudorandom generators, which in turn rely on the existence of one-way functions.
The effectiveness of pseudorandom functions lies in the property of seeming indistinguishable from a perfectly random
function to any efficient distinguisher with limited computational power. The security properties
of pseudorandom functions are perhaps best explained in an indistinguishability game between a distinguisher (a \PPT algorithm)
and a challenger. Upon the start of the game, the challenger chooses a random bit $b$ whose outcome determines whether
the game is being played with a perfectly random function (sampled uniformly at random
from the finite set of all possible functions over given finite domain and range) of the challengers choice, or a pseudorandom function for
a freshly generated key. Next, the challenger presents the distinguisher with an oracle for the given function
who is then free to evaluate the function upon arbitrary inputs. Finally,
the distinguisher wins by outputting a bit $b'=b$. Since the distinguisher
is assumed to have limited computational recources, thus essentially running a \PPT algorithm, the claim of
pseudorandomness is that the outputs will look sufficiently random. Therefore, the probability that the
distinguisher makes a decision in a game against a pseudorandom function and outputs a bit, say $b'=1$, 
is negligibly close to a game in which the distinguisher is playing against a perfectly random function.
We formalize this observation in the following definition:

\begin{definition}(Pseudorandom Function)\label{def:prf}\ \\
Let $\mathcal F$ be an efficiently computable function $\mathcal F : \mathcal{K} \times \mathcal{X} \rightarrow \mathcal{Y}$ 
on a key-space $\mathcal{K}$, a domain $\mathcal{X}$ and a range $\mathcal{Y}$.
We say $\mathcal F= \{f_k\}_{k \in \K}$ is a family of pseudorandom functions $(\PRF)$
if, for all $k$ and \PPT distinguishers \D,
there exists a negligible function $\negl(n)$ such that:
\begin{align}
\left| \underset{k \rand \mathcal{K}}{\Pr}[ \mathcal{D}^{f_k}(1^n) = 1] \, - \underset{f \rand \{F: \mathcal{X} \rightarrow \mathcal{Y}\}}{\Pr}[ \mathcal{D}^{f}(1^n) = 1] \right| \, \leq \, \negl(n)
\end{align}
\end{definition}
Consider, for example, the following \SKES using a pseudorandom function, as found in Proposition 5.4.18 in \cite{Gol04}.
In this scheme, the pseudorandom function is used to both encrypt and decrypt messages using the same key.
\begin{construction}\label{cons:prf}

Consider a family $\mathcal F$ of keyed functions $f_k: \{0,1\}^n \longrightarrow \{0,1\}^n$, where $n$ is a security parameter and $\mathcal{K} = \bit^n$ is a key space.
Define a symmetric-key encryption scheme $\Pi[\mathcal F]=(\KeyGen,\Enc,\Dec)$ as follows:
\begin{enumerate}
\item \emph{(key generation)} $\KeyGen$: on input $1^n$, generate a key $k \rand \bit^n$;
\item \emph{(encryption)} $\Enc_k$: on message $m$, choose a randomness $r \rand\{0,1\}^n$ and output a ciphertext $\Enc_k(m;r) = (r,f_k(r) \oplus m)$;
\item \emph{(decryption)} $\Dec_k$: on cipher $(r,c)$, output $\Dec_k(r,c) = c \oplus f_k(r)$;
\item \emph{(correctness)} $(\Dec_k \circ \Enc_k)(m;r) = (f_k(r) \oplus m) \oplus f_k(r) = m$.
\end{enumerate}
\end{construction}
In fact, this scheme already satisfies the notion of \INDCPA security, for example as shown in \cite{KL15}.
In Chapter \ref{ch:pq_crypto}, we introduce a class of quantum-secure pseudorandom functions and prove the \INDCCA security of this scheme,
even in a setting in which the adversary is given quantum superposition access to the encryption oracle $\Enc_k$ and decryption procedure $\Dec_k$.

In the next section, we provide a formal definition of the Learning with Errors problem, as introduced in \cite{Reg09}.

\subsection{Learning with Errors}

The Learning with Errors problem can be stated in multiple variants, such as the search problem or the decision problem.
In the following, we begin by first defining the Learning with Errors search problem, as introduced in the introductory section.
 
\begin{definition}[\LWE Problem]\ \\
Let $n$ be a security parameter, let $q$ be a prime and let $\chi$ be a discrete probability distribution over errors in $\mathbb{Z}_q$. 
Let $\vec s \in \mathbb{Z}_q^n$ be a secret string and let $A_{s,\chi}$ be the probability distribution on $\mathbb{Z}_q^n \times \mathbb{Z}_q$ that performs the following:
\begin{enumerate}
\item Sample a uniformly random string $\vec a \in \mathbb{Z}_q^n$.
\item Sample an error $e \in \mathbb{Z}_q$ according to error distribution $\chi_q$.
\item Output $(\vec a,\braket{\vec a,\vec s} +e \pmod q)$.
\end{enumerate}
We say that a \PPT algorithm $\mathcal{A}$ solves the Learning with Errors problem $\LWE_{q,\chi}$ with modulus $q$ and error distribution $\chi$ if, 
for any $\vec s \in \mathbb{Z}_q^n$
and an arbitrary number of independent noisy samples from $A_{s,\chi}$, $\mathcal{A}$ outputs the secret $\vec s$ with nonegligible probability.
\end{definition}

Typically one chooses an error distribution $\chi_{\eta,q} \sim \mathcal{N}(0,\eta^2 q^2)$ that follows a discrete Gaussian distribution rounded to 
the nearest integer and reduced modulo $q$, where the noise magnitude $\eta >0$ is taken to be $1/poly(n)$. \textit{Chebyshev's inequality}
allows us to conveniently control the 
standard deviation $\eta q$ towards a sharply peaked error distribution around the origin for an appriopriate choice of 
parameters $\eta$ and $q$.
As Regev argues, there are several reasons that speak in favor of the hardness of the \LWE problem, particularly its
close relationship to lattice-problems and the \textit{Learning Parity with Noise} problem \cite{CSS14}, both studied
extensively and believed to be hard. Since \LWE can be thought of as a generalization of the \LPN problem, 
we believe that \LWE must also be hard.
Furthermore, the best known classical algorithms for solving the \LWE problem so far run in exponential time \cite{BKW03}.

\subsubsection{Decision Learning with Errors.}

A related variant of the \LWE problem is found in the task of determining whether a given sample results from a noisy linear equation
on a secret string, or a genuine uniform random sample.

\begin{definition}[Decision \LWE] \label{def:dlwe}\ \\
Let $\LWE_{q,\chi}$ be given by a sampling probability distribution $A_{s,\chi}$ for a string $\vec s \in \mathbb{Z}_q^n$ and
let $\mathcal{O}$ be the uniform distribution over $\mathbb{Z}_q^n \times \mathbb{Z}_q$.
We say that $\LWE_{q,\chi}$ satisfies the decisional \LWE assumption $(\DLWE_{q,\chi})$ with modulus $q$ and error distribution $\chi$ if, 
for all \PPT distinguishers \D, there exists a negligible function $\negl(n)$ such that:
\begin{align}
\left| \underset{s \rand \mathbb{Z}_q^n}{\Pr}[ \mathcal{D}^{A_{s,\chi}}(1^n) = 1] \, - \underset{}{\Pr}[ \mathcal{D}^{\mathcal{O}}(1^n) = 1] \right| \, \leq \, \negl(n),
\end{align}
where $\mathcal{O}$ outputs uniform samples $(\vec a,u) \rand \mathbb{Z}_q^n \times \mathbb{Z}_q$.
\end{definition}
Remarkably, as Oded Regev showed, there exists a simple reduction of the \LWE search problem towards the decisional \LWE problem.
While it is clear that an efficient algorithm for the search \LWE problem implies the existence of an algorithm for the decisional \LWE problem,
the opposite implication is guaranteed by the following lemma:

\begin{lemma}[\cite{Reg09}, Decision \LWE to Search \LWE] \label{lem:dlwe}\ \\
Let $n$ be a security parameter, $q$ be a prime and let $A_{s,\chi}$ be a sampling probability distribution $A_{s,\chi}$
for a string $\vec s \in \mathbb{Z}_q^n$ and discrete probability distribution $\chi_{q,\eta}$ over errors in $\mathbb{Z}_q$. 
If \A is an algorithm that solves the $\DLWE_{q,\chi}$ problem with nonegligible probability
over a uniform choice of strings $\vec s$,
then there exists an efficient algorithm $\A'$ that receives samples from $A_{s,\chi}$ and solves the
search \LWE problem with probability exponentially close to $1$.
\end{lemma}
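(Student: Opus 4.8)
The approach is the classical search-to-decision reduction: I will use the decisional oracle $\mathcal{A}$ to pin down the secret $\vec s \in \mathbb{Z}_q^n$ one coordinate at a time, crucially exploiting that $q$ is prime. The reduction $\mathcal{A}'$ never sees the secret and only manipulates its incoming stream of samples from $A_{\vec s,\chi}$ via two local self-reductions. \emph{(Secret shift.)} Given a sample $(\vec a,b)$ and any known $\vec t\in\mathbb{Z}_q^n$, the pair $(\vec a,\,b+\langle \vec a,\vec t\rangle)$ is distributed as a fresh sample from $A_{\vec s+\vec t,\chi}$ (the error term is untouched, $\vec a$ is still uniform); thus $\mathcal{A}'$ can translate the unknown secret by an arbitrary, in particular random, vector. \emph{(Coordinate test.)} Fix an index $i$ and a guess $k\in\mathbb{Z}_q$, and replace each $(\vec a,b)$ by $(\vec a+r\,\vec e_i,\ b+rk)$, where $\vec e_i$ is the $i$-th standard basis vector and $r\rand\mathbb{Z}_q$ is drawn freshly per sample. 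If $k=s_i$ then $b+rk=\langle \vec a+r\vec e_i,\vec s\rangle+e$, so the transformed stream is exactly $A_{\vec s,\chi}$; if $k\neq s_i$ then the $i$-th entry of $\vec a+r\vec e_i$ is uniform, and since $q$ is prime the quantity $r(k-s_i)\bmod q$ is uniform and independent of the rest, which forces $b+rk$ to be uniform and independent of $\vec a+r\vec e_i$, so the transformed stream is the uniform distribution $\mathcal{O}$ on $\mathbb{Z}_q^n\times\mathbb{Z}_q$. Handing the transformed stream to $\mathcal{A}$ therefore reveals whether $k=s_i$.

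Since $\mathcal{A}$'s distinguishing advantage is only $1/\poly(n)$ and only guaranteed on average over $\vec s$, I would first amplify it. A Markov/averaging argument gives a $1/\poly(n)$-fraction of ``good'' secrets on which $\mathcal{A}$ has advantage $1/\poly(n)$; using the secret-shift self-reduction, $\mathcal{A}'$ runs with secret $\vec s+\vec t$ for a uniformly random $\vec t$ (good with probability $1/\poly(n)$), recovers a candidate for $\vec s+\vec t$, and subtracts $\vec t$. Repeating this for $\poly(n)$ independent choices of $\vec t$ makes it overwhelmingly likely that at least one shift lands on a good secret. For a good secret, standard amplification — run the coordinate test on $\poly(n)$ independent sample batches and threshold/majority-vote the empirical acceptance rate of $\mathcal{A}$ against the known uniform baseline (which $\mathcal{A}'$ can estimate since it can generate $\mathcal{O}$ itself) — drives the error of each single $(i,k)$ decision below $2^{-\poly(n)}$; a union bound over the $nq=\poly(n)$ pairs $(i,k)$ then recovers the whole shifted secret with probability $1-2^{-\poly(n)}$.

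Finally I would add a verification stage to sift the $\poly(n)$ candidate secrets coming from the various shifts: for a candidate $\vec s^{*}$, draw $\poly(n)$ fresh samples $(\vec a,b)$ and accept $\vec s^{*}$ only if $b-\langle \vec a,\vec s^{*}\rangle$ lies in the (narrow) support of $\chi$ for every one of them. A wrong candidate produces a uniform residual on each sample, so it survives with probability $q^{-\poly(n)}$, whereas the candidate from a good shift passes with overwhelming probability; $\mathcal{A}'$ outputs any candidate that passes, which solves search $\LWE$ with probability exponentially close to $1$. The main obstacle, and the reason the construction has these extra layers, is precisely the average-case issue: the ``correct guess'' branch of the coordinate test reproduces $A_{\vec s,\chi}$ for the \emph{same} secret we are trying to learn, so a naive reduction breaks whenever that particular $\vec s$ is bad for $\mathcal{A}$; the secret-shift randomization plus verification repair this, but one must check that the failure probabilities compose correctly across the $\poly(n)$ shifts, the $n$ coordinates, the $q$ guesses, and the amplification batches, and that each transformed stream really is statistically indistinguishable (in fact identical) to $A_{\vec s,\chi}$ or $\mathcal{O}$ so that $\mathcal{A}$'s guarantee applies verbatim.
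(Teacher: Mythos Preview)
The paper does not prove this lemma; it is stated with attribution to \cite{Reg09} and used as a black box, so there is no in-paper argument to compare against. Your proposal is exactly Regev's standard search-to-decision reduction: the coordinate test $(\vec a,b)\mapsto(\vec a+r\vec e_i,\,b+rk)$ together with the secret-shift self-reduction and amplification/verification. The analysis you give is correct, including the key independence claim in the $k\neq s_i$ branch (since $(\vec a',r)=(\vec a+r\vec e_i,r)$ is uniform on $\mathbb{Z}_q^n\times\mathbb{Z}_q$, the masking term $r(k-s_i)$ is uniform and independent of $\vec a'$). One small caveat worth stating explicitly in the verification stage is that you are relying on $\chi$ having support on a set of density $o(1)$ in $\mathbb{Z}_q$ (e.g.\ bounded magnitude $\ll q$), which is implicit in the paper's parameter regime but not part of the bare lemma statement.
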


\subsubsection{Symmetric-Key Constructions and Security.}
Let us now consider the following symmetric-key encryption scheme motivated by the \LWE hardness assumption, as suggested in \cite{Reg05}.
In this example of an encryption scheme, the secret string acts as a key and we encrypt a single bit by computing an \LWE sample in a suitable way that can be
detected by a receiver in possession of the key. 

\begin{construction}[\LWE scheme]\label{cons:LWE_sym}\ \\
Let $n$ be an integer, let $q$ be a modulus and let $\chi_q$ be a discrete error distribution over $\mathbb{Z}_q$ and 
consider the following symmetric-key encryption scheme $\LWE-\SKES(n,q,\chi)=(\KeyGen,\Enc,\Dec)$, defined as follows:
\begin{enumerate}
\item \emph{(key generation)} run $\KeyGen(1^n)$ and generate a key $\vec k \rand \mathbb{Z}_q^n$;
\item \emph{(encryption)}
upon a bit $b \in \bit$, sample a string $\vec a \rand \mathbb{Z}_q^n$ and error $e \from \chi_{q}$, and output
$\Enc_k(b)=(\vec a,\langle \vec a,\vec k \rangle + b \cdot \left \lfloor{\frac{q}{2}}\right \rfloor + e)$;
\item \emph{(decryption)} upon cipher $(\vec a,c)$, apply rounding to output $\Dec_k(\vec a,c)=0$ if and only if $|c - \langle \vec a, \vec k \rangle| \leq \left \lfloor{\frac{q}{4}}\right \rfloor$,
else output $1$.
\item \emph{(correctness)} $(\Dec_k \circ \Enc_k)(b) = b \,$ $(\text{with high probability})$.
\end{enumerate}
\end{construction}

Using the decisional \LWE assumption, we can easily show that \LWE-\SKES indeed satisfies a notion of indistinguishability
under a chosen-plaintext attack.

\begin{theorem}
Let \LWE-\SKES$(n,q,\chi)=(\KeyGen,\Enc,\Dec)$ be the symmetric-key encryption scheme from \expref{Construction}{cons:LWE_sym}. Then \LWE-\SKES is \INDCPA-secure.
\end{theorem}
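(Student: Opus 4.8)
The plan is to reduce \INDCPA security of \LWE-\SKES directly to the decisional \LWE assumption $\DLWE_{q,\chi}$ of \expref{Definition}{def:dlwe}, by a simple simulation argument. Assume for contradiction that some \PPT adversary \A wins the \INDGame against \LWE-\SKES with probability $1/2+\epsilon(n)$ for a non-negligible $\epsilon$. I will construct a \PPT distinguisher \D for $\DLWE_{q,\chi}$ with advantage exactly $\epsilon(n)$. The distinguisher \D is given access to a sampling oracle $\mathcal{Q}$ that is either $A_{\vec k,\chi}$ for a uniformly random $\vec k \rand \mathbb{Z}_q^n$, or the uniform oracle $\mathcal{O}$ over $\mathbb{Z}_q^n\times\mathbb{Z}_q$. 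It picks $b\rand\bit$ and runs \A, simulating the \INDGame as follows: on each encryption query of \A on a bit $m$ (in either phase), \D queries $\mathcal{Q}$ to get $(\vec a,t)$ and answers $(\vec a,\,t+m\lfloor q/2\rfloor)$; when \A outputs its pair $(m_0,m_1)$, \D queries $\mathcal{Q}$ once more for $(\vec a^*,t^*)$ and returns the challenge $(\vec a^*,\,t^*+m_b\lfloor q/2\rfloor)$; when \A halts with a guess $b'$, \D outputs $1$ iff $b'=b$. Since \A is \PPT it makes at most polynomially many queries, so \D is \PPT and makes only polynomially many (hence admissible) queries to $\mathcal{Q}$.

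The analysis splits into the two oracle cases. If $\mathcal{Q}=A_{\vec k,\chi}$, then every answer $(\vec a,\langle \vec a,\vec k\rangle+e+m\lfloor q/2\rfloor)$ is precisely an honest ciphertext $\Enc_k(m)$ for the fixed key $\vec k$, and the challenge is precisely $\Enc_k(m_b)$; thus \D perfectly simulates the \INDGame and $\Pr[\D^{A_{\vec k,\chi}}(1^n)=1]=\Pr[\A\text{ wins }\INDGame]=1/2+\epsilon(n)$. If instead $\mathcal{Q}=\mathcal{O}$, then in every answer $(\vec a,\,t+m\lfloor q/2\rfloor)$ the value $t$ is uniform over $\mathbb{Z}_q$ and independent of everything else, so $t+m\lfloor q/2\rfloor$ is again uniform and independent of $m$; in particular the challenge ciphertext is statistically independent of $b$, so \A's entire view is independent of $b$ and $\Pr[b'=b]=1/2$, giving $\Pr[\D^{\mathcal{O}}(1^n)=1]=1/2$. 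Subtracting, $\bigl|\Pr[\D^{A_{\vec k,\chi}}(1^n)=1]-\Pr[\D^{\mathcal{O}}(1^n)=1]\bigr|=\epsilon(n)$, contradicting $\DLWE_{q,\chi}$. (Should one prefer to state the hypothesis in terms of the search problem, \expref{Lemma}{lem:dlwe} lets us replace $\DLWE_{q,\chi}$ by the hardness of search \LWE.)

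There is no deep obstacle here; the two points that require care are (i) that the encryption oracle hands out \emph{many} samples all tied to the \emph{same} key, which is exactly the multi-sample, single-secret form of $A_{s,\chi}$, so the reduction loses nothing by passing \A's queries straight to $\mathcal{Q}$; and (ii) the elementary shift-invariance fact that adding the fixed offset $m\lfloor q/2\rfloor$ to a uniform element of $\mathbb{Z}_q$ leaves it uniform — this is what collapses the $b=0$ and $b=1$ challenges in the uniform hybrid and makes the ideal game trivially $1/2$. Note that correctness of decryption (the bound $|e|\le\lfloor q/4\rfloor$) plays no role in the \INDCPA argument; it is only needed for \expref{Construction}{cons:LWE_sym} to be a valid encryption scheme.
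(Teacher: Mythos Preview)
Your proof is correct and is essentially the same argument as the paper's: both replace all \LWE samples (oracle answers and the challenge) by uniform samples using $\DLWE_{q,\chi}$, then observe that in the uniform world the challenge is independent of $b$. The paper phrases this as a two-game hybrid while you phrase it as an explicit distinguisher reduction, but the content is identical.
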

\begin{proof}
We introduce a hybrid game by modifying the security game
in a way that is indistinguishable (to any \PPT adversary) from the original game in order to arrive at a security game in which the challenge
is perfectly hidden.
\begin{description}
\item[\textsc{Game 0}:] In the standard hybrid, the adversary is playing the \INDCPA security game 
for the original scheme $\Pi$ in \expref{Construction}{cons:LWE_sym}. Prior to the challenge, the adversary chooses message bits $b_0,b_1$ and is given
access to an encryption oracle $\Enc_k$. Upon receiving a challenge cipher $(\vec{a}^*,c^*) \leftarrow \Enc_k(b)$,
the adversary may perform additional queries to the encryption oracle and the goal is to decide whether the challenge corresponds to an encryption
of $b_0$ or $b_1$.

\item[\textsc{Game 1}:] In the this hybrid, the challenger instead responds with 
uniformly random samples $(\vec a,c) \rand \mathbb{Z}_q^n \times \mathbb{Z}_q$ upon each encryption query, as
well as with a challenge $(\vec {a}^*,c^*) \rand \mathbb{Z}_q^n \times \mathbb{Z}_q$.
From the decisional \LWE assumption in \expref{Definition}{def:dlwe} and \expref{Lemma}{lem:dlwe}, it follows that no \PPT adversary can distinguish between genuine
\LWE samples or uniformly random samples (both with $ b  \left \lfloor{\frac{q}{2}}\right \rfloor$ added to them).
\end{description}
Since adopting this hybrid game only negligibly affects the success probability of any \PPT adversary, we arrive at a security game in which
the adversary cannot win, except with at most negligible probability better than guessing at random.\qed
\end{proof}

\subsubsection{Separation Result.}

In preparation for the sections on post-quantum cryptography in which we study quantum access to decryption,
let us now conclude this chapter with a simple separation between the two notions of security from \expref{Section}{ch:security} and show
that there exist schemes that are \INDCPA-secure but not \INDCCA-secure.
Using the \LWE-\SKES scheme, we can easily prove such a separation. The intuition is that decryption oracle access in this scheme
allows the adversary to evaluate the noisy inner product upon arbitrary inputs.
As a result, the adversary can determine parts of the secret key using only a few queries to its decryption oracle.

\begin{lemma}
Let \LWE-\SKES$(n,q,\chi)=(\KeyGen,\Enc,\Dec)$ be the symmetric-key encryption scheme from \expref{Construction}{cons:LWE_sym}. Then \LWE-\SKES does not satisfy \INDCCA-security.
\end{lemma}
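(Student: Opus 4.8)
The plan is to break \INDCCA security by having the adversary reconstruct the entire secret key during the (non-adaptive) decryption phase and then decrypt the challenge by itself. The crucial observation is that the decryption oracle of \expref{Construction}{cons:LWE_sym}, on input $(\vec a, c)$, outputs $0$ exactly when $c - \langle \vec a, \vec k\rangle$ is within cyclic distance $\lfloor q/4\rfloor$ of $0$ in $\mathbb{Z}_q$. Substituting the $i$-th standard basis vector $\vec e_i$ for $\vec a$, this becomes a membership test for the arc $I_i := [\, k_i - \lfloor q/4\rfloor,\ k_i + \lfloor q/4\rfloor\,] \subseteq \mathbb{Z}_q$, an interval of length $2\lfloor q/4\rfloor + 1$ centred at the $i$-th key coordinate $k_i$.

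First I would, for each $i \in \{1, \dots, n\}$, run a binary search over $c \in \mathbb{Z}_q$ using decryption queries $(\vec e_i, c)$ to pin down one endpoint of $I_i$, and then recover $k_i$ exactly by adding $\lfloor q/4\rfloor$ back. This uses $O(\log q) = \poly(n)$ queries per coordinate (or, when $q = \poly(n)$, one can simply sweep all $c \in \mathbb{Z}_q$ and read off the centre of the accepting interval), so $O(n\log q) = \poly(n)$ queries in total. Every one of these queries is issued before the challenge is seen, so the attack lives entirely in the pre-challenge phase and already works in the weaker, non-adaptive model. The adversary then submits the two challenge messages $m_0 = 0$ and $m_1 = 1$.

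Next, on receiving the challenge ciphertext $(\vec a^*, c^*) \leftarrow \Enc_k(m_b) = (\vec a^*,\ \langle \vec a^*, \vec k\rangle + m_b \lfloor q/2\rfloor + e)$, the adversary --- now in possession of $\vec k$ --- simply evaluates $\Dec_k(\vec a^*, c^*)$ locally and outputs the result as its guess $b'$. By the correctness guarantee of \expref{Construction}{cons:LWE_sym} (which holds whenever $|e| \le \lfloor q/4\rfloor$, hence with the high probability built into the scheme's parameters), we get $b' = m_b = b$ with probability bounded away from $1/2$ by a non-negligible (indeed overwhelming) amount. Thus the adversary wins the $\IndGame$ with probability exceeding $1/2 + \negl(n)$, contradicting \INDCCA security of \LWE-\SKES.

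The only step that takes any care --- the \emph{hard part}, such as it is --- is the key-recovery: one has to handle the floor functions and the cyclic wrap-around when locating the endpoint of $I_i$ inside $\mathbb{Z}_q$, and confirm that this yields the exact value of $k_i$ rather than an approximation. This is elementary bookkeeping; everything afterwards is immediate from correctness of the scheme and from the fact that the non-adaptive decryption oracle accepts arbitrary ciphertexts $(\vec a, c) \in \mathbb{Z}_q^{n+1}$, with no restriction.
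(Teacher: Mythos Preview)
Your proposal is correct and rests on the same core observation as the paper: a decryption query $(\vec e_i, c)$ with the $i$-th standard basis vector tests whether $c$ lies in the arc of $\mathbb{Z}_q$ of radius $\lfloor q/4\rfloor$ centred at $k_i$, so the decryption oracle leaks each key coordinate. The two proofs diverge in how they extract $k_i$ from this membership test.

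The paper's attack samples $M$ uniformly random values $c_m \in \mathbb{Z}_q$, queries $b_m \leftarrow \Dec_k(\vec e_i, c_m)$, sets $X[m] = c_m - b_m\lfloor q/2\rfloor$, and outputs the empirical average $\tilde k_i = \tfrac{1}{M}\sum_m X[m]$; it then invokes Hoeffding's inequality and a union bound over coordinates to argue that $\tilde k = k$ with high probability for polynomially large $M$. Your attack instead binary-searches for an accept/reject transition point on the circle and shifts by $\lfloor q/4\rfloor$ to recover $k_i$ exactly. This is deterministic, needs only $O(\log q)$ queries per coordinate rather than a concentration-controlled $M$, and avoids any probabilistic analysis. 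In fact the paper's averaging argument, taken literally, is delicate in cyclic $\mathbb{Z}_q$ (one has to say what ``average'' means on a circle and why the shifted samples concentrate at $k_i$ rather than at some wrapped-around artifact), whereas your endpoint-finding handles wrap-around transparently: any transition from output $0$ to output $1$ pins down $k_i \pm \lfloor q/4\rfloor$ exactly. So your route is the more elementary and arguably cleaner one; the paper's buys nothing extra here.
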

\begin{proof}
The following algorithm recovers the key using close to a linear number of classical decryption queries.\\

\IncMargin{1em}
\begin{algorithm}[H]
\SetKwInOut{Input}{input}\SetKwInOut{Output}{output}
\Input{Classical decryption oracle $\mathcal{O}_{\Dec_k}$ for $\LWE$-$\SKES(n,q,\chi)$\\ Parameter: $M \in \mathbb{N}$}
\Output{$\tilde k \in \Z_q^n$}
\BlankLine
\For{$i\leftarrow 1$ \KwTo $n$}{
initialize a list $X[1 \,.. \,M]$ of size $M$.\\
\For{$m\leftarrow 1$ \KwTo $M$}{
\emph{sample $c_m \rand \Z_q$}\;
\emph{query $b_m \leftarrow \Dec_k(\vec e_i, c_m)$, where $\vec e_i = (0,..., 1, ...,0)$}\;
\emph{let $X[m] = c_m-b_m\left \lfloor{\frac{q}{2}}\right \rfloor$}\;
}\textbf{end}\\
\emph{choose $\tilde k_i = \frac{1}{M} \displaystyle\sum_{m = 1}^{M} X[m]$}\;
}\textbf{end}
output: $\tilde k \in \Z_q^n$.
\caption{Classical Decryption-Access Key-Recovery}\label{alg:KeyRec}
\end{algorithm}\DecMargin{1em}\ \\

A standard analysis using Hoeffding's bound (\expref{Lemma}{lem:hoeffding}) and the union bound guarantees that
the adversary can now amplify the success probability by simply controlling for
the probability of failure through enough repetitions $M$.
Therefore, any polynomial time adversary with access to a decryption oracle can output $\tilde k = k$ with high probability given enough repetitions
$M$, hence can recover the key and obtain a crucial advantage in the security game. In order to win at the challenge and to distinguish between $b_0$ and $b_1$,
the adversary first computes the key during the pre-challenge phase and then uses it to evaluate an inner product between the key and the 
challenge randomness $a^*$. As a result, the adversary now succeeds at the indistinguishability game with nonnegligible probability.\qed
\end{proof}

\newpage
\section{Quantum Computation}

Quantum information processing is concerned with the storage and manipulation of information in a quantum system.
The fundamental unit of information is the \textit{qubit}, a quantum two-level system of states $\ket{0}$ and $\ket{1}$. 
Fortunately, nature presents us with many ways of realizing a qubit in a physical system. Typical representations of a qubit
are found in the two spin $1/2$ states of a particle, the vertical or horizontal polarization of a photon or simply a ground and excited state 
in the energy spectrum of an atom.
In this chapter, we give a brief overview of the most important concepts in the theory of quantum computing to date. 
For further reading, we refer to \cite{NC10}.
Finally, with regard to the physical realization of quantum computers, we refer to Chapter \ref{ch:physical}.

\subsection{Formalism}

A quantum system lives in a Hilbert space $\mathcal{H}$, a complex vector space together with an inner product $\braket{\cdot|\cdot}$. 
A qubit is a quantum system $\ket{\psi}$ of mutually orthogonal basis states $\ket{0}$ and $\ket{1}$, given by a normalized state 
vector of amplitudes $|\alpha|^2 + |\beta|^2 = 1$, where
\begin{equation}
\ket{\psi} = \alpha \ket{0} + \beta \ket{1}.
\end{equation}
Contrary to classical bits of information that carry definite states of either $0$ or $1$, a qubit can be represented as a continuous superposition 
of two basis states.
By introducing angular degrees of freedom $\phi$ and $\theta$, a qubit can be visualized as a point on the \textit{Bloch sphere}, as in \expref{Figure}{fig:bloch}, 
and written as\footnote{Note that we ignore the contributions from an overall phase as it produces no observable effects.}
\begin{equation}
\ket{\psi} = \cos\left(\frac{\theta}{2}\right) \ket{0} + e^{i \phi} \sin\left(\frac{\theta}{2}\right) \ket{1}.
\end{equation}
Given two quantum systems $\mathcal{H}_A$ and $\mathcal{H}_B$, the composition results in a joint quantum system given by
$\mathcal{H} = \mathcal{H}_A \otimes \mathcal{H}_B$, the tensor product of the two systems. Thus, for $\ket{\psi}_A \in \mathcal{H}_A$ 
and $\ket{\phi}_B \in \mathcal{H}_B$, the product state is given by $\ket{\psi}_A \otimes \ket{\phi}_B$.
For example, if $\ket{\psi}_A = \alpha \ket{0} + \beta \ket{1}$ and 
$\ket{\phi}_B = \delta \ket{0} + \gamma \ket{1}$, then:
\begin{align}
\ket{\psi}_A \otimes \ket{\phi}_B &= \alpha \delta \ket{0} \otimes \ket{0} + \alpha \gamma \ket{0} \otimes \ket{1} +\beta \delta \ket{1} \otimes \ket{0} +
 \beta \gamma \ket{1} \otimes \ket{1}.
\end{align}
\begin{figure}[tbp]
\centering 
\includegraphics[width=.40\textwidth,origin=c]{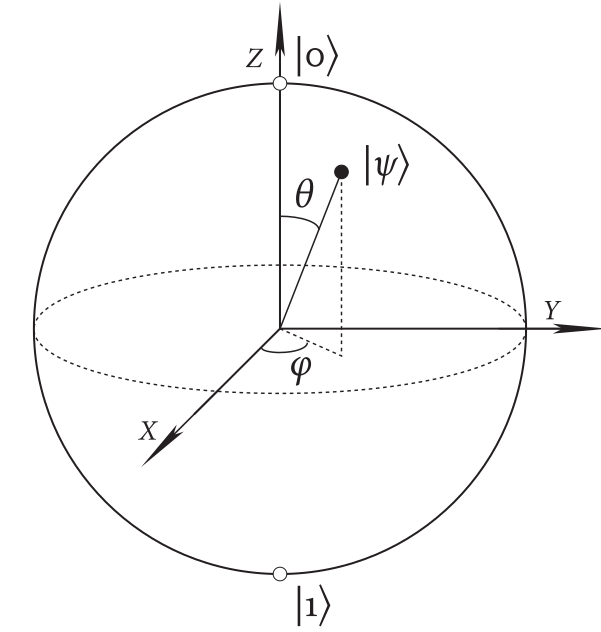}
\caption{\label{fig:i}(\cite{Wil13}) The Bloch sphere.}
\label{fig:bloch}
\end{figure}\ \\
For the sake of brevity, we often write $\ket{\psi} \ket{\phi} = \ket{\psi}_A \otimes \ket{\phi}_B$. 
Furthermore, we shall also frequently adopt the notation $\ket{00}$ instead of $\ket{0}\ket{0}$, as well as $\ket{01}$, $\ket{10}$ and $\ket{11}$.
This allows us to conveniently represent $\ket{\psi} \ket{\phi}$ using a decimal instead of a binary expression:
\begin{equation}
\alpha_0 \ket{0} + \alpha_1 \ket{1} + \alpha_2 \ket{2} + \alpha_3 \ket{3}.
\end{equation}
In general, a collection of $n$ qubits forms a \textit{register} of size $n$:
\begin{equation}
\ket{\Psi} = \sum_{x \in \{0,1\}^n} \alpha_x \ket{x_1} \ket{x_2}... \ket{x_n},
\end{equation}
where, due to normalization, we require $\sum_x |\alpha_x|^2 = 1$.
Equivalently, we can also consider the above as a superposition of $2^n$ different states in a decimal expression:
\begin{equation}
\ket{\Psi} = \sum_{x=0}^{2^n-1} \alpha_x \ket{x}.
\end{equation}
Excluding the overall phase, the description of an $n$-qubit state already requires an enormous amount of $2^n -1$ many complex numbers, 
as $\mathcal{H}_2^n \cong \mathbbm{C}^{2n}$.
This fact can be exploited in quantum parallelism, which we discuss in the subsequent chapters.
More generally, for $d \geq 2$, it is also useful to consider \textit{qudits}, a quantum system of computational states $\ket{0},\ket{1}, ..., \ket{d-1}$ in a 
register of size $n$:
\begin{equation}
\ket{\Psi} = \sum_{x\in \mathbb{Z}_d^n} \alpha_x \ket{x_1} \ket{x_2}... \ket{x_n}.
\end{equation}
Similarly, by adopting a decimal expression, we can write:
\begin{equation}
\ket{\Psi} = \sum_{x=0}^{d^n-1} \alpha_x \ket{x}.
\end{equation}
In this case, the computational space $\mathcal{H}_d^n \cong \mathbbm{C}^{dn}$ features an enormous amount of $q^n -1$ different states.
The use of qudits is particularly useful in the context of the \LWE problem of the later sections.
While qudits are certainly more difficult to realize in a physical system, they can easily be emulated with qubits by using a block encoding in
which each qudit is packed into $\left \lceil{\log_2(d)}\right \rceil$ many qubits.

A quantum system with a well-defined state vector $\ket{\psi}$ in $\mathcal{H}$ is said to be \textit{pure}.
The most general state of a quantum system, however, is a \textit{mixed} state described by a \textit{density operator} $\rho \in \mathcal{D}(\mathcal{H})$, 
the set of positive semidefinite Hermitian matrices of unit trace. 
We can interpret the density operator as a statistical ensemble of pure states $\ket{\psi_i}$, 
where $\sum_i p_i = 1$, $p_i \geq 0$ and
\begin{equation}
\rho = \sum_{i} p_i \ket{\psi_i}\bra{\psi_i}.
\end{equation}
If $\rho$ is pure, then $\rho$ has rank $1$ and we can conveniently write $\rho = \ket{\psi}\bra{\psi}$. 
Furthermore, we can distinguish between pure and mixed states by using the fact that 
tr$(\rho^2) = 1$, if and only if $\rho$ is pure, whereas tr$(\rho^2) < 1$, if and only if $\rho$ is mixed.

\subsection{Unitary Evolution}\label{ch:unitary}

In the previous section, we introduced the concept of a \textit{qubit}, a quantum system $\ket{\psi}$ described by a continuous superposition of states $\ket{0}$ and $\ket{1}$. 
Computation, understood as simply the manipulation of encoded information such as bits, requires a notion of what transformations are possible within
a certain model of computation. Just as in Turing's abstract model of computation, it is necessary to define a model together
with a set of rules on how to operate symbols stored on the equivalent of a tape by a set of instructions. 
In order to define what computation means in the quantum model of computation, we require one of the postulates of quantum mechanics:\\
The time evolution of a closed quantum system is governed by the \textit{Schr{\"o}dinger equation},
\begin{align}
i \hbar \frac{d\ket{\psi} }{dt } = \mathcal{H} \ket{\psi},
\end{align}
where $\hbar$ is \textit{Planck's constant} and $\mathcal{H}$ is the Hamiltonian operator of the system.
If the Hamiltonian is time-independent, the Schr{\"o}dinger equation gives rise to the following dynamics of the state vector:
\begin{equation}
\label{unitary_evolution}
\ket{\psi(t)} = \exp{\left(\frac{-i \mathcal{H} t}{\hbar}\right)} \ket{\psi(0)}.
\end{equation}
The associated time-evolution operator,
\begin{align}
U = \exp{\left(\frac{-i \mathcal{H} t}{\hbar}\right)},
\end{align}
is a \textit{unitary} evolution operator, i.e. a norm-preserving operation that satisfies $U^\dagger U = \mathbbm{1}$ such that
\begin{equation}
\braket{\psi(t) | \psi(t)} = \bra{\psi(t)}U^\dagger U \ket{\psi(t)} = \braket{\psi(0) | \psi(0)} = 1.
\end{equation}
Consequently, we can also write the unitary evolution of a density operator as
\begin{equation}
\rho(t) = \sum_{i} p_i \ket{\psi_i(t)}\bra{\psi_i(t)}= \sum_{i} p_i \, U\ket{\psi_i(0)}\bra{\psi_i(0)}U^\dagger = U \rho(0)\,  U^\dagger.
\end{equation}
Since an ideal qubit is required to be a closed quantum system, any unitary time-evolution describing a computation
corresponds to a rotation on the Bloch sphere.
Furthermore, the time-evolution of a quantum system under a given stationary Hamiltonian is reversible through its Hermitian adjoint $U^\dagger$.
Consequently, all unitary quantum gates must be inherently reversible. 
As we discuss in the next sections, this fact has important consequences for many elementary operations.

\subsection{Quantum Measurement}

The measurement postulate of quantum mechanics specifies how information is retrieved in the quantum model of computation. 
Thus, in accordance with the laws of quantum mechanics, a measurement of a quantum state translates into classical measurement outcomes
according to a set of rules.
In this section, we highlight the most relevant notions of measurement required for
the work contained in this thesis.

Quantum measurements are described by a set of \textit{measurement operators} $\{M_m\}$ acting on the state space of a given 
system. These operators 
obey a completeness relation $\sum_m M_m^\dagger M_m = \mathbbm{1}$, where
$m$ labels the measurement outcome of the associated measurement operator.
Let $\ket{\psi}$ be the state vector of the system prior to measurement.
Then, the probability that outcome $m$ occurs is:
\begin{align}
p(m) = \bra{\psi} M_m^\dagger M_m \ket{\psi}.
\end{align}
The post-measurement state is subsequently renormalized and given by:
\begin{align}
\frac{ M_m \ket{\psi}}{\sqrt{\bra{\psi} M_m^\dagger M_m \ket{\psi}}}.
\end{align}
For example, given the qubit from the previous sections,
\begin{equation}
\ket{\psi} = \alpha \ket{0} + \beta \ket{1},
\end{equation}
a measurement in the \textit{computational basis} is defined by two measurement operators, where $M_0 = \ket{0}\bra{0}$ and $M_1 = \ket{1}\bra{1}$.
Each measurement operator is Hermitian, since $M_0^2=M_0$ and $M_1^2=M_1$, so that the completeness relation is obeyed.
The probabilities of the respective outcomes are given by:
\begin{align}
p(0) &= \bra{\psi} M_0^\dagger M_0 \ket{\psi} =\braket{\psi | 0} \braket{0 | \psi} = |\alpha|^2\\
p(1) &= \bra{\psi} M_1^\dagger M_1 \ket{\psi} = \braket{\psi | 1} \braket{1 | \psi}  = |\beta|^2.
\end{align}
Consequently, a measurement results in $\ket{0}$ with probability $|\alpha|^2$, and $\ket{1}$ with probability $|\beta|^2$.
This brings us to a special case of measurements, the class of \textit{projective measurements}. Here, the measurement operators are given
by hermitian operators $\{P_m\}$, so-called \textit{projectors}, that obey a completeness relation $\sum_m P_m = \mathbbm{1}$ and satisfy $P_n P_m = \delta_{n,m} P_m$.\\
The probability to observe the outcome $m$ is given by:
\begin{align}
p(m) = \bra{\psi} P_m \ket{\psi},
\end{align}
whereas the post-measurement state is
\begin{align}
\frac{ P_m \ket{\psi}}{\sqrt{\bra{\psi} P_m \ket{\psi}}}.
\end{align}
A final class of more general measurements we consider is that of \POVM \textit{measurements} (Positive-Operator-Valued Measure) \cite{NC10}, 
where the post-measurement state is of little
interest and the concern lies on the outcome probabilities corresponding to a set of measurement operators.
In this context, a set of complete positive semidefinite measurement operators $\{E_m\}$ is employed
such that $\sum_m E_m = \mathbbm{1}$.

\subsection{Universal Quantum Gates}\label{ch:universal}

In this section we introduce elementary quantum gates, in particular those that allow for universal quantum computation.
In \expref{Section}{ch:unitary}, we observed that all quantum gates must correspond to unitary transformations, and are thus inherently reversible.
While classical universality of logic gates is achieved by using only a NAND gate, we require a certain universal set of
at least three elementary gates for quantum computation.

Let us begin with a few examples of single-qubit quantum gates.
A simple set of single-qubit gates are the $X,Y,Z$-gates, resembling the \textit{Pauli matrices} $\sigma_x,\sigma_y$ and $\sigma_z$:
\begin{equation}
X =
  \begin{bmatrix}
  0\, &\hphantom{ll} 1\\
  1\, &\hphantom{ll} 0
  \end{bmatrix}
\phantom{kkkk}
Y = 
  \begin{bmatrix}
  0\, &\hphantom{} -i\\
  i\, &\hphantom{-} 0
  \end{bmatrix}
\phantom{kkkk}
Z =
  \begin{bmatrix}
  1\, &\hphantom{-} 0\\
  0\, &\hphantom{} -1
  \end{bmatrix}
\end{equation}
Consider, for example, a qubit $\ket{\psi} = \alpha \ket{0} + \beta \ket{1}$. In vector representation, we compute the action of the $X$-gate as follows:
\begin{equation}
X  \ket{\psi} =
  \begin{bmatrix}
  \alpha\\
  \beta
  \end{bmatrix}
\cdot 
\begin{bmatrix}
  0\, &\hphantom{ll} 1\\
  1\, &\hphantom{ll} 0
  \end{bmatrix}
  = \beta \ket{0} + \alpha \ket{1}.
\end{equation}
One of the most frequently used operations in quantum computing is that of the \textit{Hadamard gate}, which is given by
\begin{equation}
H = \frac{1}{\sqrt{2}}
  \begin{bmatrix}
  1\, &\hphantom{-} 1\\
  1\, &\hphantom{}-1
  \end{bmatrix}.
\end{equation}
The Hadamard gate, often described as a square root of the $X$-gate, completes only half of a $180^\circ$ rotation on the Bloch sphere
and maps the basis states onto an equal superposition and back:
\begin{equation}
\ket{0} \overset{H}{\longleftrightarrow} \, \, \ket{+} = \frac{\ket{0}+\ket{1} }{\sqrt{2} } \hspace{15mm} \ket{1} \overset{H}{\longleftrightarrow} \, \, \ket{-} = \frac{\ket{0}-\ket{1} }{\sqrt{2} }
\end{equation}
Another important single-qubit gate is the \textit{phase-shift gate} in which $\phi$ denotes the angle of rotation. The special
case where $\phi = \pi/4$ is often referred to as the T-gate:
\begin{equation}
\phi = 
  \begin{bmatrix}
  1\, &\hphantom{i\phi} 0\\
  0\, &\hphantom{\phi}e^{i\phi}
  \end{bmatrix},
\quad \quad \quad
T = 
  \begin{bmatrix}
  1\, &\hphantom{i\phi} 0\\
  0\, &\hphantom{\phi}e^{i \pi/4}
  \end{bmatrix}.
\end{equation}
In addition, we consider the rotation operators around the $x,y$ and $z$ axis:
\begin{equation}
R_x(\theta) =
\begin{bmatrix} 
  \cos \frac{\theta}{2}     & \, -i \sin\frac{\theta}{2}\\ 
  -i \sin\frac{\theta}{2}  & \cos\frac{\theta}{2}
\end{bmatrix}
\phantom{kk}
R_y(\theta) =
\begin{bmatrix} 
  \cos\frac{\theta}{2}  & \, -\sin\frac{\theta}{2}\\ 
  \sin\frac{\theta}{2}  & \cos\frac{\theta}{2}
\end{bmatrix}
\phantom{kk}
R_z(\theta) =
\begin{bmatrix} 
  e^{-i \frac{\theta}{2}}  & \, 0\\ 
  0  & \, e^{i \frac{\theta}{2}}
\end{bmatrix}.
\end{equation}
In fact, \textit{any} unitary single-qubit operation $U$ can be decomposed using the rotation operators above. 

\begin{theorem}[\cite{NC10}, Theorem $4.10$]\label{th:single}\ \\
The two rotation operations $R_x$ and $R_y$ comprise a basis for all single-qubit operations: For every $2 \times 2$ unitary operation $U$,
there exist real numbers $\alpha,\beta,\gamma$ and $\delta$ such that:
$$ U = e^{i \alpha} R_x(\beta)R_y(\gamma)R_x(\delta).$$
\end{theorem}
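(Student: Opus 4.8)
The plan is to give the standard "Euler angle" decomposition of an arbitrary $2\times 2$ unitary. First I would note that every $2\times 2$ unitary $U$ can be written as $U = e^{i\alpha}\,V$ where $V$ has determinant $1$, i.e.\ $V \in SU(2)$: indeed $\det U$ has modulus $1$, so $\det U = e^{2i\alpha}$ for some real $\alpha$, and setting $V = e^{-i\alpha}U$ gives $\det V = 1$. This reduces the problem to showing that any $V \in SU(2)$ factors as $R_x(\beta)R_y(\gamma)R_x(\delta)$ for suitable real $\beta,\gamma,\delta$. (The global phase $e^{i\alpha}$ commutes with everything and is pulled back out at the end.)

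The key computational step is to parametrize $SU(2)$ concretely and match it against the product on the right-hand side. Writing a general special-unitary matrix as
\begin{equation}
V = \begin{bmatrix} a & b \\ -\bar b & \bar a \end{bmatrix}, \qquad |a|^2 + |b|^2 = 1,
\end{equation}
I would then compute $R_x(\beta)R_y(\gamma)R_x(\delta)$ explicitly from the matrix forms given just above the theorem statement. Using $R_x(\theta) = \cos\frac{\theta}{2}\,\mathbbm{1} - i\sin\frac{\theta}{2}\,X$ and the analogous expression for $R_y$, one gets a matrix whose entries are trigonometric polynomials in $\beta/2,\gamma/2,\delta/2$; in particular the $(1,1)$ entry comes out as $\cos\frac{\gamma}{2}\,e^{-i(\beta+\delta)/2}$ and the $(1,2)$ entry as $-\sin\frac{\gamma}{2}\,e^{-i(\beta-\delta)/2}$ (up to the sign/phase conventions of the stated matrices, which I would track carefully). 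Matching moduli determines $\gamma$ via $\cos\frac{\gamma}{2} = |a|$ (solvable since $|a| \in [0,1]$), and matching the arguments of $a$ and $b$ gives two linear equations in $\beta$ and $\delta$, namely $(\beta+\delta)/2$ and $(\beta-\delta)/2$ are pinned down by $\arg a$ and $\arg b$, which then solve uniquely for $\beta$ and $\delta$.

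A clean way to package the existence part, avoiding the case analysis that arises when $a$ or $b$ vanishes, is to invoke the fact that $\{R_x(\theta)\}$ and $\{R_y(\theta)\}$ are the one-parameter subgroups generated by $X$ and $Y$, together with the well-known surjectivity of the $ZXZ$ (or $XYX$) Euler-angle map onto $SO(3)$ lifted to $SU(2)$; but for a self-contained argument at this level I would just present the direct matrix identity and solve for the angles as above. The main obstacle is purely bookkeeping: keeping the half-angle conventions, the placement of the $-i$'s in $R_x$ and $R_y$, and the overall phase consistent so that the four extracted parameters $\alpha,\beta,\gamma,\delta$ genuinely reproduce $U$ on the nose rather than up to an unaccounted phase. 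Once the identity $R_x(\beta)R_y(\gamma)R_x(\delta)$ is written out symbolically, the rest is routine: equate entries with $e^{-i\alpha}U$, read off $\gamma$ from a modulus, and read off $\beta,\delta$ from two phase equations. I would close by remarking that this simultaneously proves the "basis" claim — every single-qubit unitary lies in the group generated by $\{R_x(\theta)\}_\theta$ and $\{R_y(\theta)\}_\theta$ up to global phase — which is exactly what is needed for the universality results invoked later.
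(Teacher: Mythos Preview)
Your proposal is correct and follows the standard Euler-angle argument from the cited reference \cite{NC10}. Note, however, that the paper itself does not prove this theorem at all: it is stated as a quotation of Theorem~4.10 from Nielsen and Chuang and immediately followed by the discussion of two-qubit gates, so there is no ``paper's own proof'' to compare against. Your write-up is exactly the argument one finds in \cite{NC10}: reduce to $SU(2)$ by extracting the global phase, parametrize a general $SU(2)$ matrix, expand $R_x(\beta)R_y(\gamma)R_x(\delta)$, and solve for the angles by matching moduli and phases of the entries. The only caution is the one you already flag --- the degenerate cases $|a|=0$ or $|a|=1$ where one of the phase equations becomes underdetermined --- but these are easily handled since then one of $\beta,\delta$ can be chosen freely and the other absorbs the remaining phase.
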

Let us now conclude our discussion on quantum gates with two-qubit gates, perhaps
the most striking class of operations found in quantum computers. Early work by Deutsch, Eckert and Josza suggests
that this class of gates is precisely the set of operations that entangles qubits with one another, thereby providing the 
foundation for most quantum computations.
The most important two-qubit gate is the \textit{controlled-NOT} (CNOT) gate, an operation that performs a bit flip on a target bit if and only if 
the control qubit is $\ket{0}$.
Another important gate is the Toffoli (CCNOT) gate, a three-qubit gate that flips the last qubit only if and only if all three inputs correspond to $\ket{1}$.
The matrix representations are given by:
\begin{equation}
\text{CNOT } = 
  \begin{bmatrix}
  1 &\,\, 0 &\,\, 0 &\,\, 0\\
  0 & 1 & 0 & 0\\
  0 & 0 & 0 & 1\\
  0 & 0 & 1 & 0
  \end{bmatrix},
\quad \quad
\text{Toffoli } = 
  \begin{bmatrix}
  1 & 0 & 0 & 0 & 0 & 0 & 0 & 0 & 0\\
  0 & 1 & 0 & 0 & 0 & 0 & 0 & 0 & 0\\
  0 & 0 & 1 & 0 & 0 & 0 & 0 & 0 & 0\\
  0 & 0 & 0 & 1 & 0 & 0 & 0 & 0 & 0\\
  0 & 0 & 0 & 0 & 1 & 0 & 0 & 0 & 0\\
  0 & 0 & 0 & 0 & 0 & 1 & 0 & 0 & 0\\
  0 & 0 & 0 & 0 & 0 & 0 & 1 & 0 & 0\\
  0 & 0 & 0 & 0 & 0 & 0 & 0 & 0 & 1\\
  0 & 0 & 0 & 0 & 0 & 0 & 0 & 1 & 0\\
  \end{bmatrix}
  \phantom{kkkkkk}
\end{equation}
Equivalently, these two-qubit and three-qubit gates can also be written as the following operations:
\begin{align}
&\text{CNOT: } \ket{x}\ket{y} \longrightarrow \ket{x}\ket{x \oplus y}\\
&\text{Toffoli: } \ket{x}\ket{y} \ket{z} \longrightarrow \ket{x}\ket{y}\ket{z \oplus x \wedge y },
\end{align}
where $\oplus$ denotes addition modulo 2 and $\wedge$ denotes the AND operation (\expref{Table}{AND_NAND}).
Finally, we also consider the \textit{controlled-Z} (CZ) gate, an operation that features an additional minus sign and has the following matrix representation:
\begin{equation}
\text{CZ } = 
  \begin{bmatrix}
  1 &\,\, 0 &\,\, 0 &\,\, 0\\
  0 &\,\, 1 &\,\, 0 &\,\, 0\\
  0 &\,\, 0 &\,\, 1 &\,\, 0\\
  0 &\,\, 0 &\,\, 0 &\,  -1
  \end{bmatrix}
\end{equation}
The following theorem states ensures that quantum computation is indeed universal using only a limited set of gates.

\begin{theorem}[\cite{Deu89} \cite{Kit97}, Universal set of quantum operations] \label{thm:universality} \ \\
The Hadamard gate, the Toffoli gate and phase-shift gate comprise a universal basis for any quantum operation:
For every $D\geq3$, there exists $l\leq 100(D \log\frac{1}{\epsilon})^3$ such that every unitary $D \times D$ matrix $U$
can be approximated by a sequence of the above unitary gates to $\epsilon > 0$ degree of accuracy:
$$ | U_{i,j} - (U_l \cdot \cdot \cdot U_1)_{i,j} | \, < \, \epsilon,$$
where the index $(i,j)$ denotes the entries of the respective matrices.
\end{theorem}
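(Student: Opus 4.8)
The plan is to follow the standard two-stage route to universality, keeping careful track of both the gate count and the accumulated error. In the first stage I would reduce an arbitrary $D\times D$ unitary to a product of \emph{two-level} unitaries --- unitaries acting as the identity on all but a two-dimensional coordinate subspace $\Span\{\ket{s},\ket{t}\}$ --- of which at most $\binom{D}{2}$ suffice; this is the usual argument that clears the off-diagonal entries of $U$ column by column, exactly as one reduces a matrix by Givens rotations. Each two-level unitary is then realized by walking a Gray-code path from $\ket{s}$ to $\ket{t}$: a sequence of $O(\log D)$ multiply-controlled-NOT gates brings the two relevant basis vectors to strings differing in a single bit, one applies a single controlled single-qubit gate there, and then undoes the permutation. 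A multiply-controlled gate itself decomposes, using a few scratch qubits, into $O(\log D)$ Toffoli gates together with one controlled single-qubit gate (and the CNOT used throughout is a Toffoli with one control pinned to $\ket{1}$). Thus everything is reduced \emph{exactly} to Toffoli gates and controlled single-qubit unitaries, with only $\poly(D)$ of the latter.

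The second stage approximates the single-qubit pieces using only $\{H,T\}$. Here I would first invoke the classical fact --- the content imported from \cite{Deu89,Kit97} --- that $\langle H,T\rangle$ is dense in $PSU(2)$: a suitable product of $H$ and $T$ is a Bloch-sphere rotation through an angle that is an irrational multiple of $\pi$, and a second such rotation about a non-parallel axis generates a dense subgroup by a standard compactness argument. Given density, together with the fact that the generating set is effectively closed under inverses ($H^{-1}=H$, $T^{-1}=T^{7}$, $\mathrm{Toffoli}^{-1}=\mathrm{Toffoli}$), the Solovay--Kitaev theorem upgrades density to \emph{efficient} approximation: every $U\in SU(2)$ has an $\epsilon$-approximating word of length $O(\log^{c}(1/\epsilon))$. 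A controlled single-qubit unitary costs no more, up to a constant factor: write $U=e^{i\alpha}A\,X\,B\,X\,C$ with $ABC=\openone$, so the controlled gate is two CNOTs plus $\epsilon$-approximations of $A,B,C$. Finally a general (non-special-unitary) single-qubit $U$ and the remaining global phase are handled by \expref{Theorem}{th:single}: $R_z$ is a phase shift, and $R_x,R_y$ are its conjugates by $H$ and by $T^{2}$.

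It remains to combine the error budgets. With at most $\binom{D}{2}=O(D^{2})$ two-level unitaries, each costing $O(\log D)$ exact Toffolis and $O(1)$ controlled single-qubit gates, it suffices to approximate every single-qubit piece to accuracy $\epsilon'=\Theta(\epsilon/D^{2})$; subadditivity of operator-norm error under composition then forces the total error, and hence the entrywise deviation, below $\epsilon$. Each such piece costs $O(\log^{c}(D^{2}/\epsilon))=O(\log^{c}(D/\epsilon))$ gates, for a grand total $l=O(D^{2}\log^{c}(D/\epsilon))$. Absorbing the polynomial-in-$D$ prefactor and the $\log D$ contributions into the leading behaviour yields a bound of the claimed shape $100\,(D\log(1/\epsilon))^{3}$ for $D\ge 3$.

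The main obstacle is the Solovay--Kitaev step itself: promoting ``the generated subgroup is dense'' to ``polylogarithmically many gates suffice'' is the one genuinely non-trivial ingredient, established by the recursive net-refinement argument in which a fine approximation is assembled from a coarser one via balanced group commutators, each recursion level improving the precision superlinearly at only a constant multiplicative cost in word length. A secondary, bookkeeping-type subtlety is that the precise constant and exponent in the statement depend on which quantitative form of Solovay--Kitaev one invokes (the textbook recursion gives an exponent near $3.97$, sharper analyses push it lower); matching the crude cubic estimate with constant $100$ amounts to checking that these generous constants comfortably absorb the residual logarithmic and polynomial-in-$D$ factors in the stated regime, rather than to any new idea.
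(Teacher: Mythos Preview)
The paper does not prove this theorem: it is stated with attribution to \cite{Deu89,Kit97} and then used as a black box, with no argument beyond the remark that ``a basis for quantum computation is not limited to precisely the set as given.'' So there is nothing to compare your proposal against in the paper itself.

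That said, your outline is the standard Nielsen--Chuang route and is essentially sound as a sketch. Two honest caveats. First, the final bookkeeping is loose: your count is $O(D^{2}\log^{c}(D/\epsilon))$, and ``absorbing'' this into $100(D\log(1/\epsilon))^{3}$ is not automatic --- in particular the textbook Solovay--Kitaev exponent $c\approx 3.97$ does not literally give a cube, and the $\log D$ contribution does not disappear for free. You correctly flag this, but it means you have really proved a bound of the right \emph{shape} rather than the stated constant and exponent. Second, a minor technical point: the Toffoli gate is a permutation matrix and together with $H$ alone generates only real orthogonal matrices, so the phase-shift $T$ is genuinely needed even before Solovay--Kitaev enters; your argument implicitly uses this when you invoke density of $\langle H,T\rangle$ in $PSU(2)$, but it is worth saying explicitly since the theorem lists three gates and one might wonder whether all three are necessary.
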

While a basis for quantum computation is not limited to precisely the set as given in \expref{Theorem}{thm:universality},
it nevertheless provides a convenient choice of elementary gates.

\subsection{The Quantum Circuit Model}

A quantum computation typically starts out in some initial state $\ket{00...0}$, then performs a sequence
of single and two-qubit gates and finally ends with a measurement. The \textit{quantum circuit model} provides us with
a convenient way of representing any computation pictorially. 
\begin{figure}
	\centering
	\begin{subfigure}[t]{1.5in}
		\centering
		\includegraphics[width=8mm]{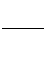}
		\caption{quantum wire}\label{fig:1a}		
	\end{subfigure}
	\quad 
	\begin{subfigure}[t]{1.5in}
		\centering
		\includegraphics[width=16mm]{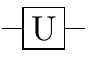}
		\caption{unitary gate}\label{fig:1b}
	\end{subfigure}
		\quad 
	\begin{subfigure}[t]{1.5in}
		\centering
		\includegraphics[width=16mm]{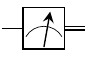}
		\caption{measurement}\label{fig:1b}
	\end{subfigure}
	\caption{The quantum circuit model. Quantum wires (a) represent the history of single qubit in time from left to right.
	Single-qubit unitary gates (b) act on a single wire. Measurements (c) of qubits are given in the computational basis. }\label{fig:1}
\end{figure}
\begin{figure}
	\centering
	\begin{subfigure}[t]{1.5in}
		\centering
		\includegraphics[width=30mm]{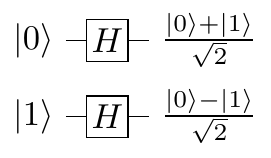}
		\caption{Hadamard gates}\label{fig:1a}		
	\end{subfigure}
	\quad \,\,\,
	\begin{subfigure}[t]{1.5in}
		\centering
		\includegraphics[width=30mm]{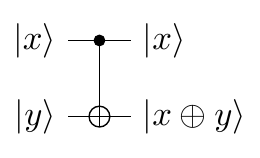}
		\caption{CNOT-gate}\label{fig:1b}
	\end{subfigure}
	\quad \,
	\begin{subfigure}[t]{1.5in}
		\centering
		\includegraphics[width=25.3mm]{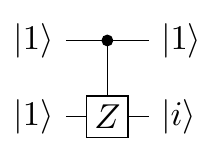}
		\caption{CZ-gate}\label{fig:1b}
	\end{subfigure}
	\caption{Single-qubit gates and two-qubit gates. The Hadamard gates (a) each act on a single quantum wire.
	CNOT (b) adds the value of the control qubit into the target qubit. CZ (c) performs a phase flip only if both the
	control and target qubit are $\ket{1}$.
	}\label{fig:gates}
\end{figure}
\begin{figure}[ht!]
\centering
\includegraphics[width=80mm]{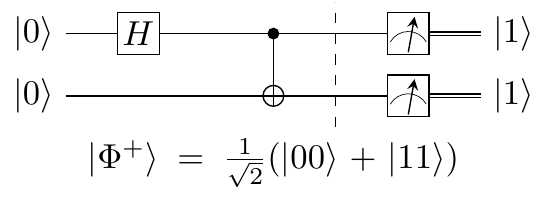}
\caption{A quantum circuit that prepares an entangled state $\ket{\Phi^+}$ (as indicated by the dashed line), the famous Einstein-Podolsky-Rosen (EPR)-pair $\ket{\Phi^+} = \frac{1}{\sqrt{2}}(\ket{00} + \ket{11})$. 
Here, a final measurement in the computational basis results in the outcome $\ket{1}\ket{1}$.}\label{fig:1}
\end{figure}

\subsection{Quantum Parallelism}

In the previous sections, we recognized that unitary quantum gates are inherently reversible. Can we, nevertheless, still 
simulate classical computation using only
reversible gates? Consider for example, the classical NAND gate, as shown in \expref{Table}{AND_NAND}.
The NAND-gate, a universal logic gate for classical computation, is inherently irreversible. 
Knowing that the output is $1$, we cannot conclude with certainty whether the input was in fact $00,01$ or $10$.
More generally, consider the problem of computing the following transformation:
\begin{equation}
x \overset{f}{\longrightarrow} f(x)
\end{equation}
If $f$ is a bijective operation, we can reverse this transformation and recover the input. However, if $f$ is irreversible, we can attach the input
and still achieve an overall reversible operation, as follows:
\begin{equation}
(x,y) \overset{f}{\longrightarrow} (x,y \oplus f(x)).
\end{equation}
Note that, when performing this operation twice, we obtain the original input pair.
A well known trick attributed to Charles Bennet allows us to compute irreversible transformations using only reversible quantum gates at the expense of a 
few additional registers. 
By simply attaching additional input registers prior to the evaluation of the function, a reversible unitary transformation $U_f$ is possible in which
later registers can be uncomputed. As a result, this allows us to define operations, such as:
\begin{equation}
\label{oracle_f}
\ket{x}\ket{y} \overset{U_f}{\longrightarrow} \ket{x}\ket{y \oplus f(x)},
\end{equation}
where $U_f^\dagger U_f = \mathbbm{1}$.
This subsumes an evaluation of $f$, as we can initialize the second qubit to $y=0$ and compute the output of $f$ as follows:
\begin{equation}
\ket{x}\ket{0} \overset{U_f}{\longrightarrow} \ket{x}\ket{f(x)}
\end{equation}
\begin{table}[tbp]
\center
\begin{tabular}{|lr|c|}
\hline
\,\,\,\,Inputs: && Outputs:\\
\hline 
 \,A \, & \, B \,& A \,\textbf{AND}\, B\, \\
\hline 
 \,0 \, & \, 0 \,& 0 \,\\
 \,1 \, & \, 0 \,& 0 \,\\
 \,0 \, & \, 1 \,& 0 \,\\
 \,1 \, & \, 1 \,& 1 \,\\
\hline
\end{tabular}
\quad \quad \quad \quad
\begin{tabular}{|lr|c|}
\hline
\,\,\,\, Inputs: && Outputs:\\
\hline 
 \,A \, & \, B \,& A \,\textbf{NAND}\, B\, \\
\hline 
 \,0 \, & \, 0 \,& 1 \,\\
 \,1 \, & \, 0 \,& 1 \,\\
 \,0 \, & \, 1 \,& 1 \,\\
 \,1 \, & \, 1 \,& 0 \,\\
\hline
\end{tabular}
\caption{\label{tab:i} Two classical logic gates: AND$(\wedge)$ and NAND$(\uparrow)$.}
\label{AND_NAND}
\end{table}
One of the essential features of quantum algorithms is quantum parallelism, the ability to prepare a superposition of states for simultaneous evaluation.
Consider a simple Boolean function $f:\{0,1\} \rightarrow \{0,1\}$ and suppose we have access to a unitary gate
that evaluates $f$ onto the presented inputs, such as in \eqref{oracle_f}. By preparing two initial states $\ket{0}\ket{0}$ and
applying a single Hadamard gate onto the first register, we can evaluate $f$ and exploit quantum parallelism (\expref{Figure}{fig:parallelism}).
\begin{figure}[ht!]
\centering
\includegraphics[width=60mm]{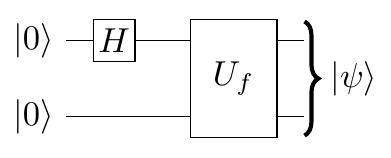}
\caption{A quantum circuit that prepares a uniform superposition $\ket{\psi}$ which simultaneously evaluates $f$ on both $0$ and $1$. }\label{fig:parallelism}
\end{figure}
The result of such a transformation is an output state:
\begin{equation}
 \ket{\psi} = \frac{1}{\sqrt{2}}\sum_{x \in \{0,1\}} \ket{x} \ket{f(x)} = \frac{1}{\sqrt{2}}(\ket{0}\ket{f(0)}+\ket{1}\ket{f(1)}).
\end{equation}
Remarkably, the transformation achieves a superposition that contains information on both $f(0)$ and $f(1)$. 
Simultaneous evaluation of a function is what gives power to quantum parallelism: a single quantum evaluation of a function can
result in a state that features evaluations of $f$ in superposition. Research on quantum algortithms involves sophisticated techniques that exploit
such information hidden in the superposition to one's advantage.

\subsection{Decoherence}\label{sec:quantum_noise}

As with any physical implementation of a computing device, not all operations can be done perfectly and there remains an unavoidable risk of error.
Unlike the closed quantum system evolution we discussed in \expref{Section}{ch:unitary}, all physically realizable quantum systems evolve as an open system
in constant interaction with its environment. In fact, due to coupling and entanglement with the environment, artificially engineered closed quantums system tend to lose their \textit{quantumness}
and become intrinsically random. This process is commonly known as \textit{decoherence}.   

Typically, one distinguishes between two classes of errors. 
We encounter both \textit{memory errors} that occur during storage of information,
as well as \textit{operational errors} that occur during manipulation of stored information. In the section on error correcting codes,
we provide further discussion on how to correct for such errors.

\subsubsection{Quantum Noise Models.}
In order to develop a successful noise model, we follow \cite{NC10} and adopt a theory of quantum channels and the operator-sum-representation.
In this framework, we consider noise models as \textit{discrete state changes} without reference to time.\\
For the remainder of the section, we let $\rho$ be a quantum system of computational states $\ket{0}$ and $\ket{1}$ and define the
action of a noisy quantum channel by a completely positive and trace preserving (\CPTP) operation $\mathcal{E}$, where for operation elements $E_0$ and $E_1$, we have:
\begin{equation}
\rho \longrightarrow \mathcal{E}(\rho) = E_0 \rho E_0^\dagger + E_1 \rho E_1^\dagger.
\end{equation}
A simple quantum noise channel is the bit-flip channel that, with probability $\eta>0$, maps the state $\ket{0}$ to the state $\ket{1}$ and vice-versa:
\begin{equation}
E_0 = \sqrt{1-\eta} \cdot 
  \begin{bmatrix}
  1\, &\hphantom{\phi} 0\\
  0\, &\hphantom{\phi} 1
  \end{bmatrix},
  \quad \quad
E_1 = \sqrt{\eta} \cdot 
  \begin{bmatrix}
  0\, &\hphantom{\phi} 1\\
  1\, &\hphantom{\phi} 0
  \end{bmatrix}.
\end{equation}
Thus, in the operator-sum-representation, we can now write the bit-flip channel as:
\begin{equation}
\label{bit_flip_channel}
\rho \longrightarrow \mathcal{E}_X(\rho) = (1-\eta) \rho + \eta \, X\rho X^\dagger,
\end{equation}
where $X$ corresponds to the bit-flip gate from the earlier sections.\par
Consider now the case of a single qubit, a quantum system that starts out in a pure state $\rho = \ket{\psi}\bra{\psi}$, where 
$\ket{\psi} = \alpha \ket{0} + \beta \ket{1}$. 
Under the bit-flip channel $\mathcal{E}$, the state evolves into a statistical mixture according to \eqref{bit_flip_channel}. 
Therefore, with probability $1-\eta$, we recover the original state,
\begin{equation}
\ket{\psi} = \alpha \ket{0} + \beta \ket{1},
\end{equation}
and, with probability $\eta$, we find the system in a state:
\begin{equation}
\ket{\psi^\bot} = \beta \ket{0} + \alpha \ket{1}.
\end{equation}
This noise model is often called \textit{classification noise} and will be highly relevant for the quantum learning algorithms of the later sections.\par
Another elementary quantum noise channel is the phase-flip channel that, with probability $\eta>0$, maps the state $\ket{1}$ to the state $-\ket{1}$, where
\begin{equation}
E_0 = \sqrt{1-\eta} \cdot 
  \begin{bmatrix}
  1\, &\hphantom{\phi} 0\\
  0\, &\hphantom{\phi} 1
  \end{bmatrix},
  \quad \quad
E_1 = \sqrt{\eta} \cdot 
  \begin{bmatrix}
  1\, &\hphantom{-\phi} 0\\
  0\, &\hphantom{\phi} -1
  \end{bmatrix}.
\end{equation}
Thus, in the operator-sum-representation, we can also write the phase-flip channel as:
\begin{equation}
\label{phase_flip_channel}
\rho \longrightarrow \mathcal{E}_Z(\rho) = (1-\eta) \rho + \eta \, Z\rho Z^\dagger,
\end{equation}
where $Z$ corresponds to the phase-flip gate from the earlier sections.\par
A much more ideal noise model for quantum computation is the amplitude damping channel. 
This noise channel $\mathcal{E}_{AD}$ corresponds to a scenario in which a photon is spontaneously emitted
with some probability $\gamma$.
\begin{equation}
\label{ampl_channel}
\rho \longrightarrow \mathcal{E}_{AD}(\rho) = E_0 \rho E_0^\dagger + E_1 \rho E_1^\dagger,
\end{equation}
where the transition matrix operators are given by:
\begin{equation}
E_0 =
  \begin{bmatrix}
  1\, &\hphantom{\phi} 0\\
  0\, &\hphantom{\phi} \sqrt{1-\gamma}
  \end{bmatrix},
  \quad \quad
E_1 = 
  \begin{bmatrix}
  1\, &\hphantom{\phi} \sqrt{\gamma}\\
  0\, &\hphantom{\sqrt{\gamma}} 0
  \end{bmatrix}.
\end{equation}
Thus, $E_1$ corresponds to the emission of a photon and a quantum of energy
is lost to the environment. The operator $E_0$, corresponds to the case where the state remains unchanged and
a photon is not yet lost but the amplitudes are adjusted appropriately.

Finally, we consider the $\textit{depolarizing channel}$, a devastating type of noise model in which all quantum information is lost to the environment and 
the quantum state gets replaced by a maximally mixed state with some probability $\eta$:
\begin{equation}
\label{ampl_channel}
\rho \longrightarrow \mathcal{E}_{D}(\rho) = (1-\eta) \rho + \eta \frac{\mathbbm{1}}{2},
\end{equation}
where $\frac{\mathbbm{1}}{2}$ is the maximally mixed state.

\subsubsection{Independent Noise Models.}

The most commonly adopted noise model in computational learning theory is that of independent noise~\cite{BJ95}\cite{GK17}.
In this work, we specifically focus on noise that corrupts the final output register that
evaluates the input registers for a choice of function. 
As in the \LWE probblem, the final register is independently corrupted over all input registers by some probability,
for example when receiving a quantum state over possible outcomes over an arbitrary set $X$, such as
\begin{equation}
\frac{1}{\sqrt{|X|}}\sum_{x \in X} \ket{x}\ket{f(x)}.
\end{equation}
We can quantum mechanically generate noisy samples towards an outcome $f(x) + e(x,r)$, where
$x \in X$ and $r \in R$, for some set $R$ of randomness, by simply tracing out
the randomness register of a superposition
\begin{equation}
\frac{1}{\sqrt{|R|}}\sum_{r \in R} \ket{r} \left(\frac{1}{\sqrt{|X|}}\sum_{x \in X} \ket{x} \ket{f(x)+ e(x,r)}\right).
\end{equation}
This allows us to generate noisy samples, where with probability $1/|R|$ over all values $r \in R$,
\begin{equation}
\frac{1}{\sqrt{|X|}}\sum_{x \in X} \ket{x}\ket{f(x) + e(x,r)}.
\end{equation}
In the context of qubits, we consider independent noise by extending the noise models from 
this section onto quantum registers. 
Consider a register of $n$ qubits $\ket{\Psi}$, where
\begin{equation}
\ket{\Psi} = \sum_{x \in \{0,1\}^n} \alpha_x \ket{x_1} \ket{x_2}... \ket{x_n}.
\end{equation}
For example, by extending the bit-flip channel introduced in this chapter independently onto the final register, the result is a state
\begin{equation}
\ket{\tilde \Psi} = \sum_{x \in \{0,1\}^n} \alpha_x \ket{x_1}\dots \ket{x_n}\ket{f(x) \oplus e_x},
\end{equation}
where the error $e$ is sampled from a Bernoulli distribution of noise rate $\eta > 0$:
\begin{equation}
Bern(\eta) =
    \begin{cases}
     1, &  \text{with probability }  \eta\\
     0, & \text{with probability }  1-\eta.
    \end{cases}
\end{equation}
Note that, upon a choice of error distribution, an independent noise model also translates naturally in the context of qudits instead of qubits.

\subsection{Error Correcting Codes}\label{sec:error_corr}

In his seminal $1948$ paper \textit{A Mathematical Theory of Communication} \cite{Sha48}, Claude Shannon put forward a revolutionary view on the concept
information and errors in communication. Instead of investing tedious effort to avoid them on a technical level, it is not only possible,
but oftentimes even favorable, to simply correct them. In the 1950s, John von Neumann developed very successful error correcting codes 
in order to address noise originating in the relay architecture of present computer technology. Today's transistors achieve near-perfect fault tolerance,
hence error correcting codes often do not even have to be applied.
In a nutshell, by adding additional redundant information to each bit of information, one can realize error correction. 

Suppose the task is to store a single
bit of information for some desired time interval $T \geq 0$. Regardless of whether an operation takes place, we consider \textit{memory errors} 
that can occur spontaneously.
We denote the probability that such an error occurs after time $T$ by $p$. A simple way to protect the storage of information against these affects 
is by adding redundance. For example, consider the following three copies of each instance of the bit:
\begin{align}
 0 &\longrightarrow 000\\
 1 &\longrightarrow 111.
\end{align}
The probability that there are no errors is given by $(1-p)^3$. Hence, after time $T$, the three bits $000$ remain the same.
The probability that there is an error in one of the bits is $3p(1-p)^2$, resulting in either $001$, $010$ or $100$.
Finally, the probability that there are two or more errors is $3p^2(1-p)+p^3$. The error correction scheme now works as follows.
By measuring all the bits and taking the majority vote, we can easily rule out single bit errors. Our correction method thus assigns
the measurement outcomes of the bitstrings as follows:
\begin{align}
\{000, 001, 010, 100\} &\longrightarrow 0\\
\{111, 110, 101, 011\} &\longrightarrow 1.
\end{align}
The error correction method above is correct with probability $p_c = 1 - 3p^2+2p^3$.
Compared to the previous error probability of $p$, we gain as long as $p_c \geq 1-p$. This translates into
an error probability of at most $p < \frac{1}{2}$. By dividing the time interval into slices $\Delta t = T/N$ and applying the error correcting
code repeatedly after reach slice, one can reach arbitrarily close success probabilities which grow in $N$ \cite{NC10}\cite{CZ01}.

Consider now the problem of quantum error correction by analogy to classical error correction.
Due to the quantum nature of information, a new framework is needed to correct for errors. This becomes apparent as we consider a number of
differences as compared to classical error correction. According to the \textit{no-cloning theorem}, there is no machine that can make
a copy of an unknown quantum state. Therefore, the naive attempt of simply copying quantum information in order to achieve redundancy is not possible.
Moreover, a signifcant feature of quantum states is that the parameters describing them are continuous.
Consequently, quantum noise is also continuous and requires correction to reach up to infinite precision, hence demanding unbounded recources. 
And finally, classical error correction requires read-out or state detection of the bit sequence in order to detect errors and correct them.
In particular, measuring a quantum state generally destroys the state and makes recovery impossible.
The \textit{quantum fault-tolerant threshold theorem} \cite{Pre98}\cite{AB08} states that, as long as the noise level is below a 
certain threshold (typically around $10^{-4}-10^{-2}$),
any quantum computation can be performed with arbitrarily small error by adopting error correction. Most notably,
\textit{Shor's code} \cite{Sho95} achieves error correction for arbitrary single-qubit errors, including bit-flips and phase errors.
In the following, we give a simple example of a three qubit quantum error correcting code, known as the bit-flip code (see \cite{NC10}).

Suppose the task is to store a single qubit $\ket{\psi} = c_0 \ket{0} + c_1 \ket{1}$ in a state that is unknown to us.
Analogous to the classical case, we assume that after some time $T$
a bit flip occurs with probability $p$ such that the state $\ket{\psi}$ is taken to the corrputed state $X \ket{\psi}$. The 
bit flip error thus results in a new state $\ket{\psi^\bot} = c_0 \ket{1} + c_1 \ket{0}$.
First, we begin by preparing the following encoding into a sequence of three logical qubits
\begin{align}
\ket{0} &\longrightarrow \ket{0}_L \equiv \ket{000}\\
\ket{1} &\longrightarrow \ket{1}_L \equiv \ket{111}.
\end{align}
We can realize this encoding in a quantum state $\ket{\psi}_L = c_0 \ket{000} + c_1\ket{111}$.
A circuit that performs this operation upon $\ket{\psi}$ is given by:
\begin{figure}[ht!]
\centering 
\includegraphics[width=.26\textwidth,origin=c]{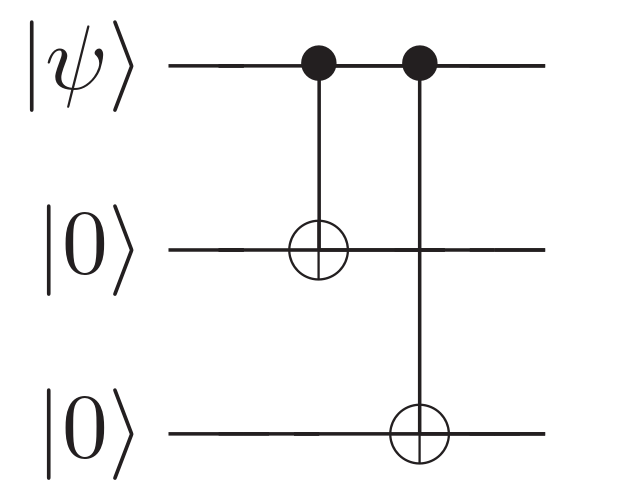}
\caption{\label{fig:i} (\cite{NC10}). A quantum circuit that prepares the state $c_0 \ket{000} + c_1\ket{111}$.}
\end{figure}

The probability that there are no errors in the state $\ket{\psi}_L$ is given by $(1-p)^3$. 
Hence, after time $T$, the three qubits remain the same.
The probability that there is an error in just one of the qubits is $3p(1-p)^2$, resulting in either $(X \otimes \mathbbm{1} \otimes \mathbbm{1}) \ket{\psi}_L$, 
$(\mathbbm{1} \otimes X \otimes \mathbbm{1}) \ket{\psi}_L$ or
$(\mathbbm{1} \otimes \mathbbm{1} \otimes X) \ket{\psi}_L$.
Finally, the probability that there are two or more errors is $3p^2(1-p)+p^3$.
The quantum error correcting code now works in two steps, as in the classical code from the previous section.
First, errors are being detected and then subsequently being corrected for in the second step using a recovery procedure.
To this end, consider the following sets of (incomplete) projection operators:
\begin{align}
P_0 &= \ket{000}\bra{000} + \ket{111}\bra{111}\\
P_1 &= \ket{100}\bra{100} + \ket{011}\bra{011}\\
P_2 &= \ket{010}\bra{010} + \ket{101}\bra{101}\\
P_3 &= \ket{001}\bra{001} + \ket{110}\bra{110}.
\end{align}
The first projection operator corresponds to the case where there is no error, and the other operators correspond
to a single bit-flip on one of the qubits, respectively.
In any error correcting code, the goal is to infer information on what error has occured without
destroying the superposition $\ket{\psi}$ altogether. The design of the code should therefore aim at projecting $\ket{\psi}_L$ into
mutually orthogonal spaces in which we can detect the type of the error and reversibly restore the original state
without at any point destroying the information.
We begin by first measuring the operators above for the state $\ket{\psi}_L$. Starting with
$P_0$, whenever we obtain $1$, we leave the state as it is knowing no error occured.
If we obtain $0$, we continue by measuring the next projector $P_1$. If we obtain $1$, we know that a bit flip occured on
the first qubit, and we correct by applying the $(X \otimes \mathbbm{1} \otimes \mathbbm{1})$ operation. Similarly, we can
correct for all single bit-flip errors as in the classical error correcting code and gain an advantage as long as $p < \frac{1}{2}$.

\subsection{Quantum Oracles}\label{section:oracles}

An important abstraction in computational complexity theory and the study of decision problems is the use of oracle machines, or \textit{oracles}.
First introduced in the context of Turing machines, oracles act as a black box that assist a Turing machine in a given computational task.
When presented an input $x$, such as an integer or a string, the oracle solves an instance of a decision problem at unit cost and returns
the output $\mathcal{O}(x)$ (typically a YES/NO answer) 
back to the Turing machine.
While this computational setting is certainly highly abstract, it does however contribute enormously to our understanding of complexity classes 
and is commonplace
in theoretical cryptography, particularly when providing arguments for security. 
Oracles also turn out to be highly useful in quantum computing, especially
in the study of quantum algorithms. Most notably, \textit{Grover's search algorithm} \cite{Gro96} relies
on the use of a quantum oracle that recognizes solutions to 
a given search problem. Many of the earliest quantum algorithms, such as the Deutsch-Josza algorithm \cite{DJ92}, the Bernstein-Vazirani algorithm \cite{BV93} 
or Simon's algorithm \cite{Sim97} are also devised in an oracle model.
In order to make an oracle evaluation reversible, we rely on the technique from the previous section.

For our purposes, a quantum oracle $\mathcal{O}$, is a unitary operation,
\begin{equation}
\ket{x} \ket{y} \longrightarrow \ket{x}\ket{y \oplus \mathcal{O}(x)},
\end{equation}
acting as a black box that can be accessed by a quantum computation. 
The inner workings of the quantum oracle are unknown to the computation, but can evaluate upon inputs in a reversible manner, see (\expref{Figure}{fig:quantum_oracle}).
\begin{figure}[ht!]
\centering
\includegraphics[width=60mm]{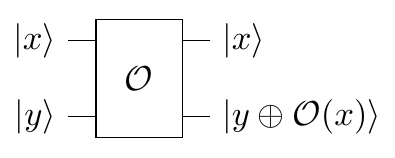}
\caption{Quantum oracle. }\label{fig:quantum_oracle}
\end{figure}
Consider now a collection of $n$-qubits, a quantum register of size $n$.
As a unitary gate, the oracle obeys the linearity of quantum mechanics and can therefore be queried on a superposition over all inputs:
\begin{equation}
\sum_{x,y \in \{0,1\}^n} \alpha_{x,y}\ket{x} \ket{y} \longrightarrow \sum_{x,y \in \{0,1\}^n} \alpha_{x,y} \ket{x}\ket{y \oplus \mathcal{O}(x)}
\end{equation}
More generally, we also further differentiate between two variants of oracles, so-called \textit{membership oracles} and \textit{example oracles}.

\subsubsection{Membership Oracles.}
In a membership oracle model, the oracle provides direct unitary input access to a function $f:\{0,1\}^{n} \longrightarrow \{0,1\}^m$ which is to be evaluated,
precisely as introduced before.
For example, upon input $x \in \{0,1\}^n$ and additional register $y \in \{0,1\}^m$, we define a membership oracle $\mathcal{O}_f$ 
for the function $f$ as an operation:
\begin{equation}
\mathcal{O}_f: \ket{x}\ket{y} \longrightarrow \ket{x}\ket{y \oplus f(x)}.
\end{equation}
The oracle can thus be queried at unit cost as a quantum gate at any step of a quantum computation. 

\begin{figure}[tbp]
	\centering
	\begin{subfigure}[t]{2in}
		\centering
		\includegraphics[width=60mm]{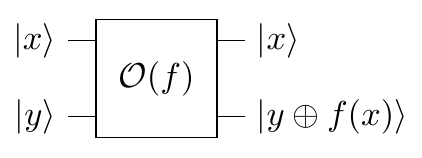}
		\caption{Quantum Membership Oracle}\label{fig:1a}		
	\end{subfigure}
	\quad \,\,\,
	\begin{subfigure}[t]{2in}
		\centering
		\includegraphics[width=48mm]{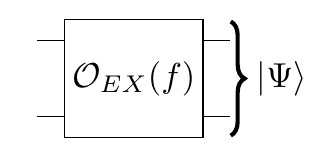}
		\caption{Quantum Example Oracle}\label{fig:1b}
	\end{subfigure}
		\caption{The quantum membership oracle (a) can be queried on arbitrary inputs during a quantum computation.
		The quantum example oracle (b) responds to queries by outputting a uniform superposition $\ket{\Psi}$ over all evaluations of the function.
	}\label{fig:gates}
\end{figure}\noindent

\subsubsection{Example Oracles.}
In the setting of computational learning theory from the subsequent chapters, we consider uniform example oracles $\mathcal{O}_{EX}(f)$ as
a black box that only outputs uniform samples for a given function. Thus, upon each query, the oracle replies with a state:
\begin{equation}
\ket{\Psi} \, = \frac{1}{\sqrt{2^n}}\sum_{x \in \{0,1\}^n} \ket{x}\ket{f(x)}.
\end{equation}
In the context of quantum example oracles, we also consider samples that are corrupted by noise. 
In particular, as the example oracle consists of a quantum circuit that
evaluates $f$ on a superposition of all inputs, this operation is naturally prone to errors. 
We consider two models of noise in this setting.

First, we consider uniform example states $\ket{\Psi}$ that suffer from a bit-flip error in the final \textit{result register} through a noise channel 
$\mathcal{E}_X$ of magnitude $\eta >0$, see \eqref{bit_flip_channel}.
Equivalently, we can also express the fact that $\ket{\Psi}$ is turning into a mixture 
by again sampling an error from a Bernoulli distribution of noise parameter $\eta$, resulting in a state
\begin{equation}
\ket{\Psi} \, = \frac{1}{\sqrt{2^n}}\sum_{x \in \{0,1\}^n} \ket{x_1} \ket{x_2}... \ket{x_n} \ket{f(x) \oplus e}.
\end{equation}
Moreover, in a model resembling the \LWE problem, we consider independent noise in the result register for each element in the superposition:
\begin{equation}
\ket{\Psi} = \sum_{x \in \{0,1\}^n} \alpha_x \ket{x_1} \ket{x_2}... \ket{x_n} \ket{f(x) \oplus e_x}.
\end{equation}
Upon a choice of error distribution, such an independent noise model also translates naturally in the context of qudits instead of qubits.
\clearpage
\section{Quantum Algorithms}
\label{quantum_algorithms}

Ever since the dawn of quantum computation, it was speculated that quantum computers could solve certain computational problems
faster than any conventional classical computer. Historically, the first abstract model of universal computation was proposed by 
Alan Turing in his seminal 1936 paper, a discovery
that henceforth greatly shaped the field of theoretical computer science. The \textit{Church-Turing thesis}
famously suggests that any model of computation appears at most as powerful as a Turing machine:\\

\noindent \textit{Any intuitively computable algorithmic process can be simulated efficiently by a Turing machine.}\\

Efficient algorithms are especially relevant to \textit{computational complexity} and concern
only computations of polynomial amounts of elementary operations, thus highlight the set of problems that can be solved with a \textit{feasible} 
use of computational recources. On the contrary, inefficient algorithms require superpolynomial amounts of recources (typically exponential)
and become computationally infeasible as the size of the problem increases.
The observation that the laws of physics are fundamentally quantum mechanical ultimately led David Deutsch to speculate on the prospect of computing devices
that behaved inherently quantum mechanical. Deutsch's insights into universal quantum computation \cite{Deu85} and the discovery
of the first quantum algorithm outperforming classical counterparts \cite{DJ92} provided unforseen challenges for the Church-Turing principle.
Shor's factoring algorithm \cite{Sho94} provided further evidence that quantum computers could indeed solve computational problems for which no efficient
classical algorithm is known. Still today, it is not clear whether a quantum model of computation is indeed capable of efficiently simulating any physical system
in nature, i.e. whether a quantum extension of the original thesis, the \textit{Quantum Church-Turing principle}, holds.

In this section, we review some of early quantum algorithms that offer substantial quantum speed-ups, 
such as the Deutsch-Josza algorithm \cite{DJ92}, the Bernstein-Vazirani algorithm \cite{BV93} 
and Simon's algorithm \cite{Sim97}, each providing the basis for the algorithms of the later sections.
In this thesis, we regard a quantum algorithm as a sequence of unitary operations, i.e. \textit{computations}, operating
on a product state space, for example $ \H = \H_{input} \otimes \H_{work} \otimes \H_{output}$, in analogy to the tape of a Turing machine.
Upon an input state $\ket{\psi_0} \in \H$, the quantum polynomial time (\QPT) algorithm runs an efficient quantum circuit 
consisting of a sequence of unitary operations:
\begin{align}
\ket{\psi} = U_T U_{T-1} U_{T-2} ...U_1 \, \ket{\psi_0},
\end{align}
where $T$ denotes a polynomial (in terms of the dimension of $\H$) amount of operations.
Hence, the algorithm generates an output state $\ket{\psi}$, typically followed by a measurement in the computational basis.
In this chapter, we consider problems in a membership oracle model, as discussed in Section \ref{section:oracles}.
Here, the goal of the algorithm is determine a secret property of a given function $f$ by making queries to an oracle at unit cost.
Any \QPT algorithm $\A$ can thus be written as:
\begin{align}
\ket{\psi} = U_T \mathcal{O}_f U_{T-1} \mathcal{O}_{f} U_{T-2} \dots U_2 \mathcal{O}_{f} U_1 \ket{\Psi_0}.
\end{align}
In order to highlight that $\A$ has quantum access to a membership oracle for a function $f$, we adopt the notation $\A^{\ket{f}}$.

\subsection{Deutsch-Josza Algorithm}

Let us now consider the Deutsch's problem, one of the earliest known problems to be solved
more efficiently by a quantum algorithm in a black box model, as appeared in \cite{DJ92}:\\

 \begin{mdframed}[backgroundcolor=green!5] 
\textbf{Deutsch's Problem:}\\
\textit{
Determine whether a Boolean function $f: \bit^n \rightarrow \bit$ is either constant or balanced, i.e.
$0$ for half of the inputs and $1$ else.}
\end{mdframed}
\begin{algorithm}
\label{deutsch_algorithm}
\caption{Deutsch-Josza Algorithm}
\begin{description}
\item[Input:] A quantum black box oracle $\mathcal{O}_{f}$ for a Boolean function $f$ which is either constant or balanced.
\item[Output:] Outcome $\ket{0^n}$ if and only if $f$ is constant, else $f$ is balanced.

\item[Procedure:]
\item[1.] Initialize $(n+1)$-qubits $\ket{0^n}\ket{1}$ and apply the Hadamard gate $H^{\oplus (n+1)}$:
\begin{equation*}
\longrightarrow  \frac{1}{\sqrt{2^n}} \sum_{x\in \{0,1\}^n} \ket{\vec x}\ket{-} \phantom{kklkkkkkkkklkkkk}
\end{equation*}
\item[2.] Query the quantum oracle, resulting in a \textit{phase-kickback}:
\begin{equation*}
\longrightarrow \frac{1}{\sqrt{2^n}} \displaystyle\sum_{x\in \{0,1\}^n} (-1)^{f(\vec x)} \ket{\vec x} \ket{-} \phantom{kklkkkkkk}
\end{equation*} 
\item[3.] Throw away the last register and then apply another Hadamard gate $H^{\otimes n}$:
\begin{equation*}
\longrightarrow \frac{1}{\sqrt{2^n}}\sum_{x,y \in \{0,1\}^n} (-1)^{\braket{\vec x,\vec y}} (-1)^{f(\vec x)}\ket{\vec y} \phantom{lkkkk}
\end{equation*}
\item[4.] Measure the entire output state.
\end{description}
\end{algorithm}

\begin{figure}[ht!]
\centering
\includegraphics[width=60mm]{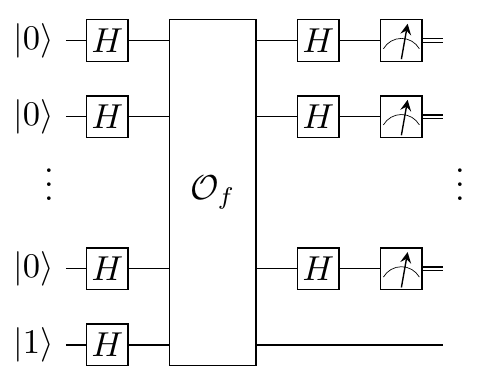}
\caption{\label{fig:i} A quantum circuit whose outcome determines whether a Boolean function
is constant or balanced using only a single query to the membership oracle.}
\end{figure}

\noindent We can verify the correctness of the algorithm as follows:
If we measure the final ouput state of the algorithm in the computational basis for the outcome $\ket{0^n}$, we observe that
\begin{eqnarray}
p(0^n) \,=\, \left\|\frac{1}{\sqrt{2^n}}\sum_{x \in \{0,1\}^n} (-1)^{f(\vec x)} \ket{0^n} \right\|^2 \,=\,\frac{1}{2^n}\sum_{x \in \{0,1\}^n} (-1)^{f(\vec x)}.
\end{eqnarray}
Thus, the amplitudes interfere constructively towards a probability of $1$ if $f$ is constant, whereas they interfere destructively around
a probability of $0$ whenever $f$ is balanced. Note that the quantum algorithm only required as much as a single query to the oracle.
Classically, in the worst-case setting, any algorithm requires $\Omega(2^{n-1}+1)$ classical queries to the oracle in order to learn more than 
half of the evaluations of the function.

\subsection{Bernstein-Vazirani Algorithm}

Another potentially interesting problem in complexity theory is the task of determining a hidden string from inner product of bit strings.
In 1993, Bernstein and Vazirani \cite{BV93} initiated the field of quantum complexity theory and proposed a
quantum algorithm achieving a superpolynomial speed-up over classical algorithms. In brief,
the problem can be stated as the following learning problem of determining a secret string:\\
\ \\

 \begin{mdframed}[backgroundcolor=green!5] 
\textbf{Bernstein-Vazirani Problem:}\\
\textit{
Learn a string $\vec s \in \{0,1\}^n$ by querying an oracle for a Boolean function $f_s: \{0,1\}^n \rightarrow \{0,1\}$ given by
\begin{equation}
f_s(\vec x) = s_1 \cdot x_1 \oplus ... \oplus s_n \cdot x_n \, =\, \braket{\vec s,\vec x}  \pmod 2.
\end{equation}}
\end{mdframed}

In the classical membership oracle setting, we observe that a single query to the function can only ever reveal as much as one bit of information 
about the secret string $s$. 
In fact, this can easily be done by considering queries on strings $e_i = (0,...\,, 1, ...\,,0)$,
where the $i$-th index is $1$ and $e_i$ is $0$ everywhere else. An algorithm performing such queries achieves an overall query complexity 
of $\Omega(n)$ when determining the secret, as each iteration reveals only a single bit of the hidden string by querying
\begin{equation}
f_s(\vec e_i) = \, \braket{\vec s,\vec e_i} \pmod 2 \, = \, s_i,
\end{equation}
so that $s$ is fully determined after a total of $n$ queries to the function.

In the quantum membership oracle model, Bernstein and Vazirani showed that only a single quantum query to the oracle is sufficient \cite{BV93}:

\begin{algorithm}
\label{bernstein_algorithm}
\caption{Bernstein-Vazirani Algorithm}
\begin{description}
\item[Input:] A quantum black box oracle $\mathcal{O}_{f_s}$ for a Boolean function $f_s$, where $f_s(\vec x)=\braket{\vec s,\vec x}$. The task is to determine 
$\vec s \in \{0,1\}^n$.
\item[Output:] The secret string with only a single query to the oracle.

\item[Procedure:]
\item[1.] Initialize $(n+1)$-qubits to $\ket{0^n}\ket{1}$ and apply the Hadamard transform $H^{\oplus (n+1)}$:
\begin{equation*}
\longrightarrow  \frac{1}{\sqrt{2^n}} \sum_{x\in \{0,1\}^n} \ket{\vec x}\ket{-} \phantom{kklkkkkkkkklkkkk}
\end{equation*}
\item[2.] Query the quantum oracle, resulting in a \textit{phase-kickback}:
\begin{equation*}
\longrightarrow \frac{1}{\sqrt{2^n}} \displaystyle\sum_{x\in \{0,1\}^n} (-1)^{\braket{\vec s,\vec x}} \ket{\vec x} \ket{-} \phantom{kklkkkkkk}
\end{equation*} 
\item[3.] Throw away the last register and then apply a Hadamard gate $\text{H}^{\otimes n}$:
\begin{equation*}
\longrightarrow \frac{1}{\sqrt{2^n}}\sum_{x,y \in \{0,1\}^n} (-1)^{\braket{\vec s,\vec x}} (-1)^{\vec x\cdot \vec y}\ket{\vec y} \phantom{kkkkk}
\end{equation*}
\item[4.] Measure the entire output state.
\end{description}
\end{algorithm}

\begin{figure}[ht!]
\centering
\includegraphics[width=60mm]{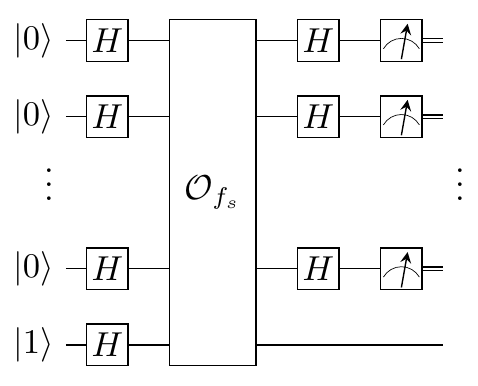}
\caption{\label{fig:i} A quantum circuit for the Bernstein-Vazirani problem. The secret string is determined after only a single query to the membership oracle.}
\end{figure}

\noindent Thus, if we now measure the final ouput state of the algorithm in the computational basis for a particular outcome $\vec m \in \{0,1\}^n$, we observe that
\begin{eqnarray}
p(m) &=& \left\|\frac{1}{\sqrt{2^n}}\sum_{x \in \{0,1\}^n} (-1)^{\braket{\vec s,\vec x}}(-1)^{\vec x\cdot \vec m} \ket{\vec m} \right\|^2\\
             &=& \frac{1}{2^{n}}\sum_{x \in \{0,1\}^n} (-1)^{(s_1\oplus m_1)x_1} (-1)^{(s_2\oplus m_2)x_2} \, \cdots (-1)^{(s_n\oplus m_n)x_n}\\
             &=& \frac{1}{2^{n}} \left(\sum_{x_1 \in \{0,1\}} (-1)^{(s_1\oplus m_1)x_1}\right) \left(\sum_{x_2 \in \{0,1\}} (-1)^{(s_2\oplus m_2)x_2}\right) \cdots \left(\sum_{x_n \in \{0,1\}} (-1)^{(s_n\oplus m_n)x_n}\right) \nonumber \\
             &=& \frac{1}{2^n}\prod_{j=1}^n\sum_{x_j =0}^{1}(-1)^{(s_j\oplus m_j)x_j}\\
             &=& \label{prob_bernstein} \frac{1}{2^n}\prod_{j=1}^n\left(1+(-1)^{s_j\oplus m_j}\right).
\end{eqnarray}
Note that the probabilities of measuring any output states $\vec m \neq \vec s$ vanish due to \eqref{prob_bernstein}. 
Consequently, the amplitudes interfere constructively towards a probability of $1$ if every bit of the output state $\vec m$ is equal to $s$.
\newpage
\section{The Quantum Fourier Transform}

The quantum Fourier transform (\QFT) is arguably the most important and most widely used
tool in the design of quantum algorithms. Many of the known algorithms, such as the Deutsch-Josza algorithm,
Shor's factoring algorithm, quantum phase estimation, quantum order finding, Simon's algorithm \cite{Sim97} or the Bernstein-Vazirani algorithm 
rely substantially on its use. 
In the previous section, we encountered the Hadamard transform
as a single qubit gate $H$ that performs the following operation:
\begin{equation}
H \ket{x} = \frac{1}{\sqrt{2}} \displaystyle\sum_{y=0}^1 { (-1)^{x\cdot y} \ket{y}}.
\end{equation}
Implicitly, we have already encountered the single-qubit quantum Fourier transform of order $n=1$.
In fact, the Hadamard transform can be thought of as a \QFT over the group $\mathbb{Z}/{2}\mathbb{Z}$
that takes single qubits in the computational basis and maps them to the Hadamard basis.
The underlying principle of the Fourier transform already starts to show, namely, that the \QFT acts
as a change of basis in which the amplitudes of individual states of the computational basis are \textit{related} to the 
amplitudes of the entire computational space.
The amplitudes of the transformed states are the so-called \textit{characters} of the Fourier transform.\\
In this chapter, our goal is to introduce a general \textit{qudit} extension of the Hadamard transform,
the \QFT on the cyclic group $\mathbb{Z}/{q}\mathbb{Z}$, where $q$ is any integer.
By extension, we also obtain the \QFT over any finite Abelian group.
More importantly, we discuss how the \QFT can be efficiently implemented on a quantum computer.
Historically, the earliest non-trivial variant known to be efficiently computable on a quantum computer is the Fourier transform over the group 
$\mathbb{Z}/{2^n}\mathbb{Z}$, due to Deutsch \cite{Deu85}. 
The general variant of the quantum Fourier transform where $q$ is an arbitrary integer is less common, but also mentioned as a side note in \cite{NC10}.
Techniques due to Kitaev \cite{Kit95} first allowed to generalize this variant of the Fourier transform using quantum phase estimation.
Evidently, this breakthrough immediately led to the generalization over finite Abelian groups, as we will discuss in
detail in the next section. 
In both cases, we review efficient quantum circuit implementations and thus
provide the basis for addressing the \QFT in the extended Bernstein-Vazirani problem.\\

\newpage
\subsection{The Quantum Fourier Transform over Finite Abelian Groups}
The Fourier transform can be defined on arbitrary groups and we can extend the same principle of basis change into the language of groups
and group algebras. For further reading on the Fourier transform on groups, we refer to the survey \cite{CD10} or the 
supplementary chapters in \cite{NC10}. In this thesis, we concern ourselves with the case of finite Abelian groups.\\
Consider a finite Abelian group $(G,*)$ of order $|G| = N$ and let $\mathbb{C}G = span\{\ket{g}:g \in G\}$
be the associated group algebra of $G$ over $\mathbb{C}$. 
Each element $\ket{x} \in \mathbb{C}G$ can be uniquely expressed as
a linear combination of basis vectors and complex coefficients:
\begin{equation}
\ket{x} = \sum_{g \in G} \alpha_g \ket{g}, \, \, \, \, \text{where } \alpha_g \in \mathbb{C}.
\end{equation} 
Since $G$ is finite Abelian, there exists a basis $\hat G \subseteq \mathbb{C}G$ of dimension $|\hat G| = N$
that consists of $N$ distinct irreducible characters $\chi \in \hat G$,
where $\chi: G \longrightarrow \mathbb{C}$ and $\chi(a * b)=\chi(a)\cdot \chi(b)$ (\cite{CD10}, Appendix B).
Moreover, if $\chi,\chi' \in \hat G$ are two irreducible characters, then:
\begin{equation}
\label{orthogonality}
\sum_{x \in G} \chi(x) * \overline{\chi'(x)} \, = N \, \delta_{\chi,\chi'}.
\end{equation}
\begin{definition}\label{def:qft}
Let $(G,*)$ be a finite Abelian group of order $|G| = N$ and let $\hat G$ be the basis containing the set of $N$ distinct characters
$\chi: G \longrightarrow \mathbb{C}$.\\
The quantum Fourier transform $\mathcal{F}_G$ on the group $G$ is defined as the operation:
\begin{equation}
\label{fourier_character}
\ket{x} \longrightarrow \frac{1}{\sqrt{N}} \sum_{\chi \in \hat G} \chi(x) \ket{\chi}.
\end{equation}
\end{definition}
Consider now the case where $G$ is given by the group $G=(\mathbb{Z}/{N}\mathbb{Z},+)$ and $N$ is any integer. In this case, the irreducible characters
$\chi: \mathbb{Z}/{N}\mathbb{Z} \longrightarrow \mathbb{C}$ are given precisely
by the primitive $N^{\text{th}}$ roots of unity $\omega_N = e^{ \frac{2\pi i}{N}}$.
Thus, for every $y \in \mathbb{Z}/{N}\mathbb{Z}$, the character $\chi_y(x) = \omega_N^{xy}$ is uniquely determined.
By choosing an orthonormal basis $\ket{0},\ket{1}, ..., \ket{N}$ of $\hat G$ in Fourier space, we can identify \eqref{fourier_character} as:
\begin{equation}
\ket{x} \longrightarrow \frac{1}{\sqrt{N}} \sum_{y \in \mathbb{Z}/{N}\mathbb{Z}} \omega_N^{x\cdot y} \ket{y}.
\end{equation}
Furthermore, we can associate the Fourier transform $\mathcal{F}_G$ with the operator:
\begin{equation}
\mathcal{F}_G = \frac{1}{\sqrt{N}} \sum_{x,y \in \mathbb{Z}/{N}\mathbb{Z}} \omega_N^{x\cdot y} \ket{y}\bra{x}.
\end{equation}
Let us briefly state the following fact, analogous to as in Eq.\eqref{orthogonality}, regarding orthogonality of the roots of unity:
\begin{proposition}
\label{lem:ortho}
\begin{equation}
\sum_{y \in \mathbb{Z}/{N}\mathbb{Z}} \omega_N^{x \cdot y} \omega_N^{- x'\cdot y} = N \, \delta_{x,x'}.
\end{equation}
\end{proposition}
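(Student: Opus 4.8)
The plan is to reduce the claim to a single geometric series over $\mathbb{Z}/N\mathbb{Z}$ and then split into the two cases $x \equiv x' \pmod N$ and $x \not\equiv x' \pmod N$. First I would use the multiplicativity of the exponential to combine the two factors: since $\omega_N^{x\cdot y}\,\omega_N^{-x'\cdot y} = \omega_N^{(x-x')y}$, the left-hand side becomes $\sum_{y\in\mathbb{Z}/N\mathbb{Z}} \omega_N^{(x-x')y}$, i.e.\ $\sum_{y=0}^{N-1} \zeta^{y}$ with $\zeta := \omega_N^{x-x'}$. Note $\zeta$ depends only on $x-x'$ modulo $N$ because $\omega_N^N = 1$.

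Next I would dispatch the diagonal case: if $x \equiv x' \pmod N$ then $\zeta = \omega_N^{0} = 1$, so every term equals $1$ and the sum is exactly $N$, which matches $N\,\delta_{x,x'}$ (here $\delta_{x,x'}$ is understood as the Kronecker delta on $\mathbb{Z}/N\mathbb{Z}$, consistent with the indexing of the Fourier basis). For the off-diagonal case $x \not\equiv x' \pmod N$, we have $0 < |x - x'| < N$ in a suitable representative range, hence $\zeta \neq 1$, and the finite geometric series formula applies:
\begin{equation*}
\sum_{y=0}^{N-1}\zeta^{y} \;=\; \frac{\zeta^{N}-1}{\zeta-1}.
\end{equation*}
Since $\zeta^{N} = \omega_N^{N(x-x')} = \left(\omega_N^{N}\right)^{x-x'} = 1$, the numerator vanishes while the denominator $\zeta - 1$ is nonzero, so the sum is $0$, again matching $N\,\delta_{x,x'}$. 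Combining the two cases completes the argument.

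There is no real obstacle here; the only point requiring a little care is the bookkeeping that $\delta_{x,x'}$ is the delta on the cyclic group (so that "$x = x'$" means "$x \equiv x' \pmod N$"), and that one must verify $\zeta \neq 1$ before invoking the geometric series formula — both handled by the case split above. This proposition is the $\mathbb{Z}/N\mathbb{Z}$ specialization of the general character orthogonality relation \eqref{orthogonality}, with characters $\chi_y(x) = \omega_N^{xy}$, so one could alternatively just cite that identity; I would nonetheless include the short direct computation for self-containedness.
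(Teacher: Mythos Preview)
Your proposal is correct and follows essentially the same approach as the paper: combine $\omega_N^{xy}\omega_N^{-x'y}$ into $\omega_N^{(x-x')y}$, then split into the case $x=x'$ (sum equals $N$) and $x\neq x'$ (apply the finite geometric series formula, with numerator $\omega_N^{N(x-x')}-1=0$). The paper's proof is slightly terser but structurally identical; your extra remark about interpreting $\delta_{x,x'}$ modulo $N$ is a reasonable clarification.
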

\begin{proof}
Consider first the case when $x = x'$. Then, for all $y \in \mathbb{Z}/{N}\mathbb{Z}$, we find $\omega_N^{(x-x')y}=1$ and the above sum clearly adds up to $N$.
If $x\neq x'$, we can apply the partial sum formula of the geometric series and compute:
\begin{equation}
1 + \omega_N^{x-x'} + (\omega_N^{x-x'})^2 + ... + (\omega_N^{x-x'})^{N-1} = \frac{1-(\omega_N^{x-x'})^{N}}{1- \omega_N^{x-x'}} = 0.
\end{equation}\qed
\end{proof}
We can apply \expref{Proposition}{lem:ortho} in order to show that $\mathcal{F}_G$ is indeed a well defined unitary operation:
\begin{align}
\mathcal{F}_G \mathcal{F}^\dagger_G &= \frac{1}{N} \sum_{x,y \in \mathbb{Z}/{N}\mathbb{Z}} \omega_N^{x\cdot y} \ket{x}\bra{y}
                                        \sum_{x',y' \in \mathbb{Z}/{N}\mathbb{Z}} \omega_N^{-y'\cdot x'} \ket{y'}\bra{x'}\\                                        
                                       &= \frac{1}{N} \sum_{x,x',y,y' \in \mathbb{Z}/{N}\mathbb{Z}} \omega_N^{x\cdot y - x'\cdot y'} \delta_{y,y'} \ket{x}\bra{x'}\\
                                       &= \frac{1}{N} \sum_{x,x',y \in \mathbb{Z}/{N}\mathbb{Z}} \omega_N^{(x - x')\cdot y}\ket{x}\bra{x'} \, \, \, = \,  \sum_{x,x' \in \mathbb{Z}/{N}\mathbb{Z}}\delta_{x,x'} \ket{x}\bra{x'} \, \, = \,  \mathbbm{1}.\qed                                
\end{align}
According to the \textit{fundamental classification of finite abelian groups} \cite{CD10}, any finite Abelian group $G$ is structurally equivalent,
i.e. isomorphic, to a direct product of cyclic factors whose orders are prime powers. 
Let $|G|=N$ and let $N = p_1^{r_1} \hdots p_k^{r_k}$ be the unique prime factorization of $N$,
then:
\begin{equation}
\label{fundamental_theorem_abelian}
\large{G \, \cong \, \mathbb{Z}/{p_1^{r_1}}\mathbb{Z} \times \hdots \times \mathbb{Z}/{p_k^{r_k}}\mathbb{Z}}.
\end{equation}
Moreover, the basis $\hat G$ of irreducible characters is given by products of irreducible characters of the respective factors in 
\eqref{fundamental_theorem_abelian}.
Consequently, following \cite{CD10}, the quantum Fourier transform on finite Abelian groups $G$ is given by:
\begin{equation}
\large{\mathcal{F}_G = \mathcal{F}_{\mathbb{Z}/{p_1^{r_1}}\mathbb{Z}} \otimes \hdots \otimes \mathcal{F}_{\mathbb{Z}/{p_k^{r_k}}\mathbb{Z}}.}
\end{equation}
For example, let $G = (\left(\mathbb{Z}/{N}\mathbb{Z}\right)^n,+)$. Then, for any $y \in \left(\mathbb{Z}/{N}\mathbb{Z}\right)^n$, we can associate
a unique irreducible character $\chi_y: \mathbb{Z}/{N}\mathbb{Z} \longrightarrow \mathbb{C}$ such that for all $x \in \left(\mathbb{Z}/{N}\mathbb{Z}\right)^n$:
\begin{equation}
\chi_y(x) = \chi_y(x_1) \cdot \cdot \cdot \chi_y(x_n) = \omega_N^{ x_1 \cdot y_1 + \hdots + x_n \cdot y_n}.
\end{equation}
Hence, the quantum Fourier transform $\mathcal{F}_G$ over the group $G = \left(\mathbb{Z}/{N}\mathbb{Z}\right)^n$ is given by:
\begin{equation}
\ket{x_1}\ket{x_2}\hdots\ket{x_n} \longrightarrow  \frac{1}{\sqrt{N^n}} \sum_{y \in \left(\mathbb{Z}/{N}\mathbb{Z}\right)^n} \omega_N^{ x_1 \cdot y_1 + \hdots + x_n \cdot y_n} \ket{y_1}\ket{y_2}\hdots\ket{y_n},
\end{equation}
where, from now on, $ \braket{x,y} = x_1 \cdot y_1 + \hdots + x_n \cdot y_n$.
In the following sections, we consider different variants of groups $G$ and derive a quantum circuit implementation that realizes the corresponding Fourier transformations.
Before we give efficient circuit implementations, we require additional remarks.\\
Let $G$ be the group $G=(\mathbb{Z}/{N}\mathbb{Z},+)$ and consider the \textit{shift operator} $U(1)$ that performs the the following operation:
\begin{equation}
U(1): \, \ket{x} \longrightarrow \ket{x+1},
\end{equation}
where $x+1$ is the cyclic addition \textit{mod} $N$. We can verify that $U(1) = \sum_{x \in \mathbb{Z}/{N}\mathbb{Z}} \ket{x+1}\bra{x}$ is unitary, since:
\begin{align}
U(1)U(1)^\dagger &= \sum_{x \in \mathbb{Z}/{N}\mathbb{Z}} \ket{x+1}\bra{x} \sum_{x' \in \mathbb{Z}/{N}\mathbb{Z}} \ket{x'}\bra{x'+1}\\
                 &= \sum_{x,x' \in \mathbb{Z}/{N}\mathbb{Z}} \ket{x+1}\braket{x|x'}\bra{x'+1} = \sum_{x \in \mathbb{Z}/{N}\mathbb{Z}} \ket{x+1}\bra{x+1} \, = \, \mathbbm{1}.
\end{align}
Therfore, we can prove the following statement that connects our previous discussion on the quantum Fourier transform with the shift operator, as
appeared in the work of Kitaev \cite{Kit95}:
\begin{proposition}\label{lem:fourier_basis}
The shift operator $U(1)$ is diagonal in the Fourier basis:
\begin{equation}
\mathcal{F}_G \,U(1) \mathcal{F}_G^\dagger = \sum_{y \in \mathbb{Z}/{N}\mathbb{Z}} \omega_N^{y} \ket{y}\bra{y}.
\end{equation}
\end{proposition}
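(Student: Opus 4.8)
The plan is to verify the identity by acting on the computational basis vectors $\ket{y}$ and showing that $\mathcal{F}_G^\dagger\ket{y}$ is an eigenvector of the shift operator $U(1)$ with eigenvalue $\omega_N^{y}$. Since the $\ket{y}$ form an orthonormal basis and we have already established that $\mathcal{F}_G$ is unitary (in particular $\mathcal{F}_G\mathcal{F}_G^\dagger = \mathbbm{1}$), this determines the conjugated operator $\mathcal{F}_G\,U(1)\,\mathcal{F}_G^\dagger$ completely.

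First I would read off from the expression $\mathcal{F}_G = \frac{1}{\sqrt{N}}\sum_{x,y \in \mathbb{Z}/{N}\mathbb{Z}} \omega_N^{x\cdot y}\ket{y}\bra{x}$ that $\mathcal{F}_G^\dagger\ket{y} = \frac{1}{\sqrt{N}}\sum_{x \in \mathbb{Z}/{N}\mathbb{Z}} \omega_N^{-x\cdot y}\ket{x}$. Applying $U(1) = \sum_{x \in \mathbb{Z}/{N}\mathbb{Z}}\ket{x+1}\bra{x}$ then gives $\frac{1}{\sqrt{N}}\sum_{x}\omega_N^{-x\cdot y}\ket{x+1}$, and the crucial step is to reindex the sum via $x' = x+1$. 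Because addition is taken modulo $N$, the map $x \mapsto x+1$ is a bijection of $\mathbb{Z}/{N}\mathbb{Z}$, so the summand becomes $\omega_N^{-(x'-1)\cdot y}\ket{x'} = \omega_N^{y}\,\omega_N^{-x'\cdot y}\ket{x'}$; pulling the global phase $\omega_N^{y}$ out of the sum yields $U(1)\,\mathcal{F}_G^\dagger\ket{y} = \omega_N^{y}\,\mathcal{F}_G^\dagger\ket{y}$.

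Then I would conclude: $\mathcal{F}_G\,U(1)\,\mathcal{F}_G^\dagger\ket{y} = \omega_N^{y}\,\mathcal{F}_G\mathcal{F}_G^\dagger\ket{y} = \omega_N^{y}\ket{y}$ for every $y \in \mathbb{Z}/{N}\mathbb{Z}$, using the unitarity relation $\mathcal{F}_G\mathcal{F}_G^\dagger = \mathbbm{1}$ proven above. Since $\mathcal{F}_G\,U(1)\,\mathcal{F}_G^\dagger$ agrees with $\sum_{y \in \mathbb{Z}/{N}\mathbb{Z}}\omega_N^{y}\ket{y}\bra{y}$ on every element of an orthonormal basis, the two operators coincide, which is exactly the claimed spectral decomposition.

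There is no real obstacle here; the only point requiring care is that the reindexing $x \mapsto x+1$ is legitimate precisely because the addition is cyclic modulo $N$ and hence permutes $\mathbb{Z}/{N}\mathbb{Z}$ — the same structure that made $U(1)$ unitary in the first place. An alternative, more computational route would multiply out $\mathcal{F}_G$, $U(1)$ and $\mathcal{F}_G^\dagger$ as a triple sum over $x,x',y$ and collapse the inner sum using the orthogonality of roots of unity from \expref{Proposition}{lem:ortho}; this works equally well but is longer, so I would present the eigenvector argument.
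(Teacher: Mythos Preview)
Your proof is correct, and you have in fact anticipated the paper's argument in your final paragraph: the paper takes exactly the ``more computational route'' you describe, writing out $\mathcal{F}_G\,U(1)\,\mathcal{F}_G^\dagger$ as a triple sum over $x,y,x',y'$, collapsing with $\braket{x|y'+1}=\delta_{x,y'+1}$, and then invoking the orthogonality relation of \expref{Proposition}{lem:ortho} to reduce $\frac{1}{N}\sum_{y'}\omega_N^{y'(y-x')}$ to $\delta_{y,x'}$. Your eigenvector argument is shorter and more transparent --- the reindexing $x\mapsto x+1$ replaces the appeal to orthogonality entirely, and it makes the spectral content of the statement (that the Fourier basis diagonalizes the shift) explicit rather than emergent from a calculation. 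The paper's version has the minor virtue of exercising \expref{Proposition}{lem:ortho}, which is the workhorse identity for the rest of the chapter, but as a proof of this particular proposition your route is cleaner.
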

\begin{proof}
Let the operators be represented by:
\begin{align*}
\mathcal{F}_G &=\frac{1}{\sqrt{N}} \sum_{x,y \in \mathbb{Z}/{N}\mathbb{Z}} \omega_N^{x\cdot y} \ket{y}\bra{x}, \hspace{6mm} U(1)\, =\sum_{z \in \mathbb{Z}/{N}\mathbb{Z}} \ket{z+1}\bra{z} \phantom{kkk}\text{and}\\
\mathcal{F}_G^\dagger &= \frac{1}{\sqrt{N}} \sum_{x',y' \in \mathbb{Z}/{N}\mathbb{Z}} \omega_N^{-y'\cdot x'} \ket{y'}\bra{x'}.
\end{align*}
Using \expref{Proposition}{lem:ortho}, we compute:
\begin{align}
\mathcal{F}_G \,U(1) \mathcal{F}_G^\dagger &= 
           \frac{1}{N} \sum_{x,y,x',y' \in \mathbb{Z}/{N}\mathbb{Z}} \omega_N^{x\cdot y} \omega_N^{-y'\cdot x'} \ket{y}\braket{x\,|\,y'+1}\bra{x'} \phantom{kkkkkkk}\\
           &= \frac{1}{N} \sum_{y,x',y' \in \mathbb{Z}/{N}\mathbb{Z}} \omega_N^{(y'+1)\cdot y} \omega_N^{-y'\cdot x'} \ket{y}\bra{x'} \phantom{kkkkkkk}\\
           &=  \sum_{y,x' \in \mathbb{Z}/{N}\mathbb{Z}} \omega_N^{y} \left[\frac{1}{N}\sum_{y' \in \mathbb{Z}/{N}\mathbb{Z}}\omega_N^{y'\cdot y} \omega_N^{-y'\cdot x'} \right]\ket{y}\bra{x'} \phantom{kkkkkkk}\\
           &=  \sum_{y,x' \in \mathbb{Z}/{N}\mathbb{Z}} \omega_N^{y} \, \delta_{y,x'} \,\ket{y}\bra{x'} \phantom{kkkkkkk}\\
           &=  \sum_{y \in \mathbb{Z}/{N}\mathbb{Z}} \omega_N^{y} \ket{y}\bra{y}. \phantom{kkkkkkk}
\end{align}
\end{proof}

\newpage
\subsection{Efficient Circuit Implementations}

One of the earliest efficient circuit implementations was found for the \QFT of order $N=2^n$,
as provided by the following theorem:

\begin{theorem}[\cite{BV93}]
For any integer $n$ and order $N=2^n$, there exists an efficient quantum circuit that uses $O(n^2)$ elementary gates
and performs the quantum Fourier transform on any of the orthonormal basis states orthonormal basis $\ket{0},\ket{1}, ..., \ket{N-1}$:
\begin{equation}
\ket{x} \longrightarrow \frac{1}{\sqrt{2^n}} \sum_{y \in \mathbb{Z}/{2^n}\mathbb{Z}} \omega_N^{x\cdot y} \ket{y}.
\end{equation}
\end{theorem}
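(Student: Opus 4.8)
The plan is to exhibit the well-known product-form circuit and verify its correctness by showing that the Fourier image of a basis state factorizes across the $n$ qubits. First I would expand the index $x$ in binary, $x = \sum_{j=1}^{n} x_j 2^{n-j}$ with $x_j \in \{0,1\}$, and likewise treat the summation index $y$ in the definition $\ket{x} \mapsto \frac{1}{\sqrt{2^n}}\sum_{y} \omega_N^{x\cdot y}\ket{y}$ as ranging over its bits $y_1,\dots,y_n \in \{0,1\}$. The one genuinely computational step is the identity $\omega_N^{x\cdot y} = \prod_{k=1}^{n} e^{2\pi i\, x\, y_k / 2^{k}}$, together with the observation that $e^{2\pi i x/2^{k}}$ depends only on the $k$ least significant bits of $x$ — the higher-order bits of $x$ contribute integers to $x/2^{k}$ and hence drop out of the exponential — so it equals $e^{2\pi i (0.x_{n-k+1}\cdots x_n)}$ in binary-fraction notation. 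Because the sum over $y$ decouples bit by bit, this yields
$$\ket{x} \longmapsto \frac{1}{\sqrt{2^n}} \bigotimes_{k=1}^{n}\left(\ket{0} + e^{2\pi i\, (0.x_{n-k+1}\cdots x_n)}\,\ket{1}\right).$$

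Next I would read a circuit off this product. Acting on the qubit that holds $x_1$, a Hadamard gives $\ket{0} + (-1)^{x_1}\ket{1} = \ket{0} + e^{2\pi i (0.x_1)}\ket{1}$; then controlled phase gates — where the gate conditioned on the qubit holding $x_\ell$ is $\mathrm{diag}(1, e^{2\pi i / 2^{\ell}})$ — successively append $x_2,\dots,x_n$ into the phase, producing $\ket{0} + e^{2\pi i (0.x_1 x_2 \cdots x_n)}\ket{1}$. Applying this pattern to every qubit (the $j$-th qubit receives one Hadamard and then controlled rotations from qubits $j+1,\dots,n$, the angle halving with each further control) builds exactly the tensor factors above but in reversed qubit order, so a final layer of at most $\lfloor n/2 \rfloor$ swap gates restores the natural ordering; correctness of the circuit is then immediate from the displayed product identity.

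For the resource count, the construction uses $n$ Hadamard gates, $\sum_{j=1}^{n-1} j = n(n-1)/2$ controlled phase gates, and at most $n/2$ swaps (each a triple of CNOTs), for a total of $O(n^2)$ elementary operations, which establishes the claim. Each controlled phase gate is a two-qubit unitary and can, if one insists on the fixed universal basis of \expref{Theorem}{thm:universality}, be replaced by a constant-length approximating sequence to any desired fixed accuracy without changing the asymptotic count.

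I expect the only non-routine obstacle to be the factorization step itself: one must keep careful track of which bits of $x$ survive in each phase $e^{2\pi i x/2^{k}}$ (since $e^{2\pi i \cdot \mathrm{integer}} = 1$ annihilates the high-order part) and verify that the $2^n$-term sum over $y$ genuinely splits into a product of $n$ two-term sums, one per bit. Once that identity is in hand, matching it to the Hadamard-plus-controlled-rotation pattern and accounting for the bit-reversal permutation is bookkeeping.
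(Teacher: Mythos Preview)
Your proposal is correct and follows essentially the same route as the paper: both expand $x$ and $y$ in binary, factor the Fourier sum into a tensor product of single-qubit states with binary-fraction phases, and read off the Hadamard-plus-controlled-phase circuit (the paper illustrates this with a three-qubit figure). Your write-up is actually more complete than the paper's, which sketches the product identity and cites \cite{NC10} but does not spell out the bit-reversal swaps or the explicit $n + n(n-1)/2 + O(n)$ gate count.
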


\begin{figure}[ht!]
\centering
\includegraphics[width=130mm]{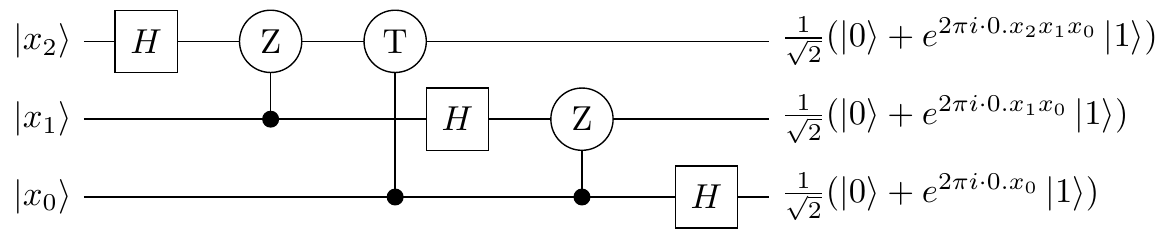}
\caption{A quantum circuit that performs the \QFT for three qubits using the following elementary gates: 
Hadamard (H), Controlled-Z (Z) and the Controlled-$\pi/4$ Phase-shift (T).}\label{fig:qft_2}
\end{figure}

In particular, in the case of $N=2^n$, we can use a binary representation $x= \sum_{j=0}^{n-1} 2^j x_j$ so that $x=x_1 x_2 ... x_n$.
Moreover, it is also helpful to introduce the binary fraction notation $[0.x_1...x_m] = \sum_{i=1}^{m} 2^i x_i.$
This allows us to write the \QFT in terms of a separable product of $n$-qubits \cite{NC10}:
\begin{align}
\ket{x} &\rightarrow \frac{1}{\sqrt{2^n}} \sum_{y = 0}^{2^n-1} \omega_N^{x\cdot y} \ket{y}\\
        &= \frac{1}{\sqrt{2^n}} \sum_{y \in \{0,1\}^n} \omega_N^{x\sum_{j=0}^{n-1} 2^j y_j} \ket{y_1}...\ket{y_n}\\
        &= \frac{1}{\sqrt{2^n}} \bigotimes_{j=0}^{n-1}\sum_{y_j \in \{0,1\}} e^{2 \pi i x y_j/2^{n-j}} \ket{y_j}\\
        &= \bigotimes_{j=0}^{n-1} \frac{ \ket{0} + e^{2 \pi i \sum_{k=0}^{n-1} 2^{j+k-n} x_k} \ket{1}}{ \sqrt{2}  }
\end{align}
Next, we would like to extend the \QFT onto an arbitrary cyclic group $\mathbb{Z}/{N}\mathbb{Z}$, by using a technique due to Kitaev \cite{Kit95}.
Following \cite{CD10}, we can derive this transformation using \textit{quantum phase estimation}, an efficient quantum procedure 
for the estimation of eigenvalues for a given unitary operator \cite{NC10}.
The goal is to perform the \QFT over $\mathcal{F}_{\mathbb{Z}/{N}\mathbb{Z}}$ and map a state $\ket{x}$ into the Fourier basis $ \ket{\hat x}$, as follows:
\begin{equation}
\label{kitaev_ft}
\ket{x} \longrightarrow \ket{\hat x} = \frac{1}{\sqrt{N}} \sum_{y \in \mathbb{Z}/{N}\mathbb{Z}} \omega_N^{x\cdot y} \ket{y}.
\end{equation}
Let us first note that by additionally attaching the input state in Eq.\eqref{kitaev_ft}, it is straightforward to realize the above operation
as a two-qubit operation using elementary gates.
This can be verified as follows: First prepare the state $\ket{x}\ket{0}$ and create a uniform superposition in the second register:
\begin{equation}
\ket{x}\ket{0} \longrightarrow \frac{1}{\sqrt{N}} \sum_{y \in \mathbb{Z}/{N}\mathbb{Z}}\ket{x} \ket{y}.
\end{equation}
Consider now applying a controlled phase-shift gate $\ket{x}\ket{y} \longrightarrow  \omega_N^{x\cdot y}  \ket{x}\ket{y}$.
As a result, the output is thus transformed into:
\begin{equation}
\frac{1}{\sqrt{N}} \sum_{y \in \mathbb{Z}/{N}\mathbb{Z}} \ket{x}\ket{y} \longrightarrow \frac{1}{\sqrt{N}} \sum_{y \in \mathbb{Z}/{N}\mathbb{Z}} \omega_N^{x\cdot y}  \ket{x}\ket{y}.
\end{equation}
However, due to entanglement of the registers, straightforward erasure of the first register is not possible.
At this point, however, we can make use of the quantum phase estimation procedure that allows us to
efficiently approximate the eigenvalues of a given unitary operator with $n=O(\log N)$ bits of precision \cite{NC10}. 
According to \expref{Lemma}{lem:fourier_basis},
the eigenvalues of the shift operator $U(1)$ are precisely given by the roots of unity $\omega_N$. Thus, we can approximately perform the following 
unitary operation
$\mathcal{P}$:
\begin{equation}
\ket{\hat x}\ket{0} \longrightarrow \ket{\hat x}\ket{x}
\end{equation}
By reversing the above operation and applying $\mathcal{P}^\dagger$, we arrive at the desired outcome:
\begin{equation}
\ket{\hat x}\ket{x} \longrightarrow \ket{\hat x}\ket{0}.
\end{equation}
Finally, we refer to the following highly efficient realization of the \QFT due to Hales and Hallgren:
\begin{theorem}[\cite{HH00}]
For arbitrary integers $N$, where $n=\log(N)$, and any $\epsilon >0$, there exists an efficient quantum circuit that uses 
$O(n \log \frac{n}{\epsilon} + \log^2\frac{1}{\epsilon})$ many gates
and approximately performs the quantum Fourier transform on orthonormal basis states $\ket{0},\ket{1}, ..., \ket{N-1}$
up to a fidelity of $\epsilon$:
\begin{equation}
\ket{x} \longrightarrow \frac{1}{\sqrt{N}} \sum_{y \in \mathbb{Z}/{N}\mathbb{Z}} \omega_N^{x\cdot y} \ket{y}.
\end{equation}
\end{theorem}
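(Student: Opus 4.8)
We only sketch the construction and the error analysis; the full details, including the optimisations needed to reach the stated gate count, are in \cite{HH00}. The starting point is Kitaev's reduction of $\mathcal{F}_{\mathbb{Z}/N\mathbb{Z}}$ to quantum phase estimation, already outlined above. Fix a precision parameter $m = n + O(\log\frac{1}{\epsilon})$. By \expref{Proposition}{lem:fourier_basis}, the shift operator $U(1)$ on $\mathbb{Z}/N\mathbb{Z}$ is diagonal in the Fourier basis with eigenvalues $\omega_N^{y} = e^{2\pi i y/N}$; in particular $\ket{\hat x}$ is an eigenvector with eigenphase $x/N$. Running phase estimation on $U(1)$ with eigenvector $\ket{\hat x}$ therefore writes an $m$-bit estimate of $x/N$ into an ancilla register, and for $m$ large enough this determines $x$ exactly. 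Reversing this gives the map $\mathcal{P}^{\dagger}:\ket{\hat x}\ket{x}\mapsto\ket{\hat x}\ket{0}$, and the Fourier transform follows by preparing a uniform superposition over the second register, applying the controlled phase $\ket{x}\ket{y}\mapsto\omega_N^{xy}\ket{x}\ket{y}$, and then erasing the control register with $\mathcal{P}^{\dagger}$ --- exactly the scheme sketched before this theorem, now instantiated with explicit, efficient subroutines.

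The two subroutines that must be made efficient are the phase-estimation circuit itself and the controlled phase $\ket{x}\ket{y}\mapsto\omega_N^{xy}\ket{x}\ket{y}$. For the first I would use the pruned (Coppersmith) approximate Fourier transform over $\mathbb{Z}/2^m\mathbb{Z}$: starting from the separable-product form for $N=2^{m}$ derived earlier, one discards every controlled-rotation-by-$2\pi/2^{k}$ gate with $k > \log\frac{m}{\epsilon}$. Each discarded gate changes the state by at most $O(\epsilon/m)$ in operator norm and there are $O(m)$ of them, so the total perturbation is $O(\epsilon)$ while the gate count drops from $O(m^2)$ to $O(m\log\frac{m}{\epsilon}) = O(n\log\frac{n}{\epsilon})$. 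The controlled powers $U(1)^{2^{j}}$ needed inside phase estimation are just cyclic translations by $2^{j}\bmod N$, each realised by a reversible modular-addition circuit; only the $O(\log\frac{1}{\epsilon})$ most significant of them matter to the target precision, and handling those carefully is the source of the additive $O(\log^2\frac{1}{\epsilon})$ term. The phase $\ket{x}\ket{y}\mapsto\omega_N^{xy}\ket{x}\ket{y}$ is treated the same way: it is a product of $O(n^2)$ controlled phase shifts of which all but $O(n\log\frac{n}{\epsilon})$ are negligible.

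Adding up, the circuit uses $O(n\log\frac{n}{\epsilon} + \log^2\frac{1}{\epsilon})$ elementary gates. The part I expect to be delicate is not the circuit design but the error bookkeeping: one must bound the accumulated infidelity from (i) the pruned approximate QFT, (ii) the fact that for general $N$ the eigenphase $x/N$ is not a dyadic rational of $m$ bits, so phase estimation only succeeds with probability close to $1$, and (iii) applying $\mathcal{P}^{\dagger}$ to a register that is merely close to $\ket{\hat x}\ket{x}$ rather than equal to it. Using unitary invariance and the triangle inequality for trace distance, together with the standard tail bound for the phase-estimation procedure (as in \cite{NC10}) and the choice $m = n + O(\log\frac{1}{\epsilon})$, each of these contributions is $O(\epsilon/\poly(n))$, so their sum is $O(\epsilon)$, which yields the claimed fidelity after rescaling $\epsilon$.
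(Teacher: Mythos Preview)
The paper does not prove this theorem: it is stated as a cited result from \cite{HH00}, introduced with ``Finally, we refer to the following highly efficient realization of the \QFT due to Hales and Hallgren,'' and no proof (or even sketch) is given in the paper. There is therefore nothing to compare your proposal against. Your sketch is a reasonable outline of the Kitaev phase-estimation route combined with Coppersmith-style pruning, and it is consistent with the background discussion the paper gives just before the theorem, but the actual argument of Hales and Hallgren is somewhat different in its technical details (it works by padding to a nearby power of two and carefully controlling the resulting error, rather than purely through phase estimation); if you want to present a proof you should either follow \cite{HH00} directly or be explicit that you are giving an alternative derivation and verify the gate count more carefully.
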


\newpage
\section{Quantum Learning Algorithms}

Quantum computers can indeed solve certain problems faster than classical computers, as demonstrated in \expref{Section}{quantum_algorithms}.
The tasks we considered so far all concerned static learning tasks with well-defined entities free of noise and error.
As decoherence still poses a major threat to current quantum computing architectures, the promise of successfully running quantum algorithms
is still largely dependent on the extent to which fault-tolerant computing is currently realized. 
The theory of quantum error correction has been crucial in establishing the prospect
of fault-tolerant quantum computing in the near future. Currently, noise in quantum computing architectures
is regarded as fatal and believed to substantially slow down most quantum improvements over classical algorithms. In this chapter,
we review recent work by Cross, Smith and Smolin \cite{CSS14}, showing that quantum algorithms can indeed solve certain tasks in the presence of 
certain classes of noise,
much to the contrary of their classical counterparts. We introduce tasks from computational learning theory, such as the
\textit{Learning Parity with Noise} (\LPN) problem which concerns the decoding of random linear binary codes. The \LPN problem
is conjectured to be classically intractable, as the best known algorithms require sub-exponential numbers of recources \cite{BFKL94}. 
However, learning in the quantum setting remains easy despite the presence of noise. Furthermore, we discuss its consequences
in a related setting in which we consider the \LWE problem with quantum samples.
To this end, we propose a new generalization of the Bernstein-Vazirani algorithm of \expref{Section}{quantum_algorithms} and present
recent results by Grilo and Kerenidis \cite{GK17}, demonstrating a successful amplification of the success probability in the presence of noise.\\

\subsection{Computational Learning Theory}

We begin by introducing a few basic notions from computational learning theory, following a recent survey on quantum learning theory by 
Arunachalam and de Wolf \cite{AdW17}. Let us first remark that
all definitions in this section may consider arbitrary computational spaces $X$, hence translate naturally in the context of qubits and qudits, i.e. for $X=\mathbb{Z}/{q}\mathbb{Z}$,
where $q$ is some positive integer.
We start with a few relevant definitions regarding the objectives of learning.

A \textit{concept class} $\mathfrak{C} = \bigcup\limits_{n \geq 1} C_n$ is a collection of \textit{concepts} (typically Boolean functions)
in which each set $C_n$ is to contains all \textit{concepts} $f: X^n \longrightarrow X$. We consider \textit{learning problems} as a setting in which
a \textit{learner} $\mathcal{A}$, i.e. an algorithm, is given access to either a membership or example oracle for a 
\textit{target concept} $f \in C_n$ and the task is
to then find a \textit{hypothesis} $h\in C_n$ that agrees with the concept $f$ upon some measure of accuracy. In other words,
having access to a black box oracle, the goal of the learner is to correctly identify the oracle that corresponds to the target concept.
Let us now specify variants of learning models, both in the classical, as well as in the quantum setting.

\subsubsection{Exact Learning.}

\begin{definition}[Classical Exact Learning] \ \\
In a classical exact learning model, a learner $\mathcal{A}$ for a concept class $C_n$ is given access to a membership oracle $\mathcal{O}_f$ for a 
\textit{target concept} $f \in C_n$ and the task is to find a hypothesis $h \in C_n$ that agrees with the target concept $f$
on all the inputs in $X$. Upon input $x \in X$, the membership oracle $\mathcal{O}_f$ outputs a label $f(x)$.\\
We say an efficient algorithm $\mathcal{A}$ is an exact learner for $C_n$ if, for every $f \in C_n$, there exists $\delta > 0$
such that, with probability $1-\delta$, $\mathcal{A}$ outputs a hypothesis $h$ where for all $x \in X: \,h(x) = f(x)$.\\
We refer to the query complexity of $\mathcal{A}$ as the maximum number of requests to the membership oracle, over all $f \in C_n$, 
as well as over the internal randomness needed to achieve the desired success probability of $1-\delta$.
\end{definition}

\begin{definition}[Quantum Exact Learning] \ \\
In a quantum exact learning model, a learner $\mathcal{A}$ for a concept class $C_n$ is given access to a quantum membership oracle $\mathcal{O}_f$ for a 
\textit{target concept} $f \in C_n$ and the task is to find a hypothesis $h \in C_n$ that agrees with the target concept $f$
on all the inputs in $X$. Upon input $x \in X$
and $y \in Y$, the membership oracle performs the operation:
$$\mathcal{O}_f: \ket{x}\ket{y} \longrightarrow \ket{x}\ket{y \oplus f(x)}.$$
We say an efficient quantum algorithm $\mathcal{A}$ is an exact learner for $C_n$ if, for every $f \in C_n$, there exists $\delta > 0$
such that, with probability $1-\delta$, $\mathcal{A}$ outputs a hypothesis $h$ where for all $x \in X: \,h(x) = f(x)$.\\
Similarly, we now refer to the quantum query complexity of $\mathcal{A}$ as the maximum number of quantum queries to the membership oracle, over all $f \in C_n$, 
as well as over the internal randomness needed to achieve the desired success probability of $1-\delta$.
\end{definition}

\subsubsection{PAC Learning.}

In this section, we introduce a variant called \textit{probably approximately correct} (\PAC) learning,
a model in which we consider uniform example oracles contrary to membership oracles. We begin by specifying
the learning model in the classical, as well as quantum setting.

\begin{definition}[Classical \PAC Learning] \ \\
In a \PAC learning model, a learner $\mathcal{A}$ for a concept class $C_n$ is given access to a uniform example oracle $\mathcal{O}_{EX}(f)$ for a 
\textit{target concept} $f \in C_n$ and the task is to find a hypothesis $h \in C_n$ that agrees with the target concept $f$
on at least a $1-\epsilon$ fraction of the inputs in $X$.\\
Upon each query, the example oracle $\mathcal{O}_{EX}(f)$ samples a label $f(x)$ uniformly at random.\\
We say an algorithm $\mathcal{A}$ is a PAC learner for $C_n$ if, for every $f \in C_n$, there exists an $\epsilon > 0$ and $\delta > 0$
such that, with probability $1-\delta$, $\mathcal{A}$ outputs a hypothesis $h$, where:
\begin{enumerate}
\item $\displaystyle\Pr_{x \in X}[h(x) = f(x)] \geq 1-\epsilon$.
\item $\mathcal{A}$ runs in time and uses a number of queries that is poly$(n,1/\epsilon,1/\delta)$.
\end{enumerate}
We refer to the query complexity of $\mathcal{A}$ as the maximum number of requests to the example oracle, over all $f \in C_n$, 
as well as over the internal randomness needed to achieve the desired success probability of $1-\delta$.
The $(\epsilon, \delta)$-\PAC sample complexity of a concept class C is given by the minimum sample complexity over all 
$(\epsilon, \delta)$-\PAC learners for $C_n$.
\end{definition}

\begin{definition}[Quantum \PAC Learning] \ \\
In a quantum \PAC learning model, a learner $\mathcal{A}$ for a concept class $C_n$ is given access to a quantum example oracle $\mathcal{O}_{EX}(f)$ for a 
\textit{target concept} $f \in C_n$ and the task is to find a hypothesis $h \in C_n$ that agrees with the target concept $f$
on at least a $1-\epsilon$ fraction of the inputs in $X$.\\
When queried, the example oracle $\mathcal{O}_{EX}(f)$ responds with a quantum state:
$$ \frac{1}{\sqrt{|X|}}\sum_{x \in X} \ket{x_1} \hdots \ket{x_n}\ket{f(x)}.$$
We say a quantum algorithm $\mathcal{A}$ is a quantum \PAC learner for $C_n$ if, for every $f \in C_n$, there exists an $\epsilon > 0$ and $\delta > 0$
such that, with probability $1-\delta$, $\mathcal{A}$ outputs a hypothesis $h$, where:
\begin{enumerate}
\item $\displaystyle\Pr_{x \in X}[h(x) = f(x)] \geq 1-\epsilon$.
\item $\mathcal{A}$ runs in time and uses a number of queries that is poly$(n,1/\epsilon,1/\delta)$.
\end{enumerate}
We refer to the query complexity of $\mathcal{A}$ as the maximum number of requests (at unit cost) to the example oracle, over all $f \in C_n$, 
as well as over the internal randomness needed to achieve the desired success probability of $1-\delta$.
The $(\epsilon, \delta)$-\PAC sample complexity of a concept class C is given by the minimum sample complexity over all 
$(\epsilon, \delta)$-\PAC learners for $C_n$.
\end{definition}

\begin{definition}[\PAC Learnable Classes] \ \\
We say a concept class $\mathfrak{C} = \bigcup\limits_{n \geq 1} C_n$ is classically (or quantumly) \PAC learnable if, given an example oracle for any 
target concept $f \in \mathfrak{C}$,
there exists a \PAC algorithm such that, for any $\epsilon,\delta \in (0,1/2)$, the algorithm
\begin{enumerate}
\item outputs an $\epsilon$-approximation $h$ of $f$ with probability $1-\delta$.
\item runs in time and uses a number of queries that is poly$(n,1/\epsilon,1/\delta)$.
\end{enumerate}
\end{definition}\ \\
In the next section, we apply these definitions to the \textit{learning party with noise} problem and consider
classical, as well as quantum algorithms.

\subsection{Learning Parity With Noise.}\label{quantum_lpn}

Consider the following well known computational problem resembling a noisy variant of the Bernstein-Vazirani problem in \expref{Chapter}{quantum_algorithms}:\\

 \begin{mdframed}[backgroundcolor=green!5] 
\textbf{Learning Parity With Noise Problem:}\ \\
Recover the secret $\vec s \in \{0,1\}^n$ by making queries to a uniform example oracle of Bernoulli 
noise rate $\eta < 1/2$ for the class of parity functions $f_s: \{0,1\}^n \rightarrow \{0,1\}$, where
\begin{equation}
\label{bernstein}
f_s(\vec x) = s_1 \cdot x_1 \oplus ... \oplus s_n \cdot x_n \, \,  \text{mod }2 = \, \braket{\vec s,\vec x} \,  \text{mod }2.
\end{equation}
\end{mdframed}

In the noiseless case, this problem amounts to Gaussian elimination given
enough linearly independent samples. Following \cite{CSS14}, the probability that $n$ queries to the example oracle $\mathcal{O}_{f_s}$
produce a set of linearly independent examples is given by:
\begin{align}
\left(1-2^{-n}\right)\cdot\left(1-2^{-n+1}\right)\cdot\cdot\cdot \left(1-\frac{1}{2}\right) 
= \prod_{j=0}^{n-1} \left(1-2^{j-n}\right).
\end{align}
A simple proof by induction shows that this probability is in fact greater than $1/4$ for any integer $n>1$. In the noiseless case,
the class of parity functions is clearly \PAC-learnable. In fact,
any algorithm that fails with constant probability less than some $p \in (0,1)$ can be repeated in the order of
$O(\log_{1/p}1/\delta)$ to reduce the probability of failure below $\delta >0$.
In the case of a noise rate $\eta < 1/2$, it is known that the \LPN problem is an average-case version of the \NP-hard problem of decoding 
a linear code, hence the \LPN problem is thus classically intractable.

In the quantum setting, this problem remains easy even in the presence of noise, as shown in \cite{CSS14}. 
Our goal is to first define the \LPN problem in a quantum oracle model using uniform quantum samples
and show that the \LPN problem is quantumly \PAC-learnable. First note its resemblance to the Bernstein-Vazirani 
problem based on queries from a uniform example oracle.\\
\ \\

 \begin{mdframed}[backgroundcolor=green!5] 
\textbf{Learning Parity With Noise:}\ \\
Recover the secret $\vec s \in \{0,1\}^n$ from the class of parity functions $f_s: \{0,1\}^n \rightarrow \{0,1\}$, where
$f_s(\vec x) = s_1 \cdot x_1 \oplus ... \oplus s_n \cdot x_n \, \,  \text{mod }2 = \, \braket{\vec s,\vec x} \,  \text{mod }2,$ 
by querying a quantum example oracle $\mathcal{O}_{EX}(f_s,\eta)$ of noise rate $\eta < 1/2$. Upon each query, $\mathcal{O}_{EX}(f_s,\eta)$ outputs uniform 
quantum sample given by:
\begin{equation}
\ket{\Psi_s} = \frac{1}{\sqrt{2^n}}\sum_{x\in \bit^n} \ket{x_1}\hdots \ket{x_n} \ket{\braket{\vec x,\vec s} \oplus e},
\end{equation}
where the error follows $e \sim Bernoulli(\eta)$.
\end{mdframed}
\ \\
\ \\
Let us first treat the problem in the \textit{noiseless} case. Consider the following algorithm, as in \cite{CSS14}:

\begin{algorithm} 
\caption{\emph{Quantum Parity Learning}}
\begin{description}
\item[Input:] A quantum example oracle $\mathcal{O}_{EX}(f_s)$ acting as a black box that outputs ideal uniform quantum samples. The task is to determine 
$\vec s \in \{0,1\}^n$.
\item[Output:] The secret string $\vec s \in \{0,1\}^n$ with probability $1/2$.
\ \\
\item[Procedure:]
\ \\
\item[1.] Query $\mathcal{O}_{EX}(f_s)$ and receive a uniform quantum example state $\ket{\psi_{f_s}}$, where
      $$\ket{\psi_{f_s}} = \frac{1}{\sqrt{2^n}} \sum_{x\in \{0,1\}^n} \ket{x_1}\ket{x_2}\hdots\ket{x_n}\ket{f_s(\vec x)}$$
\item[2.] Perform a Hadamard gate onto all $n+1$ registers:
\begin{equation*}
 \longrightarrow \, \, \frac{1}{\sqrt{2}}(\ket{0^n}\ket{0} + \ket{\vec s}\ket{1}) \phantom{kkkkkkk}
\end{equation*} 
\item[3.] Measure the entire output state. Read out $s$ if the last register is $\ket{1}$, else output $\bot$.   
\end{description}
\label{parity_algorithm}
\end{algorithm}

\begin{algorithm}
\caption{\emph{Learning Parity With Noise}}
\begin{description}
\item[Input:] A quantum example oracle $\mathcal{O}_{EX}(f_s,\eta)$ acting as a black box that outputs quantum states 
prone to a parity bit flip error with probability $\eta$. The task is to determine 
$\vec s$.
\item[Output:] The secret string $\vec s \in \{0,1\}^n$ with probability $1/2$, independent of $\eta$.

\item[Procedure:]
\item[1.] Query $\mathcal{O}_{EX}(f_s,\eta)$ and receive a uniform quantum state $\ket{\psi_{f_s}}$, where $e \sim Bern(\eta)$:
      $$\ket{\psi_{f_s}} = \frac{1}{\sqrt{2^n}} \sum_{x\in \{0,1\}^n} \ket{x_1}\ket{x_2}\hdots\ket{x_n}\ket{f_s(\vec x)\oplus e}$$
\item[2.] Perform a Hadamard gate onto all $n+1$ registers.\\
\begin{align*}
\longrightarrow \frac{1}{\sqrt{2}}(\ket{0^n}\ket{1} + \ket{\vec s}\ket{0})  \phantom{kkk} &(\text{with probability } \eta)\\
\longrightarrow \frac{1}{\sqrt{2}}(\ket{0^n}\ket{0} + \ket{\vec s}\ket{1})  \phantom{kkk} &(\text{with probability } 1-\eta)
\end{align*}
\item[3.] Measure the entire output state. Read out any nonzero string, else output $\bot$.     
\end{description}
\end{algorithm}

\begin{figure}
\centering
\includegraphics[width=80mm]{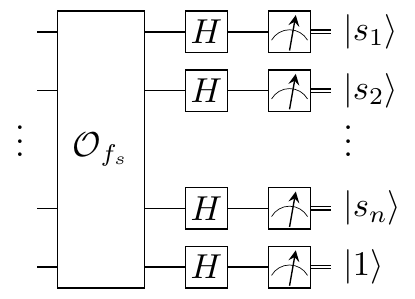}
\caption{\label{fig:i} A quantum circuit for the quantum parity learning algorithm. With probability $1/2$, the final register is measured to be in the state
$\ket{1}$, yielding the secret string. }
\end{figure}

\noindent
The second step in \expref{Algorithm}{parity_algorithm} can easily be verified using \expref{Proposition}{lem:ortho}, as follows:
\begin{equation*}
\begin{split}
     H^{\otimes (n+1)}\ket{\psi_{f_s}} &= { }\frac{1}{\sqrt{2}} \frac{1}{2^n}\sum_{y_{n+1} \in \{0,1\}} \sum_{x,y\in \{0,1\}^n} (-1)^{\braket{\vec x,\vec y}} (-1)^{\braket{\vec x,\vec s}\cdot y_{n+1}} \ket{y_1}\ket{y_2}\hdots\ket{y_n} \ket{y_{n+1}} \\
     &=\frac{1}{\sqrt{2}} \left( \frac{1}{2^n}\sum_{x,y\in \{0,1\}^n} (-1)^{\braket{x,y}} (-1)^{\braket{\vec x, 0}} \ket{y_1}\ket{y_2}\hdots\ket{y_n} \ket{0} \right. \\
      &  \phantom{i + j + k  kkkkkkk} \left. + \, \frac{1}{2^n} \sum_{x,y\in \{0,1\}^n} (-1)^{\braket{\vec x,\vec y}} (-1)^{\braket{\vec x,\vec s}} \ket{y_1}\ket{y_2}\hdots\ket{y_n} \ket{1}               \right)\\
          &=\frac{1}{\sqrt{2}} \left( \sum_{y\in \{0,1\}^n} \delta_{y,0} \ket{y_1}\ket{y_2}\hdots\ket{y_n} \ket{0} \, +\sum_{y\in \{0,1\}^n} \delta_{y,s} \ket{y_1}\ket{y_2}\hdots\ket{y_n} \ket{1}               \right)\\
     &= \, \, \frac{1}{\sqrt{2}} \, (\ket{0^n}\ket{0} + \ket{s}\ket{1}). 
\end{split}
\end{equation*} 
In the \LPN problem, any learning algorithm is given access to quantum samples that are described as a mixture of both noisy and noiseless samples. Surprisingly,
even in this model, the amplitudes interefere constructively once again after the use of Hadamard gates, as discussed in \cite{CSS14}.

\subsection{Extended Bernstein-Vazirani Algorithm.}

In this chapter, we analyze a \textit{qudit} extension of the well known Bernstein-Vazirani problem in a computational learning setting by considering
the problem over the group $\mathbb{Z}/{q}\mathbb{Z}$ under cyclic addition,
where $q$ is any integer.
In solving this problem, we provide the basis for the \LWE problem in a quantum example oracle model.\\

 \begin{mdframed}[backgroundcolor=green!5] 
\textbf{Extended Bernstein-Vazirani Problem:}\ \\
Learn the secret string $s \in \mathbb{Z}_q^n$ by making queries to a uniform example oracle for the concept class of
inner product functions $f_s: \mathbb{Z}_q^n \rightarrow \mathbb{Z}_q$, where $q$ is any positive integer, and
\begin{equation}
\label{bernstein}
f_s(\vec x) = s_1 \cdot x_1 + ... + s_n \cdot x_n  \pmod q = \, \braket{\vec s,\vec x} \pmod q.
\end{equation}
\end{mdframed}

Similar to the noiseless \LPN problem, the classical query complexity of the above problem is given by $\Omega(n)$.
In the quantum setting, we are given a quantum example oracle $\mathcal{O}_{f_s}$ and the goal is to solve the extended Bernstein-Vazirani problem.
In the following, we show that the above problem is exactly learnable by \expref{Algorithm}{extended_bernstein_algorithm}
and discuss its applications for the \LWE problem.

\begin{algorithm}
\caption{\emph{Extended Bernstein-Vazirani Algorithm}}
\begin{description}
\item[Input:] A quantum example oracle $\mathcal{O}_{EX}(f_s)$ acting as a black box 
for the inner product function $f_s(\vec x) = \braket{\vec s,\vec x} \,\pmod q$, where $\vec s \in \mathbb{Z}_q^n$ is to be determined.
\item[Output:] $\vec s \in \mathbb{Z}_q^n$ with probability $\varphi(q)/q$, where $\varphi(q) = |\left(\mathbb{Z}/q\mathbb{Z}\right)^\times|$, else $\bot$.

\item[Procedure:]
\item[1.] Query $\mathcal{O}_{EX}(f_s)$ and receive a quantum example:
      $$  \ket{\Psi_s} = \frac{1}{\sqrt{q^n}}\sum_{x \in \mathbb{Z}_q^n} \ket{x_1}\hdots \ket{x_n} \ket{\braket{\vec s,\vec x} \pmod q}.$$
\item[2.] Apply the quantum Fourier transform $\QFT_{\Z_q}^{\otimes n+1}$.\par
\ \\
\item[3.] Measure in the computational basis, yielding an outcome $\ket{z_1}\ket{z_2}\hdots \ket{z_{n+1}}$.\\
\ \\
\item[4.] \textbf{If} $\text{gcd}(z_{n+1},q)=1$, then output $ \tilde s = ( \frac{-z_1}{z_{n+1}} (\text{mod } q), \frac{-z_2}{z_{n+1}} (\text{mod } q), ...,  \frac{-z_n}{z_{n+1}} (\text{mod } q)),$\\
      \textbf{else} output $\bot$.      
\end{description}
\label{extended_bernstein_algorithm}
\end{algorithm}

\begin{theorem}\label{thm:extended_BV}
The procedure Extended Bernstein-Vazirani in \expref{Algorithm}{extended_bernstein_algorithm}
succeeds at exactly learning the secret string with probability $1-\delta$ by requiring $O(\log(1/\delta))$ samples and running in time poly$(n,\log(1/\delta))$.
\end{theorem}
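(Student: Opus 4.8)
The plan is to compute the post-Fourier state exactly, read off the measurement statistics of a single sample, and then amplify.

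First I would apply $\QFT_{\Z_q}^{\otimes n+1}$ to the example state $\ket{\Psi_s}$ and collapse the resulting amplitudes with Proposition~\ref{lem:ortho}. Writing $\omega_q = e^{2\pi i/q}$ and abbreviating $\ket{z}=\ket{z_1}\cdots\ket{z_n}$, the Fourier transform produces
\begin{equation*}
\QFT_{\Z_q}^{\otimes n+1}\ket{\Psi_s}=\frac{1}{q^{n}\sqrt{q}}\sum_{z\in\Z_q^n}\sum_{t\in\Z_q}\Bigl(\prod_{j=1}^{n}\sum_{x_j\in\Z_q}\omega_q^{x_j(z_j+s_j t)}\Bigr)\ket{z}\ket{t},
\end{equation*}
where the coupling of the last register into every data register comes from the cross term $\braket{\vec s,\vec x}\,t$ appearing in the phase. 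By Proposition~\ref{lem:ortho} each inner sum equals $q$ when $z_j\equiv -s_j t\pmod q$ and vanishes otherwise, so the state collapses to $\frac{1}{\sqrt q}\sum_{t\in\Z_q}\ket{-s_1 t}\cdots\ket{-s_n t}\ket{t}$, all entries taken mod $q$.

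Next I would analyse one measurement. The outcome is a uniformly random $z_{n+1}=t\in\Z_q$ together with $z_j=-s_j t\bmod q$ for each $j$. If $\gcd(t,q)=1$ then $t$ is a unit mod $q$ and $-z_j t^{-1}\equiv s_j\pmod q$, so Step~4 of the algorithm outputs $\tilde s=\vec s$ exactly; otherwise it outputs $\bot$. Hence a single run never returns an incorrect string, and it returns the secret with probability exactly $|(\Z/q\Z)^\times|/q=\varphi(q)/q$. Amplification is then routine: running $k$ independent copies and returning the first non-$\bot$ answer fails only if all $k$ copies output $\bot$, an event of probability $(1-\varphi(q)/q)^{k}$; since $\varphi(q)/q$ is a positive constant independent of $n$, taking $k=\lceil \ln(1/\delta)/\ln(q/(q-\varphi(q)))\rceil = O(\log(1/\delta))$ makes this at most $\delta$. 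For the running time, each copy uses one example-oracle query, one application of $\QFT_{\Z_q}^{\otimes n+1}$ — implementable with $\poly(n,\log q)$ elementary gates via the efficient circuits of the previous section applied to each of the $n+1$ registers — one computational-basis measurement, and classical post-processing (a gcd via the Euclidean algorithm, one modular inversion, and $n$ modular multiplications) in time $\poly(n,\log q)$; summing over $O(\log(1/\delta))$ copies gives total running time $\poly(n,\log(1/\delta))$.

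The main obstacle is really just the bookkeeping in the first step — tracking how the ancilla index $t$ feeds into all $n$ data registers through $\braket{\vec s,\vec x}\,t$ before the orthogonality relation annihilates everything off the ``line'' $z=-\vec s t$. The only other point worth a remark is that the claimed $n$-independence of the sample count relies on $\varphi(q)/q$ being bounded below by an absolute constant, which holds in the regimes of interest (e.g.\ $q$ prime or a prime power, as in the \LWE application).
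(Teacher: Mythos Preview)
Your proof is correct and follows essentially the same approach as the paper: apply $\QFT_{\Z_q}^{\otimes n+1}$, use the orthogonality relation (\expref{Proposition}{lem:ortho}) to see that the surviving amplitude lies on the line $z=-\vec s\,t$, observe that a single run succeeds with probability $\varphi(q)/q$, and amplify. Your presentation is in fact slightly cleaner than the paper's in two respects: you write the collapsed post-Fourier state explicitly (the paper instead computes the probability of the outcome $\vec z=-z_{n+1}\vec s$ and shows it equals $1$), and you note the one-sided-error property (a run never outputs a wrong string), which makes the ``return first non-$\bot$'' amplification argument immediate. The paper additionally quotes the Rosser--Schoenfeld lower bound $\varphi(q)/q>1/(e^\gamma\log\log q+3/\log\log q)$ to quantify the single-shot success probability for arbitrary $q$, whereas you only remark that $\varphi(q)/q$ is an $n$-independent constant; both treatments leave the dependence on $q$ outside the big-$O$ in the sample count.
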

\begin{proof}
Let $\vec s \in \mathbb{Z}_q^n$ be a secret string and fix the corresponding member $f_s(x) = \braket{s,x} \,\pmod q$ of the class of inner product functions.
Upon receiving uniform quantum examples $\ket{\psi_s}$, where
$$  \ket{\psi_s} = \frac{1}{\sqrt{q^n}}\sum_{x \in \mathbb{Z}_q^n} \ket{x_1}\hdots \ket{x_n} \ket{\braket{\vec s,\vec x} \,\pmod q},$$
the goal is to output $\vec s$.
The procedure Extended Bernstein-Vazirani in \expref{Algorithm}{extended_bernstein_algorithm} applies the quantum Fourier transform, which results in a state
$$
\QFT_{\Z_q}^{\otimes n+1} \ket{\psi_s} = \frac{1}{q^{n+1/2}} \sum_{\vec x \in \Z_q^{n}}\sum_{\vec y \in \Z_q^{n+1}}
  \omega^{\braket{\vec x,\vec y + y_{n+1}\vec s}} \ket{y_1}\dots \ket{y_n}\ket{ y_{n+1}}.
$$
Then, the probability of measuring the outcome $\vec z = -y_{n+1}\vec s \pmod q$ is given by
\begin{align}
\abs{\bra{\vec z}\QFT_{\Z_q}^{\otimes n+1}\ket{\psi_s}}^2 &= \left\| \frac{q^n}{q^{n+1/2}} \sum_{z_{n+1} \in \Z_q} \sum_{x \in \mathbb{Z}_q^n} \omega^0 \ket{z_1}\dots\ket{z_n}\ket{z_{n+1}} \right\|^2\\
                                                          &=  \frac{q^{2n}}{q^{2n+1}} \sum_{z_{n+1}} \left(\sum_{x \in \mathbb{Z}_q^n} 1 \right)^2 \, = \, 1.
\end{align}
Consequently, with probability $\varphi(q)/q$, an outcome $z_{n+1}$ is measured such that $\gcd(z_{n+1},q)=1$, hence the procedure Extended Bernstein-Vazirani in \expref{Algorithm}{extended_bernstein_algorithm} correctly outputs $\vec s$.
Let us consider Euler's product formula\footnote{
The curious quotient $\varphi(q)/q$ is of deep importance to number theory and has been studied for many decades.
In the $1950$ies, Schinzel and Sierpi{\'n}ski proved that $\{\varphi(n)/n : n=1,2,...\}$ is dense in the interval $(0,1)\subset \mathbb{R}$, 
highlighting that the ratio
is highly nontrivial. Therefore, it is not possible to find a unique limit as $q$ approaches infinity.
Euler's product formula, Eq. \eqref{euler_product}, gives us an intuition on how large $\varphi(q)/q$ is, depending on the prime factorization of $q$.
If $q$ is prime, we observe a simple ratio of $\frac{q-1}{q}$, hence a high probability for \expref{Algorithm}{extended_bernstein_algorithm}
to succeed.}
that allows us to write the probability that the algorithm succeeds as:
\begin{equation}
\label{euler_product}
 \frac{\varphi(q)}{q} = \prod_{\text{primes } p | q} \left(1- \frac{1}{p}\right).
\end{equation}
Using a result due to Rosser and Schoenfeld \cite{RS62}, we can also bound the success probability of \expref{Algorithm}{extended_bernstein_algorithm}
in the case where $q>2$:\footnote{In the case of $q=2$, the problem is the noiseless variant of the \LPN problem from \expref{Section}{quantum_lpn}.}
\begin{equation}
 \frac{\varphi(q)}{q} > \frac{1}{e^\gamma \log\log(q) + \frac{3}{\log\log(q)}},
\end{equation}
where $e^\gamma = 1.7810724...$ is Euler's constant. For our purposes, this ratio is still constant for a fixed modulus $q$ and the algorithm
can be repeated to amplify the success probability as it fails with constant probability less than some $p \in (0,1)$.
Thus, if \expref{Algorithm}{extended_bernstein_algorithm} is to succeed
after $m$ repetitions with probability $1-\delta$, we require $O(\log_{1/p}1/\delta)$ samples and time poly$(m,n,\log\frac{1}{\delta})$.
Surprisingly, the sample complexity is independent of $n$, whereas the classical query complexity is at least $\Omega(n)$.
\qed
\end{proof}

\subsection{Learning with Errors with Quantum Examples.}

Finally, we can state the algorithm for the \LWE problem with quantum samples. While the Extended-Bernstein-Vazirani algorithm we introduce
considers any integer modulus $q$, Grilo and Kerenidis \cite{GK17} independently proposed a similar algorithm, specifically for the case when $q$ is prime, in order to solve
the \LWE problem using quantum samples. In addition, it is shown in \cite{GK17} that the prime modulus Bernstein-Vazirani Algorithm can be amplified to solve
the \LWE problem with quantum samples up to arbitrarily high success probability.

The following algorithm runs the Extended-Bernstein-Vazirani procedure.\ \\

\IncMargin{1em}
\begin{algorithm}[H]
\SetKwData{Left}{left}\SetKwData{This}{this}\SetKwData{Up}{up}
\SetKwInOut{Input}{input}\SetKwInOut{Output}{output}
\Input{A quantum example oracle $\mathcal{O}_{EX}(\vec s,\chi)$ that outputs uniform \LWE samples
      $$  \ket{\psi_s} = \frac{1}{\sqrt{q^n}}\sum_{x \in \Z_q^n} \ket{x_1}\hdots \ket{x_n} \ket{ \braket{\vec x,\vec s} + e_x (\text{mod } q) }\phantom{-----}$$
where $2 \leq q \leq \exp(n)$ and the errors $e_x$ are i.i.d. random variables drawn according to a symmetric error distribution $\chi_{\eta,q}$ centered around $0$
of noise magnitude at most $\eta$.}
\Output{$\tilde{\vec{s}} \in \Z_q^n$ with probability at least $\varphi(q)/(24\eta q)$, else $\bot$.}
\BlankLine
\begin{enumerate}
\item Query a sample $\ket{\psi_s}$, where for unknown i.i.d. errors $e_x \from \chi_{\eta,q}$,
      $$  \ket{\psi_s} =  \displaystyle\frac{1}{\sqrt{q^n}}\sum_{\vec x \in \Z_q^n} \ket{x_1}\hdots \ket{x_n}\ket{\langle \vec x, \vec s \rangle + e_x \, (\text{mod } q) } \phantom{-------}$$
\item Apply the quantum Fourier transform $\QFT_{\Z_q}^{\otimes n+1}$.\par
\ \\
\item Measure in the computational basis, yielding an outcome $\ket{z_1}\ket{z_2}\hdots \ket{z_{n+1}}$.\\
\ \\
\item \textbf{If} $\text{gcd}(z_{n+1},q)=1$, then output $ \tilde s = ( \frac{-z_1}{z_{n+1}} (\text{mod } q), \frac{-z_2}{z_{n+1}} (\text{mod } q), ...,  \frac{-z_n}{z_{n+1}} (\text{mod } q)),$\\
      \textbf{else} output $\bot$.   
\end{enumerate}
\caption{Extended Bernstein-Vazirani for \LWE
}\label{alg:ExtendedBV_LWE}
\end{algorithm}\DecMargin{1em}

\begin{theorem}
Let $\vec s \in \mathbb{Z}_q^n$ be a secret string and $\mathcal{O}_{EX}(\vec s,\chi)$ be a quantum example oracle that outputs uniform \LWE samples upon a modulus
$2 \leq q \leq \exp(n)$ and i.i.d. errors drawn according to a symmetric error distribution $\chi_{\eta,q}$
of noise magnitude at most $\eta$ centered around $0$.
Then, after querying a single sample, the procedure in \expref{Algorithm}{alg:ExtendedBV_LWE} recovers the secret string $\vec s$ with probability at least $\varphi(q)/(24\eta q)$.
\end{theorem}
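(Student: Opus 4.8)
The plan is to extend the noiseless computation behind \expref{Theorem}{thm:extended_BV} and keep track of the error terms that survive the quantum Fourier transform. Writing the queried sample as $\ket{\psi_s} = q^{-n/2}\sum_{\vec x \in \Z_q^n}\ket{\vec x}\ket{\langle \vec x,\vec s\rangle + e_x}$ and applying $\QFT_{\Z_q}^{\otimes n+1}$, the amplitude on a basis state $\ket{\vec z}\ket{z_{n+1}}$ is $q^{-(2n+1)/2}\sum_{\vec x}\omega^{\langle \vec x,\ \vec z + z_{n+1}\vec s\rangle}\,\omega^{e_x z_{n+1}}$, where $\omega = e^{2\pi i/q}$. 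Step~4 of \expref{Algorithm}{alg:ExtendedBV_LWE} outputs the correct secret exactly when the measured outcome has the form $\vec z = -t\,\vec s$ and $z_{n+1}=t$ with $\gcd(t,q)=1$; for such outcomes the inner phases $\omega^{\langle \vec x,\vec z + t\vec s\rangle}$ are all $1$, leaving amplitude $q^{-(2n+1)/2}\sum_{\vec x}\omega^{e_x t}$. Consequently, for a fixed realization of the error string $e=(e_x)_{\vec x\in\Z_q^n}$, the probability that the measurement returns $\vec s$ is
\[
 P(e) \;=\; \frac{1}{q^{2n+1}}\sum_{t\in\Z_q^\times}\Bigl|\ \sum_{\vec x\in\Z_q^n}\omega^{e_x t}\ \Bigr|^{2},
\]
which reduces to the value $\varphi(q)/q$ of \expref{Theorem}{thm:extended_BV} when all $e_x=0$.

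Next I would average $P(e)$ over the i.i.d.\ errors $e_x \from \chi_{\eta,q}$. Expanding the modulus squared and using independence of the $e_x$ together with the symmetry of $\chi_{\eta,q}$ about $0$ -- so that $\hat\chi(t) := \mathbb{E}_{e\from\chi}[\omega^{e t}] = \mathbb{E}[\cos(2\pi e t/q)]$ is real -- one obtains $\mathbb{E}_e\bigl|\sum_{\vec x}\omega^{e_x t}\bigr|^{2} = q^{n} + (q^{2n}-q^{n})\,\hat\chi(t)^{2}$. Hence
\[
 \mathbb{E}_e[P(e)] \;=\; \frac{\varphi(q)}{q^{n+1}} + \frac{q^{2n}-q^{n}}{q^{2n+1}}\sum_{t\in\Z_q^\times}\hat\chi(t)^{2}
 \;\ge\; \frac{1}{2q}\sum_{t\in\Z_q^\times}\hat\chi(t)^{2},
\]
where the last step uses $q^{n}\ge 2$. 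It then suffices to prove $\sum_{t\in\Z_q^\times}\hat\chi(t)^{2} \ge \varphi(q)/(12\eta)$.

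This last inequality is the number-theoretic heart of the argument. I would start from the Parseval-type identity $\sum_{t\in\Z_q}\hat\chi(t)^{2} = q\sum_{e}\Pr_{\chi}[e]^{2}$, i.e.\ $q$ times the collision probability of $\chi_{\eta,q}$, and lower-bound that collision probability by $1/(2\eta+1)$ via Cauchy--Schwarz, since the noise of magnitude at most $\eta$ is supported on the $2\eta+1$ residues $\{-\eta,\dots,\eta\}$. To pass from $\Z_q$ to $\Z_q^\times$ one expands $\sum_{t\in\Z_q^\times}\hat\chi(t)^{2} = \sum_{e,e'}\Pr_{\chi}[e]\Pr_{\chi}[e']\,c_q(e-e')$ in terms of the Ramanujan sum $c_q(m) = \sum_{t\in\Z_q^\times}\omega^{mt}$; the diagonal contributes $\varphi(q)\sum_e\Pr_{\chi}[e]^{2}$ since $c_q(0)=\varphi(q)$, while in the regime $\eta \ll q$ the off-diagonal is controlled using $|c_q(e-e')|\le \gcd(e-e',q)\le 2\eta$, giving $\sum_{t\in\Z_q^\times}\hat\chi(t)^{2}\ge \varphi(q)\sum_e\Pr_{\chi}[e]^{2} - 2\eta \ge \varphi(q)/(12\eta)$. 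Plugging this into the previous display yields $\mathbb{E}_e[P(e)] \ge \varphi(q)/(24\eta q)$, the claimed bound. The main obstacle is exactly this bookkeeping: pinning down the constant $24$ and the clean $\varphi(q)$-dependence while correctly handling the non-unit residues $t$ (and the mild size relation between $q$ and $\eta$ that it forces) -- a step which, as in \cite{GK17}, simplifies substantially when $q$ is prime, since then $t=0$ is the only non-unit residue and $\sum_{t\in\Z_q^\times}\hat\chi(t)^{2} = q\sum_e\Pr_{\chi}[e]^{2} - 1$.
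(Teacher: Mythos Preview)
Your proposal is correct in its setup but takes a substantially more circuitous route than the paper, and the detour forces an extra hypothesis that is not in the theorem statement.

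The paper's proof is elementary and \emph{pointwise} in the error realization: for any fixed tuple $(e_x)$ with $|e_x|\le\eta$, it simply restricts the sum over $z_{n+1}$ to the range $1\le z_{n+1}\le q/(6\eta)$ and observes that on this range $|2\pi e_x z_{n+1}/q|<\pi/3$, hence $\cos(2\pi e_x z_{n+1}/q)\ge 1/2$. Dropping the imaginary part and squaring gives each such $z_{n+1}$ a contribution of at least $(q^n/2)^2$, and there are roughly $q/(6\eta)$ of them, yielding $1/(24\eta)$; the factor $\varphi(q)/q$ is then attached for the coprimality of $z_{n+1}$. No averaging over~$\chi$, no character sums, no collision probabilities --- just the single inequality $\cos\theta\ge 1/2$ for $|\theta|\le\pi/3$.

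By contrast, you average $P(e)$ over the noise and then need the Fourier/Ramanujan-sum estimate $\sum_{t\in\Z_q^\times}\hat\chi(t)^2\ge\varphi(q)/(12\eta)$. Your derivation of this last bound, via $\varphi(q)\sum_e\Pr[e]^2-2\eta$, only yields the claim when $\varphi(q)\gtrsim\eta^2$ (concretely, $\varphi(q)\ge 8\eta^2$ after your own arithmetic). That is a genuinely stronger size assumption than the paper's implicit $q\gtrsim 6\eta$, and it is not part of the theorem's hypotheses; so as stated your argument has a gap at exactly the point you flag as ``the main obstacle.'' Your approach does buy something --- it makes transparent the role of the collision probability of $\chi$ and would extend to distributions specified by their Fourier spectrum rather than by a hard magnitude bound --- but for the theorem at hand the paper's direct cosine bound is both simpler and strictly sufficient.
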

\begin{proof}
Let $\vec s \in \mathbb{Z}_q^n$ be a secret string as sampled by $\LWE(n,q,\chi)$.
The example oracle $\mathcal{O}_{EX}(\vec s,\chi)$ generates samples $\ket{\psi_s}$, where for unknown i.i.d. errors $e_x \from \chi_{\eta}$ of noise magnitude $|e_x| \leq \eta$,
$$
\ket{\psi_s} =  \displaystyle\frac{1}{\sqrt{q^n}}\sum_{\vec x \in \Z_q^n} \ket{x_1}\hdots \ket{x_n}\ket{\langle \vec x, \vec s \rangle + e_x \, (\text{mod } q) }.
$$
The procedure Extended Bernstein-Vazirani for \LWE in \expref{Algorithm}{alg:ExtendedBV_LWE} applies the quantum Fourier transform, resulting in a state
$$
\QFT_{\Z_q}^{\otimes n+1} \ket{\psi_s} = \frac{1}{q^{n+1/2}} \sum_{\vec x \in \Z_q^{n}}\sum_{\vec y \in \Z_q^{n+1}}
  \omega^{\braket{\vec  x,\vec y + y_{n+1}\vec k} + e_x y_{n+1}} \ket{y_1}\dots \ket{y_n}\ket{ y_{n+1}}.
$$
Then, the probability of measuring the outcome $\vec z = -y_{n+1}\vec s \pmod q$ is given by
\begin{align*}
\abs{\bra{\vec z}\QFT_{\Z_q}^{\otimes n+1}\ket{\psi_s}}^2 &= \left\| \frac{q^n}{q^{n+1/2}} \sum_{z_{n+1} \in \Z_q} \sum_{x \in \mathbb{Z}_q^n} \omega^{e_x z_{n+1}} \ket{z_1}\dots\ket{z_n}\ket{z_{n+1}} \right\|^2\\
&=\frac{1}{q^{2n+1}} \sum_{z_{n+1} \in \Z_q}
 \left( \sum_{\vec x \in \Z_q^{n}} \Re{\omega^{e_x z_{n+1}}}\right)^2 +
  \left( \sum_{\vec x \in \Z_q^{n}} \Im{\omega^{e_x z_{n+1}}}\right)^2\\
&  \geq \frac{1}{q^{2n+1}} \sum_{z_{n+1} \in \Z_q \atop z_{n+1} \leq \frac{q}{6\eta}}
 \left( \sum_{\vec x \in \Z_q^{n}} \cos \left({\frac{2 \pi e_x z_{n+1}}{q}}\right)\right)^2\\
 & \geq \frac{1}{q^{2n+1}} \sum_{z_{n+1} \in \Z_q \atop z_{n+1} \leq \frac{q}{6\eta}}
 \left( \sum_{\vec x \in \Z_q^{n}} \frac{1}{2}\right)^2  \phantom{---} \left(\text{since }\left|\frac{2 \pi}{q} e_x z_{n+1}\right| < \frac{\pi}{3}, \text{ if  }\, z_{n+1} \leq \frac{q}{6\eta}\right)\\
& \geq \frac{1}{24\eta}.
\end{align*}
Moreover, with probability $\varphi(q)/q$, an outcome $z_{n+1}$ is measured such that $\gcd(z_{n+1},q)=1$. In this case, the procedure Extended Bernstein-Vazirani for \LWE in \expref{Algorithm}{alg:ExtendedBV_LWE} correctly outputs the string $\vec s$.
Therefore, \expref{Algorithm}{alg:ExtendedBV_LWE} succeeds with probability at least $\varphi(q)/(24\eta q)$.
\end{proof}

Finally, we cite the main result in \cite{GK17} that completes the above analysis for the special case where $q$ is a large prime. It is shown that the success probability of the prime modulus variant of the Bernstein-Vazirani Algorithm can indeed be amplified 
to solve \LWE with quantum samples. 

\begin{theorem}[\cite{GK17}, Quantum Algorithm for \LWE]
For symmetric error distributions $\chi_{\eta,q}$ of noise magnitude $\eta = n/24$ around $0$ and prime modulus $q$ in $[2^{n^\gamma},2\cdot 2^{n^\gamma}]$, where $\gamma \in (0,1)$,
the Extended Bernstein-Vazirani algorithm for \LWE can be amplified to solve \LWE$(n,q,\chi)$ towards a success probability of $1-\delta$ by requesting
$O(n \log\frac{1}{\delta})$ many quantum examples and running in time poly$(n,\log\frac{1}{\delta})$.
\end{theorem}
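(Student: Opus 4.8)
\emph{Proof sketch.} The plan is to amplify the single-shot guarantee established in the preceding theorem. With the stated parameters --- $q$ prime and $\eta = n/24$ --- that theorem says that one run of Algorithm~\ref{alg:ExtendedBV_LWE} outputs the correct secret $\vec s$ with probability at least $\varphi(q)/(24\eta q) = (q-1)/(nq) \ge 1/(2n)$, and otherwise outputs either $\bot$ or \emph{some} string $\tilde{\vec s} \neq \vec s$. First I would run $m = O(n\log(1/\delta))$ independent copies of the algorithm; since each fails to produce $\vec s$ with probability at most $1 - 1/(2n)$, the probability that \emph{none} of them returns $\vec s$ is at most $(1-1/(2n))^{m} \le \delta/2$. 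Thus, except with probability $\delta/2$, the true secret lies in the list $L$ of (at most $m$ distinct) non-$\bot$ outputs, and it remains to single it out.

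The key observation for this last step is that, because $q$ is a large prime, a bounded-noise consistency check separates $\vec s$ from every other string with overwhelming confidence. I would draw a single shared batch of $t$ fresh classical samples $(\vec x_1, b_1), \dots, (\vec x_t, b_t)$, each obtained by measuring one quantum example in the computational basis so that $\vec x_j$ is uniform in $\Z_q^n$ and $b_j = \langle \vec x_j, \vec s\rangle + e_j \bmod q$ with $|e_j| \le \eta$, and accept a candidate $\vec s' \in L$ iff, for every $j$, the residue of $b_j - \langle \vec x_j, \vec s'\rangle$ in $(-q/2, q/2]$ has absolute value at most $\eta$. Since $|e_j| \le \eta < q/4$, the true secret is accepted with probability $1$. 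If $\vec s' \neq \vec s$, then $\vec s - \vec s'$ has an invertible coordinate (here is where primality of $q$ enters), so $\langle \vec x_j, \vec s - \vec s'\rangle$ is uniform on $\Z_q$; hence $b_j - \langle \vec x_j, \vec s'\rangle = \langle \vec x_j, \vec s - \vec s'\rangle + e_j$ lands in the acceptance window with probability at most $(4\eta+1)/q = O(n/2^{n^\gamma})$, and $\vec s'$ passes all $t$ tests with probability at most $\big(O(n/2^{n^\gamma})\big)^{t}$.

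A union bound over the at most $m$ candidates then shows that the probability that some wrong candidate is accepted is at most $m \cdot \big(O(n/2^{n^\gamma})\big)^{t}$, which is at most $\delta/2$ once $t = O(\log(m/\delta)/n^\gamma) = O(\log(1/\delta))$. The algorithm outputs the unique accepted candidate (and $\bot$ if none is accepted), which equals $\vec s$ except with probability at most $\delta$. Because the test batch is shared across all candidates, the total number of quantum examples is $m + t = O(n\log(1/\delta))$, and every classical step --- forming $\tilde{\vec s} = (-z_1/z_{n+1}, \dots, -z_n/z_{n+1}) \bmod q$ via a modular inverse, and the $O(m t)$ inner-product consistency checks --- costs $\poly(n)$, so the total running time is $\poly(n, \log(1/\delta))$.

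I expect the main obstacle to be precisely this filtering step. A single run may return a \emph{wrong} string rather than merely $\bot$, so the naive ``repeat and report the first non-$\bot$ answer'' fails, and a majority vote is not obviously justified: the amplitude on outcomes off the line $\{(-t\,\vec s, t) : t \in \Z_q\}$ is a twisted exponential sum $\sum_{\vec x} \omega^{\langle \vec x, \vec w\rangle + e_{\vec x} z_{n+1}}$ whose magnitude I would rather not have to control over the noise distribution. Routing around this with a bounded-noise consistency test works only because the acceptance window $[-\eta, \eta]$ occupies a vanishing fraction $O(\eta/q) = O(n/2^{n^\gamma})$ of $\Z_q$, and this is exactly the role of the hypothesis that $q$ is a prime in $[2^{n^\gamma}, 2\cdot 2^{n^\gamma}]$; once that is in hand, the counting of runs and test samples is routine Chernoff-and-union-bound bookkeeping.
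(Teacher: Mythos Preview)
The paper does not actually prove this theorem; it is stated as a citation of the main result of \cite{GK17} (``Finally, we cite the main result in \cite{GK17} \dots''), with no argument given. So there is nothing in the paper to compare your proof against.

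That said, your sketch is a correct and natural amplification argument, and it is essentially the approach one finds in \cite{GK17}: repeat the single-shot procedure $O(n\log(1/\delta))$ times to guarantee the true secret appears among the non-$\bot$ outputs, then filter using a bounded-noise consistency test against freshly measured classical LWE samples. Your observation that a run can output a \emph{wrong} string (not merely $\bot$), so that naive first-success or majority vote is not justified, is exactly the right point; the consistency test resolves it because for any wrong candidate the residual $\langle \vec x,\vec s-\vec s'\rangle+e$ is uniform on $\Z_q$ (primality of $q$ makes any nonzero coordinate of $\vec s-\vec s'$ invertible) and the acceptance window has density $O(\eta/q)=O(n/2^{n^\gamma})$. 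One minor slip: since $\langle \vec x_j,\vec s-\vec s'\rangle+e_j$ is itself uniform on $\Z_q$, the per-sample acceptance probability is $(2\eta+1)/q$, not $(4\eta+1)/q$; this only helps. Your bookkeeping for $t$ is loose---in fact $t=O\big((\log m+\log(1/\delta))/n^\gamma\big)$ suffices, which is smaller than your stated $O(\log(1/\delta))$---but either way $m+t=O(n\log(1/\delta))$, matching the claim.
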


\section{Relabeling Games}\label{sec:relabeling}

In this section, we introduce a technique that exploits the effects of relabeling in quantum algorithms, a variation on
the blinding of quantum algorithms considered in a lemma by Alagic et al.\cite{AMRS18},
as well as the 'simulation lemma', the standard query lower bound technique for proving optimality.
The results on the relabeling game offer a useful limitation of all quantum query algorithms,
particularly in a cryptographic setting when proving the security of our proposed constructions.
In fact, as a direct consequence of relabeling, we can prove the indistinguishability of several hybrid games from
the section on post-quantum cryptography.

\subsection{Classical Relabeling.}

First, we consider a toy example to illustrate the effects of relabeling in a classical experiment.
\begin{definition}[\ClassicalRelabeling]\label{def:classical_relabeling} \ \\
Let $f:\{0,1\}^{n} \longrightarrow \{0,1\}^m$ be a function and consider the experiment
$\ClassicalRelabeling(f,\D)$ between a \PPT algorithm $\algo D$ and a challenger as follows:
\begin{enumerate}
\item \emph{(setup)} the challenger generates a bit $b \rand \bit$ and strings $r^*\rand \bit^n$ and $s\rand \bit^m$ ;
\item \textit{(pre-challenge)} \D receives classical oracle access to $\mathcal{O}_f$.
\item \textit{(challenge phase)} depending on the random bit $b$, \D receives the following:
\begin{itemize}
\item $(b=0):$ \D receives a pair $(r^*,f(r^*))$;
\item $(b=1):$ \D receives a pair $(r^*, f(r^*)\oplus s)$.
\end{itemize}
\item \textit{(resolution)} \D outputs $b'$ and wins if $b'=b$.
\end{enumerate}
\end{definition}

\begin{proposition}\label{thm:classical_relabeling}
Let $f: \{0,1\}^{n} \longrightarrow \{0,1\}^m$ be any function. Then, any \PPT algorithm
$\mathcal{D}$ making $T(n) = \poly(n)$ many oracle queries succeeds at $\ClassicalRelabeling(f,\D)$ with
advantage at most $O(T(n)/2^n)$.
\end{proposition}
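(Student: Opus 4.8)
The plan is to isolate the single event that could conceivably help $\mathcal{D}$ — namely that it queried the challenge point $r^*$ during the pre-challenge phase of \expref{Definition}{def:classical_relabeling} — and to argue that outside of that event the two branches $b=0$ and $b=1$ produce statistically identical views, so the advantage comes entirely from a union bound on that event.

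First I would fix the internal coins of $\mathcal{D}$ and let $Q = \{x_1,\dots,x_{T(n)}\} \subseteq \{0,1\}^n$ denote the (at most $T(n)$) points that $\mathcal{D}$ submits to $\mathcal{O}_f$ before the challenge. The crucial observation is that $r^* \rand \bit^n$ is drawn in the setup phase independently of everything $\mathcal{D}$ sees prior to the challenge; in particular $r^*$ is independent of $Q$, even when $Q$ is built adaptively from the oracle answers. Hence $\Pr[r^* \in Q] \le T(n)/2^n$, and this survives averaging over $\mathcal{D}$'s coins. Write $E$ for the complementary event $\{r^*\notin Q\}$.

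Next I would show that, conditioned on $E$, the advantage is $0$. On $E$ the entire pre-challenge transcript depends on $f$ only through its restriction $f|_Q$, which reveals nothing about $f(r^*)$. In the $b=1$ branch the second component $f(r^*)\oplus s$ of the challenge pair is a fresh uniform string in $\bit^m$ (because $s\rand\bit^m$) and is therefore independent of $\mathcal{D}$'s view; in the $b=0$ branch it is $f(r^*)$, a value not determined by $\mathcal{D}$'s view. Consequently the distribution of the pair handed to $\mathcal{D}$ is the same in both branches conditioned on $E$, so the output bit $b'$ is independent of $b$ and $\Pr[\mathcal{D}\text{ wins}\mid E] = \frac{1}{2}$. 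Combining the two cases,
\[
\Pr[\mathcal{D}\text{ wins}] \;\le\; \Pr[E]\cdot\tfrac{1}{2} \;+\; \Pr[\bar E]\cdot 1 \;\le\; \tfrac{1}{2} + \frac{T(n)}{2^n},
\]
which is exactly advantage $O(T(n)/2^n)$.

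The step that needs the most care is the claim that "$f|_Q$ reveals nothing about $f(r^*)$", i.e. that the $b=0$ challenge value genuinely carries no information usable by $\mathcal{D}$. The clean way to make this rigorous is a lazy-sampling / reprogramming argument: run the experiment while fixing the values of $f$ only as they are queried, postponing the choice of $f(r^*)$ to the challenge phase; then in the $b=0$ branch one may substitute a freshly sampled uniform string for $f(r^*)$ without changing the view distribution, making the two branches manifestly identical on $E$. This is the classical, one-query-at-a-time analogue of the `simulation lemma' / query lower bound technique, and it is precisely the template that the quantum relabeling games upgrade — via the hybrid argument together with the blinding bound of Alagic et al.~\cite{AMRS18} — to the weaker $O(T(n)/\sqrt{2^n})$ regime.
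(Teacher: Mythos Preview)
Your decomposition is exactly the paper's: let $\textsc{Quer}$ (your $\bar E$) be the event that $r^*$ lands among the pre-challenge queries, bound $\Pr[\bar E]\le T(n)/2^n$ by independence of $r^*$ from the transcript, assert that $\Pr[\text{win}\mid E]=1/2$, and combine. The paper states that last assertion in one line and moves on.

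Where you go further than the paper is in trying to \emph{justify} that conditional probability, and this is where a genuine gap appears. The hypothesis is that $f$ is an arbitrary \emph{fixed} function, so there is nothing to lazy-sample: ``postponing the choice of $f(r^*)$'' and ``substituting a freshly sampled uniform string for $f(r^*)$'' are not legal moves here. Concretely, conditioned on $E$ the two branches are \emph{not} identically distributed: take $f\equiv 0^m$; then the $b=0$ challenge is $(r^*,0^m)$ while the $b=1$ challenge is $(r^*,s)$ with $s$ uniform, and the zero-query adversary that outputs $0$ iff the second component equals $0^m$ wins with probability $1-2^{-(m+1)}$. So the sentence ``the distribution of the pair handed to $\mathcal D$ is the same in both branches conditioned on $E$'' is false as stated, and the reprogramming patch does not rescue it for worst-case $f$.

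The paper's one-line proof carries the same implicit gap; it simply never unpacks the assertion. In the places where this proposition is actually \emph{used} (after swapping the \PRF for a uniformly random function in the hybrids of \expref{Theorem}{thm:prf_scheme}), your lazy-sampling argument becomes valid and is indeed the right way to make the step rigorous---but then the statement you are proving is for random $f$, not ``any function $f$''.
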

\begin{proof}
Let $T(n) = \poly(n)$ be an upper bound on the number of queries to $\mathcal{O}_f$.
Note that \D can at best win the game with probability $1/2$, unless the challenge value $r^*$ was previously queried during the pre-challenge phase.
We let \textsc{Quer} denote the event in which the actual challenge pair is queried, and let $\overline{\textsc{Quer}}$ be the event when it is not. 
Then it is easy to see that
\begin{align*}
\Pr[\D \text{ wins } \ClassicalRelabeling] &= \Pr[\D \text{ wins } \ClassicalRelabeling | \textsc{Quer}] \cdot \Pr[\textsc{Quer}]\\
&\phantom{----} +\Pr[\D \text{ wins } \ClassicalRelabeling |\overline{\textsc{Quer}}] \cdot \Pr[\overline{\textsc{Quer}}]\\
&\leq \frac{T(n)}{2^n} + \frac{1}{2}\left(1-\frac{T(n)}{2^n}\right) \, = \, \frac{1}{2} + O\left(\frac{T(n)}{2^n}\right).
\end{align*}
\qed
\end{proof}

\subsection{Relabeling in Quantum Algorithms}

Let us now consider the effects of relabeling with respect to quantum oracles and prove lower bounds on appropriate quantum variants of the previous classical experiment.
In the following, we distinguish between non-adaptive and adaptive variants of relabeling.

\subsubsection{Non-adaptive Relabeling.}

We now define a non-adaptive relabeling game, an experiment in which a quantum algorithm first receives quantum oracle access
to a function and then, upon receiving a random input/output pair, the goal is to decide whether it is genuine (or modified)
based on the previous query phase.

\begin{definition}[Non-adaptive Relabeling Game]\label{def:non_relabeling_game} \ \\
Let $f:\{0,1\}^{n} \longrightarrow \{0,1\}^m$ be a function and let $n,m$ be integers, where $n$ is the security parameter.
We define the non-adaptive experiment $\RelabelingGame^{(1)}(f,\D)$ between a \QPT algorithm $\algo D$ and a challenger as follows:
\begin{enumerate}
\item \emph{(setup)} the challenger generates a bit $b \rand \bit$ and strings $r^*\rand \bit^n$ and $s\rand \bit^m$;
\item \emph{(pre-challenge phase)} \D receives quantum oracle access to $\mathcal{O}_f$;
\item \textit{(challenge phase)} depending on the random bit $b$, \D receives the following:
\begin{itemize}
\item $(b=0):$ \D receives a pair $(r^*,f(r^*))$;
\item $(b=1):$ \D receives a pair $(r^*, f(r^*)\oplus s)$.
\end{itemize}
Then, \D receives an example oracle that outputs classical random pairs $(r,f(r))$.
\item \emph{(resolution)} $\mathcal{D}$ outputs a bit $b'$ and wins the game if $b'=b$.
\end{enumerate}
\end{definition}

We now show how to control the success probability of \D in terms of the number of queries it makes. 
The proof uses a hybrid argument, adapting a standard quantum query lower bound technique to give precise control over the success probability.

\begin{theorem}\label{thm:nonadaptive_relabeling}
Let $f: \{0,1\}^{n} \longrightarrow \{0,1\}^m$ be an arbitrary function. Then, any \QPT algorithm
$\mathcal{D}$ making $T(n) = \poly(n)$ many quantum oracle queries succeeds at the non-adaptive experiment $\RelabelingGame^{(1)}(f,\D)$ with
advantage at most $O(T(n)/\sqrt{2^{n}})$, except with at most negligible probability.
\end{theorem}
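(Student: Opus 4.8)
The plan is to follow the classical argument of \expref{Proposition}{thm:classical_relabeling}, but to replace the elementary bound ``the point $r^{*}$ was not queried during the pre-challenge phase'' by its quantum counterpart, the blinding estimate of \cite{AMRS18} (and, underneath it, the hybrid/amplitude-accounting argument behind quantum query lower bounds). Write $T=T(n)$ and $N=2^{n}$. First I would dispose of the post-challenge example oracle: since $r^{*}$ is uniform and independent of $f$ and of $\mathcal{D}$'s behaviour, and $\mathcal{D}$ makes at most $T$ example queries, a union bound shows that the event $\mathsf{Hit}$ --- that some example query returns $(r^{*},f(r^{*}))$ --- has probability at most $T/N=\negl(n)$; conditioned on $\overline{\mathsf{Hit}}$ the example oracle returns pairs $(r,f(r))$ with $r\neq r^{*}$, which is the same distribution in the $b=0$ and $b=1$ worlds, so it is useless for determining $b$ and may be discarded up to an additive $\negl(n)$.

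Next, since $s$ is uniform on $\{0,1\}^{m}$, in the world $b=1$ the value $f(r^{*})\oplus s$ is a fresh uniform string independent of $f$, so the task left for $\mathcal{D}$ is: after $T$ quantum queries to $\mathcal{O}_{f}$, and then being handed the classical pair $(r^{*},c)$, decide whether $c=f(r^{*})$ or $c$ is uniformly random (the event $s=0$ costing only $2^{-m}$). The heart of the proof is a hybrid experiment in which the pre-challenge oracle $\mathcal{O}_{f}$ is replaced by $\mathcal{O}_{f'}$, where $f'$ agrees with $f$ off $r^{*}$ and takes a fresh uniform value at $r^{*}$. The key quantitative input is that reprogramming a quantum oracle at a single point chosen uniformly at random perturbs the output state of any $T$-query algorithm by at most $O(T/\sqrt{N})$ in trace distance --- exactly the blinding lemma of \cite{AMRS18}, proved by a hybrid over the $T$ query steps that bounds, at step $t$, the amplitude $\mathcal{D}$ places on register value $r^{*}$, and sums these via Cauchy--Schwarz to get $\sqrt{T}\cdot\sqrt{T/N}=T/\sqrt{N}$ up to constants. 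Hence $\mathcal{D}$'s winning probabilities in $\RelabelingGame^{(1)}(f,\mathcal{D})$ and in the hybrid differ by $O(T/\sqrt{2^{n}})$.

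It remains to observe that $\mathcal{D}$ can do no better than guessing in the hybrid: there the state it holds at the challenge is obtained from an oracle whose value at $r^{*}$ is a fresh random string, so (over the uniform choice of $r^{*}$) it carries no information coupling $r^{*}$ to $f(r^{*})$, whence the pair $(r^{*},f(r^{*}))$ of the $b=0$ case is, from $\mathcal{D}$'s point of view, distributed exactly like the fresh uniform pair of the $b=1$ case; thus $\mathcal{D}$ wins the hybrid with probability $1/2+\negl(n)$. Combining the three reductions gives $\Pr[\mathcal{D}\text{ wins }\RelabelingGame^{(1)}(f,\mathcal{D})]\le 1/2 + O(T(n)/\sqrt{2^{n}}) + \negl(n)$, which is the assertion.

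The step I expect to be the main obstacle is the blinding/query-lower-bound estimate: stating it in a form directly usable here and checking that the reduction faithfully simulates $\mathcal{D}$'s pre-challenge oracle access --- so that the challenge arriving as classical data, and the classical example oracle handed out afterwards, affect neither the query count $T$ nor the reprogramming argument --- and that the constants and the $2^{-m}$ slack are tracked honestly. Everything else is routine bookkeeping parallel to the classical case.
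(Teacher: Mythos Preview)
Your proposal is correct and follows essentially the same route as the paper: dispose of the post-challenge example oracle by a union bound, then reprogram the pre-challenge oracle at the single random point $r^*$ and invoke a blinding/hybrid estimate to bound the resulting trace-distance perturbation by $O(T/\sqrt{2^n})$, and finally argue that in the reprogrammed world $\mathcal D$ has no advantage. The paper differs only in packaging: it unrolls the hybrid-over-queries argument explicitly (projecting onto $\supp\mathcal{O}_{S^*}$ and bounding $\mathbb E[\langle\psi|\Pi_*|\psi\rangle]\le 2^{-n}$) rather than citing the blinding lemma of \cite{AMRS18} as a black box, and it converts the \emph{expected} trace-distance bound into the ``except with negligible probability'' clause of the theorem via Markov's inequality, a step your direct additive bound $\tfrac12+O(T/\sqrt{2^n})+\negl(n)$ does not literally yield --- so to match the theorem as stated you should add that Markov step at the end.
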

\begin{proof}
The view of \D during the game is the following.
In the pre-challenge phase, \D is allowed to make at most $\poly(n)$ many queries to an oracle for $f$.
Then, in the challenge phase, \D receives a random input/output pair for $f$ (possibly relabeled with random $s$) and,
after a final query phase to a random example oracle, \D has to decide whether the challenge is genuine or relabeled.
Note that \D can only truly win $\RelabelingGame^{(1)}$ in the following sense:
\begin{enumerate}
\item \D recovers $(r^*,f(r^*))$ during the pre-challenge phase while quantumly querying $\mathcal O_f$.
\item \D gets lucky and receives $(r^*,f(r^*))$ as one of the random classical queries to the example oracle. 
\end{enumerate}
Note that the second possibility only occurs with at most negligible probability. If $\D$ makes $\poly(n)$ queries to the example oracle,
the probability of receiving the challenge pair is at most $O(\poly(n)/2^n)$, hence negligible.

In order to rule out the first possibility, we consider the relabeled function,
\begin{equation*}
f^*(x) :=
    \begin{cases}
      f(x) \oplus s, &  \text{if } x=r^*, \\
      f(x), & \text{if } x\neq r^*,
    \end{cases}
\end{equation*}
and show that we can replace the pre-challenge oracle with $\mathcal O_{f^*}$ and only negligibly affect the output states generated by \D in the game. In fact, any measurement
of the state produced during the pre-challenge results in negligibly close outcome distributions, irrespective of whether the oracle is relabeled at $r^*$.
Consequently, \D then cannot directly observe a mismatch in the challenge value $r^*$ and only win with probability a most $1/2 + \negl(n)$. We will now prove this claim.

We can write any quantum algorithm $\mathcal{D}$ interacting with $\mathcal O_f$ as a sequence of $T(n)=\poly(n)$ oracle queries and
unitary operations $U_0,U_1,U_2, \dotsc, U_T$ upon an initial state $\ket{\psi_0}$ that result in an output state
\begin{equation}\label{query_sequence_phi}
\ket{\psi} = U_T \mathcal{O}_{f} U_{T-1}  \dotsb
                   U_{1} \mathcal{O}_{f} U_{0} \, \ket{\psi_0}.
\end{equation}
We adopt a hybrid approach and show that the output states $\ket{\psi}$ remain close, irrespective of whether the oracle queries are answered with $f$ or $f^*$.
First, we argue that replacing the functionality of only a single oracle query results in statistically close output distributions.
To this end, we define the $k$-th hybrid state as:
\begin{equation}
\ket{\psi^{(k)}} = U_T \mathcal{O}_{f^*} U_{T-1} \dotsb \mathcal{O}_{f^*} U_k \mathcal{O}_f \dotsb \mathcal{O}_f U_0 \ket{\psi_0}.
\end{equation}
Now we can bound two successive hybrids by using the invariance of the trace distance with respect to simultaneous unitary transformations:
\begin{eqnarray*}
\delta( \ket{\psi^{(k)}}, \ket{\psi^{(k-1)}})  &=& \delta(U_T \mathcal{O}_{f^*} \dotsb \mathcal{O}_{f^*} U_k \mathcal{O}_f \dotsb \mathcal{O}_f U_0 \ket{\psi_0},U_T \mathcal{O}_{f^*} \dotsb \mathcal{O}_{f^*} U_{k-1} \mathcal{O}_f \dotsb \mathcal{O}_f U_0 \ket{\psi_0} ) \\
 &=& \delta( \mathcal{O}_{f} U_{k-1}  \dotsb \mathcal{O}_f U_0 \ket{\psi_0}, \mathcal{O}_{f^*} U_{k-1} \mathcal{O}_f \dotsb \mathcal{O}_f U_0 \ket{\psi_0} ) \\
 &=& \delta( \mathcal{O}_{f} \ket{\psi^{(k-1)}}, \mathcal{O}_{f^*} \ket{\psi^{(k-1)}} )\\
  &=& \delta( \ket{\psi^{(k-1)}}, \mathcal{O}_{f}\mathcal{O}_{f^*}\ket{\psi^{(k-1)}}).
\end{eqnarray*}
We adopt the notation $\ket{\psi_f}$ and $\ket{\psi_{f^*}}$ to denote output states that are generated consistently by
the respective oracles throughout the pre-challenge phase.
Using the triangle inequality, we can bound the total expected distance between the output states over
all hybrids as follows:
\begin{equation}
 \mathop{\displaystyle\mathbb{E}} \left[ \delta( \ket{\psi_f}, \ket{\psi_{f^*}}) \right] \leq
 \mathop{\displaystyle\mathbb{E}} \left[ \sum_{k=1}^T \delta( \ket{\psi^{(k)}}, \ket{\psi^{(k-1)}}) \right] =
 \sum_{k=1}^T  \mathop{\displaystyle\mathbb{E}} \left[ \delta( \ket{\psi^{(k)}}, \ket{\psi^{(k-1)}}) \right].
\end{equation}
Let us now consider an oracle $\mathcal{O}_{S^*}$ for the relabeled function $S^*: \{0,1\}^{n} \longrightarrow \{0,1\}^{m}$ such that,
for fixed $r^* \rand  \{0,1\}^n$ and $s \rand \{0,1\}^m$,
\begin{equation}
S^*(x) :=
    \begin{cases}
      s, &  \text{if } x=r^*, \\
      0, & \text{if } x\neq r^*,
    \end{cases}
\end{equation}
Using the fact that any state in the $k$-th hybrid computed prior to $U_k$, in particular during the pre-challenge state, is completely independent
of $S^*$, we can bound any successive hybrids as follows:
\begin{eqnarray}
\delta( \ket{\psi^{(k)}}, \ket{\psi^{(k-1)}})
  &=& \delta( \ket{\psi^{(k-1)}}, \mathcal{O}_{f}\mathcal{O}_{f^*}\ket{\psi^{(k-1)}})\\
  &=& \delta( \ket{\psi^{(k-1)}}, \mathcal{O}_{S^*}\ket{\psi^{(k-1)}} ) \\
  &\leq& \mathop{\max_{\ket{\psi}}} \delta( \ket{\psi}, \mathcal{O}_{S^*} \ket{\psi}).
\end{eqnarray}
Therefore, we can find the following upper bound for the total expected trace distance:
\begin{eqnarray}
 \mathop{\displaystyle\mathbb{E}} \left[ \delta( \ket{\psi_f}, \ket{\psi_{f^*}}) \right] &\leq& T(n) \mathop{\max_{\ket{\psi}}} \mathop{\displaystyle\mathbb{E}} \left[  \delta( \ket{\psi}, \mathcal{O}_{S^*} \ket{\psi}) \right]  \\
 &= & T(n) \mathop{\max_{\ket{\psi}}} \mathop{\displaystyle\mathbb{E}} \left[  \sqrt{ 1 - |\bra{\psi} \mathcal{O}_{S^*} \ket{\psi}|^2} \right] \\
 & \leq & T(n) \mathop{\max_{\ket{\psi}}}  \sqrt{ 1 - \mathop{\displaystyle\mathbb{E}} \left[ |\bra{\psi} \mathcal{O}_{S^*} \ket{\psi}| \right]^2}. \label{max_state}
\end{eqnarray}
Consider a projection operator $\Pi_{*}$ onto the relevant subspace with respect to $S^*$, as
given by $\supp \mathcal{O}_{S^*} = \Span\{ \, \ket{r^*}\otimes\ket{y} \, | \, y\in\{0,1\}^{m}\}$. We can
write the oracle $\mathcal{O}_{S^*}$ as the identity operator, except on the range of $\Pi_{*}$. Using the reverse triangle inequality, we find:
\begin{eqnarray}
|\bra{\psi} \mathcal{O}_{S^*}\ket{\psi}| & = &  |\bra{\psi} \mathcal{O}_{S^*} \Pi_{*} \ket{\psi} + \bra{\psi} \mathcal{O}_{S^*} (\one - \Pi_{*})\ket{\psi}| \\
 & \geq &  -|\bra{\psi} \mathcal{O}_{S^*} \Pi_{*}\ket{\psi}| + |1 - \bra{\psi} \Pi_{*} \ket{\psi}| \\
    & \geq & 1 - 2 \bra{\psi} \Pi_{*} \ket{\psi}.
\end{eqnarray}
Consequently, irrespective of the output state produced by algorithm \D, we can now bound the expectation of the trace distance as follows:
\begin{eqnarray}
 \mathop{\displaystyle\mathbb{E}} \left[ \delta( \ket{\psi_f}, \ket{\psi_{f^*}}) \right]
&\leq& T(n) \mathop{\max_{\ket{\psi}}}  \sqrt{ 1 - \mathop{\displaystyle\mathbb{E}} \left[|\bra{\psi} \mathcal{O}_{S^*} \ket{\psi}| \right]^2} \\
&\leq& T(n) \mathop{\max_{\ket{\psi}}}  \sqrt{ 1 - \left( 1 - 2 \mathop{\displaystyle\mathbb{E}}\left[\bra{\psi} \Pi_{*} \ket{\psi}\right] \right)^2} \\
&\leq& T(n) \sqrt{1-\left(1-\frac{2}{2^n}\right)^2} \, \leq \, \frac{2 T(n) }{\sqrt{2^n}}.\label{eq:upper_bound}
\end{eqnarray}
Using Markov's inequality, we can thus conclude that the probability of observing any significant trace distance (i.e. larger than negligible) is still negligible.
In fact, for any constant $\epsilon >0$, a magnitude of $2^{-\epsilon n}$ is negligible and
\begin{equation}
\label{distribution_bound}
\Pr_{s,r^*}\left[ \delta( \ket{\psi_f}, \ket{\psi_{f^*}}) \geq \frac{2 T(n) }{2^{n/2-\epsilon n}} \right]
\, \leq \, \frac{2^{n/2-\epsilon n}\mathop{\displaystyle\mathbb{E}}\left[ \delta( \ket{\psi_f}, \ket{\psi_{f^*}}) \right]}{2 T(n)}
\leq   \,  2^{-\epsilon n}.
\end{equation}
In the above, both the probability and the expectation are taken over all respective outcomes of sampling $r^* \rand  \{0,1\}^n$ and $s \rand \{0,1\}^m$.
Therefore, except with negligible probability, $\delta( \ket{\psi_f}, \ket{\psi_{f^*}}) = O(T(n)/\sqrt{2^{n}})$.
Finally, from \expref{Lemma}{lem:trace_bound}, it follows that any generalized \POVM measurement of the final output states of $\mathcal{D}$ reveals
statistically close outcome distributions, and
\begin{align}
\left|\Pr[ \mathcal{D}^{f}(1^n) = 1] -  \Pr_{s,r^*}[ \mathcal{D}^{f^*}(1^n) = 1] \right| \, \leq \, O(T(n)/\sqrt{2^{n}}) + \negl(n).
\end{align}
Therefore, \D cannot directly observe a mismatch in the challenge value $r^*$ and only at most succeed with the advantage above.

\qed
\end{proof}

\subsubsection{Adaptive Relabeling.}

In the following, we state a stronger variant of the relabeling game, a setting in which a quantum algorithm receives an arbitrary 
advice state (possibly even exponential-sized) for a function
and the goal is to detect whether it was relabeled at a small subset of its domain by querying its oracle.
Unlike in the previous non-adaptive variant, any distinguisher is able to adaptively make queries based on prior information on the target function from the pre-challenge phase.
Note that, in the following, $"||"$ refers to the concatenation of strings. We define the adaptive relabeling game as follows:

\begin{definition}[Adaptive Relabeling Game]\label{def:adaptive_relabeling_game} \ \\
Let $f:\{0,1\}^{n} \longrightarrow \{0,1\}^m$ be a function, $\ket{\psi^f}$ an arbitrary quantum advice state (possibly depending on $f$), and let
$\mu$ be an integer where $0 \leq \mu \leq n$. 
We define the adaptive experiment $\RelabelingGame^{(2)}(f,\mu,\D,\ket{\psi^f})$ between a \QPT algorithm $\algo D$ and a challenger as follows:
\begin{enumerate}
\item \emph{(setup)} \D receives the state $\ket{\psi^f}$; the challenger generates a bit $b \rand \bit$ and two strings $s\rand \bit^m$ and $r^*\rand \bit^\mu$;
\item \emph{(challenge phase)} 
depending on the random bit $b$, \D receives the following:
\begin{itemize}
\item $(b=0):$ \D receives quantum oracle access to $\mathcal O_f$;
\item $(b=1):$ \D receives quantum oracle access to $\mathcal O_{f^*}$,
where $f^*$ is the relabeled function,
\begin{equation*}
f^*(x) :=
    \begin{cases}
      f(x) \oplus s &  \text{if the last $\mu$ bits of $x$ are equal to $r^*$}, \\
      f(x) & \text{otherwise}.
    \end{cases}
\end{equation*}
\end{itemize}
\item \emph{(resolution)} $\mathcal{D}$ outputs a bit $b'$ and wins the game if $b'=b$.
\end{enumerate}
\end{definition}

We now show how to control the success probability of \D in terms of the number of queries it makes. 

\begin{theorem}\label{thm:adaptive_relabeling}
Let $f: \{0,1\}^{n} \longrightarrow \{0,1\}^m$ be any function, let $\ket{\psi^f}$ be an arbitrary advice state (possibly depending on $f$)
and let $\mu$ be an integer where $0 \leq \mu \leq n$. Then, any \QPT algorithm
$\mathcal{D}$ making $T(n) = \poly(n)$ many oracle queries succeeds at $\RelabelingGame^{(2)}(f,\mu,\D,\ket{\psi^f})$ with
advantage $O(T(n)/\sqrt{2^{\mu}})$, except with at most negligible probability.
\end{theorem}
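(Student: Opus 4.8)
The plan is to run essentially the same hybrid argument as in \expref{Theorem}{thm:nonadaptive_relabeling}, with two changes forced by the adaptive setting: the advice state is absorbed into the distinguisher's initial configuration, and the ``difference oracle'' now encodes a relabeled \emph{slice} of the domain instead of a single point. First I would write $\mathcal D$, interacting with its challenge oracle, as a sequence of $T(n)=\poly(n)$ oracle calls interleaved with unitaries $U_0,\dots,U_T$ acting on the initial state $\ket{\psi_0} = \ket{\psi^f}\otimes\ket{0\cdots 0}$, so that the $b=0$ branch produces $\ket{\psi_f} = U_T\mathcal O_f U_{T-1}\cdots\mathcal O_f U_0\ket{\psi_0}$ and the $b=1$ branch produces $\ket{\psi_{f^*}} = U_T\mathcal O_{f^*}U_{T-1}\cdots\mathcal O_{f^*}U_0\ket{\psi_0}$. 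The key point that makes the argument work is that $\ket{\psi^f}$, although it may depend on $f$ and may be exponentially large, is fixed \emph{before} the challenger samples $r^*$ and $s$, hence is statistically independent of both; consequently it suffices to bound the expected trace distance $\mathbb{E}_{r^*,s}[\delta(\ket{\psi_f},\ket{\psi_{f^*}})]$ and then appeal to \expref{Lemma}{lem:trace_bound}.

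Next I would introduce the $k$-th hybrid $\ket{\psi^{(k)}} = U_T\mathcal O_{f^*}\cdots\mathcal O_{f^*}U_k\mathcal O_f\cdots\mathcal O_f U_0\ket{\psi_0}$, whose first $k$ oracle calls answer with $f$ and whose remaining calls answer with $f^*$, so that $\ket{\psi^{(T)}}=\ket{\psi_f}$ and $\ket{\psi^{(0)}}=\ket{\psi_{f^*}}$. Two consecutive hybrids differ only in whether the $k$-th oracle call uses $f$ or $f^*$, and the state $\ket{\phi_{k-1}} := U_{k-1}\mathcal O_f\cdots\mathcal O_f U_0\ket{\psi_0}$ fed into that call is built entirely from $\mathcal O_f$, hence is independent of $(r^*,s)$. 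Using invariance of the trace distance under unitaries together with $\mathcal O_f\mathcal O_{f^*}=\mathcal O_{S^*}$, where $S^*(x)=s$ when the last $\mu$ bits of $x$ equal $r^*$ and $S^*(x)=0$ otherwise, this gives
\begin{equation*}
\delta\!\left(\ket{\psi^{(k)}},\ket{\psi^{(k-1)}}\right)=\delta\!\left(\ket{\phi_{k-1}},\,\mathcal O_f\mathcal O_{f^*}\ket{\phi_{k-1}}\right)=\delta\!\left(\ket{\phi_{k-1}},\,\mathcal O_{S^*}\ket{\phi_{k-1}}\right).
\end{equation*}
Summing over $k$ with the triangle inequality then bounds $\mathbb{E}_{r^*,s}[\delta(\ket{\psi_f},\ket{\psi_{f^*}})]$ by $T(n)\cdot\max_{\ket\psi}\mathbb{E}_{r^*,s}[\delta(\ket\psi,\mathcal O_{S^*}\ket\psi)]$, the exchange of $\max$ and $\mathbb{E}$ being licensed precisely by the independence of each $\ket{\phi_{k-1}}$ from $(r^*,s)$.

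The quantitative heart of the proof --- and where the exponent becomes $\mu$ rather than $n$ --- is this last expectation. As in the non-adaptive case I would write $\delta(\ket\psi,\mathcal O_{S^*}\ket\psi)=\sqrt{1-|\bra\psi\mathcal O_{S^*}\ket\psi|^2}$, apply Jensen's inequality to move the expectation inside, and use the reverse triangle inequality with the projector $\Pi_*$ onto $\supp\mathcal O_{S^*}=\Span\{\ket{x}\ket{y}:\text{the last }\mu\text{ bits of }x\text{ equal }r^*\}$ to get $|\bra\psi\mathcal O_{S^*}\ket\psi|\ge 1-2\bra\psi\Pi_*\ket\psi$. Letting $B_{r^*}=\{x\in\bit^n:\text{the last }\mu\text{ bits of }x\text{ equal }r^*\}$, the sets $\{B_{r^*}\}_{r^*\in\bit^\mu}$ partition $\bit^n$, so for a uniform $r^*\rand\bit^\mu$,
\begin{equation*}
\mathbb{E}_{r^*}\!\left[\bra\psi\Pi_*\ket\psi\right]=\frac{1}{2^\mu}\sum_{r^*\in\bit^\mu}\ \sum_{x\in B_{r^*}}\sum_{y\in\bit^m}\abs{\langle x,y|\psi\rangle}^2=\frac{1}{2^\mu},
\end{equation*}
and hence $\mathbb{E}_{r^*,s}[\delta(\ket{\psi_f},\ket{\psi_{f^*}})]\le T(n)\sqrt{1-(1-2/2^\mu)^2}\le 2T(n)/\sqrt{2^\mu}$. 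Markov's inequality then yields, for any constant $\epsilon\in(0,1/2)$, that $\delta(\ket{\psi_f},\ket{\psi_{f^*}})\le 2T(n)/2^{\mu/2-\epsilon\mu}$ except with probability at most $2^{-\epsilon\mu}$ over $(r^*,s)$, and \expref{Lemma}{lem:trace_bound} turns this into the stated bound on the distinguishing advantage of $\mathcal D$'s final measurement (which is negligible in $n$ once $\mu$ is super-logarithmic).

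The step I expect to be the main obstacle is the adaptivity bookkeeping, not any new inequality. Unlike the non-adaptive game there is no separate noise-free pre-challenge phase, so one must argue carefully that the ``seam'' state $\ket{\phi_{k-1}}$ between two successive hybrids really is a function of $\mathcal O_f$ alone and carries no dependence on the hidden slice $(r^*,s)$; this is exactly what allows the $\max$-over-states to be pulled outside the expectation over $(r^*,s)$, and it is the one place where the argument is genuinely more than the textual substitution $n\mapsto\mu$ into \expref{Theorem}{thm:nonadaptive_relabeling} (compare also the blinding lemma of \cite{AMRS18}). Everything downstream --- triangle inequality, Jensen, the $1/2^\mu$ weight computation, Markov, and \expref{Lemma}{lem:trace_bound} --- is then routine.
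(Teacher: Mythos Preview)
Your proposal is correct and follows essentially the same hybrid argument as the paper's own proof: same hybrids, same difference oracle $\mathcal O_{S^*}$, same projector $\Pi_*$ onto the relabeled slice, same Jensen/reverse-triangle/Markov chain of inequalities. Your notation is in fact a bit cleaner (you name the seam state $\ket{\phi_{k-1}}$ explicitly, whereas the paper overloads $\ket{\psi^{(k-1)}}$), and the only substantive deviation is that you apply Markov with threshold $2^{\mu/2-\epsilon\mu}$ rather than the paper's $2^{\mu/2-\epsilon n}$; the paper's choice gives a failure probability $2^{-\epsilon n}$ that is negligible for every $\mu$, not just super-logarithmic $\mu$, so you may wish to adopt it.
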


\begin{proof}
We follow a similar proof as in \expref{Theorem}{thm:nonadaptive_relabeling} in order to show that any measurement of the output
states generated by $\mathcal{D}$ results in statistically close output distributions, irrespective of whether the oracle is relabeled.
The view of \D during the game is the following.
In the pre-challenge phase, \D receives an arbitrary advice state $\ket{\psi^f}$ that contains information on $f$.
Then, during the challenge phase, $\mathcal{D}$ receives an oracle $\mathcal{O}_{\varphi}$,
where $\varphi$ is a function $\varphi: \{0,1\}^{n} \longrightarrow \{0,1\}^m$
and the goal is to determine whether $\varphi = f$ or $\varphi = f^*$, for some $r^* \rand \{0,1\}^\mu$ and $s \rand  \{0,1\}^m$.
Upon receiving $\ket{\psi^f}$, we can write any quantum algorithm $\mathcal{D}$ as a sequence of $T(n)=\poly(n)$ oracle queries and
unitary operations $U_0,U_1,U_2, \dotsc, U_T$ that result in an output state
\begin{equation}\label{query_sequence_phi}
\ket{\psi_\varphi} = U_T \mathcal{O}_{\varphi} U_{T-1}  \dotsb
                   U_{1} \mathcal{O}_{\varphi} U_{0} \, \ket{\psi^f}.
\end{equation}
We adopt a hybrid approach and show that the output states $\ket{\psi_\varphi^f}$ remain close, irrespective of whether $\varphi = f$ or $\varphi = f^*$.
First, we argue that replacing the functionality of only a single oracle query results in statistically close output distributions.
To this end, we define the $k$-th hybrid state as:
\begin{equation}
\ket{\psi^{(k)}} = U_T \mathcal{O}_{f^*} U_{T-1} \dotsb \mathcal{O}_{f^*} U_k \mathcal{O}_f \dotsb \mathcal{O}_f U_0 \ket{\psi^f}.
\end{equation}
Similar to the proof of \expref{Theorem}{thm:nonadaptive_relabeling}, we can bound two successive hybrids by using the invariance of the trace distance with respect to simultaneous unitary transformations:
\begin{eqnarray*}
\delta( \ket{\psi^{(k)}}, \ket{\psi^{(k-1)}}) = \delta( \ket{\psi^{(k-1)}}, \mathcal{O}_{f}\mathcal{O}_{f^*}\ket{\psi^{(k-1)}}).
\end{eqnarray*}
We adopt the notation $\ket{\psi_f}$ and $\ket{\psi_{f^*}}$ to denote output states that are generated consistently by
the respective oracles throughout the entire challenge.
Using the triangle inequality, we can bound the total expected distance between the output states over
all hybrids as follows:
\begin{equation}
 \mathop{\displaystyle\mathbb{E}} \left[ \delta( \ket{\psi_f}, \ket{\psi_{f^*}}) \right] \leq
 \mathop{\displaystyle\mathbb{E}} \left[ \sum_{k=1}^T \delta( \ket{\psi^{(k)}}, \ket{\psi^{(k-1)}}) \right] =
 \sum_{k=1}^T  \mathop{\displaystyle\mathbb{E}} \left[ \delta( \ket{\psi^{(k)}}, \ket{\psi^{(k-1)}}) \right].
\end{equation}
Let us now consider an oracle $\mathcal{O}_{S^*}$ for the relabeled function $S^*: \{0,1\}^{n} \longrightarrow \{0,1\}^{m}$ such that,
for fixed $r^* \rand  \{0,1\}^\mu$ and $s \rand \{0,1\}^m$, $S^*$ is defined by
\begin{equation}
S^*(x) :=
    \begin{cases}
      s & \text{if the last $\mu$ bits of $x$ are equal to $r^*$}, \\
      0 & \text{otherwise}.
    \end{cases}
\end{equation}
We now observe the following crucial fact. Any state in the $k$-th hybrid computed prior to $U_k$, particularly the advice state $\ket{\psi^f}$ that contains information on $f$, 
is completely independent of $S^*$, hence we can bound any successive hybrids as follows:
\begin{eqnarray}
\delta( \ket{\psi^{(k)}}, \ket{\psi^{(k-1)}})
  &=& \delta( \ket{\psi^{(k-1)}}, \mathcal{O}_{f}\mathcal{O}_{f^*}\ket{\psi^{(k-1)}})\\
  &=& \delta( \ket{\psi^{(k-1)}}, \mathcal{O}_{S^*}\ket{\psi^{(k-1)}} ) \\
  &\leq& \mathop{\max_{\ket{\psi}}} \delta( \ket{\psi}, \mathcal{O}_{S^*} \ket{\psi}).
\end{eqnarray}
Therefore, we can find the following upper bound for the total expected trace distance:
\begin{eqnarray}
 \mathop{\displaystyle\mathbb{E}} \left[ \delta( \ket{\psi_f}, \ket{\psi_{f^*}}) \right] &\leq& T(n) \mathop{\max_{\ket{\psi}}} \mathop{\displaystyle\mathbb{E}} \left[  \delta( \ket{\psi}, \mathcal{O}_{S^*} \ket{\psi}) \right]  \\
 &= & T(n) \mathop{\max_{\ket{\psi}}} \mathop{\displaystyle\mathbb{E}} \left[  \sqrt{ 1 - |\bra{\psi} \mathcal{O}_{S^*} \ket{\psi}|^2} \right] \\
 & \leq & T(n) \mathop{\max_{\ket{\psi}}}  \sqrt{ 1 - \mathop{\displaystyle\mathbb{E}} \left[ |\bra{\psi} \mathcal{O}_{S^*} \ket{\psi}| \right]^2}. \label{max_state}
\end{eqnarray}
As in the proof of \expref{Theorem}{thm:nonadaptive_relabeling}, we project onto the relevant subspace with respect to $S^*$, as
given by $\supp \mathcal{O}_{S^*} = \Span\{ \, \ket{(z\, || \, r^*)}\otimes \ket{y} \, | \, z\in\{0,1\}^{n-\mu},\,y\in\{0,1\}^{m}\}$. We can
write the oracle $\mathcal{O}_{S^*}$ as the identity operator, except on the range of $\Pi_{*}$.
Consequently, irrespective of the output state produced by algorithm \D, we can now bound the expectation of the trace distance as follows:
\begin{eqnarray}
 \mathop{\displaystyle\mathbb{E}} \left[ \delta( \ket{\psi_f}, \ket{\psi_{f^*}}) \right] \leq T(n) \mathop{\max_{\ket{\psi}}}  \sqrt{ 1 - \left( 1 - 2 \mathop{\displaystyle\mathbb{E}}\left[\bra{\psi} \Pi_{*} \ket{\psi}\right] \right)^2} \leq \frac{2 T(n) }{\sqrt{2^\mu}}.\label{eq:upper_bound}
\end{eqnarray}
Using Markov's inequality, we can thus conclude that the probability of observing any significant trace distance (i.e. larger than negligible) is still negligible.
In fact, for any constant $\epsilon >0$, a magnitude of $2^{-\epsilon n}$ is negligible and
\begin{equation}
\label{distribution_bound}
\Pr_{s,r^*}\left[ \delta( \ket{\psi_f}, \ket{\psi_{f^*}}) \geq \frac{2 T(n) }{2^{\mu/2-\epsilon n}} \right]
\, \leq \, \frac{2^{\mu/2-\epsilon n}\mathop{\displaystyle\mathbb{E}}\left[ \delta( \ket{\psi_f}, \ket{\psi_{f^*}}) \right]}{2 T(n)}
\leq   \,  2^{-\epsilon n}.
\end{equation}
In the above, both the probability and the expectation are taken over all respective outcomes of sampling $r^* \rand  \{0,1\}^\mu$ and $s \rand \{0,1\}^m$.
Therefore, except with negligible probability, $\delta( \ket{\psi_f}, \ket{\psi_{f^*}}) = O(T(n)/\sqrt{2^{\mu}})$.
Finally, from \expref{Lemma}{lem:trace_bound}, it follows that any generalized \POVM measurement of the final output states of $\mathcal{D}$ reveals
statistically close outcome distributions, and
\begin{align}
\left|\Pr[ \mathcal{D}^{f}(1^n) = 1] -  \Pr_{s,r^*}[ \mathcal{D}^{f^*}(1^n) = 1] \right| \, \leq \, O(T(n)/\sqrt{2^{\mu}}) + \negl(n).
\end{align}
We observe that, in the special case when $\mu=n$, any \QPT $\algo D$ making $T=\poly(n)$ queries achieves an overall negligible advantage in the game.
\qed
\end{proof}
Note that the above bound is in fact tight in comparison with quantum searching using Grover's algorithm or amplitude amplification, both achieving the same success probability and proven to be optimal.

\section{Post-Quantum Cryptography}\label{ch:pq_crypto}

Let us now extend the security notions behind chosen-ciphertext attacks from Chapter \ref{ch:security} to
a world of quantum computers.
In particular, we consider adversaries who receive quantum oracle access to both encryption and decryption at various times during the security game.
While the case of quantum \CCAA-security has already been introduced in \cite{BZ13}, we investigate a less powerful model
by considering \textit{non-adaptive quantum chosen-ciphertext attacks}.
In this security notion of quantum \CCA, an adversary is given quantum superposition access to both encryption and decryption
prior to the challenge phase, followed by a final phase of adaptive challenge access to the encryption oracle.

Our goal is to exploit the blindness of quantum query algorithms towards the class of functions that only differ at a single location
in order to provide secure constructions under a non-adaptive quantum chosen-ciphertext attack. 
To this end, we first define both the \INDQCCA, as well as the \SEMQCCA security game, and then propose
schemes based on quantum-secure pseudorandom functions and permutations that fulfill our definitions.\\

\subsection{Security Under Non-adaptive Quantum Chosen-Ciphertext Attacks}

In this section, we extend the definitions from Chapter \ref{ch:security} and introduce notions of security in the context of quantum adversaries. 
In providing a quantum encryption oracle $\Enc_k$, each query is answered by choosing a randomness and encrypting
each message in the superposition from the $r-$family of unitary operations such that:
\begin{align}\label{pq_enc}
\Enc: \sum_{m,c} \alpha_{m,c} \ket{m}\ket{c} \longrightarrow \sum_{m,c} \alpha_{m,c} \ket{m}\ket{c \oplus \Enc_k(m;r)}
\end{align}
Typically, we consider the case of sampling a randomness $r \rand \bit^n$ of equal length to a message space, where $m \in \bit^n$.
Moreover, we consider the quantum decryption oracle $\Dec_k$ to be deterministic, hence each oracle query is answered upon a superposition of ciphers as follows:
\begin{align}\label{pq_dec}
\Dec: \sum_{c,s} \beta_{c,s} \ket{c}\ket{s} \longrightarrow \sum_{c,s} \beta_{c,s} \ket{c}\ket{s \oplus \Dec_k(c)} \phantom{kk}
\end{align}
Note that, since $\Enc_k$ and $\Dec_k$ are required to be \PPT algorithms provided by the underlying
symmetric-key encryption scheme, both \eqref{pq_enc} and \eqref{pq_dec} correspond to efficient and reversible 
quantum operations. For the remainder of this chapter, we adopt the convenient notation $\Enc_k$ and $\Dec_k$ in order to refer to the above quantum oracles for
encryption and decryption.

\subsubsection{Indistinguishability.}

We begin by first introducing a notion of indistinguishability in the context of a quantum chosen-ciphertext attacks.
\begin{definition}[\INDQCCA] \label{ind-qcca}\ \\
Let $\Pi = (\KeyGen, \Enc, \Dec)$ be a symmetric-key encryption scheme and consider the
$\INDGame$ between a \QPT adversary \A and challenger \C, defined as follows:
\begin{enumerate}
\item \emph{(initial phase)} On input $1^n$, \C generates a key $k \from \KeyGen(1^n)$ and a bit $b \inrand \bit$;
\item \emph{(pre-challenge phase)} $\algo A$ receives oracles $\Enc_k$ and $\Dec_k$, then sends $(m_0, m_1)$ to \C;
\item \emph{(challenge phase)} \C replies with $\Enc_k(m_b)$ and \A receives an oracle for $\Enc_k$ only;
\item \emph{(resolution phase)} $\algo A$ outputs a bit $b'$, and wins if $b = b'$.
\end{enumerate}
We say $\Pi$ has indistinguishable encryptions under non-adaptive quantum chosen-ciphertext attack (or is \INDQCCA-secure) if, for every \QPT \A, 
there exists a negligible function $\negl(n)$ such that:
$\Pr[\A \text{ wins } \INDGame] \leq 1/2 + \negl(n)$.
\end{definition}

\subsubsection{Semantic Security.}

In this section, we first introduce semantic security under a \QCCA learning phase and then 
prove an important equivalence between our notion of indistinguishability
and semantic security.

\begin{definition}[\SEMQCCA]\label{def:sem_qcca} Let $\Pi = (\KeyGen, \Enc, \Dec)$ be an encryption scheme, and consider the experiment 
$\SEMGame$ with a \QPT $\algo A$, defined as follows.
\begin{enumerate}
\item \emph{(initial phase)} A key $k \from \KeyGen(1^n)$ and bit $b \rand \bit$ are generated;
\item \emph{(pre-challenge phase)} $\algo A$ receives access to oracles $\Enc_k$ and $\Dec_k$, then outputs a classical challenge template consisting of $(\Samp, h, f)$;
\item \emph{(challenge phase)} A plaintext $m \from \Samp$ is generated; $\algo A$ receives $h(m)$ and an oracle for $\Enc_k$ only; if $b = 1$, $\algo A$ also receives $\Enc_k(m)$.
\item \emph{(resolution)} $\algo A$ outputs a string $s$, and wins if $s = f(m)$.
\end{enumerate}
We say $\Pi$ is semantically secure under non-adaptive quantum chosen ciphertext attack (or is \SEMQCCA) if, for every \QPT $\algo A$, 
there exists a \QPT $\algo S$ such that the challenge templates output by $\algo A$ and $\algo S$ are identically distributed, and there exists a negligible
function $\negl(n)$ such that:
\begin{align*}
\left| \underset{k \rand \K}{\Pr}[\A (1^n,\Enc_k(m),h(m)) = f(m)] \, - \Pr[\Sim (1^n,|m|,h(m)) = f(m)] \right| \, \leq \, \negl(n),
\end{align*}
where, in both cases, the probability is taken over plaintexts $m \leftarrow \Samp$.
\end{definition}

\subsubsection{Equivalence of Indistinguishability and Semantic Security.}\label{ch:equiv}

Let us now show the equivalence of the new security notions we introduced in this chapter. In the following, we provide
a standard proof, following related approaches in \cite{GHS16}.

\begin{theorem}
Let $\Pi=(\KeyGen,\Enc, \Dec)$ be a symmetric-key encryption scheme. Then, $\Pi$ is $\INDQCCA$-secure if and only if $\Pi$ is $\SEMQCCA$-secure.
\end{theorem}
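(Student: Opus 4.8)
The plan is to prove the equivalence in the standard two directions, mirroring the classical proof in \cite{KL15} and the quantum approach in \cite{GHS16}, with the only new ingredient being that all oracles ($\Enc_k$, $\Dec_k$) are quantum and the adversary is \QPT.

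\textbf{Semantic security implies indistinguishability.} First I would argue the contrapositive: suppose $\Pi$ is not \INDQCCA-secure, witnessed by a \QPT adversary $\A$ that wins $\INDGame$ with probability $1/2 + \epsilon$ for non-negligible $\epsilon$. From $\A$ I would build a \QPT semantic-security adversary $\B$ as follows. In the pre-challenge phase, $\B$ runs $\A$, forwarding all of $\A$'s $\Enc_k$ and $\Dec_k$ queries to its own oracles; when $\A$ outputs a message pair $(m_0,m_1)$, the adversary $\B$ outputs the challenge template $(\Samp, h, f)$ where $\Samp$ samples a uniform bit $\beta \rand \bit$ and outputs $m_\beta$, the side-information function is $h(m_\beta) = (m_0,m_1)$ (enough to recover both messages), and the target function is $f(m_\beta) = \beta$. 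In the challenge phase, $\B$ receives $h(m) = (m_0,m_1)$ and (when $b=1$) the ciphertext $\Enc_k(m)$; it feeds this ciphertext to $\A$ as the challenge, continues to answer $\A$'s encryption queries with its own $\Enc_k$ oracle, and outputs whatever bit $b'$ that $\A$ outputs. When $b=1$, $\B$ computes $f(m) = \beta$ exactly when $\A$ wins, so $\B$ succeeds with probability $1/2 + \epsilon$. On the other hand, any simulator $\Sim$ receiving only $(1^n, |m|, h(m)) = (1^n, |m|, (m_0,m_1))$ has no information about $\beta$ beyond the template, so it can guess $f(m)=\beta$ with probability at most $1/2$. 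Hence the distinguishing advantage between $\B$ and any $\Sim$ is at least $\epsilon$, non-negligible, contradicting \SEMQCCA. One must check the challenge templates of $\B$ and the natural simulator are identically distributed, which holds since both just sample $\beta$ uniformly; this is routine.

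\textbf{Indistinguishability implies semantic security.} For the converse, given a \QPT semantic adversary $\A$, I would construct the simulator $\Sim$ that, on input $(1^n, |m|, h(m))$, internally samples its own key $k' \from \KeyGen(1^n)$, picks an arbitrary fixed plaintext $\bar m$ of length $|m|$ (e.g. $0^{|m|}$), runs $\A$ while answering its pre-challenge $\Enc_{k'}$ and $\Dec_{k'}$ queries honestly with $k'$, and in the challenge phase feeds $\A$ the value $h(m)$ together with $\Enc_{k'}(\bar m)$ and continues to answer encryption queries with $\Enc_{k'}$. Since $\Sim$ simulates $\A$'s own challenge template distribution faithfully, the template-equality requirement is immediate. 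To bound $|\Pr[\A(1^n,\Enc_k(m),h(m)) = f(m)] - \Pr[\Sim(1^n,|m|,h(m)) = f(m)]|$, I would introduce a hybrid in which $\A$ is given $\Enc_k(\bar m)$ instead of $\Enc_k(m)$; bounding the gap between $\A$ with $\Enc_k(m)$ and this hybrid reduces to an \INDQCCA distinguisher that embeds the sampled $m$ and $\bar m$ as its message pair and uses the predicate ``$\A$'s output equals $f(m)$'' as its guessing rule, exploiting that the \INDQCCA challenger provides exactly the $\Enc_k, \Dec_k$ pre-challenge oracles and post-challenge $\Enc_k$ oracle that $\A$ expects. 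The resulting advantage is negligible by \INDQCCA-security, and the hybrid itself is information-theoretically identical to $\Sim$ (it uses a fresh key, but ciphertexts under $\Enc_k(\bar m)$ and $\Enc_{k'}(\bar m)$ are identically distributed since the key is freshly generated in both). Summing the two negligible contributions gives the claim.

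\textbf{Anticipated main obstacle.} The genuinely delicate point is ensuring that embedding $\A$'s quantum $\Enc_k$ and $\Dec_k$ queries into the reduction is legitimate — i.e. that the reduction in each direction is itself a valid \QPT algorithm relaying superposition queries, and that in the indistinguishability-to-semantic direction the \INDQCCA distinguisher never needs a decryption query after the challenge (it does not, since \SEMQCCA also only grants pre-challenge decryption, so the query schedules line up exactly). I expect this bookkeeping — confirming the oracle-access patterns of the two games are compatible under the reduction and that no forbidden query (such as decrypting the challenge ciphertext) is ever issued — to be the part that needs care, whereas the probability arithmetic is entirely routine. I would also note explicitly, as in \cite{GHS16}, that the reductions only invoke the oracles as black boxes and hence preserve efficiency and the superposition-query interface without modification.
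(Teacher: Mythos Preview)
Your proposal is correct and follows essentially the same approach as the paper's proof: both directions are handled by the standard reductions, with the simulator in the $\INDQCCA \Rightarrow \SEMQCCA$ direction feeding $\algo A$ an encryption of a fixed dummy plaintext, and the reverse direction packaging $(m_0,m_1)$ into a challenge template whose target function is the distinguishing bit. The only cosmetic differences are that the paper's simulator uses its own \QCCA-learning-phase oracles under the real key $k$ (rather than sampling a fresh $k'$ as you do, which costs you one extra but trivial hybrid), and the paper takes $h(m)=|m|$ rather than $h(m)=(m_0,m_1)$; neither difference is material.
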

\begin{proof}
Suppose $\Pi$ is $\INDQCCA$-secure, i.e. has indistinguishable encryptions. Let \A be a \QPT algorithm against \SEM 
that receives a challenge $\Enc_k(m)$. Define a \QPT simulator \Sim that also challenges \SEM but simply runs \A as a subroutine as follows: 
Instead of receiving $\Enc_k(m)$ during the \SEM challenge, \Sim relies 
only on the side information
$h(m)$, in particular the plaintext length $|m|$, and simulates \A's encryption and decryption oracles by making use of its own $\QCCA$ learning phase. 
At the challenge phase, \Sim simply encrypts the string $1^{|m|}$ and forwards $\Enc_k(1^{|m|})$ to \A.
After another emulated learning phase, \Sim finally outputs the same target $f(m)$ that \A outputs. 
Since $\Pi$ has indistuinguishable encryptions by assumption,
\A's success probability must be negligibly close to the original \SEM game of \Sim.

Now, suppose $\Pi$ is not $\INDQCCA$-secure, hence there exists a \QPT distinguisher \A against $\INDQCCA$-security.
This allows us to build a distinguisher \D running \A as a subroutine against the \SEM security game as follows:
By using its oracles from the \QCCA learning phases, \D simulates the $\INDQCCA$ security game of \A by simply forwarding all queries to its own oracles.
At the \QIND challenge phase, \A prepares two messages $(m_0,m_1)$ and presents them to \D. Then, \D prepares a \SEM challenge template
$(U,h,f)$, where $U$ describes the uniform distribution over plaintexts $\{m_0,m_1\}$, the side information is given by the
length of the messages and where
the target function $f(m)$ concerns the function that distinguishes between $m_0$ and $m_1$, i.e. $f(m_0)=0$ and $f(m_1)=1$. Using this
\SEM template, \D receives a ciphertext $\Enc_k(m)$ and presents it to \A as an \INDQCCA challenge. Finally, \D simply
outputs whatever target bit \A outputs. By assumption, \A succeeds with nonnegligible probability and therefore \D breaks
the \SEMQCCA security game.
\end{proof}

\subsection{Quantum-secure Pseudorandom Functions}

In order to find constructions for quantum-secure symmetric-key cryptography,
we require the use of appropriate building blocks.
Let us now extend the concept of secure pseudorandom functions from Chapter \ref{ch:prf} to quantum-secure pseudorandom functions, a variant in which
an adversary in possession of a quantum computer can query the function on a superposition of inputs.
Surprisingly, one can find quantum-secure constructions for pseudorandom functions that are secure in this model. We then compare the definition
to post-quantum secure pseudorandom functions that do not require security against quantum superpositions.

\begin{definition}[Quantum-secure Pseudorandom Function]\label{QPRF}\ \\
Let $\mathcal F$ be an efficiently computable function $\mathcal F : \mathcal{K} \times \mathcal{X} \rightarrow \mathcal{Y}$ 
on a key-space $\mathcal{K}$, a domain $\mathcal{X}$ and a range $\mathcal{Y}$.
We say $\mathcal F= \{f_k\}_{k \in \K}$ is a family of quantum-secure pseudorandom functions $(\QPRF)$
if, for all $k$ and \QPT distinguishers \D,
there exists a negligible function $\negl(n)$ such that:
\begin{align}
\left| \underset{k \rand \mathcal{K}}{\Pr}[ \mathcal{D}^{f_k}(1^n) = 1] \, - \underset{f \rand \{F: \mathcal{X} \rightarrow \mathcal{Y}\}}{\Pr}[ \mathcal{D}^{f}(1^n) = 1] \right| \, \leq \, \negl(n)
\end{align}
\end{definition}
Similarly, we define post-quantum secure pseudorandom functions as a variant in which the adversary is quantum, but access to the function remains classical.

Finally, we provide schemes based on the above building blocks of pseudorandom functions
that are quantumly secure under a quantum-chosen ciphertext attack we 
introduced in this thesis.

\subsection{Secure Constructions}

We consider the \PRF scheme in \expref{Construction}{cons:prf} and show that it
is secure under a non-adaptive quantum chosen-ciphertext attack. The intuition is that,
once the pseudorandom function is taken to be quantum-secure, the pre-challenge phase
reveals at most a polynomial amount of evaluations of the pseudorandom function, despite
the presence of a decryption oracle. 
Note that, in this scheme, the encryption oracle only reveals a single functional evaluation of the \PRF at a random location at a time,
while the decryption oracle can additionally serve as a quantum oracle for the function of the underlying encryption scheme.
In fact, the adversary can generate superposition queries to a \PRF $f_k$ over the entire input space by simply 
initializing both the ciphertext and plaintext register to the all-zero state and
preparing a superposition over the random register:
\begin{equation}
\Dec_k:\,  \sum_{\tau \in \{0,1\}^n} \beta_\tau \ket{0}\ket{\tau}\ket{0} \longrightarrow \sum_{\tau \in \{0,1\}^n} \beta_\tau \ket{0}\ket{\tau}\ket{f_k(\tau)}.
\end{equation}
At first sight, however, it is not clear whether being able to generate superpositions gives the adversary additional power
during the challenge phase. Therefore, we have to bound the amount of information that 
quantum query algorithms can learn in a suitable way. Our contribution is to show that this advantage is still negligible
and that \expref{Construction}{cons:prf} has indistinguishable encryptions, even in the presence of decryption oracles 
under a non-adaptive quantum chosen-ciphertext attack. Finally, due to the equivalence results from the previous section,
our proposed scheme then also satisfies indistinguishability of encryptions and semantic security.

\begin{theorem}\label{thm:prf_scheme}
Let $\mathcal{F}$ be a family of quantum-secure pseudorandom functions. Then, the scheme
$\Pi[\mathcal{F}]=(\KeyGen,\Enc,\Dec)$ from \expref{Construction}{cons:prf} is $\INDQCCA$-secure.
\end{theorem}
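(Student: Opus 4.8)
The plan is to proceed by a sequence of hybrid games, starting from the real \INDQCCA game for $\Pi[\mathcal F]$ and ending at a game in which the challenge ciphertext is statistically independent of the bit $b$, so that no adversary can win with advantage better than negligible. The first step is standard: since $\mathcal F$ is a \QPRF, I would replace the keyed function $f_k$ everywhere it is used — both inside the encryption oracle $\Enc_k$ and inside the decryption oracle $\Dec_k$ — by a truly random function $F \rand \{F:\bit^n \to \bit^n\}$. By \expref{Definition}{QPRF}, this change is indistinguishable to any \QPT adversary making superposition queries, because the whole interaction (pre-challenge encryption/decryption queries, challenge, post-challenge encryption queries) can be packaged as a single \QPT oracle algorithm against $f_k$ versus $F$. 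So after this hop we are analyzing the scheme $\Enc^F(m;\tau)=(\tau,F(\tau)\oplus m)$ with a random $F$, and the adversary has quantum oracle access to $\Enc^F$ and to $\Dec^F$ (which is just the membership oracle $\tau\mapsto F(\tau)$, since $\Dec^F(\tau,c)=c\oplus F(\tau)$).

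The heart of the argument is the following: let $r^*$ be the randomness the challenger will use for the challenge ciphertext $(r^*, F(r^*)\oplus m_b)$. If $F(r^*)$ were replaced by a fresh uniformly random string $F(r^*)\oplus s$ for $s\rand\bit^n$, the challenge would be a one-time pad on $m_b$ and hence perfectly hiding. I would argue this replacement is undetectable via \expref{Theorem}{thm:nonadaptive_relabeling} (the non-adaptive relabeling game). Concretely: the adversary's pre-challenge phase consists of quantum queries to $\Enc^F$ and $\Dec^F$; both of these are efficiently computable from quantum oracle access to $F$ itself (the encryption oracle samples a classical random $\tau$ and XORs $F(\tau)$ into the message register — this is one classical evaluation of $F$ per query, and in fact even a full superposition query to $F$ is available through $\Dec^F$). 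Then at the challenge phase the adversary receives the pair $(r^*, F(r^*))$ or $(r^*, F(r^*)\oplus s)$, which is exactly the challenge of $\RelabelingGame^{(1)}$, and the post-challenge phase gives it only additional encryption access, i.e. additional random classical input/output pairs $(\tau,F(\tau))$ plus the trivial XOR structure — this is (a mild variant of) the "example oracle" step in \expref{Definition}{def:non_relabeling_game}. So \expref{Theorem}{thm:nonadaptive_relabeling} bounds the distinguishing advantage between "challenge uses genuine $F(r^*)$" and "challenge uses relabeled value" by $O(T(n)/\sqrt{2^n})$ plus negligible terms, where $T(n)=\poly(n)$ is the total number of queries. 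Since $2^{-n/2}$ is negligible, this hop costs only $\negl(n)$.

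After this second hop the challenge ciphertext is $(r^*, u)$ for a uniformly random and independent $u\in\bit^n$, which is independent of $b$; hence the adversary's view is independent of $b$ and $\Pr[\A\text{ wins}]\le 1/2$. Summing the two hybrid gaps, $\Pr[\A\text{ wins }\INDGame]\le 1/2+\negl(n)$, establishing \INDQCCA-security.

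The main obstacle I anticipate is making the reduction to $\RelabelingGame^{(1)}$ fully rigorous on two points. First, one must check that the post-challenge encryption oracle really is harmless: a post-challenge encryption query could in principle collide with $r^*$ (the challenger reuses no randomness, but the adversary cannot control $\Enc$'s internal randomness, so a collision with $r^*$ happens only with probability $O(\poly(n)/2^n)$) — this matches exactly the "second possibility" analyzed inside the proof of \expref{Theorem}{thm:nonadaptive_relabeling}, so it should go through, but it needs to be spelled out. Second, one must handle the fact that the relabeling theorem is stated for a distinguisher that only *receives* the input/output pair and then gets example-oracle access, whereas here the adversary also keeps quantum encryption access — I would argue that quantum encryption access post-challenge is simulable from classical example pairs $(\tau,F(\tau))$ (since $\Enc$ picks $\tau$ classically at random anyway), reducing it to precisely the theorem's hypothesis. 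A secondary, more cosmetic obstacle is the mismatch of domains/ranges ($\bit^n\to\bit^n$ here versus $\bit^n\to\bit^m$ in the theorem), which is immaterial since we simply take $m=n$. Once those bookkeeping points are handled, the proof is a clean two-hybrid argument.
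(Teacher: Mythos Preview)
Your proposal is correct and follows essentially the same approach as the paper: a two-step hybrid argument that first swaps the \QPRF for a truly random function (via \expref{Definition}{QPRF}), then uses \expref{Theorem}{thm:nonadaptive_relabeling} to replace the challenge value $F(r^*)$ by $F(r^*)\oplus s$ so the challenge becomes a one-time pad. The two bookkeeping points you flag---that post-challenge $\Enc$ access is simulable from the classical example oracle of \expref{Definition}{def:non_relabeling_game}, and that a collision of a post-challenge randomness with $r^*$ has negligible probability---are exactly the points the paper addresses in its reduction, so your outline matches the paper's proof almost step for step.
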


\begin{proof}
We show that any \QPT adversary $\mathcal{A}$ wins \INDGame
under a \QCCA learning phase with at most negligible probability.
To this end, we introduce a sequence of indistinguishable hybrid games until we arrive at a security game in which 
the challenge is perfectly hidden and the adversary cannot win. Let us sketch the proof.

First, we replace the \QPRF $\mathcal{F}$ from the standard security game with a perfectly random function family. According to \QPRF security, 
we only negligibly affect the overall success probability of the adversary. In the \INDCCA model in which the adversary is classical, 
this hybrid already completes the proof, as the probability of observing the same randomness outside of the challenge is negligible. 
However, since the adversary in the quantum model is allowed to issue quantum queries, the previous observation does not easily account for 
an \QPT adversary who can evaluate the function in superposition over the entire input space.
We introduce a final hybrid game in which we choose the challenge randomness at the start of the game. Then, we relabel the challenge value with
a uniformly random string, while the challenge bit remains the same.
Due to \expref{Theorem}{thm:nonadaptive_relabeling}, the overall advantage of \A in the game is at most negligible. In fact, no \QPT algorithm can distinguish between these two hybrids with nonnegligible probability,
else it would imply a successful distinguisher against $\RelabelingGame^{(1)}$.

Fix a \QPT adversary \A against $\Pi[\mathcal F]$ and let $n$ denote the security parameter. It will be convenient to split \A into the pre-challenge algorithm $\algo A_1$ and the post-challenge algorithm $\algo A_2$.
We define several hybrid games, as follows:

\begin{description}
\item[\textsc{Game 0}:] This is the standard \INDQCCA security game, i.e., $\IndGame(\Pi[\mathcal F], \algo A, n)$. 
In the pre-challenge phase, $\algo A_1$ receives oracles $\Enc_k$ and $\Dec_k$. 
In the challenge phase, $\algo A_1$ outputs $(m_0, m_1)$ and its private data $\ket{\psi}$; a random bit $b \inrand \bit$ is sampled, and $\algo A_2$ is run on input $\ket{\psi}$ 
and a challenge ciphertext,
$$
c_b := \Enc_k(m_b) = (r^*, f_k(r^*) \oplus m_b),
$$
where $r^* \rand \bit^n$ is sampled uniformly at random. In the post-challenge phase, $\algo A_2$ only has access to $\Enc_k$ and must output a bit $b'$.

\item[\textsc{Game 1}:] This is the same game as \textsc{Game 0}, except we replace $f_k$ with a perfectly random function $f$ with the same domain and range.

\item[\textsc{Game 2}:] This is the same game as \textsc{Game 1}, except the challenge is now relabeled and answered with $(r^*,f(r^*)\oplus s \oplus m_b)$, while
the challenge bit $b$ remains identical to \textsc{Game 1}.
\end{description}

Let us first observe that $s$ is a uniformly random string that is independent of all other random variables in \textsc{Game 2}. 
By the information-theoretic security of the one-time pad, it follows that the success probability of \A in \textsc{Game 2} is at most $1/2$.

Next, we show that the output of \textsc{Game 0} is at most negligibly different from the output of \textsc{Game 1}. We do this by constructing a quantum oracle distinguisher $\algo D$ between the families $\mathcal F$ and $\mathcal R$ with distinguishing advantage
$$
\Delta^\D(\textsc{Game 0},\textsc{Game 1}) := \left| \underset{k \rand \bit^n}{\Pr}[ \mathcal{D}^{f_k}(1^n) = 1] \, - \underset{f \rand \mathcal R}{\Pr}[ \mathcal{D}^{f}(1^n) = 1] \right|
$$
which must then be negligible since $\mathcal F$ is a \QPRF. The distinguisher $\algo D$ receives quantum oracle access to a function $\varphi$ (sampled from either $\mathcal F$ or $\mathcal R$) and proceeds by simulating $\IndGame$ as follows:
\begin{enumerate}
\item Run $\algo A_1$, answering encryption queries using (random) classical calls to $\varphi$, and answering decryption queries using quantum oracle calls to $\varphi$, such that
$$
\ket{c_1}\ket{c_2}\ket{m}
\mapsto \ket{c_1}\ket{c_2}\ket{m \oplus c_2}
\mapsto \ket{c_1}\ket{c_2}\ket{m \oplus c_2 \oplus \varphi(c_1)}\,.
$$
\item simulate the challenge phase by sampling $b \inrand \bit$ and encrypting the challenge using a classical (random) call to $\varphi$;
\item run $\algo A_2$ and simulate the post-challenge phase by continuing to answer the encryption queries as before;
\item when $\algo A_2$ outputs $b'$, so does the distinguisher $\D$.
\end{enumerate}
Since the distinguisher $\D$ is a \QPT algorithm and $\mathcal F$ is a \QPRF, the distinguishing advantage $\Delta^\D(\textsc{Game 0},\textsc{Game 1})$ is at most $\negl(n)$.
Therefore, replacing \textsc{Game 0} with \textsc{Game 1} only negligibly affects the overall success probability of the adversary \A in the game.

It remains to show that the output of \textsc{Game 1} is at most negligibly different from the output of \textsc{Game 2}. 
For fixed strings $r^*$ and $s$, we introduce the relabeled function $f^*$, where
\begin{equation*}
f^*(x) :=
    \begin{cases}
      f(x) \oplus s, &  \text{if } x=r^*, \\
      f(x), & \text{if } x\neq r^*.
    \end{cases}
\end{equation*}
We then build a distinguisher $\algo D$ against $\RelabelingGame^{(1)}$ whose distinguishing advantage is
$$
\Delta^\D(\textsc{Game 1},\textsc{Game 2}) := \left| \underset{f \rand \mathcal R}{\Pr}[ \mathcal{D}^{f}(1^n) = 1] \, - \underset{r^*,s \rand \bit^n}{\Pr}[ \mathcal{D}^{f^*}(1^n) = 1] \right|,
$$
and since $\D$ is a \QPT algorithm, the advantage must then be at most $\negl(n)$ due to \expref{Theorem}{thm:nonadaptive_relabeling}.

$\RelabelingGame^{(1)}$ takes place as follows. First, let the function $f$ be sampled from $\algo R$, as in \textsc{Game 1}. 
The distinguisher \D is then defined as follows.
\begin{enumerate}
\item Run $\algo A_1$, answering its encryption and decryption querying using calls to $\mathcal O_f$.
\item When $\algo A_1$ outputs the challenge plaintexts $(m_0, m_1)$, output the entire (purified) private register of $\algo A$ (i.e., the state denoted $\ket{\psi}$ earlier) together with $m_0$ and $m_1$.
\item Receive the state $\ket{\psi^f}$ defined above. Sample $b \inrand \bit$, and invoke the single (possibly relabeled) oracle call on a random input $r$, receiving a string $g$. Then, set $c := (r, g \oplus m_b)$.
\item Run $\algo A_2$ on input $\ket{\psi}$ and $c$, answering its oracle queries making use of random input/output pairs of $f$ generated by the classical random example oracle.
\item When $\algo A_2$ outputs $b'$, the distinguisher $\D$ outputs $\delta_{b,b'}$.
\end{enumerate}
By construction, the relabeled case corresponds exactly to \textsc{Game 2}, while the unrelabeled case corresponds exactly to \textsc{Game 1}, as desired.
Due to \expref{Theorem}{thm:nonadaptive_relabeling}, the advantage $\Delta^\D(\textsc{Game 1},\textsc{Game 2})$ is at most negligible.
Putting everything together, we see that the output distribution of \textsc{Game 0} (i.e., the true security game) is negligibly far from that of \textsc{Game 2}, where the adversary succeeds with probability $1/2$.
\qed
\end{proof}
\noindent
Finally, as a direct consequence of the equivalance results of \expref{Section}{ch:equiv}, let us
conclude the previous result with the following additional observation:

\begin{corollary}
Let $\mathcal F$ be a function family of quantum-secure pseudorandom functions. Then,
$\Pi[\mathcal{F}]=(\KeyGen,\Enc,\Dec)$ from \expref{Construction}{cons:prf} is \SEMQCCA-secure.
\end{corollary}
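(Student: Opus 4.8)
The plan is to derive this corollary directly from two results already established in the excerpt: \expref{Theorem}{thm:prf_scheme}, which asserts that $\Pi[\mathcal F]$ is \INDQCCA-secure whenever $\mathcal F$ is a family of quantum-secure pseudorandom functions, and the equivalence theorem of \expref{Section}{ch:equiv}, which states that a symmetric-key encryption scheme is \INDQCCA-secure if and only if it is \SEMQCCA-secure. So the argument is essentially a one-line chaining: first invoke \expref{Theorem}{thm:prf_scheme} to conclude \INDQCCA-security of $\Pi[\mathcal F]$, then apply the ``only if'' direction of the equivalence to obtain \SEMQCCA-security.

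In slightly more detail, I would first recall that the equivalence theorem's forward direction is proved by a simulator construction: given a \QPT adversary \A against the \SEMGame with a \QCCA learning phase, one builds a simulator \Sim that runs \A as a subroutine, answers \A's encryption and decryption queries using its own \QCCA oracles, and replaces the challenge ciphertext $\Enc_k(m)$ with $\Enc_k(1^{|m|})$. Indistinguishability of these two challenges — which is exactly what \INDQCCA-security guarantees — ensures that \A's probability of outputting $f(m)$ in the real experiment is negligibly close to \Sim's probability in the simulated one. Since $\Pi[\mathcal F]$ is \INDQCCA-secure by \expref{Theorem}{thm:prf_scheme}, this indistinguishability holds, and the corollary follows.

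There is no real obstacle here; the only thing worth checking is that the hypotheses line up — namely that the quantum oracle conventions for $\Enc_k$ and $\Dec_k$ used in \expref{Definition}{ind-qcca} and \expref{Definition}{def:sem_qcca} are identical (they are, both being the oracles of \eqref{pq_enc} and \eqref{pq_dec}), and that the notion of ``quantum-secure pseudorandom function'' invoked in the corollary is precisely \expref{Definition}{QPRF} as used in \expref{Theorem}{thm:prf_scheme}. Given these, the proof is simply: by \expref{Theorem}{thm:prf_scheme}, $\Pi[\mathcal F]$ is \INDQCCA-secure; by the equivalence theorem of \expref{Section}{ch:equiv}, it is therefore \SEMQCCA-secure. \qed
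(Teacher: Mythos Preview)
Your proposal is correct and matches the paper's approach exactly: the paper presents this corollary as ``a direct consequence of the equivalence results'' of \expref{Section}{ch:equiv} applied to \expref{Theorem}{thm:prf_scheme}, which is precisely the chaining you give. Your additional unpacking of the simulator construction from the equivalence proof is accurate but not needed here.
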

\noindent
It is well known, as documented in \cite{KL15}, that the \PRF scheme is easily malleable by an adversary with adaptive decryption oracle access
in a \CCAA learning phase. Therefore, \expref{Construction}{cons:prf} is neither secure under classical nor quantum adaptive chosen-ciphertext attacks.

\begin{construction}[periodized \PRF scheme]\label{cons:p_prf}
For a security parameter $n$, let $\mathcal F$ be a family of keyed functions $f_k: \{0,1\}^n \longrightarrow \{0,1\}^{2n+3}$ 
over $\mathcal{K} = \bit^n$ and let
$\Primes{[}2^n/2,2^n)$ denote the set of primes on the given interval.
Let $\mathcal F'$ be the keyed and periodized function
family over primes $p \in \Primes{[}2^n/2,2^n)$ given by $f'_{k,p}: \bit^{2n+3} \longrightarrow \bit^{2n+3}$, $f'_{k,p}( \cdot) = f_k(\cdot \mod p)$,
and consider the following symmetric-key encryption scheme $\Pi'[\mathcal F']=(\KeyGen',\Enc',\Dec')$:
\begin{enumerate}
\item \emph{(key generation)} let $\KeyGen(1^n)$ generate $k \rand \bit^n$ and $p \rand \Primes{[}2^n/2,2^n)$;\footnote{Note that we can efficiently sample from $\Primes{[}2^n/2,2^n)$ by first generating integers uniformly at random and applying primality testing.
More sophisticated approaches are discussed in \cite{Mau95}.}

\item \emph{(encryption)} on message $m \in \bit^{2n+3}$, choose a randomness $r \rand\{0,1\}^{2n+3}$ and output $\Enc_k(m;r) = (r,f'_{k,p}(r) \oplus m)$;
\item \emph{(decryption)} on cipher $(r,c)$, output $\Dec_k(r,c) = c \oplus f'_{k,p}(r)$;
\item \emph{(correctness)} $(\Dec_k \circ \Enc_k)(m;r) = (f'_{k,p}(r) \oplus m) \oplus f'_{k,p}(r) = m$.
\end{enumerate}
\end{construction}

We now prove that the above construction achieves \INDQCCA security. Note that due to quantum period finding, as observed in \cite{BZ13},
it follows that if $\mathcal F$ is a family of \QPRF{s}, then $\mathcal F'$ from \expref{Construction}{cons:p_prf} is only post-quantum secure.

\begin{proposition}
Let $\Pi[\mathcal F]=(\KeyGen,\Enc,\Dec)$ be a symmetric-key encryption scheme under a \QPRF $\mathcal F$ and
define the periodized \PRF scheme $\Pi'[\mathcal F']=(\KeyGen',\Enc',\Dec')$ as in \expref{Construction}{cons:p_prf}.
Then, $\Pi'[\mathcal F']$ is \INDQCCA-secure even as $\mathcal F'$ is only post-quantum secure.
\end{proposition}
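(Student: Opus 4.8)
The proof mirrors the hybrid argument of \expref{Theorem}{thm:prf_scheme}, modified at exactly one point to route around the fact that $\mathcal F'$ is \emph{not} a \QPRF. Fix a \QPT adversary \A against $\INDQCCA$ for $\Pi'[\mathcal F']$, split into a pre-challenge part $\A_1$ and a post-challenge part $\A_2$. Let \textsc{Game 0} be the genuine $\INDGame$: the key is $(k,p)$ with $k\rand\bit^n$ and $p\rand\Primes{[}2^n/2,2^n)$, the pre-challenge phase gives $\A_1$ quantum oracles $\Enc_k$ and $\Dec_k$ (both built from $f'_{k,p}(\cdot)=f_k(\cdot\bmod p)$), and the challenge is $(r^*,f'_{k,p}(r^*)\oplus m_b)$ with $r^*\rand\bit^{2n+3}$. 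Let \textsc{Game 1} be the same, except $f_k$ is replaced by a truly random function $f:\bit^n\to\bit^{2n+3}$, so that the periodized function becomes $g_p(\cdot):=f(\cdot\bmod p)$. Let \textsc{Game 2} be \textsc{Game 1} except the challenge ciphertext is answered with $(r^*,g_p(r^*)\oplus s\oplus m_b)$ for a fresh $s\rand\bit^{2n+3}$, the challenge bit $b$ unchanged. Exactly as in \expref{Theorem}{thm:prf_scheme}, the one-time pad makes \A's success probability in \textsc{Game 2} equal to $1/2$, so it suffices to bound the two consecutive gaps.

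\textbf{\textsc{Game 0} versus \textsc{Game 1}.} This is the step that uses that the \emph{base} family $\mathcal F$ is a \QPRF, and it is precisely where a generic post-quantum \PRF would fail. I would build a \QPT distinguisher \D that samples $p\rand\Primes{[}2^n/2,2^n)$ itself, receives a quantum oracle for $\varphi$ (either $f_k$ or a random $f$, both on $\bit^n$), and simulates the $\INDGame$ for $\Pi'$. The only nontrivial point is that \D must answer $\A_1$'s quantum decryption queries, which require applying $\varphi(\cdot\bmod p)$ in superposition on $(2n{+}3)$-bit inputs; this is done reversibly by computing $r\bmod p$ into an ancilla, calling $\mathcal O_\varphi$, XOR-ing the result into the plaintext register, and uncomputing $r\bmod p$ --- all of which needs only $p$ (known to \D) and quantum access to $\varphi$. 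Encryption queries similarly use a single classical call $\varphi(r\bmod p)$. Hence \D is a legitimate \QPT distinguisher for $\mathcal F$, so the gap is $\negl(n)$. Note that no attempt is made to prevent \A from recovering $p$ by period finding on $\Dec_k$; that ability is simply harmless below.

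\textbf{\textsc{Game 1} versus \textsc{Game 2}.} Here I would invoke the non-adaptive relabeling bound, \expref{Theorem}{thm:nonadaptive_relabeling}, with the function taken to be $h:=g_p:\bit^{2n+3}\to\bit^{2n+3}$ and with relabeling parameters $r^*\rand\bit^{2n+3}$, $s\rand\bit^{2n+3}$. Crucially, \expref{Theorem}{thm:nonadaptive_relabeling} holds for an \emph{arbitrary} function, so it applies to the periodic $g_p$ for every $p$ and every $f$ without change; and since \textsc{Game 2} relabels $g_p$ at the \emph{single} point $r^*$ (not at the whole residue class of $r^*$ modulo $p$), the periodicity of $g_p$ is irrelevant --- knowing $p$ does not localise $r^*$. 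The distinguisher $\D'$ against $\RelabelingGame^{(1)}(g_p,\D')$ answers, pre-challenge, $\A_1$'s encryption and decryption queries using its oracle $\mathcal O_{g_p}$ exactly as $\Enc_k$ and $\Dec_k$ would; when $\A_1$ outputs $(m_0,m_1)$ and its private state, $\D'$ samples $b\rand\bit$, receives the (possibly relabeled) challenge pair $(r^*,g)$, hands $\A_2$ the ciphertext $(r^*,g\oplus m_b)$; post-challenge it answers $\A_2$'s encryption queries from the classical example-oracle pairs $(\rho,g_p(\rho))$ (coherently, since $g_p(\rho)$ is classical once $\rho$ is drawn); finally $\D'$ outputs $\delta_{b,b'}$. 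The unrelabeled branch reproduces \textsc{Game 1} and the relabeled branch reproduces \textsc{Game 2}, so the gap is at most $O(T(n)/\sqrt{2^{2n+3}})+\negl(n)=\negl(n)$ by \expref{Theorem}{thm:nonadaptive_relabeling}, and averaging over $p$ and $f$ preserves negligibility. Chaining the three games gives $\Pr[\A\text{ wins }\INDGame]\le 1/2+\negl(n)$, which is the claim.

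\textbf{Main obstacle.} The delicate part is the \textsc{Game 0}/\textsc{Game 1} reduction: one must verify that the full functionality of $\Dec_k$ (quantum superposition access to $x\mapsto f_k(x\bmod p)$ on the \emph{larger} domain $\bit^{2n+3}$) is faithfully and efficiently simulable from quantum access to $f_k$ alone, i.e. that folding by $\bmod\,p$ commutes with passing to superposition and can be cleanly uncomputed. Everything else transfers from \expref{Theorem}{thm:prf_scheme} essentially verbatim; the one genuinely new observation is that \expref{Theorem}{thm:nonadaptive_relabeling} is stated for arbitrary (in particular periodic) functions over an arbitrary domain size, so the enlarged domain $\bit^{2n+3}$ only \emph{strengthens} the resulting bound.
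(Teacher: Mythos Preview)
Your proof is correct but takes a different route from the paper. The paper does not re-run the hybrid argument; instead it gives a direct black-box reduction to \expref{Theorem}{thm:prf_scheme}: assuming a \QPT adversary $\A'$ breaks $\Pi'[\mathcal F']$, it builds an adversary $\A$ against the original scheme $\Pi[\mathcal F]$ by sampling $p$ itself, answering $\A'$'s decryption queries by first (reversibly) reducing the $(2n{+}3)$-bit randomness modulo $p$ and then forwarding to its own $\Dec_k$, and handling encryption and the challenge by stretching $\Pi$'s $n$-bit randomness to $(2n{+}3)$ bits via a random multiple of $p$. Your approach instead opens up the proof of \expref{Theorem}{thm:prf_scheme} and re-runs its three-game hybrid directly for $\Pi'$, with the key observation that \expref{Theorem}{thm:nonadaptive_relabeling} is stated for an \emph{arbitrary} function and therefore applies verbatim to the periodic $g_p$ on the enlarged domain $\bit^{2n+3}$ (indeed with a stronger bound). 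The paper's route is more modular --- one invocation of the earlier theorem --- while yours is cleaner on the distributional side, since you never need to argue that stretching an $n$-bit uniform $r$ by a random multiple of $p$ yields something close to uniform on $\bit^{2n+3}$, and it makes explicit exactly which feature of the relabeling lemma absorbs the periodic structure of $g_p$.
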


\begin{proof}
We prove the above claim by contradiction.
Suppose there exists a \QPT adversary $\mathcal{A}'$ against \INDQCCA security that wins $\IndGame(\Pi', \algo A', n)$ 
with nonnegligible probability.
We can then use a reduction argument and construct an adversary that runs $\A'$ in order to break the \INDQCCA security of the scheme $\Pi=(\KeyGen,\Enc,\Dec)$
from \expref{Construction}{cons:prf} under a family $\mathcal{F}$ of \QPRF{s} $f_k: \bit^{n} \longrightarrow \bit^{2n+3}$.

Let $\mathcal{A}$ be an adversary against $\IndGame(\Pi, \algo A, n)$ that receives pre-challenge oracle access to $\Enc_k$ and $\Dec_k$,
as well as post-challenge access to $\Enc_k$, whose goal is to output a bit $b'=b$ upon a challenge ciphertext $c^* = (r^*,f_k(r^*)\oplus m_b)$.
The adversary \A can now simulate $\IndGame(\Pi', \algo A', n)$ as follows.
\begin{enumerate}
\item $\A$ samples a prime $p \rand \text{Primes}(2^n/2,2^n)$.
\item whenever $\A'$ issues an encryption query,
\A first encrypts an auxiliary state $\ket{0^{2n+3}}\ket{0^{n}}\ket{0^{2n+3}}$ using its own encryption oracle $\Enc_k$.
As a result, a randomness $r \rand \bit^{n}$ is produced and \A receives a state $\ket{0^{2n+3}}\ket{r}\ket{f_k(r)}$.
Next, \A stretches the randomness in the second register by a random multiple of $p$, hence produces a
string $\tau$ in $\bit^{2n+3}$, such that
$$
\Enc'_k: \,  \ket{m}\ket{r}\ket{c} \mapsto \ket{m}\ket{r}\ket{c \oplus f_k(r) \oplus m} \mapsto \ket{m}\ket{\tau}\ket{c \oplus f_k(\tau \text{ mod }p) \oplus m}.
$$
\item whenever $\A'$ issues a decryption query during the pre-challenge phase, \A computes a
reduced randomness $(\cdot \mod p)$ from the second register into an auxiliary register by means of a reversible operation.
Then, \A evaluates the \QPRF $f_k$ upon the reduced randomness into the message register using its own decryption
oracle $\Dec_k$. A final bit-wise XOR operation results in the desired decryption operation
$$
\Dec'_k: \, \ket{c}\ket{\tau}\ket{h} \longrightarrow \ket{c}\ket{\tau}\ket{h \oplus f_k(\tau \text { mod }p) \oplus c}.
$$
\item at the challenge, \A uses the plaintext pair it receives from $\A'$ for its own challenge phase. 
Next, \A stretches the randomness $r^*$ of the resulting challenge ciphertext $c^* = (r^*,f_k(r^*)\oplus m_b)$
by a random multiple of $p$ and sends it back to $\A'$.
\item when $\algo A'$ outputs $b'$, \A outputs $\delta_{b,b'}$.
\end{enumerate}
By construction, the simulated game above corresponds exactly to $\IndGame(\Pi', \algo A', n)$, as desired.
Therefore, from the assumption that the \QPT adversary $\mathcal{A}'$ wins $\IndGame(\Pi', \algo A', n)$ with nonnegligible probability,
\A must now succeed at $\IndGame(\Pi, \algo A, n)$ with nonnegligible probability - in violation of \expref{Theorem}{thm:prf_scheme}.
\qed
\end{proof}

\newpage
\section{The Physical Realization of Quantum Computation}\label{ch:physical}

The simulation of quantum systems turns out to scale surprisingly poorly on conventional classical computers. In order to simulate $N$ spin $1/2$ particles
and to solve the Schr\"odinger equation, one needs to store vectors of size $2^N$, as well as manipulate 
matrices of size $2^N \times 2^N$. Due to this exponential scaling, it is well known that classical computers are highly inefficient
in simulating the dynamics of quantum systems. This fact has already puzzled physicists in the 1980s, who hypothesized that an 
intrinsically quantum mechanical computer could potentially be more suitable for these tasks.
In fact, it is often attributed to Richard Feynman \cite{Fey82} to have been the first to speculate on the possibility of building quantum computers.
The prospect of building a computer that would outperform any classical computing architecture and efficiently simulate quantum physics
seemed captivating.
In 1989, David Deutsch gave the first example of a quantum algorithm for a 
black box problem which could be solved faster with quantum mechanical means than with classical ones \cite{Deu89}.
Perhaps most notably, it was Peter Shor's 1994 discovery of efficient quantum
algorithms for the factoring of integers and computing discrete logarithms \cite{Sho94} that truly drew the attention
towards the field of quantum computation.
Only a few years later, Ignacio Cirac and Peter Zoller proposed a physical system of trapped ions
on which quantum information processing could be realized \cite{CZ95}.
In this architecture, single trapped ions are engineered to carry quantum information
and are both manipulated and measured with focused laser beams. Already within
a year's time, David Wineland's group at National Institute of Standards and
Technology achieved a breakthrough in ion-trap quantum computers
\cite{MMKW95}, namely a controlled bit flip on a single ion. This experiment is often considered as the birth of experimental
quantum computation. In this chapter, we give a basic introduction to trapped-ion quantum computers.
To this end, we follow an excellent survey on trapped-ion computation by H{\"a}ffner and Blatt \cite{HRB08}.
In later sections, we also describe recent implementations of quantum algorithms and further advances in the field.

\subsection{DiVincenzo Criteria}

All quantum information processing is concerned with the storage and coherent manipulation of information in a quantum system.
In the previous chapters, we showed how quantum computers could solve certain mathematical problems faster than classical computers 
using the principles of quantum mechanics. 
In order to harness this quantum speed-up, however, one has to realize quantum computation in a physical system.
Fortunately, nature presents us with many possible ways of realizing a qubit in a physical system. As typical representations of a qubit
are found in the two states of a spin $1/2$ particle, the vertical or horizontal polarization of a photon or simply the ground and excited states of an atom,
each representation comes with its own drawbacks and advantages. While photons are easy to generate, they 
have proven to be difficult to interact on the basis of nonlinear materials alone \cite{NC10}. Similarly, both the observation and control of spin states
poses great difficulty, unless a carefully engineered environment is achieved. An example of such circumstances is realized in a trapped-ion quantum
computer, where ions are confined in a potential trap and subsequently cooled.

In 1996, David DiVincenzo, at the IBM Thomas J. Watson Research Center, proposed the following list of guidelines for a successful physical implementation of
a quantum computer: \cite{DiV00}
\begin{enumerate}
\item A scalable physical system with well characterized qubits.
\item The ability to initialize the state of the qubits to a simple initial state.
\item Long relevant decoherence times, much longer than the gate operation time.
\item A universal set of quantum gates.
\item A qubit-specific measurement capability.
\end{enumerate}
Having the possibility of a functioning interface between quantum computers and devices for quantum communication in mind, 
DiVincenzo also added two additional requirements:
\begin{enumerate}
\item[6.] The ability to interconvert stationary and flying qubits.
\item[7.] The ability to faithfully transmit flying qubits between specified locations.
\end{enumerate}
Currently, the trapped-ion computer is oftentimes regarded as the leading quantum computing architecture,
while the runner-up technology is believed to be that of the solid-state architecture of superconducting qubits \cite{LMRD17}. In this thesis,
we present the ion-trap quantum computer as a model for quantum computation and discuss its fundamental properties and capabilities.
When performing quantum information processing, such as the algorithms from the previous chapters,
the ability to coherently manipulate as well as store information with low rates of error is crucial.
Decoherence of quantum systems poses enormous difficulty to both of these tasks.
In the next section, we will outline the extent to which trapped ion computation satisfies these criteria.

\newpage

\subsection{Ion-Trap Implementation}

In this section, we discuss the physical realization of quantum computation
on the basis of the ion-trap, the most successful quantum computing architecture to date.
As the representation of a qubit is found in the hyperfine levels of an ion,
we begin with a section on the hyperfine structure of atoms and continue with
the experimental setup in the subsequent chapter. In the following chapters thereafter, we discuss
how the ion-trap quantum computer satisfies DiVincenzo's Criteria, as well as the extent
to which all the necessary ingredients for the implementation of quantum algorithms of the previous chapters are realized. 
In particular, we show how to perform elementary single-qubit and two-qubit gates using focused laser beams.
Finally, in the last section,
we present a recent performance comparison between a state-of-the-art solid-state device running the same algorithms.

\subsubsection{Hyperfine Structure.}

In order to realize a qubit as a physical carrier of information, one has to represent it in an appropriate two-level quantum system.
Following DiVincenzo's criteria, the task is to define a qubit that is not only well-characterized, but can also be controlled and manipulated
for the purpose of information processing. In the ion-trap quantum computer, a qubit is found in the internal atomic states of
the ion. Although a single trapped ion features a broad energy landscape, sophisticated use of lasers allows us to isolate
just two levels in the energy spectrum of the atom.
Alkali atoms present a popular choice for ion-trap experiments, as they feature a single valence electron in the outer shell,
thus offering a simple and well-studied electronic structure.
Typical ion candidates are the alkaline earth metals $^9$Be$^+$, $^{24}$Mg$^+$, $^{40}$Ca$^+$, as well as $^{171}$Yb$^+$, which, once ionized, behaves
quite similarly. Each ion comes with a different mass and electronic transition at a certain wavelength, both highly relevant factors
for trapping, as well as laser manipulation. For example, while lighter ions carry less inertia and are therefore easier to trap,
they tend to exhibit electronic transitions at wavelengths in the deep ultra-violet that are less suitable for fiber-optics.

At high resolution, the atomic spectrum is known to feature a splitting of energy levels into further substructures, the so-called 
\textit{fine structure} and \textit{hyperfine structure}. Neither of the two structres are explained in the original Bohr model or 
predicted by Schr\"odinger theory
and result from spin contributions of the electron spin and nuclear spin.
The atomic states relevant for the representation of a qubit result from
the sum of electron spin $S$ and nuclear spin $I$, giving a total of $F = S + I$, where $F$ is the total angular momentum.
Using the long-lived states of the hyperfine structure, especially long coherence times can be achieved which, in this regard, make ion traps an
ideal choice for a quantum computer.

Particularly the ytterbium isotope $^{171}$Yb$^+$ has become a favorable choice in recent experiments \cite{DLFL16}\cite{FHM16} 
due to its large hyperfine splitting and strong $^2$S$_{1/2} \leftrightarrow \hspace{0.1mm}^2$P$_{1/2}$ electronic transition around a wavelength of $369.53$nm.
An example of the hyperfine structure of $^{171}$Yb$^+$ is shown in \expref{Figure}{fig:hyperfine}.
This particular isotope exhibits long trapping lifetimes and, due to its strong electronic transition, it is well suited for broadband laser manipulation, 
as well as integration with optical fibers.

\noindent Typically, the qubit is taken to be the two (first-order magnetic field-insensitive) hyperfine levels
of the $^2$S$_{1/2}$ ground state \cite{OYM07}:
\begin{eqnarray*}
\ket{0} & \equiv & ^2\text{S}_{1/2} \ket{ F = 0, m_F = 0 }\\
\ket{1} & \equiv & ^2\text{S}_{1/2} \ket{ F = 1, m_F = 0 }.
\end{eqnarray*}
Here, $F$ and $m_F$ denote the quantum numbers assosciated with the total angular momentum and its projection along the quantization axis defined by an
applied magnetic field of $5.2 \,G$. Since the magnetic quantum number is $m_F=0$, the two hyperfine states carry only a quadratic \textit{Zeeman shift}.
Consequently, the $^{171}$Yb$^+$ ion features a particular insensitivity with respect to magnetic field fluctuations.
Often in the ion-trap literature, the notation $\ket{g} = \ket{0}$ and $\ket{e}= \ket{1}$, denoting the ground and excited states respectively, is adopted 
in order to avoid confusion around additional coupling with vibrational modes of the ion chain.
The qubit frequency splitting between the above $^2$S$_{1/2}$ states is in the order of $\nu_0 = 12.642821$ GHz. Most notably,
Monroe et al. \cite{OYM07} have measured average qubit coherence times of $2.5(3)$s that are significantly longer than the typical gate operation time
at microseconds.

\begin{figure}[tbp]
\centering 
\includegraphics[width=.65\textwidth,origin=c]{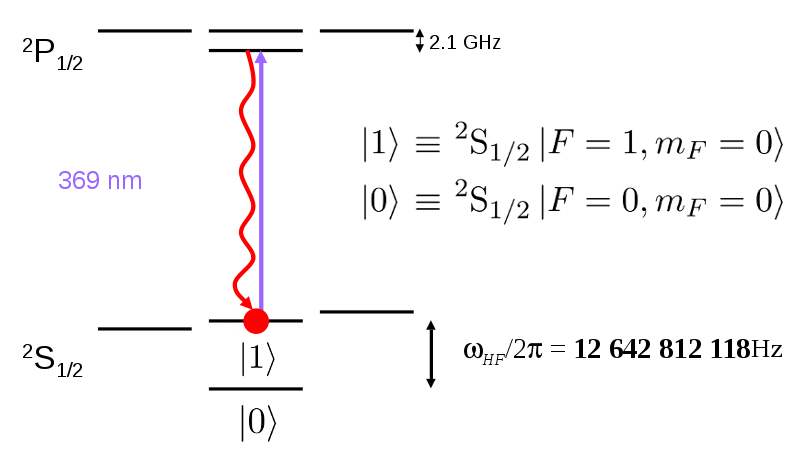}
\caption{\label{fig:i}(\cite{Mon13}) The spin contribution to the atomic energy levels of $^{171}$Yb$^+$. The hyperfine splitting results from the interaction
between the spin-$1/2$ electron and the spin-$1/2$ nucleus.}
\label{fig:hyperfine}
\end{figure}

\subsubsection{Experimental Setup.} \label{experimental_setup}

\begin{figure}[tbp]
\centering 
\includegraphics[width=.65\textwidth,origin=c]{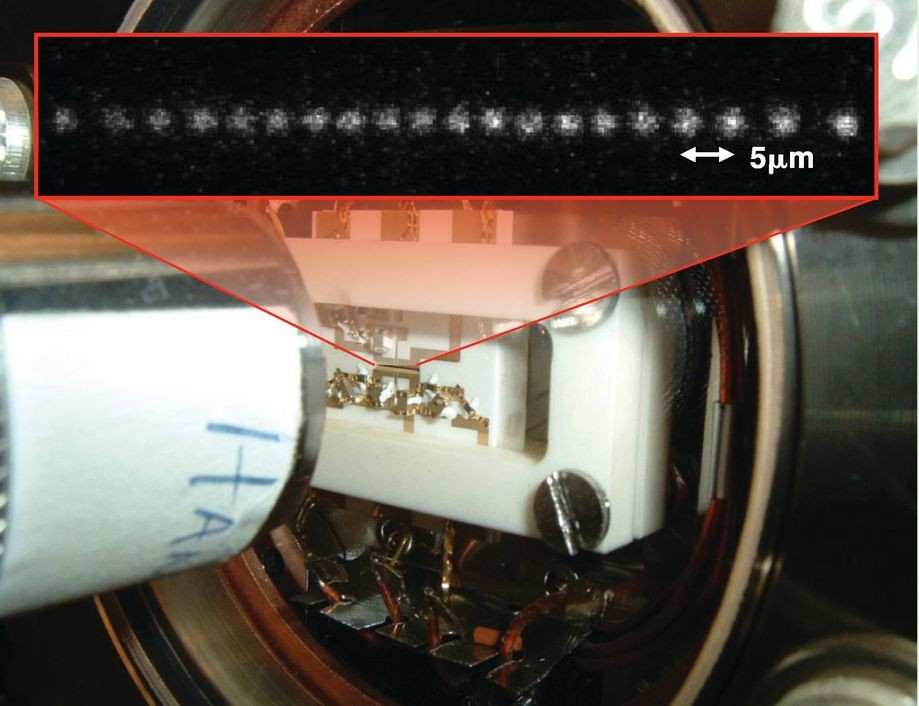}
\caption{\label{fig:i}(\cite{Mon13}) The University of Maryland ion-trap at the Chris Monroe lab, 2013. Each dot represents a single $^{171}$Yb$^+$ ion
exhibiting state dependent fluorescence when driven by individual focused laser beams. A CCD camera collects fluorescence 
from the scattering of photons and creates an image over thousands of measurements. }
\end{figure}

The main component of an ion-trap quantum computer is an electromagnetic trap, a vacuum chamber surrounded by four cylindrical electrodes. 
In order to produce the necessary trapping potential towards axial confinement of the ions, the end caps of the rods are biased
at different voltages.
However, as we will show now, trapping ions by means of static electric fields alone is not possible.
\textit{Earnshaw's theorem} states that a charged particle cannot be confined in
three dimensions by static electric fields, as the divergence of the field vanishes in empty space. This fact
can easily be verified in the following short argument. Let $\bold r_0=(x_0,y_0,z_0)$ denote the coordinates of
the charge. Then, for any trapping potential, we require that the particle returns to its equilibrium position once it is displaced.
Consequently, we demand from the potential energy $U(\bold r)$ that:
\begin{equation}
\label{curved_potential}
\nabla U(\bold r_0) =0, \hspace{4mm} \nabla^2 U(\bold r_0) >0.
\end{equation}
However, since the electric potential energy is given by $U(\bold r) = q\,\Phi(\bold r)$, where $\Phi$ is the electrostatic potential,
we must conclude from \textit{Gauss' law} that:
\begin{equation}
\nabla^2 U(\bold r) \, = \, q\,\nabla^2 \Phi(\bold r) \, = \, 0.
\end{equation}
Thus, the electric potential obeys the \textit{Laplace equation} and violates the previous condition Eq.\eqref{curved_potential} we required for the
desired potential trap.
In order to achieve confinement, one has to adopt time-varying electric fields that, on average, create an effective trap in three dimensions.
In practice, this is can be realized in a \textit{rf Paul trap} \cite{POF58}, where a combination of both static as well 
as oscillating electric fields switching at rates around radio frequency produce an effective harmonic trap. 
As a result, a potential of \textit{quadropole geometry} is generated that confines a charge in all three dimensions.
A linear rf Paul trap can also succesfully confine several ions simultaneously along its trap axis (typically taken to be the $\hat z$-axis)\cite{Pau90}.
Since the vibrations of trapped ions around their equilibrium positions are strongly coupled
due to the Coulomb interaction, motion of any one single ion induces a joint oscillation in all other ions.
The Hamiltonian describing the motion of $N$ confined ions together with the Coulomb repulsion is given by: 
\begin{equation}
\mathcal{H} = \sum_{j=1}^N {\frac{M}{2} \left(  \frac{\hat{p}_j^2}{M^2} +  \omega_x^2 x_j^2 + \omega_y^2 y_j^2 + \omega_z^2 z_j^2 \right)}
    + \sum_{j=1}^N \sum_{i>j} \frac{e^2}{4\pi \epsilon_0 |\hat r_j - \hat r_i|},
\end{equation}
where $M$ is the mass of a single ion and $\omega_x,\omega_y$ and $\omega_z$ describe the frequency of oscillation along the respective directions.
For the sake of simplicity,
one typically considers a linear Paul trap design in which only a single
motional direction along the trap axis is selected for in which all ions lie along the $\hat z$-axis.
If the displacement due to the oscillations is much smaller than the spatial separation between the ions, we can describe the
vibrations (i.e. phonons) in an harmonic oscillator approximation \cite{WMI98}. 
At low temperature, the linear chain of ions freezes into a crystal where, for a quantum of vibrational energy $\hbar \omega_z$,
the desired cooling requires both that $k_B T \ll \hbar \omega_z$, as well as that the thermal energy $T$ drops below the 
energy difference of the two atomic levels.
A chain of $N$ ions exhibits various normal modes of vibration, both radial and axial, each at frequencies independent of $N$. The axial
mode of lowest frequency is given by the \textit{center-of-mass mode} (COM), a collective motion of the entire ion chain along the trap axis.
In order to control and encourage such joint motion in the COM mode, 
it is necessary to surpress vibrational modes of higher frequency (such as relative or radial motion) by applying
Doppler-cooling and preparing the ions in their motional ground state \cite{WI79}.

The use of resonant laser light is a fundamental component of the ion-trap computer and appears throughout multiple stages
of quantum information processing, such as groundstate-cooling, qubit initialization, qubit gate-operations and state detection.
In order to achieve state initialization, sophisticated use of \textit{optical pumping} can drive hyperfine transitions 
into short-lived and energetically distant states that subsequently decay back to the ground state
according to known \textit{selection rules} (\expref{Figure}{pumping}). Measurement, or state detection, works using state dependent fluorescence as follows: 
If the qubit state is in the excited state $\ket{1}$, the 369.53 nm light applied for
detection is nearly on resonance, and the ion exhibits fluorescence by scattering many
photons. If, however, the state is in the ground state $\ket{0}$, very few photons are scattered and we observe a dark state. Finally, a photon count
results in accurate state detection.
Moreover, as we discuss in the subsequent chapters, manipulation of qubits can be realized as an optical Rabi oscillation under resonant laser light.

\begin{figure}[tbp]
\centering 
\includegraphics[width=.45\textwidth,origin=c]{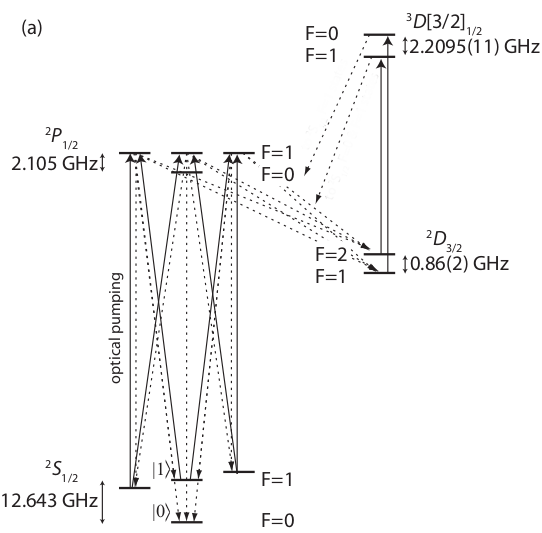}
\hfill
\includegraphics[width=.45\textwidth,origin=c]{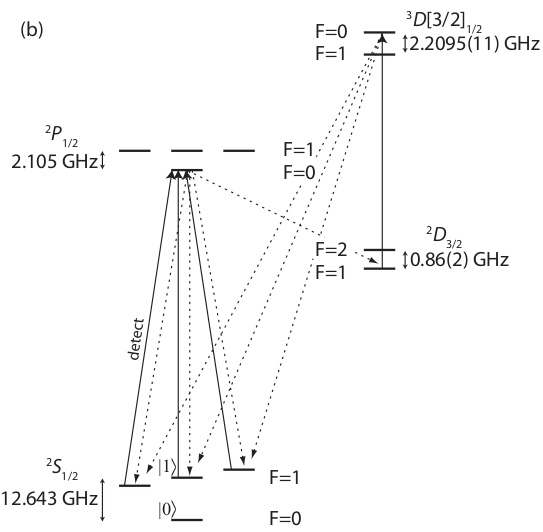}
\caption{\label{fig:I}(\cite{OYM07}) Optical pumping for state initialization (a) and state detection (b) of $^{171}$Yb$^+$.
The nuclear spin is given by $I = 1/2$. Appropriate polarization of the incoming laser beam can exploit atomic selection rules 
and initialize the desired groundstate.}
\label{pumping}
\end{figure}
\ \\
Let us conclude this section by briefly summarizing how ion-trap quantum computers fulfill the
DiVincenzo criteria:
\begin{enumerate}
\item \textbf{A scalable physical system with well characterized qubits:} The atomic hyperfine states are exceptionally long lived and 
serve as an ideal choice for qubits, see \cite{OYM07}.
Though scalable in principle, complications typically arise in both mass and mode structure of sufficiently long ion chains. 
Modern techniques circumvent these problems and address scalability by means of ion-transport among
multiple ion traps \cite{WMI98}\cite{FHM16}.

\item \textbf{The ability to initialize the state of the qubits to a simple initial state:}
State initialization is achieved using optical pumping, a technique that prepares hyperfine ground states with average fidelities $>0.99$, 
for example \cite{OYM07}.

\item \textbf{Long relevant decoherence times, much longer than the gate operation time:}
Typical coherence times of modern ion-trap architectures average around a few
milliseconds and are therefore several orders of magnitude longer than the time scale required
for quantum gate operations, see \cite{OYM07}.

\item \textbf{A universal set of quantum gates:}
All single-qubit gates can be performed using laser pulses that drive Rabi oscillations between the two atomic levels.
Two-qubit gates are implemented by exploiting the long range Coulomb interaction, such as in the original
proposal by Cirac and Zoller \cite{CZ95}.
Quantum information from a single ion can be transferred into the common
motional degree of freedom of the entire ion string using a sideband transition by focused laser pulses.
Such conditional quantum dynamics are sufficient to give rise to elementary two-qubit gates needed for universal computation.
Moreover, sources of error during larger scale quantum operations can be controlled for by more sophisticated types of multiparticle entanglement, such as
M{\o}lmer-S{\o}rensen interactions \cite{MS99}.

\item \textbf{A qubit-specific measurement capability:}
Measurements are performed using state dependent fluorescence in which photon scattering
allows for state read-out of individual qubits.

\end{enumerate}
\subsubsection{The Hamiltonian.}

In this section, we discuss the basic Hamiltonian of a single trapped ion interacting with near resonant laser light.
The two-level approximation is valid in this regime, as all other atomic levels are energetically far away and highly detuned.
Similar to a spin-$1/2$ system under a time-dependent magnetic field, the two-level atom undergoes an optical Rabi oscillation 
under the action of the electromagnetic field.

Let us consider a Hamiltonian of a two-level system interacting with a quantized harmonic oscillator of vibrational modes through a laser beam,
where:
\begin{equation}
\mathcal{H} = \mathcal{H}_{\text{atom}} + \mathcal{H}_{\text{free}} + \mathcal{H}_{\text{int}}.
\end{equation}
Recall from the previous section that the Hamiltonian describing the free motion of a single ion along the trap axis
in an effective harmonic potential can be written as:
\begin{equation}
\mathcal{H}_{\text{free}} = \frac{p^2}{2M} + \frac{1}{2}M\omega_z^2 z^2,
\end{equation}
where $\omega_z$ is the frequency of oscillation around the equilibrium position in the $\hat z$ direction. If the coupling to the external field
is small and the ion inside the vaccum chamber is well isolated from its surroundings, its motion becomes quantized and we can introduce raising and lowering operators
$z = \sqrt{\frac{\hbar}{2M\omega_z}}(a + a^{\dagger})$ and $p = i\sqrt{\frac{\hbar M \omega_z}{2}}(a - a^{\dagger})$.
Consequently, together with the Hamiltonian corresponding to the internal atomic levels, we can write:
\begin{eqnarray}
\mathcal{H}_{\text{atom}} & = & \hbar \omega_{eg} \frac{\sigma_z}{2}\\
\mathcal{H}_{\text{free}}\,  & = & \hbar \omega_z \left(a^{\dagger} a + \frac{1}{2}\right).
\end{eqnarray}
In the following, we will denote $\mathcal{H}_0 = \mathcal{H}_{\text{atom}} + \mathcal{H}_{\text{free}}$. The Hamiltonian $\mathcal{H}_{\text{int}}$ 
describes the atom-light interaction of the  ion with the laser.
Following Wineland et al.~\cite{WBB03}, the interaction between the ion and the electric field of the laser beam is given by:
\begin{equation}
\label{dipole_Hamiltonian}
\mathcal{H}_{\text{int}}(t) = - \vec{ d} \cdot \vec{ E} = - \vec{ d} \cdot E_0 \, \hat \epsilon_L \cos(  k z - \omega_L t + \phi),
\end{equation}
where $\vec{d}$ is the electric dipole operator, $\vec{E}$ is the (classical) electric field, $E_0$ the field strength, $z$ is the position operator of 
the ion for displacement
from its equilibrium position, $\hat \epsilon_L$ is the laser beam polarization, $\omega_L$ is the frequency of the laser,
$k$ is the laser beam's $k$-vector parallel to $\hat z$ (the axis of the trap) and where $\phi$ is the phase of the laser at the mean position of the ion.
In the dipole approximation, $\vec{d}$ can be further expanded in terms of the internal states of the atom, since it is proportional to $\sigma_+ + \sigma_-$, 
where $\sigma_+ = \ket{e}\bra{g}$ and $\sigma_- = \ket{g}\bra{e}$.
By introducing the Rabi flop frequency $\Omega = - \frac{E_0}{ \hbar}\displaystyle\braket{e| \vec{d} \cdot \hat \epsilon_L |g}$ and the Lamb-Dicke parameter
$\eta = k \, \sqrt{\frac{\hbar}{2M\omega_z}}$, we can express Eq.\eqref{dipole_Hamiltonian} as:
\begin{equation}
\label{H_int}
\mathcal{H}_{\text{int}}(t) = \hbar \frac{ \Omega}{2} \, (\sigma_+ + \sigma_{-}) \left( e^{i( \eta (a + a^\dagger) - \omega_Lt + \phi)} + e^{-i( \eta (a + a^\dagger) - \omega_Lt + \phi)} \right).
\end{equation}
Taking the width of the ion's oscillation along the trap axis at low temperatures to be small compared to the wavelength of the incoming laser beam, we can apply the
Lamb-Dicke limit ($ \eta \sqrt{ \braket{(a + a^{\dagger})^2}} \ll 1$) and further expand the relevant exponential from Eq.(\ref{H_int}):
\begin{equation}
e^{i \eta (a + a^\dagger)} \, = \,  1 + i \eta \, ( a  + a^{\dagger} )  + \, \mathcal{O}(\eta^2).
\end{equation}
It is now convenient to work in the interaction picture $\mathcal{H}'_{\text{int}} = e^{i\mathcal{H}_0 t / \hbar} \mathcal{H}_{\text{int}} e^{-i\mathcal{H}_0 t / \hbar}$. Using the
\textit{Baker-Campbell-Hausdorff lemma},
\begin{equation}
e^{\alpha A} B e^{-\alpha A} = B + \alpha \left[A,B\right] + \frac{\alpha^2}{2!} \left[A,\left[A,B\right]\right] +\frac{\alpha^3}{3!} \left[A,\left[A,\left[A,B\right]\right]\right] + \,...,
\end{equation}
we get the following identities:
\begin{align}
e^{i \omega_z a^\dagger a \,t} \left[1 + i \eta \, ( a  + a^{\dagger} )\right] e^{-i\omega_z a^\dagger a \,t} &= 1 + i \eta \, ( a e^{-i \omega_z t} + a^{\dagger} e^{i \omega_z t})  \\
e^{i \omega_{eg} \sigma_z t /2} \sigma_+ e^{-i\omega_{eg} \sigma_z t /2} &= \sigma_+ e^{i\omega_{eg} t}.
\end{align}
By applying a rotating wave approximation and assuming near resonance $\Delta = \omega_L - \omega_{eg} \approx 0$, we ignore all rapidly oscillating terms of the form
$\exp(\pm i(\omega_{L} + \omega_{eg})t)$ and find:
\begin{equation}
\mathcal{H}'_{\text{int}}(t) = \, \hbar \frac{ \Omega}{2} \, \sigma_+ \, e^{-i(\Delta t - \phi)} \left[ 1 + i \eta \, ( a e^{-i \omega_z t} + a^{\dagger} e^{i \omega_z t})  \right] + \text{ h.c.} + \, \mathcal{O}(\eta^2).
\end{equation}

\begin{figure}[tbp]
\centering 
\includegraphics[width=.55\textwidth,origin=c]{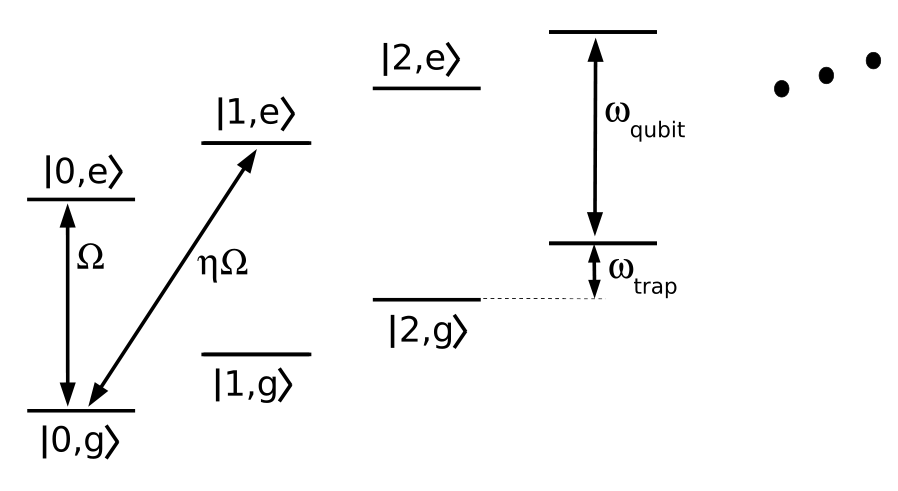}
\caption{\label{fig:i}(\cite{HRB08}) Transitions between atomic levels and phonon modes.}
\end{figure}

\noindent
A second rotating wave approximation assumes that only one transition at a time is considered, now gives the Hamiltonian:
\begin{equation}
\mathcal{H}'_{\text{int}}(t) = \frac{\hbar \Omega}{2} \left[ \sigma_+ e^{-i(\Delta t-\phi)} +  \sigma_- e^{i(\Delta t-\phi)} + i\eta ( \sigma_+ e^{-i(\Delta t-\phi)} -  \sigma_- e^{-i(\Delta t-\phi)} ) (a e^{- i \omega_z t} + a^{\dagger} e^{i \omega_z t}) \right]
\end{equation}
We can now identify three different cases of interest with respect to the detuning of the laser beam, the so-called carrier and sideband transitions:\cite{HRB08}
\begin{itemize}
\item[1.] The carrier transition ($\omega_L = \omega_{eg}, \, \Delta = 0$):
\begin{equation}
\mathcal{H}_{C} = \hbar \frac{\Omega}{2} \,(\sigma_+ e^{i \phi} + \sigma_- e^{-i \phi}).
\end{equation}
In this regime, transitions $\ket{g}\ket{n} \leftrightarrow \ket{e}\ket{n}$ between the atomic states of the ion can be performed.
\item[2.] The blue sideband transition ($\omega_L = \omega_{eg} + \omega_z, \, \Delta = \omega_z$):
\begin{equation}
\mathcal{H}_{+} = i \hbar \frac{\Omega}{2} \eta \,(\sigma_+ a^{\dagger} e^{i \phi} - \sigma_- a e^{-i \phi}).
\end{equation}
This allows for the creation of a phonon mode and simultaneous excitation of the atomic state: 
$\ket{g}\ket{n} \leftrightarrow \ket{e}\ket{n+1}$.
\item[3.] The red sideband transition ($\omega_L = \omega_{eg} - \omega_z, \, \Delta = -\omega_z$):
\begin{equation}
\mathcal{H}_{-}= i \hbar \frac{\Omega}{2} \eta \,( \sigma_+ a e^{i \phi} + \sigma_- a^{\dagger} e^{-i \phi}).
\end{equation}
Simultaneous to exciting the atomic state of the ion, a phonon mode is destroyed. Thus the following transitions can be performed:
$\ket{g}\ket{n} \leftrightarrow \ket{e}\ket{n-1}$.
\end{itemize}
Note that the red sideband Hamiltonian is formally equivalent to the well-known \textit{Jaynes-Cummings} Hamiltonian in quantum optics
that describes a two-level atom interacting with a quantized mode of an optical cavity.
Since the Coulomb interaction provides a strong coupling among the ions, the entire chain of ions exhibits various normal modes of
motion, each at different frequencies, such as center-of-mass mode, the stretch mode or the axial mode \cite{HRB08}.

In order to describe the full Hamiltonian of a linear crystal consisting of $N$ ions, we can introduce a sum over
all single ion contributions and respective vibrational modes of the entire ion chain, as follows:
\begin{eqnarray}
\mathcal{H}_{0} & = & \sum_{j=1}^N \hbar \omega_{eg} \frac{\sigma_{zj}}{2} \, + \, \sum_{l=1}^N \hbar \omega_{z_l} \left(a_l^{\dagger} a_l + \frac{1}{2}\right)\\
\mathcal{H}'_{\text{int}} &=& \sum_{j=1}^N \frac{\hbar \Omega_j}{2} \, \sigma_{+j} \, e^{-i(\Delta t - \phi)} \exp{\left(i\sum_{l=1}^N \eta_{jl} \, [ a_l e^{-i \omega_z t} + a_l^{\dagger} e^{i \omega_z t}]  \right)} + \text{ h.c.}.
\end{eqnarray}
Repeating the analysis of the single-ion Hamiltonian, we can write the interaction Hamiltonian in 
the rotating wave approximation and the Lamb-Dicke limit as:
\begin{equation}
\mathcal{H}'_{\text{int}} = \sum_{j,l=1}^N \frac{\hbar \Omega_j}{2} \left[ \sigma_{+}^{(j)} e^{-i(\Delta t-\phi)} +  \sigma_{-}^{(j)} e^{i(\Delta t-\phi)} + i\eta_{jl} ( \sigma_{+}^{(j)} e^{-i(\Delta t-\phi)} -  \sigma_{-}^{(j)} e^{i(\Delta t-\phi)} ) (a_l e^{- i \omega_z t} + a_l^{\dagger} e^{i \omega_z t}) \right]
\end{equation}
Here, we applied ground-state cooling and prepared only the lowest frequency COM mode in which the same phonon is shared among all ions in the crystal.
In this regime, we can implement a two-qubit gate using the common motional degree of freedom as a \textit{bus} to transfer conditional information among the ions.
In the next sections, we describe how to realize single-qubit gates, as well as two-qubit gates, in the ion-trap quantum computer.

\subsubsection{Single-Qubit Gates.}\label{single_qubit_chapter}

In Chapter \ref{ch:universal}, we discussed how all quantum operations can be 
broken down into a sequence of single qubit and two-qubit operations.
A major advantage of trapped-ion quantum computers lies in the fact that single-qubit operations are particularly easy to implement, as well as to
control through the use of resonant laser light.
In fact, we can show that any single-qubit operation corresponds to a rotation on the Bloch sphere and can thus be realized 
as a Rabi oscillation between the two qubit levels using a resonant laser pulse. In practice,
such tuning of appropriate pulse parameters takes place at the control interface given by an acousto-optical modulator (AOM) \cite{DHL05}.

Consider a two-level system that starts out in some internal state $\ket{\psi} = c_g \ket{g} + c_e \ket{e}$ at time $t=0$. 
Under a stationary Hamiltonian, the subsequent time-evolution after time $\tau$ is governed by the unitary dynamics:
\begin{equation}
\label{rabi_evolution}
\ket{\psi(\tau)} = \exp{\left(\frac{-i \mathcal{H} \tau}{\hbar}\right)} \ket{\psi(0)}.
\end{equation}
Considering the stationary Hamiltonians $\mathcal{H}_C, \mathcal{H}_+$ and $\mathcal{H}_{-}$ from the previous section 
under radiation of pulse length $\tau$ and respective detuning, 
we arrive at unitary dynamics which induce the following rotations:
\begin{align}
R_C\left(\theta,\phi\right) &= \exp{\left(-i \theta /2 \,  (\sigma_+ e^{i \phi} + \sigma_- e^{-i \phi})\right)} \\
R_+\left(\theta,\phi\right) &= \exp{\left(-i \theta /2 \,  (\sigma_+ a^{\dagger} e^{i \phi} - \sigma_- a e^{-i \phi})\right)} \\
R_-\left(\theta,\phi\right) &= \exp{\left(-i \theta /2 \,  (\sigma_+ a e^{i \phi} + \sigma_- a^{\dagger} e^{-i \phi})\right)},
\end{align}
where the control parameters $\theta = \Omega \tau$ (or $\theta = \Omega \eta \tau$ for the sideband evolution) and phase $\phi$ determine the nature
of the rotation. Note that the phase parameter $\phi$ of the laser at the start of the interaction experiment is completely arbitrary
but sets the reference for all subsequent operations.
We can identify the result of any of the above dynamics by a rotation operator $R(\theta,\phi)$ acting on $\ket{\psi}$
in terms of a rotation in the equatorial plane by $\phi$ and a rotation $\theta$ in the vertical plane. For example,
in the case of the carrier evolution, this allows us to
decompose the evolution as:
\begin{eqnarray}
\label{carrier}
R_C\left(\theta,\phi\right) &=& \exp{\left(-i \theta /2 \,  (\sigma_+ e^{i \phi} + \sigma_- e^{-i \phi})\right)} \\
               &=& \mathbbm{1} \cos \theta/2 - i (\sigma_x \cos \phi - \sigma_y \sin \phi) \sin \theta/2 \\
               &=&
\begin{pmatrix} 
  \cos\theta/2            & \, -i e^{i\phi} \sin \theta/2\\ 
  -i e^{-i\phi} \sin\theta/2  & \cos \theta/2
\end{pmatrix}
\end{eqnarray}
Thus, by fixing $\phi$ appropriately, we can now identify the following set of rotation operators in the $x$ and $y$ plane:
\begin{eqnarray}
R_x(\theta) &=&
\begin{pmatrix} 
  \cos\theta/2     & \, -i \sin\theta/2\\ 
  -i \sin\theta/2  & \cos\theta/2
\end{pmatrix} \\
R_y(\theta) &=&
\begin{pmatrix} 
  \cos\theta/2  & \, -\sin\theta/2\\ 
  \sin\theta/2  & \cos\theta/2
\end{pmatrix}
\end{eqnarray}
In order to obtain $R_z(\theta)$, we can use a natural decomposition into rotations around the 
$x$ and $y$ axis by writing $R_z(\theta) = R_y(\frac{\pi}{2}) R_x(\theta) R_y(-\frac{\pi}{2})$. Thus, rotations
around the $z$ axis are given by the rotation operator:
\begin{equation}
R_z(\theta) =
\begin{pmatrix} 
  e^{-i \theta /2}  & \, 0\\ 
  0  & \, e^{i \theta /2}
\end{pmatrix} \hspace{10mm}
\end{equation}
In fact, \textit{any} unitary single-qubit operation $U$ can be decomposed using the rotation operators above, as stated in \expref{Theorem}{th:single}.
Consider, for example, a resonant pulse of length $ \Omega \tau = \pi$ which realizes a $180^\circ$ rotation (up
to an overall phase):
\begin{equation}
R_x(\pi) \ket{\psi}
=  -i\sigma_x \ket{\psi}.
\end{equation}
Another important gate is the Hadamard gate, which we can now realize as a $\frac{\pi}{2}$-pulse in the $y$-plane and write
$H = R_y(\pi/2)$.
For example, given the initial state $\ket{g}$, we can easily create an equal superposition by performing a Hadamard $\frac{\pi}{2}$-pulse:
\begin{equation}
H \ket{g} = R_y(\pi/2) \ket{g}
= \frac{\ket{g} + \ket{e}}{\sqrt{2}}.
\end{equation}
Starting from an initial state $\ket{g}$, we can also prepare any pure state $\ket{\psi}$ on the Bloch sphere (\expref{Figure}{fig:bloch}) by an
appropriate choice of control parameters $\theta = \Omega \tau$ and $\phi$ using the unitary dynamics in Eq.\eqref{carrier}:
\begin{equation}
\ket{\psi} = \cos\left(\frac{\theta}{2}\right) \ket{g} + e^{i \phi} \sin\left(\frac{\theta}{2}\right) \ket{e}.
\end{equation}

\noindent If the laser is slightly detuned, we can repeat the analysis above for sideband rotations that at the same
time increase the vibrational modes. Notice that now the control parameters are $\theta = \Omega \eta \tau$ and $\phi$.

\subsubsection{Two-Qubit Gates.}

According to early work by David Deutsch \cite{Deu89}, 
a universal set of gates can be achieved using single-qubit and two-qubit gates only. In the previous section,
we introduced the means to generate single-qubit operations under laser radiation and subsequent Rabi oscillation.
In order to describe a system of a linear chain of ions, each mutually coupled with the Coloumb interaction,
one has to adopt a Hamiltonian that includes total contribution of all ions.

An early proposal for a two-qubit gate can already be found in the Cirac and Zoller \cite{CZ95} 
design of the ion-trap computer.
The idea of the Cirac-Zoller-gate is the following. A red sideband pulse onto the first ion transfers information from the atomic state 
into the motional degree of freedom, conditioned on its state.
Once the ion begins oscillating, it affects the entire string of ions due to the strong Coulomb repulsion.
Thus, the second ion can now be addressed with operations that are conditioned on the motional state of the first ion.
Finally, another red sideband transition reverses the motional state and causes the first ion to return to its original state. The procedure works as follows:
\begin{figure}[ht!]
\centering 
\includegraphics[width=.95\textwidth,origin=c]{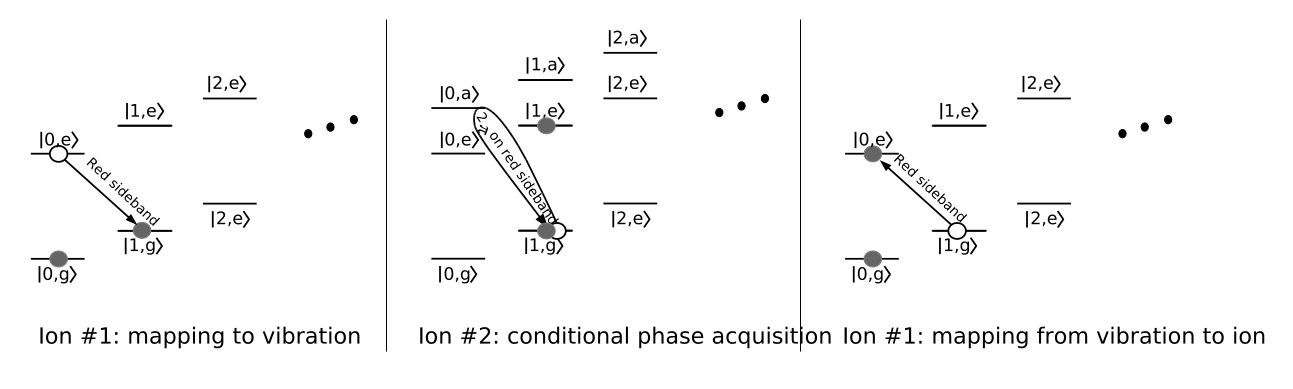}
\caption{\label{fig:i}(\cite{CZ01}) The two-qubit Cirac-Zoller gate.}
\end{figure}
\begin{enumerate}
\item A laser beam tuned to the red sideband frequency $\omega_{eg} - \omega_z$ and length $\theta = \pi$ is focused on the first ion.
Depending on the atomic state of the ion, a transfer into the motional degree of freedom may occur. Consequently, if the ion starts out in the ground state,
no state transfer occurs due to the detuning. If the ion is in an excited state, a phonon mode is created:
\begin{eqnarray*}
\ket{g}\ket{0} &\longrightarrow& \hphantom{-i}\ket{g} \ket{0}\\
\ket{e}\ket{0} &\longrightarrow& -i\ket{g} \ket{1}.
\end{eqnarray*}
\item A second laser tuned to the red sideband frequency with duration $\theta = 2\pi$ is now focused onto the second ion.
This induces a $2\pi$ rotation between $\ket{g}\ket{1}$ and an auxiliary hyperfine state $\ket{a}\ket{0}$. Note that the design of the transition is such
that all other states $\ket{g}\ket{0}$, $\ket{e}\ket{0}$, and $\ket{e}\ket{1}$ are left untouched, as there is insufficient energy
to drive any of these levels. As a result, the following qubit operation is performed at the second ion:
\begin{eqnarray*}
\ket{e}\ket{0} \hphantom{-}& \longrightarrow &  \hphantom{-}\,\ket{e}\ket{0}\\
\ket{e}\ket{1} \hphantom{-}& \longrightarrow &  \hphantom{-}\,\ket{e}\ket{1}\\
\ket{g}\ket{0} \hphantom{-}& \longrightarrow &  \hphantom{-}\,\ket{g}\ket{0}\\
\ket{g}\ket{1} \hphantom{-}& \longrightarrow &            \,- \ket{g}\ket{1}
\end{eqnarray*}
\item A final laser beam tuned to the red sideband frequency $\omega_{eg} - \omega_z$ and length $\theta = \pi$ is focused on the first ion to remove
the motional quantum and restore the first ion to its original state.
\begin{eqnarray*}
\ket{g}\ket{0} &\longrightarrow& \hphantom{-i}\ket{g} \ket{0}\\
\ket{g}\ket{1} &\longrightarrow& -i\ket{e} \ket{0}.
\end{eqnarray*}
\end{enumerate}
In summary, the Cirac-Zoller gate performs the following two-qubit operation:
\begin{align*}
                      &&   R_-^{(1)}(\pi,0)&&                         && \,\,  R_-^{(2)}(2\pi,0)&&                         && \,\,  R_-^{(1)}(\pi,0)&&                       \\
\ket{g}\ket{g}\ket{0} && \longrightarrow &&   \ket{g}\ket{g}\ket{0} && \longrightarrow &&   \ket{g}\ket{g}\ket{0} && \longrightarrow &&  \ket{g}\ket{g}\ket{0}\\
\ket{g}\ket{e}\ket{0} && \longrightarrow &&   \ket{g}\ket{e}\ket{0} && \longrightarrow &&   \ket{g}\ket{e}\ket{0} && \longrightarrow &&  \ket{g}\ket{e}\ket{0}\\
\ket{e}\ket{g}\ket{0} && \longrightarrow && -i\ket{g}\ket{g}\ket{1} && \longrightarrow &&  i\ket{g}\ket{g}\ket{1} && \longrightarrow &&  \ket{e}\ket{g}\ket{0}\\
\ket{e}\ket{e}\ket{0} && \longrightarrow && -i\ket{g}\ket{e}\ket{1} && \longrightarrow && -i\ket{g}\ket{e}\ket{1} && \longrightarrow &&  -\ket{e}\ket{e}\ket{0}
\end{align*}
The operation, as shown above, realizes the controlled-Z (CZ) gate. In the 2-qubit representation, it can be written as a unitary matrix:
\begin{equation}
U = 
\begin{pmatrix} 
  1   &   \hphantom{-}0   &   \hphantom{-}0   &  \hphantom{-}0\\
  0   &   \hphantom{-}1   &   \hphantom{-}0   &  \hphantom{-}0\\
  0   &   \hphantom{-}0   &   \hphantom{-}1   &  \hphantom{-}0\\
  0   &   \hphantom{-}0   &   \hphantom{-}0   &   -1
 
\end{pmatrix}
\end{equation}
By using a Ramsey-type experiment with two additional single-qubit $\pi/2$ pulses, the CZ-gate is easily turned into a CNOT gate, as follows:

\begin{figure}[ht!]
\centering
\includegraphics[width=70mm]{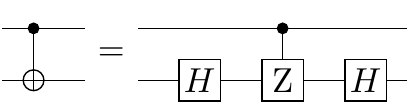}
\caption{\label{fig:i} Using a CZ gate and two $\pi/2$-pulses that perform Hadamard gates, one can construct a CNOT gate.}
\label{fig:cnot}
\end{figure}

In the next section, we discuss recent implementations of many of the algorithms we present in this thesis.

\subsubsection{Quantum Algorithms with Trapped Ions.}

Ever since the first quantum algorithms emerged after Deutsch's algorithm was first proposed,
many algorithms have in fact been implemented on a quantum computer. In this respect, the ion-trap quantum computer still largely dominates
all other architectures, particularly due to its long coherence times. Using $^{40}$Ca$^+$ ions, Deutsch's algorithm was first successfully implemented
as early as 2003 \cite{GHRL03}. Typical fidelities on identifying the function classes already
exceeded over $0.9$.\par
The most comprehensive report on the implementation of standard quantum algorithms up to date was recently published by Monroe at al.
at the University of Maryland \cite{DLFL16}. 
Using a linear chain of five $^{171}$Yb$^+$ hyperfine qubits, a programmable interface
allows the implementation of the Deutsch-Josza and Bernstein-Vazirani algorithm, Simon's algorithm \cite{Sim97} and the
quantum Fourier transform.
Compared to other architectures, such as the solid-state implementations, 
the ion-trap quantum computer is more flexible since it is easy to 
program by external fields and can thus be reconfigured to run any of the standard algorithms discussed in this thesis.

The setup of the programmable ion-trap computer is as follows.
At the top of the hierarchy, we find a flexible interface that allows a user to program the specifications of the desired algorithm. 
Here, a standard set of universal gates
such as the Hadamard, the CNOT or the CP gate are available for programming . In analogy to a classical compiler, a quantum compiler translates these
gates to a set of native gate instructions consisting of single-qubit rotation pulses or two-qubit Ising-like gates due
to M{\o}lmer-S{\o}rensen interactions \cite{MS99}. All native gates are finally performed
as external light pulses originating from the acousto-optical modulator (AOM).
\begin{figure}[tbp]
\centering 
\includegraphics[width=.50\textwidth,origin=c]{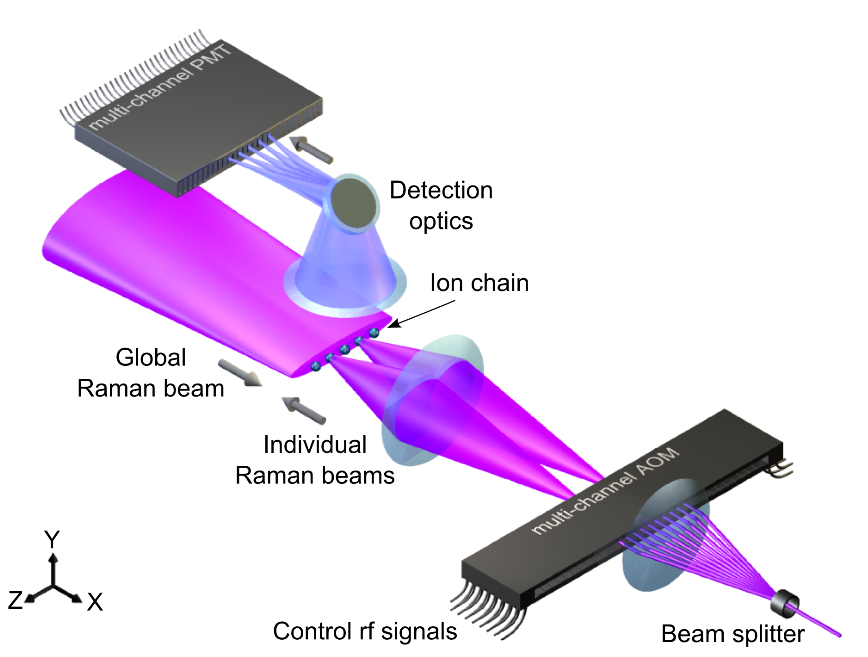}
\hspace{2mm}
\includegraphics[width=.47\textwidth,origin=c]{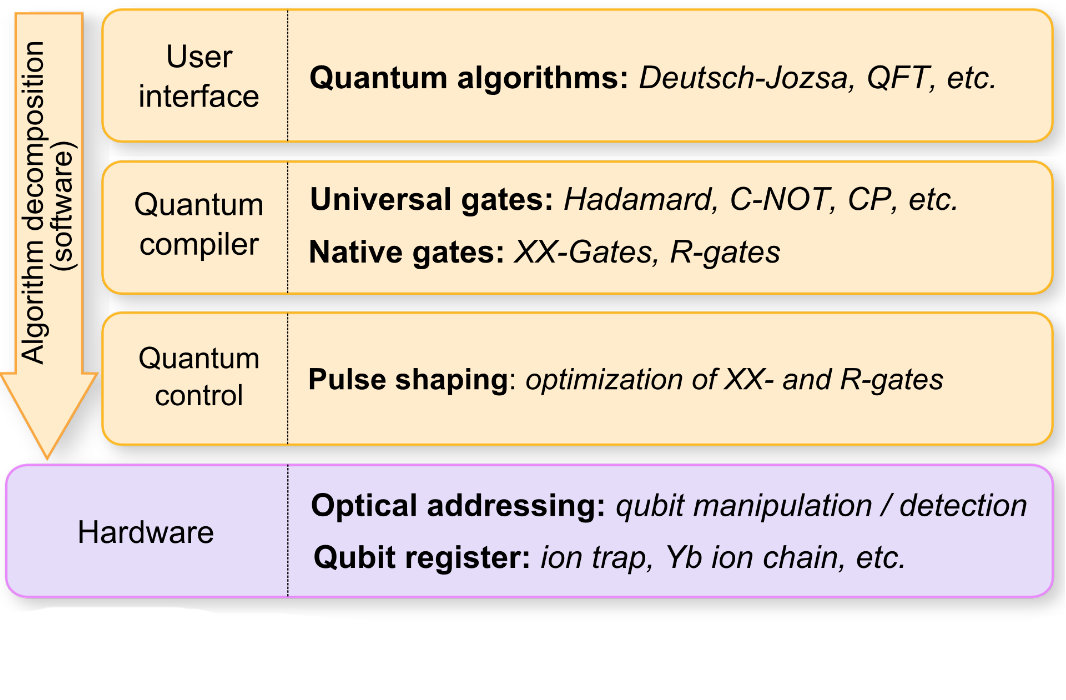}
\caption{\label{fig:i}(\cite{DLFL16}) The Maryland ion-trap setup. A user-interface is provided that allows versatile programming of a five-qubit ion-trap computer
to run standard algorithms. }
\end{figure}
\begin{figure}[tpb]
\centering
\includegraphics[width=90mm]{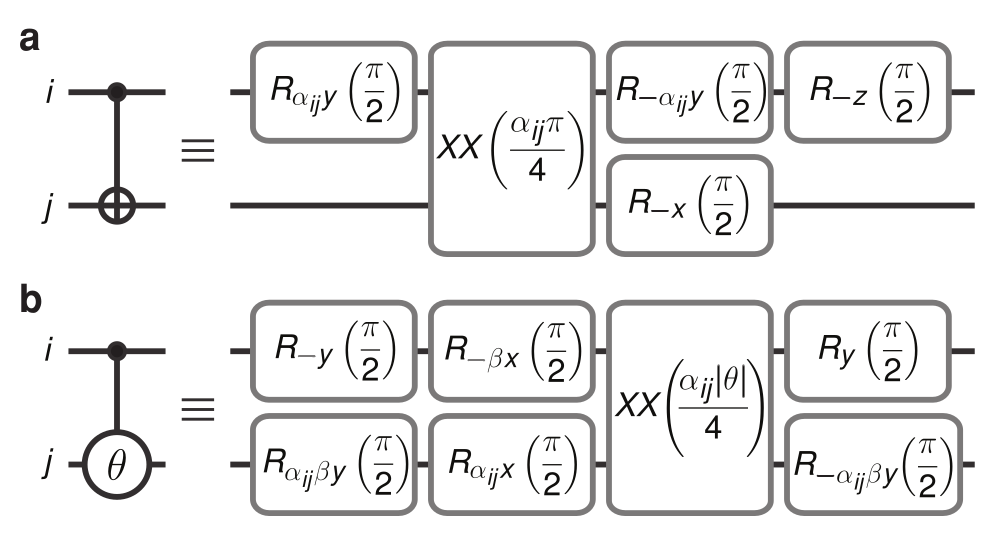}
\caption{\label{fig:i}(\cite{DLFL16}) Decomposition of two-qubit gates. Here (a) refers to the CNOT gate and (b) refers to the CP gate.
In order to achieve two-qubit operations that are less prone to errors,
Monroe et al. adopt decompositions into multiple single-qubit gates prior and after the two-qubit (XX)-gates based on 
M{\o}lmer-S{\o}rensen interactions \cite{MS99}.}
\end{figure}
In the usual setup, all ions are confined inside a linear rf-Paul trap and cooled near their motional ground state using Doppler cooling.
As a result, the chain of ions freezes into a linear crystal, with equal spacing of around $5 \mu$m. 
Next, the use of lasers and optical pumping achieves efficient state initialization, as described in \cite{OYM07}. 
All quantum gate implementations follow coherent rotations using 
Raman transitions to drive both, atomic transitions, as well as vibrational transitions,
in which lasers are focused on all of the ions in the chain simulatenously. Thus, in order to address ions individually, each Raman
beam is split into a static array of beams processed at the AOM, which then focuses the beams onto the individual ions. 
Measurement is the result of driving transitions near $369$nm of wavelength and simulatenously collecting state dependent fluorescence from each ion.
In practice, this is done by using a multi-channel photo-multiplier tube (PMT). 
Thus, if the qubit is in the state $\ket{1}$, the laser is on resonance and state-dependent fluorescence can be collected.
Else, if the qubit is in the state $\ket{0}$, the laser is sufficiently detuned and a dark state is observed.
\begin{figure}[tbp]
\centering
\includegraphics[width=120mm]{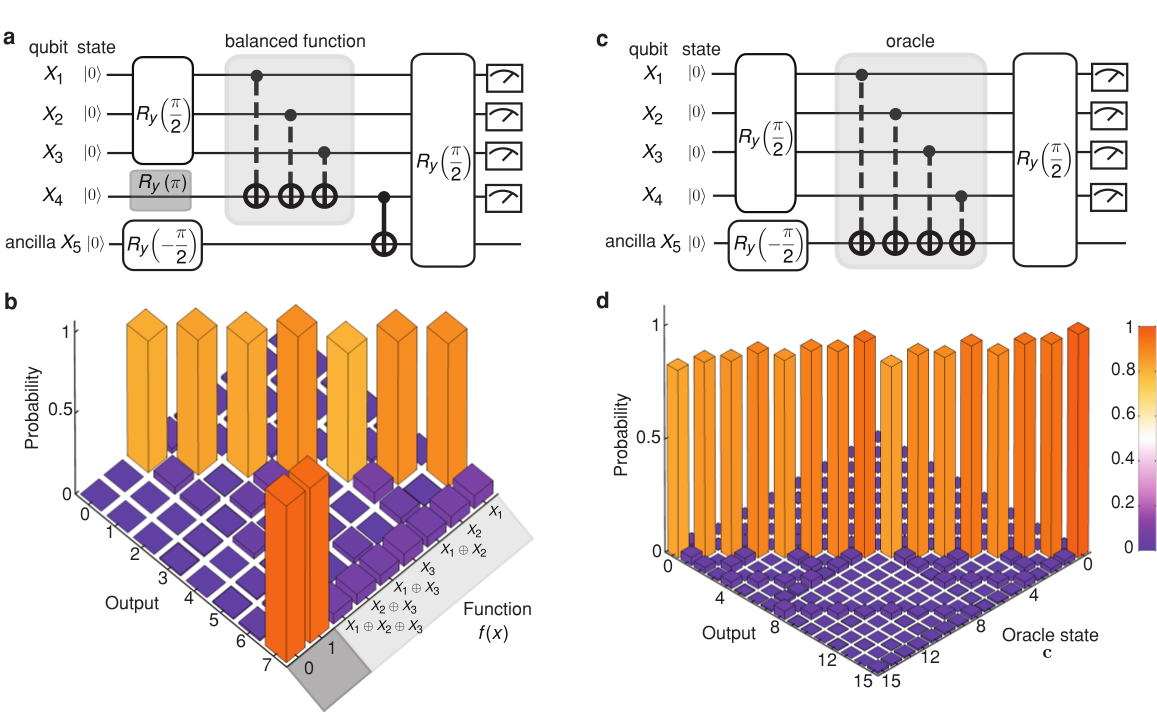}
\caption{\label{fig:i}(\cite{DLFL16}) A five-qubit implementation of the Deutsch-Josza (a) and Bernstein-Vazirani algorithm (b).}
\end{figure}
At the bottom of the hierarchy, qubit operations are performed via pairs of Raman beams from a single $355$nm YAG mode-locked laser. 
Here, the single qubit rotations R$_\varphi(\theta)$ are performed by a Raman beat-note of defined amplitude, phase and duration at 
the qubit resonance frequency $\nu_0 = 12.642821$GHz. 
As introduced in \expref{Section}{single_qubit_chapter}, $\theta$ describes the rotation angle and $\phi$ is determined by the duration and phase-offset of the 
beat-note and is programmed at the appropriate AOM channels. 
The two-qubit gates are performed using nearest-neighbour M{\o}lmer-S{\o}rensen interactions \cite{MS99} 
(XX-gates), a more sophisticated interaction for multi-particle entanglement which produces effective conditional dynamics
in order to realize two-qubit gates such as CNOT. Moreover, the Raman beat-notes are tuned close to 
resonance $\nu_0$, yet slightly detuned down to $\nu_0 \pm \nu_x$ by an offset $\nu_x$, in order to induce the necessary coupling.
In addition to our previous discussion on single-qubit and two-qubit gates, Debnath et al. used a modern variant of MS-interactions by decomposing a CNOT gate
into geometric phase gates, an approach that preserves the action of the CNOT, yet allows for an efficient and less error-prone realization by using
the collective motion of the chain \cite{MS99}\cite{HRB08}.

\begin{figure}
\centering
\includegraphics[width=90mm]{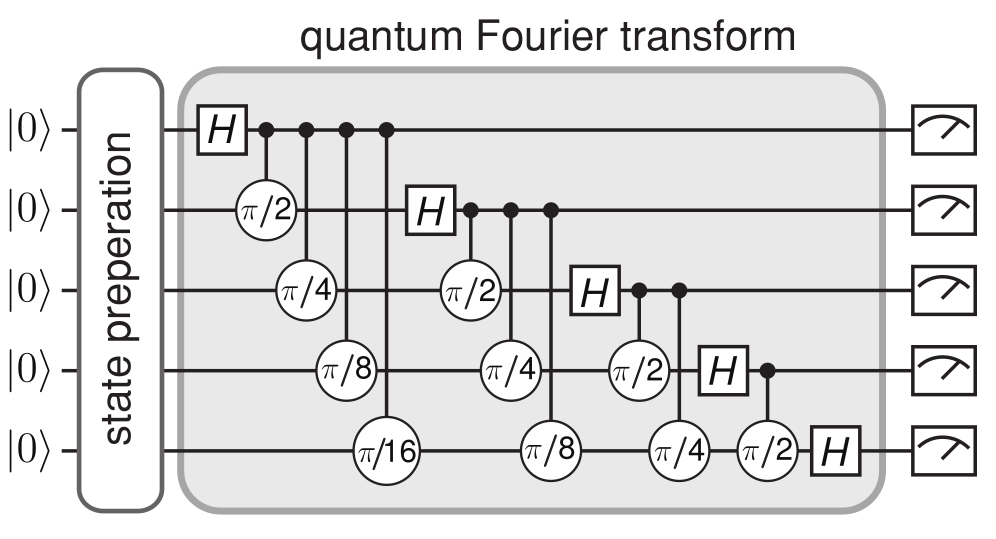}
\caption{\label{fig:i}(\cite{DLFL16}) A five-qubit implementation of the quantum Fourier transform.}
\end{figure}
In an attempt to explore novel architectures towards more scalable ion traps, Fellek et al. \cite{FHM16} at the Georgia Institute of Technology have independently implemented the 
Bernstein-Vazirani algorithm with a chain of $^{171}$Yb$^+$ qubits
using ion-transport in a microfabricated planar surface trap. 
Similarly, the algortihm succeeded at determining the unknown string with a success probability of $97.6\%$, 
for two ions, and $80.9\%$ in the case of three ions, using only a single oracle query. The gate implementations and optics are similar to the Maryland setup.
Single qubit gates are performed using a laser at $355$nm wavelength driving Raman transitions. Two-qubit gates are also provided
by nearest-neighbour M{\o}lmer-S{\o}rensen interactions \cite{MS99}.\par

\subsubsection{Decoherence and Sources of Error.}

In this section, we discuss decoherence mechanisms and error sources that drive imperfections in trapped ion quantum computation.
Just like in classical computation, a bit flip error $\ket{g} \leftrightarrow \ket{e}$ in a quantum state is devastating. 
As in most quantum devices, 
these errors  typically occur during population inversion of the energy eigenstates of the system due to photon absorption or spontaneous emission
and propagate through all subsequent computations. 
Thus, in any qubit manipulation requiring the use of lasers, there is some probability of driving an unwanted transition to other electronic levels.

In practice, alkali-earth-like metals, such as $^{171}$Yb$^+$, exhibit lifetimes of 
metastable states at about $2-3$s~\cite{OYM07}, hence
the coherence time of hyperfine states is several orders of magnitude longer than the gate times ranging at microseconds.
This fact makes ion-trap quantum computers surprisingly resistant to memory errors. On the contrary, external charge fluctuations in 
superconducting devices suffer considerably from bit-flip errors.

In terms of operational errors, both the original Cirac Zoller proposal and the XX-gates using nearest-neighbour M{\o}lmer-S{\o}rensen 
interactions are highly affected by population changes due to \textit{motional heating}. Electromagnetic background radiation at the trap frequencies 
can create motional quanta that subsequently corrupt two-qubit operations.
For example, due to the collective motional degree of freedom of the ion chain and the need for strict ground-state cooling,
any two-qubit operation in the Cirac Zoller proposal is significantly prone to highly correlated errors as a result of spontaneous emission or 
electromagnetic radiation. This suggests that independent noise models, such as in \expref{Section}{sec:quantum_noise}, are non-physical in the context
of an ion-trap architecture. Due to the longevity of hyperfine states and the nature of practical noise effects in ion-traps, 
it appears more reasonable to work with channels that capture dephasing.

Phase flip errors are more subtle and have important fatal consequences
in most quantum computations, as demonstrated in the well known \textit{Ramsey experiment}.
The phase evolution of a hyperfine state depends strongly on the magnetic field. A superposition of two state evolves due to
individual magnetic moments and thus experiences dephasing due to energy fluctuations resulting from a fluctuating magnetic field.
Typically, phase flips occur if the rf-Paul trap exhibits voltage fluctuations at the trap electrodes.
In fact, the coherence time of ion-trap quantum computers is currently mostly limited by magnetic field fluctuations in the order of just a few milliseconds
\cite{HRB08}.
Due to the fact that $^{171}$Yb$^+$ produces qubits with the same magnetic moment $m_F=0$, the ytterbium ion is a popular choice to 
significantly reduce dephasing effects.

In order to address errors during large scale computations in the future, one strives to adopt error correction, 
a procedure we discussed in \expref{Section}{sec:error_corr}.
However, as one increases the number of qubits in a linear ion-trap architecture, i.e. the size of the ion chain, the addressing of
individual ions with focused lasers onto the chain becomes increasingly difficult and complicates two-qubit operations with additional practical sources
of error. Moreover, growing mass of the ion chain also results in reduced coupling on the sideband transitions through the Lamb-Dicke parameter.

In the next section, we shed light on the overall performance of the ion-trap architecture, as compared to a solid-state device.

\begin{figure}[ht!]
\centering
\includegraphics[width=130mm]{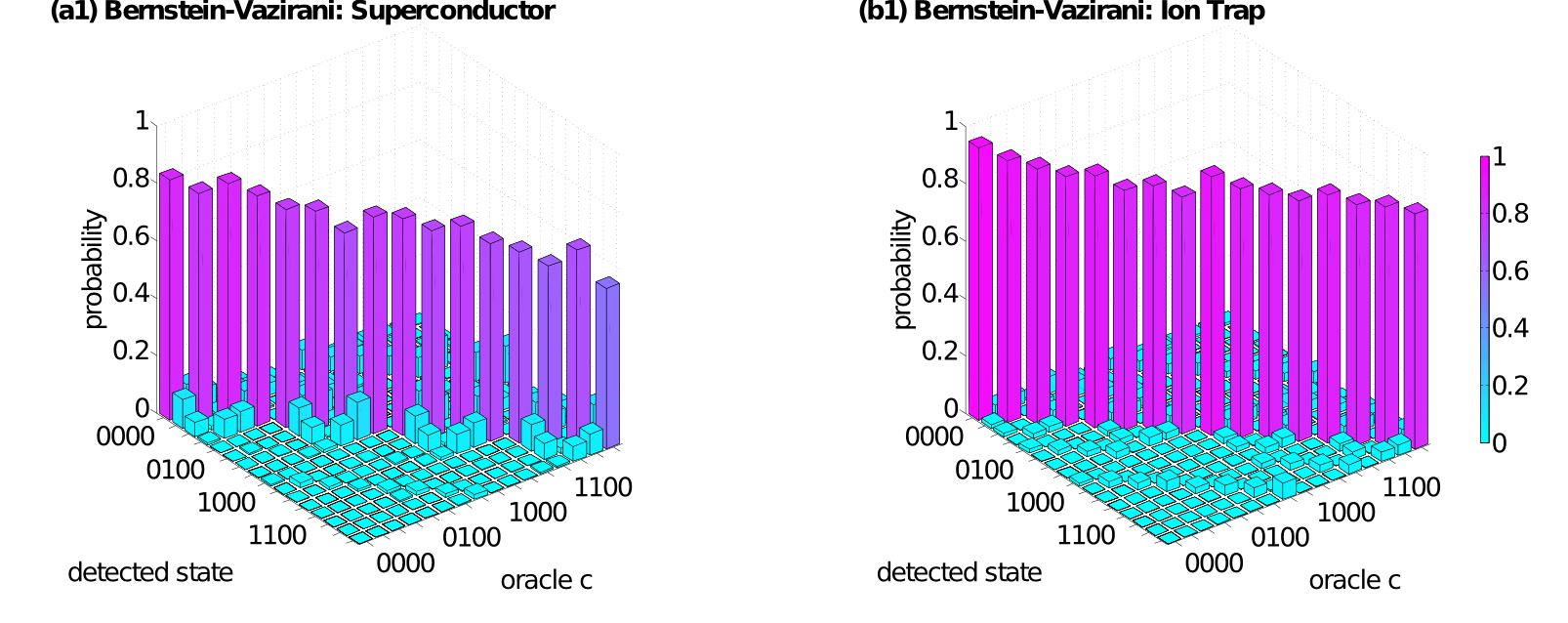}
\caption{\label{fig:i}(\cite{LMRD17}) Performance comparison between an ion-trap quantum computer
compared to a solid-state architecture when running the Bernstein-Vazirani algorithm.}
\label{fig:comparison}
\end{figure}

\begin{table}[ht!]
\center
\begin{tabular}{@{} *5l @{}}    \toprule
Result \quad\quad & \quad\quad Ion-Trap \quad\quad & \quad Superconductor \quad\quad  & \quad\quad Source \\\midrule
\textbf{Fidelity (\%)} \quad\quad & \quad\quad   \quad\quad & \quad\quad   \quad\quad  & \quad\quad \\ 
\,-Single-qubit gates \quad\quad & \quad\quad 99.1\% \quad\quad & \quad\quad 99.7\% \quad\quad  & \quad\quad \cite{LMRD17}\cite{FHM16}\\ 
\,-Two-qubit gates \quad\quad & \quad\quad 97.0\% \quad\quad & \quad\quad 96.5\% \quad\quad  & \quad\quad \cite{LMRD17}\cite{FHM16}\\ 
\,-Readout \quad\quad & \quad\quad 95.7\% \quad\quad & \quad\quad $\sim$80\% \quad\quad  & \quad\quad \cite{LMRD17}\\\midrule
\textbf{Gate times ($\tau_{op}$)} \quad\quad & \quad\quad   \quad\quad & \quad\quad   \quad\quad  & \quad\quad\\ 
\,-Single-qubit \quad\quad & \quad\quad 20$\mu$s \quad\quad & \quad\quad 130ns \quad\quad  & \quad\quad \cite{LMRD17}\\ 
\,-Two-qubit \quad\quad & \quad\quad 250$\mu$s \quad\quad & \quad\quad 250-450ns \quad\quad  & \quad\quad \cite{LMRD17}\\\midrule
\textbf{Decoherence times ($T_1,T_2$)} \quad\quad & \quad\quad   \quad\quad & \quad\quad   \quad\quad  & \quad\quad \\ 
\,-Depolarization \quad\quad & \quad\quad $\infty$ \quad\quad & \quad\quad $\sim$ 60$\mu$s \quad\quad  & \quad\quad \cite{LMRD17}\\ 
\,-Dephasing \quad\quad & \quad\quad $\sim$ 0.5s \quad\quad & \quad\quad $\sim$ 60$\mu$s \quad\quad  & \quad\quad \cite{LMRD17}\\\midrule
\textbf{Algorithms (success \%)} \quad\quad & \quad\quad   \quad\quad & \quad\quad   \quad\quad  & \quad\quad \\ 
\,-Bernstein-Vazirani \quad\quad & \quad\quad 85.1\%\quad\quad & \quad\quad $\sim$ 72.8\% \quad\quad  & \quad\quad \cite{LMRD17}\cite{DLFL16}\\ 
\,-Quantum Fourier Transform \quad\quad & \quad\quad 61.9\% \quad\quad & \quad\quad 61-75\% \quad\quad  & \quad\quad \cite{LMRD17}\cite{GMS17}\\
\bottomrule
 \hline
\end{tabular}
\caption{\label{tab:i} Summary of experimental comparisons between five-qubit ion-trap quantum computers and the five-qubit IBM superconducter device.
The results regarding \textit{readout}, the \textit{Bernstein-Vazirani algorithm} and the \textit{quantum Fourier transform} all refer to
five-qubit experiments. 
For further details, we refer to
the respective sources.}
\label{tab:results}
\end{table}

\newpage

\subsection{Performance Comparison of Quantum Computing Architectures.}

In the previous sections, we described how new advances in quantum computing technology made it possible to program algorithms
from a high level user interface. To this date, particularly the ion-trap and solid-state architectures have reasonably grown in maturity,
allowing for a variety of standard algorithms to be tested and evaluated for performance.
Recently, IBM has launched a public-access demonstration of a five qubit superconducting device
that can be run via their \textit{Quantum Experience} cloud service.\footnote{The IBM quantum experience platform can be found at \url{https://www.research.ibm.com/ibm-q/}}
Building up on previous work at the University of Maryland \cite{DLFL16}, Linke et al.~\cite{LMRD17} from the Monroe group 
put forward a state-of-the-art comparison between
the two leading quantum technologies, its own local ion-trap implementation vs. IBM's superconducting transmons, using the \textit{Quantum Experience} platform.
It was shown that, overall, the ion-trap quantum computer currently achieves higher success probabilities over the solid-state implementation 
from the IBM platform. Average success probabilities for running the Bernstein-Vazirani algorithm ranged around $85.1\%$ for ion-traps, and
around $72.8\%$ for the superconducting device~\cite{LMRD17}, see \expref{Table}{tab:results} and \expref{Figure}{fig:comparison}. 
Particularly concerning gate times, noticable differences were observed.
Typical ion-trap gate times for single-qubit operations for an ion-trap computer averaged at around $20 \mu$s and $250\mu$s for two-qubit gates, while
superconducting circuits reached times of only $130$ns and around $250-450$ns, respectively.
Overall, while current solid-state devices feature vastly higher clock-speeds, the ion-trap currently shows substantially higher absolute fidelities and 
longer coherence times. Nevertheless, the runner-up technology of the solid-state architecture offers a substantial promise for scalability.
It remains to be seen which of the two architectures, if any, is going to establish itself as the leading scalable quantum computing technology of the future.

\newpage
\section{Conclusion}

In this thesis, we shed light on how quantum algorithms achieve promising speed-ups over classical algorithms
in the context of computational learning theory, even in the presence of noise.
As quantum computational supremacy has yet to be demonstrated for a well-defined problem using only a few noisy qubits,
the study of quantum learning remains a particularly relevant area of research. For further reading
on the current status of quantum computational supremacy, we refer to an article by Harrow and Montanaro \cite{HM17}.
For an overview of recent progress in quantum learning theory, we refer to the survey by Arunachalam and de Wolf \cite{AdW17}.

By investigating the limitations of quantum algorithms through the use of relabeling, we proposed suitable constructions for new notions of security under
non-adaptive quantum chosen-ciphertext attacks. 
The pursuit of useful quantum-secure classical encryption schemes remains one of the key challenges in post-quantum cryptography.
Therefore, further research is now needed to investigate whether classical communication is ultimately feasibile in a quantum world.
Finally, for further reading on the current status of post-quantum cryptography, we refer to a recent article by Bernstein and Lange \cite{BL17}.

\section{Open Problems}

Due to the fact that \LWE is easy once algorithms receive quantum superposition access to noisy samples
on the secret string, it is tempting to explore circumstances in cryptography under which such access is granted.
While it is reasonable to assume that providing such quantum access is ill-advised for public-key cryptography,
it remains an open problem whether there are other realistic scenarios.
One possible direction to investigate is \textit{program obfuscation}, a recent breakthrough in cryptography
which concerns the process of obscuring software or code in order to preserve functionality, yet hide sensitive information
on the program itself with at most polynomial slowdown. An attacker in possession of a quantum computer could, in principle, 
implement the obfuscated circuit and then query it in superposition. Since \textit{indistinguishability obfuscation} allows
us to turn symmetric-key encryption schemes into public-key encryption schemes \cite{SW14}, this could potentially open up a door
to providing quantum access to \LWE samples in the context of the \LWE-\SKES.

Many of the classical notions of security are still widely unexplored in a quantum world, making both
quantum cryptography and classical quantum-safe cryptography a highly
relevant field of research. 
In 2016, Gagliardoni, H{\"u}lsing and Schaffner \cite{GHS16} provided security standards of quantum
indistuingishability under quantum chosen-plaintext-attacks and proposed secure quantum encryption schemes
for which such security notions can be achieved.
Further research is now needed to investigate secure constructions under a 
quantum chosen-ciphertext attack, in particular regarding a quantum indistinguishability notion of \QINDQCCA.
This indistinguishability setting naturally generalizes the \QCCA learning phase we considered in this thesis to
a now fully quantum challenge phase in which the challenge ciphertext is also a quantum state.
Thus, in the \QINDQCCA indistinguishability game, both the learning phase and the challenge phase concern fully quantum communication.
While many classical constructions for \CCAA security already exist, it also remains
an open problem to define a fully quantum notion of indistinguishability for \QCCAA that allows for secure quantum encryption schemes.

\section*{Acknowledgments}\label{sec:Acknowledgments}

I want to thank my supervisor Gorjan Alagic for his support and mentorship,
and for inspiring me to pursue a thesis in cryptography. In addition, I want to thank Thomas Gasenzer for making 
this thesis collaboration possible. I also want to thank everyone who offered additional support, contributed with comments or engaged
in useful discussions, especially everyone at QMATH: 
Matthias Christandl, Florian Speelman, Chris Perry, Roberto Ferrara and Christian Majenz,
as well as Benjamin Freist at Heidelberg University. Finally, I want to thank Stacey Jefferey, Maris Ozols, Harry Buhrman, Ronald de Wolf and Christian Schaffner at CWI, Amsterdam.

\newpage
\appendix
\section{Supplementary Material}

\begin{lemma}[\cite{Hoe63}, Hoeffding's inequality]\label{lem:hoeffding}
Let $\{X_m\}$ be a set of $M$ independent random variables, such that $a_m \leq X_m \leq b_m$,
and let $c_m = b_m - a_m$, $\bar X= \frac{1}{M}\sum_{m=1}^M X_m$ and $\mu = \mathbb{E}[\bar X]$. Then:
\begin{equation}
\Pr\left[ |\bar X - \mu| \geq \delta \right] \, \leq \, 2 e^{-2 M^2 \delta^2/\sum_m c_m^2}.
\end{equation}
\end{lemma}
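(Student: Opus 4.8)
The plan is to use the exponential moment (Chernoff) method, together with the elementary bound on the moment generating function of a bounded random variable. Write $S = \sum_{m=1}^M X_m$, so that $\mathbb{E}[S] = M\mu$ and $\bar X - \mu = (S - \mathbb{E}[S])/M$. First I would reduce the two-sided statement to a one-sided one: by the union bound,
\begin{equation*}
\Pr\left[|\bar X - \mu| \geq \delta\right] \leq \Pr\left[S - \mathbb{E}[S] \geq M\delta\right] + \Pr\left[\mathbb{E}[S] - S \geq M\delta\right],
\end{equation*}
and the second term is handled identically to the first by replacing each $X_m$ with $-X_m$ (which is bounded in $[-b_m,-a_m]$, an interval of the same length $c_m$). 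So it suffices to bound $\Pr[S - \mathbb{E}[S] \geq M\delta]$ by $e^{-2M^2\delta^2/\sum_m c_m^2}$.

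Next, for any parameter $s > 0$, apply Markov's inequality to the nonnegative random variable $e^{s(S - \mathbb{E}[S])}$:
\begin{equation*}
\Pr\left[S - \mathbb{E}[S] \geq M\delta\right] \leq e^{-sM\delta}\, \mathbb{E}\!\left[e^{s(S - \mathbb{E}[S])}\right] = e^{-sM\delta} \prod_{m=1}^M \mathbb{E}\!\left[e^{s(X_m - \mathbb{E}[X_m])}\right],
\end{equation*}
where the factorization uses independence of the $X_m$. The key lemma (Hoeffding's lemma) is that for a random variable $Y$ with $\mathbb{E}[Y]=0$ and $Y \in [a,b]$ almost surely, one has $\mathbb{E}[e^{sY}] \leq e^{s^2(b-a)^2/8}$. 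Granting this applied to $Y = X_m - \mathbb{E}[X_m] \in [a_m - \mathbb{E}[X_m],\, b_m - \mathbb{E}[X_m]]$, an interval of length $c_m$, we get $\Pr[S - \mathbb{E}[S] \geq M\delta] \leq \exp(-sM\delta + \tfrac{s^2}{8}\sum_m c_m^2)$. Optimizing the exponent over $s>0$ gives $s = 4M\delta/\sum_m c_m^2$ and the bound $\exp(-2M^2\delta^2/\sum_m c_m^2)$, as desired; combining the two tails reinstates the factor $2$.

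The main obstacle is establishing Hoeffding's lemma itself, which I would prove as follows: by convexity of $y \mapsto e^{sy}$, bound $e^{sy} \leq \frac{b-y}{b-a}e^{sa} + \frac{y-a}{b-a}e^{sb}$ on $[a,b]$, take expectations using $\mathbb{E}[Y]=0$ to obtain $\mathbb{E}[e^{sY}] \leq \frac{b}{b-a}e^{sa} - \frac{a}{b-a}e^{sb} =: e^{\varphi(s)}$, and then show $\varphi(0) = \varphi'(0) = 0$ and $\varphi''(s) \leq (b-a)^2/4$ for all $s$ (the last step is a short calculation: $\varphi''(s)$ equals $p(1-p)(b-a)^2$ for a probability $p = p(s) \in [0,1]$, which is at most $(b-a)^2/4$). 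A second-order Taylor expansion $\varphi(s) = \tfrac{s^2}{2}\varphi''(\xi) \leq s^2(b-a)^2/8$ then finishes the lemma. Everything else is routine optimization and bookkeeping.
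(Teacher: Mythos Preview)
Your proof is the standard and correct argument for Hoeffding's inequality via the Chernoff method and Hoeffding's lemma. Note, however, that the paper does not actually supply a proof of this lemma: it is stated in the supplementary material with a citation to \cite{Hoe63} and used as a black box, so there is no ``paper's own proof'' to compare against. Your write-up matches the classical derivation in Hoeffding's original paper.
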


\begin{lemma}[\cite{Wil13}]\label{lem:trace_bound}
Let $\rho$,$\, \sigma \in \mathcal{D}(\mathcal{H})$ be quantum states and consider any measurement due to POVM $\, \mathcal{E} = \{E_i \}_{i \in I}$. We define $p_i = tr[\rho E_i]$ and $q_i = tr[\sigma E_i]$ 
as the corresponding probability distributions over measurement outcomes labeled by $i \in I$.
Then, the statistical distance between the distributions $p_i$ and $q_i$ is upper bounded by the trace distance between $\rho$ and $\sigma$:
\begin{equation}
\label{trace_bound}
\delta (p_i, q_i) \leq \delta(\rho,\sigma).
\end{equation}
\end{lemma}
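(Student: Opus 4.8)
The plan is to reduce the claim to the Jordan--Hahn decomposition of the Hermitian operator $\rho - \sigma$. First I would write $\rho - \sigma = P - Q$, where $P, Q$ are positive semidefinite operators supported on mutually orthogonal subspaces (the positive and negative parts arising from the spectral decomposition of $\rho-\sigma$). Since $\tr[\rho] = \tr[\sigma] = 1$, taking traces gives $\tr[P] = \tr[Q]$, and by definition of the trace distance $\delta(\rho,\sigma) = \tfrac12 \norm{\rho-\sigma}_1 = \tr[P] = \tr[Q]$.

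Next, for a fixed POVM element $E_i$ with $0 \leq E_i \leq \mathbbm{1}$, I would estimate the single-outcome contribution $|p_i - q_i| = |\tr[(\rho-\sigma)E_i]| = |\tr[P E_i] - \tr[Q E_i]|$. Both $\tr[P E_i]$ and $\tr[Q E_i]$ are nonnegative, being traces of products of positive operators, so $|\tr[P E_i] - \tr[Q E_i]| \leq \max\{\tr[P E_i],\, \tr[Q E_i]\} \leq \tr[P E_i] + \tr[Q E_i]$. Summing over $i \in I$ and using the completeness relation $\sum_{i} E_i = \mathbbm{1}$ yields $\sum_i |p_i - q_i| \leq \tr[P] + \tr[Q] = 2\,\delta(\rho,\sigma)$, hence $\delta(p_i,q_i) = \tfrac12 \sum_i |p_i - q_i| \leq \delta(\rho,\sigma)$, which is exactly \eqref{trace_bound}.

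The only delicate points are bookkeeping: fixing the normalization of the trace distance so that it is consistent with the convention used elsewhere (in particular the pure-state formula $\delta(\ket{\psi},\ket{\phi}) = \sqrt{1-|\braket{\psi|\phi}|^2}$ invoked in the relabeling proofs), and checking that the positive and negative parts $P,Q$ are well defined with equal trace on $\mathcal{D}(\mathcal{H})$. An alternative route would be to quote the variational characterization $\delta(\rho,\sigma) = \max_{0 \le \Lambda \le \mathbbm{1}} \tr[\Lambda(\rho-\sigma)]$ and apply it to the particular effect $\Lambda = \sum_{i\,:\,p_i \geq q_i} E_i$, but since that identity is itself typically established through the same decomposition, I would prefer the direct argument above. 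I expect no genuine obstacle here: this is a standard fact, so I would cite \cite{Wil13} and include the short self-contained argument sketched above for completeness.
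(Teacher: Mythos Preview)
Your argument is correct and is the standard textbook proof via the Jordan--Hahn decomposition of $\rho-\sigma$. Note that the paper itself does not prove this lemma at all: it is stated in the supplementary material with only a citation to \cite{Wil13}, so there is nothing to compare against beyond observing that your sketch is essentially the proof one finds in that reference.
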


\begin{lemma}[\cite{NC10}]\label{lem:trace_fidelity}
Let $\rho = \ket{\psi}\bra{\psi}$ and $\sigma = \ket{\phi}\bra{\phi}$ be pure states. 
Then the trace distance between $\rho$ and $\sigma$ can be expressed in terms of the fidelity:
\begin{equation}
\label{pure_fidelity}
\delta(\rho,\sigma) = \sqrt{ 1 - F(\rho,\sigma)^2   } = \sqrt{ 1 - |\braket{\psi | \phi}|^2}.
\end{equation}
\end{lemma}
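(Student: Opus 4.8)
The plan is to reduce the whole computation to the (at most) two-dimensional subspace $V := \Span\{\ket{\psi},\ket{\phi}\}$. The point is that $M := \rho - \sigma = \ket{\psi}\bra{\psi} - \ket{\phi}\bra{\phi}$ maps $V$ into $V$ and annihilates $V^{\perp}$, so $M$ is supported entirely on $V$; hence $\tr|M|$ equals the sum of the absolute values of the (at most two) eigenvalues of $M$ acting on $V$, and $\delta(\rho,\sigma) = \tfrac12 \tr|M|$ by definition of the trace distance.

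First I would fix an orthonormal basis $\{\ket{\psi},\ket{\psi^{\perp}}\}$ of $V$ and write $\ket{\phi} = \cos\theta\,\ket{\psi} + e^{i\varphi}\sin\theta\,\ket{\psi^{\perp}}$ with $\cos\theta = |\braket{\psi|\phi}| \in [0,1]$. The relative phase $e^{i\varphi}$ can be removed by a unitary on $V$ fixing $\ket{\psi}$, which leaves the spectrum of $M$ unchanged, so I would simply take $\varphi = 0$. Next I would extract the two invariants of the traceless $2\times 2$ Hermitian matrix $M$: on one hand $\tr M = \tr\rho - \tr\sigma = 1 - 1 = 0$, and on the other
\begin{equation*}
\tr(M^2) = \tr(\rho^2) - 2\,\tr(\rho\sigma) + \tr(\sigma^2) = 1 - 2|\braket{\psi|\phi}|^2 + 1 = 2\left(1 - |\braket{\psi|\phi}|^2\right),
\end{equation*}
using $\tr(\rho^2) = \tr(\sigma^2) = 1$ and $\tr(\rho\sigma) = |\braket{\psi|\phi}|^2$ for pure states. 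Thus the eigenvalues $\lambda_1,\lambda_2$ of $M$ satisfy $\lambda_1 + \lambda_2 = 0$ and $\lambda_1^2 + \lambda_2^2 = 2(1 - |\braket{\psi|\phi}|^2)$, which forces $\lambda_{1,2} = \pm\sqrt{1 - |\braket{\psi|\phi}|^2}$.

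Finally I would conclude $\tr|M| = |\lambda_1| + |\lambda_2| = 2\sqrt{1 - |\braket{\psi|\phi}|^2}$, hence $\delta(\rho,\sigma) = \tfrac12\tr|M| = \sqrt{1 - |\braket{\psi|\phi}|^2}$; and since the fidelity of two pure states is $F(\rho,\sigma) = |\braket{\psi|\phi}|$, this also equals $\sqrt{1 - F(\rho,\sigma)^2}$, as claimed. The argument is entirely elementary, so there is no genuine obstacle; the only two places that deserve a careful sentence are the reduction to $V$ (so that taking ``the eigenvalues of $M$'' really does capture all of $\tr|M|$, with nothing hidden on $V^{\perp}$) and the harmless absorption of the phase $\varphi$. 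An equivalent route would be to write $\rho$ and $\sigma$ explicitly as $2\times 2$ matrices in the basis $\{\ket{\psi},\ket{\psi^{\perp}}\}$ and diagonalize $M$ directly, obtaining eigenvalues $\pm\sin\theta$ with $\sin\theta = \sqrt{1 - \cos^2\theta}$.
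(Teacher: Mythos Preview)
Your proof is correct and is essentially the standard argument (reduce to the two-dimensional span, compute the two eigenvalues of the traceless Hermitian difference via $\tr M=0$ and $\tr M^2$). The paper does not supply its own proof of this lemma; it merely states it with a citation to \cite{NC10}, so there is nothing to compare against beyond noting that your argument matches the textbook derivation.
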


\begin{lemma}[\cite{KL15}, Markov's inequality]\label{lem:markov}
Let $X$ be a nonnegative random variable. Then, for any $a>0$:
\begin{equation}
\Pr\left[ X \geq a \right] \, \leq \, \displaystyle\frac{\mathbb{E}[X]}{a} \,
\end{equation}
\end{lemma}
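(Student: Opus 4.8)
The plan is to reduce the inequality to the elementary facts that expectation is linear and monotone, applied to a suitably chosen indicator random variable; no machinery beyond \expref{Lemma}{lem:markov}'s own hypothesis is needed. First I would introduce the indicator $\mathbbm{1}_{\{X \geq a\}}$, equal to $1$ on the event $\{X \geq a\}$ and $0$ on its complement, so that by the definition of expectation $\mathbb{E}[\mathbbm{1}_{\{X \geq a\}}] = \Pr[X \geq a]$.

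The key step is the pointwise domination $a \cdot \mathbbm{1}_{\{X \geq a\}} \leq X$, which holds everywhere on the sample space: on $\{X \geq a\}$ the left-hand side equals $a$, which is at most $X$ there, while on the complement the left-hand side equals $0$, which is at most $X$ precisely because $X$ is assumed nonnegative. Taking expectations of both sides, using monotonicity of expectation and then linearity to pull out the constant, gives $a \, \Pr[X \geq a] = \mathbb{E}[a \cdot \mathbbm{1}_{\{X \geq a\}}] \leq \mathbb{E}[X]$. Dividing through by $a > 0$ yields the asserted bound $\Pr[X \geq a] \leq \mathbb{E}[X]/a$.

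I do not anticipate a genuine obstacle here; the only points deserving a line of care are that the indicator formulation covers the discrete and the absolutely continuous cases uniformly (so no separate argument is required), and that the hypothesis $X \geq 0$ is exactly what prevents the complementary region from contributing a negative term to the expectation. For completeness I would also record the equivalent integral derivation: if $F$ denotes the distribution of $X$, then $\mathbb{E}[X] = \int_{0}^{\infty} x \, dF(x) \geq \int_{a}^{\infty} x \, dF(x) \geq a \int_{a}^{\infty} dF(x) = a \, \Pr[X \geq a]$, which makes the role of nonnegativity and the truncation at $a$ fully transparent and reproduces the same conclusion.
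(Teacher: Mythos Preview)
Your proof is correct and is exactly the standard indicator-function argument for Markov's inequality. The paper itself does not supply a proof of this lemma at all---it is merely stated in the supplementary material with a citation to \cite{KL15}---so there is nothing further to compare.
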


\end{document}